\documentclass[mnsc,nonblindrev]{informs3}

\OneAndAHalfSpacedXI


\usepackage{booktabs} 
\usepackage{algorithm}
\usepackage{algorithmicx}
\usepackage{algpseudocode}

\usepackage{eqparbox}

\usepackage{bm}
\usepackage{subcaption}
\usepackage{thm-restate}
\usepackage{xspace}

\renewenvironment{proof}{{\indent \indent \it Proof\quad}}{\hfill $\square$\par}

\newcommand{\E}{{\mathbb{E}}}
\DeclareMathOperator{\opt}{{\textrm{OPT}}}

\newcommand{\zihao}{\textcolor{black}}
\newcommand{\wh}{\textcolor{black}}
\usepackage{graphicx}
\usepackage{epsfig}

\makeatletter

\makeatother

\usepackage{natbib}
 \bibpunct[, ]{(}{)}{,}{a}{}{,}%
 %
 %
 %
 %
 %

\TheoremsNumberedThrough     
\ECRepeatTheorems

\EquationsNumberedThrough    

\MANUSCRIPTNO{MS-0001-1922.65}

\begin{document}



\RUNTITLE{Sample-Based Online Generalized Assignment Problem}

\TITLE{Sample-Based Online Generalized Assignment Problem with Unknown Poisson Arrivals}

\ARTICLEAUTHORS{%
\AUTHOR{Zihao Li, Hao Wang, Zhenzhen Yan}
\AFF{School of Physical and Mathematical Sciences, Nanyang Technological University, Singapore, \EMAIL{zihao004@e.ntu.edu.sg}, \EMAIL{hao\_wang@ntu.edu}, \EMAIL{yanzz@ntu.edu.sg}} 
} 

\ABSTRACT{%
We study an edge-weighted online stochastic \emph{Generalized Assignment Problem} with \emph{unknown} Poisson arrivals. In this model, we consider a bipartite graph that contains offline bins and online items, where each offline bin is associated with a $D$-dimensional capacity vector and each online item is with a $D$-dimensional demand vector which may be different towards each bin. Online arrivals are sampled from a set of online item types which follow independent but not necessarily identical Poisson processes. The arrival rate for each Poisson process is unknown. 
Each online item will either be packed into an offline bin which will deduct the allocated bin's capacity vector and generate a reward, or be rejected. The decision should be made immediately and irrevocably upon its arrival. Our goal is to maximize the total reward of the allocation without violating the capacity constraints.

We provide a sample-based multi-phase algorithm by utilizing both pre-existing offline data (named historical data) and sequentially revealed online data. We establish its performance guarantee measured by a competitive ratio. 
In a simplified setting where  $D=1$ and all capacities and demands are equal to $1$, we prove that the ratio depends on the number of historical data size and the minimum number of arrivals for each online item type during the planning horizon, from which we analyze the effect of the historical data size and the Poisson arrival model on the algorithm's performance. We further generalize the algorithm to the general multidimensional and multi-demand setting, and present its parametric performance guarantee. The effect of the capacity's (demand's) dimension on the algorithm's performance is further analyzed based on the established parametric form. Finally, we demonstrate the effectiveness of our algorithms numerically.
}%


\KEYWORDS{Sample-based Algorithm, Online Resource Allocation, Competitive Ratio} 


\maketitle

%


\section{Introduction}
Online Generalized Assignment Problem (GAP) is a fundamental research topic in resource allocation, which has been widely studied in various areas including operations research and computer science~\cite{alaei2013online,naori_online_2019,albers_improved_2020,Jiang2021TightGF}.
In this problem, we are given a set of offline bins with general capacities. During the online process, online items arrive in the system sequentially and request a certain bin capacity, name demand. Capacity and demand are in multi-dimensions and they can be different for different resources and requests. We need to pack the item into an offline bin with enough remaining capacity immediately and irrevocably upon its arrival or reject the item. If an item is packed into a bin, it consumes the bin's capacity by its demand and generates a reward. Our goal is to maximize the total reward of the output packing scheme. 

This model has wide applications in various domains. Examples include online task assignment, ridesharing, and cloud computing.

\textbf{Online task assignment.} In some online task assignment platforms such as Amazon, the platform has several tasks to be allocated to a sequence of online arriving workers. Each task can be regarded as an offline bin and contains several processes. Each worker is regarded as an item with a single demand.  Each worker arrives at the platform sequentially and takes one process in a task upon its arrival. A successful match of a worker to a task generates a reward that depends on the quality of the task completed by this worker.

\textbf{Ridesharing.} In a ridesharing platform such as Uber, each car can be regarded as a bin whose capacity is the number of guest seats in the car.  Each ride request arrives at the platform in an online manner and requests some seats from a car nearby.  A car can take multiple ride requests if the capacity permits. A successful match between a ride request and a car will generate a certain reward that depends on the car's location and the passenger's destination.  

\textbf{Cloud computing.} In some commercial cloud computing services such as Azure and Google Cloud, multiple servers with different configurations are provided to agents. Agents arrive sequentially and request a certain amount of computing resources such as CPU and memory from a server. Allocating different servers to an agent may lead to different rewards depending on the servers' configurations. The services provider's goal is to maximize the total reward.


The general framework of the problem lends itself to a wide range of applications, but it also poses challenges when it comes to devising an efficient algorithm. \citet{aggarwal2011online} demonstrated that no online algorithm can achieve a positive competitive ratio if the order of arrivals is determined by an adversary. However, if we assume that the arrivals follow a random order model, where the arriving order of online items is uniformly chosen from all possible permutations, \citet{albers_improved_2020} presented an online algorithm for the single-dimensional GAP that achieves a competitive ratio of $\frac{1}{6.99}$. However, there is a scarcity of literature on the multi-dimensional GAP. To the best of our knowledge, \citet{naori_online_2019} is the first one to provide an algorithm with a competitive ratio that depends on the dimensions under the random order arrival model. But there is no existing literature on the multi-dimensional GAP under a stochastic arrival model.


Poisson arrival is a widely-used stochastic arrival model. It has been extensively examined in various online problems (see \cite{gamarnik2005analysis,Jiang2021TightGF,Yan2022EdgeweightedOS}).  
In a typical Poisson arrival model, each online arrival of type $v\in V$ follows a Poisson process with a known arrival rate $\lambda_v$.  The arrival processes of different types are assumed to be independent.  However, in many real-world scenarios, specific information regarding the arrival rates is often unavailable. In other words, the Poisson arrival rate for each type is unknown.   

In this paper, we study an online stochastic multidimensional generalized assignment (GAP) problem  in the context of an unknown Poisson arrival model. This study marks the initial endeavor to explore the online GAP in an unknown Poisson arrival setting.
To tackle this problem, we leverage both the pre-existing offline data, referred to as historical data, and the sequentially-revealed online data, and develop an efficient multi-phase algorithm. Specifically, the algorithm includes a sampling phase to \zihao{reject all arrivals and only} collect arrival data, as well as several exploitation phases to allocate resources to online arrivals. The developed algorithm employs the concept of exploration-exploitation to dynamically learn the arrival rate and optimize the allocation decision.  Through a thorough analysis of the algorithm's performance, we examine the effect of the historical data size on the its effectiveness. \zihao{Furthermore, we provide a novel insight on how to fine-tune the trade-off between exploration and exploitation in online algorithms based on the size of offline data as well as the time horizon. }

We begin by considering a simplified setting of the GAP, where the dimension $D=1$ and all capacities and demands are equal to $1$.  In this simplified scenario, the problem reduces to a typical online bipartite matching where one side representing offline vertices and the other side representing online vertices.
The online bipartite matching problem has received significant attention in the research community, with studies dating back to the seminal work by ~\citet{karp1990optimal}, examining various arrival models. 
\citet{karp1990optimal} studied the online matching under the worst-case model, i.e., the arrival sequence is determined by the adversary. Their goal is to maximize the number of successful matches. 
\citet{hutchison_optimal_2013,zhang2022learn} further studied a random order model and generalized the objective to maximize the total reward that is defined on the matched pairs. 
There is also a vast stream of literature that studies online bipartite matching under a known Poisson arrival model~\cite{feldman2009online,manshadi_online_2012,Yan2022EdgeweightedOS}. In contrast, Section \ref{sect.onlinematching} in this paper studies this problem under a Poisson arrival model with unknown arrival rates. We develop an effective algorithm for the problem and demonstrate that the performance guarantee of the derived algorithm depends on two factors: the size of historical data and the minimum number of arrivals for each online item type throughout the planning horizon. In general, the algorithm achieves a higher ratio than the current state-of-the-art ratio of $\frac{1}{e}$ in the random order model (\cite{hutchison_optimal_2013}). As the size of the historical data grows sufficiently large, the ratio can reach the classical ratio of $1-\frac{1}{e}$ proposed by \citet{feldman2009online} for the Poisson arrival model with known rates.

Next, we generalize the setting to the multidimensional GAP with general capacities and demands. We adopt a similar idea as described in Section \ref{sect.onlinematching}, which involves balancing exploration and exploitation, and develop another multi-phase algorithm. Remarkably, we prove that even in the absence of historical data, our algorithm can generate a better ratio than that achieved by \citet{naori_online_2019}, the sole existing literature on online multidimensional GAP. The performance can get further improved by increasing the historical data size. \wh{When we skip specific phases in our algorithm, we propose two types of heuristic algorithms based on our main algorithm (see Section~\ref{sect.nomax} and Section~\ref{sect.nolp}). By tuning the parameters of the heuristic algorithms, we further provide explicit forms of the competitive ratio.}

\subsection{Related Work}
Online generalized assignment problem and its various simplified settings such as online bin packing and online knapsack problems have been studied extensively in the literature~\cite{alaei2013online,naori_online_2019,albers_improved_2020,Jiang2021TightGF}.
 \citet{naori_online_2019} is a seminar work in studying the  multidimensional GAP. They considered a random order model and showed that there exists $O(D)$-competitive algorithms in the $D$-dimensional GAP, i.e., the competitive ratio of their algorithms is in the order of $\frac{1}{D}$. They also proved that the bound is tight in the order. \zihao{Compared to our work, we adopt some ideas from their paper in designing our algorithm and prove that our algorithm can generate a better ratio than theirs under an unknown Poisson arrival model, even without any historical data.} 

In the literature of single-dimensional GAP, \citet{albers_improved_2020} proposed a randomized algorithm that achieves a competitive ratio of $\frac{1}{6.99}$ under the random order model. By applying the same technique, they further proved that their algorithm can achieve a competitive ratio of $\frac{1}{6.65}$ for the online knapsack problem, where there is only one offline bin. 
If the arrivals are drawn from some independent (but not identical) distributions over item types, \citet{Jiang2021TightGF} proposed a technique by reducing the single-dimensional GAP to a sequence of online knapsack problems to achieve a performance guarantee of $\frac{1}{3+e^2}\approx 0.319$ in both single-dimensional GAP and online knapsack problem.
If the demand of each online item is also a random variable and its realization can only be observed after being packed into a bin,
\citet{alaei2013online} proposed an algorithm whose competitive ratio is $1-\frac{1}{\sqrt{k}}$ for the single-dimensional GAP assuming items arrive in an adversarial order. They assume that each item's demand is upper bound by $\frac{1}{k}$ fraction of the capacity of any bin and the distribution information for each possible arrival is known in advance.

Focusing on a simplified setting of the online single-dimensional GAP with the unit demand and capacity, the problem reduces to an online bipartite matching problem, which has received long-term attention from researchers. 
\citet{karp1990optimal} started this stream of works and considered maximizing the total number of matches under the worst-case model. They presented an algorithm with a tight competitive ratio of $1-\frac{1}{e}$.
\citet{manshadi_online_2012} followed this work and proposed an algorithm achieving a competitive ratio of $0.702$ under a stochastic arrival process.
Many follow-ups further study maximizing the vertex-weighted or edge-weighted matching under a random order arrival model or a stochastic arrival model~\cite{feldman2009online,aggarwal2011online,hutchison_optimal_2013,huang_online_2021,huang_power_2022,Yan2022EdgeweightedOS,feng2023improved}.
In particular, for maximizing the edge-weighted matching under the stochastic arrival model, \citet{feldman2009online} proposed a linear-program-based Suggest Matching algorithm and showed that this algorithm can achieve a competitive ratio of $1-\frac{1}{e}$. Their algorithm provides some intuitions for us in designing our algorithm in Section~\ref{sect.onlinematching}.
\citet{huang_power_2022} proposed a state-of-art algorithm under the vertex-weighted setting which achieves a competitive ratio of $0.716$ and \citet{feng2023improved} proposed an algorithm with a competitive ratio of $0.650$ under the edge-weighted setting when all arrivals follow a stochastic arrival process.
Under the random order model, \citet{hutchison_optimal_2013} presented an algorithm 
that achieves a competitive ratio of $\frac{1}{e}$, and showed this bound is tight~\cite{mehta_online_2013}.

Another related stream of research is the online algorithms with samples~\cite{azar2014prophet,correa2019prophet,kaplan_online_2021,zhang2022learn}. 
\citet{zhang2022learn} is the most related one. In their paper, they studied a maximum edge-weighted matching under the random order model. They made use of the historical data in their matching algorithm and established the performance guarantee as a function of the size of historical data. Inspired by their algorithm, we design our three-phase algorithm for a simplified setting, where the dimension $D=1$ and all the offline bins' capacity and online items' demand are one in Section~\ref{sect.onlinematching}. But note that we consider a Poisson arrival model while their work studied a random order model, which leads to a very different analysis of the performance guarantee.
A concurrent work in~\citet{liu2023online} also considers utilizing historical data in a multidimensional GAP problem. Compared to our work, their work considers the random order model and mainly focuses on two restricted settings where (1) dimension $D=1$, (2) the ratio between the capacity of the offline bin and the demand of the online item is upper bounded. Our work considers the general multidimensional GAP problem under the unknown Poisson arrival model, which has a great difference in the algorithm design and the analysis.
\section{Preliminaries}
We consider the following online stochastic generalized assignment problem (GAP), where we need to pack some online arriving items into some offline bins over a planning horizon of $T$ periods. We use $U$ to denote the set of bins.  Each bin $u\in U$ is of $D$ dimensions and has a capacity $\textbf{C}_u=(C_u^1,\ldots,C_u^D)\in \mathbb{R}^D_{\geq 0}$. 
We use $V$ to denote a set of item types, whose demand is of $D$ dimensions. Packing an item whose type is $v\in V$ into a bin $u\in U$ consumes the capacity by $\textbf{r}_{uv}=(r_{uv}^1,\ldots,r_{uv}^D)\in \mathbb{R}^D_{\geq 0}$. 
We assume there are $m$ item types in $V$.
Packing an item of the type $v\in V$ into a bin $u\in U$ generates a non-negative reward of $w_{uv}$. Without loss of generality, we assume each item can be packed into any bin. If a pair of item and bin is incompatible, we can simply set its rewards $w_{uv}$ to $0$.
Our goal is to maximize the total reward of the packing without violating the capacity constraints of all bins. We formulate this GAP problem in mathematical form as follows,
\begin{align}
   \textbf{max} &\sum_{u\in U,v\in V}w_{uv}{x}_{uv} & \label{lp1} \\
   \textbf{s.t.} & \sum_{u\in U}x_{uv}\le n_v, & \forall v \in V, \tag{\ref{lp1}{a}}\label{lp1a} \\
   & \sum_{v\in V} r_{uv}^d x_{uv}\leq C_u^d,& \forall u\in U, d\in [D], \tag{\ref{lp1}{b}}\label{lp1b} \\
   & x_{uv}\in\{0,1,\ldots,n_v\}, & \forall u\in U, v\in V, \tag{\ref{lp1}{c}}\label{lp2c}
\end{align}
where $n_v$ denotes the number of online arriving items of type $v$ and $x_{uv}$ is the allocation decision, which represents the number of items of type $v$ packed into the bin $u$. 

We now define our online process. 
We are given a time horizon of $T$ and we assume $T$ is large, which is a standard assumption in the online resource allocation literature (e.g.,~\cite{feldman2009online,huang_online_2021,manshadi_online_2012,jaillet_online_2014}).
We assume the online arrivals follow a type-specific Poisson process, i.e., the arrival rate of each item type $v$'s Poisson process denoted by $\lambda_v$ depends on its type. But $\lambda_v>0$ is  \emph{unknown}  for all $v\in V$. 
We assume each type's arrival process is independent from each other. 
Upon the arrival of one item, we need to make an immediate and irrevocable decision: pack it into one bin with enough space and generate a reward, or reject it.

\emph{Historical data.} In this paper, we assume that the decision maker has some pre-existing offline data before the online process. We name them historical data in the subsequent presentation. We assume the historical data are generated from the \emph{same} arrival model as in our online process in a  time horizon of $h\cdot T$. In other words, the parameter $h\in[0,1]$ measures the size of the historical data. The number of each item type $v$ in the historical data is around $hT\lambda_v$. 

\emph{Competitive ratio.} We measure the performance of online algorithms by a competitive ratio.
We define ALG($I$) as the expected reward of the packing output by an online algorithm ALG on an instance $I$ of our problem. 
The expectation is taken over random arrivals of online items during the online planning horizon 
and the randomized (if needed) algorithm.
We then compare its performance with a clairvoyant optimal algorithm $\opt$, which holds the information of all the subsequent arrivals, i.e., each online arrival's type and its arriving time.
We can similarly define $\opt(I)$ as the expected reward of the packing output by $\opt$ on $I$.
For simplicity of the notion, we will call $\opt(I)$ the offline optimal and drop $I$ when there is no ambiguity, in the following analysis.
Then the competitive ratio of ALG is defined as the minimum ratio of ALG($I$) over $\opt(I)$ among all instances $I$ of our problem.

\subsection{Important Lemmas}
We next present two lemmas that help us analyze the performance of our algorithms, the proofs can be found in the appendix.

\begin{restatable}[Number of Samples]{lemma}{lemnum}
   \label{lem.number}
   For a Poisson arrival process with an arrival rate $\lambda>0$ during the time horizon of $T$, and denote $N=\lambda\cdot T$, we have that with probability $1-e^{-\frac{N}{8}}$, the number of arrivals $n$ is at least 
   \begin{equation*}
      n\ge \frac{1}{2}N.
   \end{equation*}
\end{restatable}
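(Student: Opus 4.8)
The plan is to recognize that the count $n$ of Poisson arrivals over the horizon $[0,T]$ with rate $\lambda$ is itself a Poisson random variable with mean $N=\lambda T$, so the statement is simply a one-sided (lower-tail) concentration bound for a Poisson variable, which I would establish by the standard Chernoff / exponential-moment method. In this way the goal reduces to showing $\Pr\bigl(n < \frac{1}{2}N\bigr) \le e^{-N/8}$ and then taking complements.

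First I would apply Markov's inequality to $e^{tn}$ for a parameter $t<0$. Since $t<0$, the event $\{n < \frac{1}{2}N\}$ is contained in $\{e^{tn} > e^{tN/2}\}$, which gives $\Pr\bigl(n < \frac{1}{2}N\bigr)\le e^{-tN/2}\,\E[e^{tn}]$. I would then substitute the Poisson moment generating function $\E[e^{tn}] = e^{N(e^{t}-1)}$, so the bound becomes $\exp\!\bigl(N(e^{t}-1) - \frac{1}{2}tN\bigr)$.

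Next I would choose $t$ to make the exponent as negative as possible. Differentiating $e^{t}-1-\frac{1}{2}t$ in $t$ and setting the derivative to zero gives $e^{t}=\frac{1}{2}$, i.e. $t=-\ln 2<0$, which is admissible. Plugging this back yields the exponent $\frac{N}{2}(\ln 2 - 1)$, so $\Pr\bigl(n < \frac{1}{2}N\bigr)\le e^{N(\ln 2-1)/2}$.

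Finally I would compare the constants: since $\frac{1-\ln 2}{2}\approx 0.153 > \frac{1}{8}$, we have $e^{N(\ln 2-1)/2}\le e^{-N/8}$, which is exactly the claimed bound after taking complements. The only point that requires any care — and the nearest thing to an obstacle — is selecting the optimal $t=-\ln 2$ and then verifying the numerical comparison $\frac{1-\ln 2}{2}>\frac{1}{8}$; everything else is a routine application of the exponential-moment method. As an alternative one could directly invoke the textbook simplified Poisson lower-tail bound $\Pr\bigl(n\le (1-\delta)N\bigr)\le e^{-N\delta^2/2}$ with $\delta=\frac{1}{2}$, which produces the constant $\frac{1}{8}$ immediately and matches the stated form.
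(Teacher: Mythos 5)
Your proof is correct and follows essentially the same route as the paper: both observe that $n$ is Poisson with mean $N$ and then invoke a lower-tail concentration bound, the only difference being that the paper cites this bound as a known fact while you derive it from scratch via the Chernoff/exponential-moment method with the optimal choice $t=-\ln 2$, and your constant $\frac{1-\ln 2}{2}\approx 0.153>\frac{1}{8}$ checks out. The alternative you mention, the simplified Poisson lower-tail bound $\Pr(n\le(1-\delta)N)\le e^{-N\delta^2/2}$ with $\delta=\frac{1}{2}$, is precisely the fact the paper cites.
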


\begin{restatable}[Arrival Rate Estimation]{lemma}{lemrate}
   \label{lem.rate}
   For a Poisson arrival process with an unknown arrival rate $\lambda>0$ that begins at time $t=0$, if we observe $n$ points arriving at time $t_1, t_2, \dots, t_n$, we can estimate the arrival rate $\hat \lambda=\frac{t_n}{n}$, and for $0<\delta<1$, with probability $1-\delta$ that 
   \begin{equation*}
      (1-\Delta)\cdot \hat \lambda \le \lambda\le (1+\Delta)\cdot \hat \lambda
   \end{equation*}
   where $\Delta = \sqrt{\frac{4\ln{\frac{1}{\delta}}}{n}}$.
\end{restatable}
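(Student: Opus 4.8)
The plan is to reduce the two-sided bound on the unknown rate $\lambda$ to a multiplicative concentration inequality for the arrival time $t_n$ of the $n$-th point, and then to establish that concentration via a Chernoff argument. Write $X_i = t_i - t_{i-1}$ (with $t_0 = 0$) for the inter-arrival times; by the defining property of a Poisson process these are i.i.d.\ $\mathrm{Exp}(\lambda)$ random variables, so $t_n = \sum_{i=1}^{n} X_i$ is $\mathrm{Gamma}(n,\lambda)$-distributed with mean $\mu := \E[t_n] = n/\lambda$. Interpreting the rate estimate as $\hat\lambda = n/t_n$ (so that $t_n/n$ estimates the mean inter-arrival time $1/\lambda$), a direct rearrangement shows that the event $(1-\Delta)\hat\lambda \le \lambda \le (1+\Delta)\hat\lambda$ is exactly the event $(1-\Delta)\mu \le t_n \le (1+\Delta)\mu$. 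Hence it suffices to bound $\Pr[\,|t_n - \mu| > \Delta\mu\,]$ from above by $\delta$.

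To control each tail I would apply the Chernoff method using the exact moment generating function $\E[e^{s t_n}] = (1 - s/\lambda)^{-n}$, valid for $s < \lambda$. For the upper tail, optimizing $(1-\theta)^{-n} e^{-\theta(1+\Delta)n}$ over $\theta = s/\lambda \in (0,1)$ at $\theta = \Delta/(1+\Delta)$ yields $\Pr[t_n \ge (1+\Delta)\mu] \le \exp\bigl(-n(\Delta - \ln(1+\Delta))\bigr)$; similarly, the lower tail optimized at $\theta = \Delta/(1-\Delta)$ gives $\Pr[t_n \le (1-\Delta)\mu] \le \exp\bigl(-n(-\Delta - \ln(1-\Delta))\bigr)$. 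I would then lower-bound both Chernoff exponents by a common quadratic: the bound $-\Delta - \ln(1-\Delta) \ge \Delta^2/2$ is immediate from the power series, while $\Delta - \ln(1+\Delta) \ge \Delta^2/4$ follows by checking that $g(\Delta) = \Delta - \ln(1+\Delta) - \Delta^2/4$ satisfies $g(0) = 0$ and $g'(\Delta) = \Delta(1-\Delta)/(2(1+\Delta)) \ge 0$ on $(0,1)$. Both tails are therefore at most $\exp(-n\Delta^2/4)$, so a union bound gives $\Pr[\,|t_n - \mu| > \Delta\mu\,] \le 2\exp(-n\Delta^2/4)$. Setting this at most $\delta$ and solving for $\Delta$ recovers the stated $\Delta = \sqrt{4\ln(1/\delta)/n}$, up to the harmless constant inside the logarithm produced by the two-sided union bound.

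The routine part is the MGF computation and its optimization; the delicate step is obtaining clean quadratic lower bounds on the two Chernoff exponents with a matching constant. Because $t_n$ is a sum of sub-exponential (rather than sub-Gaussian) variables, the two tails are genuinely asymmetric, and the binding constraint comes from the upper tail, where the naive bound $\Delta - \ln(1+\Delta) \ge \Delta^2/2$ fails for moderate $\Delta$; this is precisely why the weaker factor $\Delta^2/4$, valid for all $0 < \Delta < 1$ via the monotonicity of $g$, appears and produces the $4$ under the square root. I would also keep track of the admissible range $\Delta < 1$ (needed both for the lower-tail MGF bound and for the monotonicity of $g$), which corresponds to requiring $n$ to exceed the implied threshold $4\ln(1/\delta)$ and is consistent with the hypothesis $0 < \delta < 1$.
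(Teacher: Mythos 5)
Your proof is correct and follows essentially the same route as the paper, whose own ``proof'' simply observes that the inter-arrival times are i.i.d.\ exponential and cites the sub-exponential (Chernoff) concentration bounds of Chapter 2 of Wainwright --- you have filled in exactly that argument, including the sensible reading of the estimator as $\hat\lambda = n/t_n$ despite the typo in the statement. The only residual discrepancy, $\ln(2/\delta)$ versus $\ln(1/\delta)$ arising from the two-sided union bound, is a constant-factor slack already implicit in the lemma as stated and is immaterial to how it is used downstream.
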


\section{Online Matching}
\label{sect.onlinematching}

In this section, we consider a simplified setting of our model, where $D=1$ and all the offline bins' capacity and online items' demand  are $1$.
Under this setting, the problem reduces to the traditional online edge-weighted bipartite matching, where one side is offline vertices and one side is online vertices.
Thus, in the following of this section, we will call online items and offline bins online and offline vertices, respectively.

In this setting, we utilize some ideas behind some previous algorithms from \citet{feldman2009online,hutchison_optimal_2013} and \citet{zhang2022learn}, which solve the online edge-weighted bipartite matching in a random order model or an unknown i.i.d. model. 
We then propose our algorithm, which is shown in Algorithm~\ref{alg1}. Without loss of generality, we denote our online time horizon by $[0,T)$ and the time horizon for historical data by $[-h\cdot T,0)$.
We define $V([a,b))$ as the set of online vertices arriving in the time interval $[a,b)$. 
We then present a linear program (LP) denoted by $LP(\bm \lambda, T)$ that will be used in the algorithm as follows. The decision variables are $\{x_{uv}\}$.
\begin{align}
    \textbf{max} &\sum_{u\in U,v\in V}w_{uv}{x}_{uv} & \label{lp2} \\
    \textbf{s.t.} & \sum_{u\in U}x_{uv}\le \lambda_v\cdot T, & \forall v \in V, \tag{\ref{lp2}{a}}\label{lp2a} \\
    & \sum_{v\in V} x_{uv}\leq 1 ,& \forall u\in U, \tag{\ref{lp2}{b}}\label{lp2b} \\
    & x_{uv}\in[0,1], & \forall u\in U, v\in V. \tag{\ref{lp2}{c}}\label{lp2c}
\end{align}
This LP is used in the Suggested Matching algorithm proposed by \citet{feldman2009online}, which is  used as part of our algorithm.
From the proof of the Suggested Matching algorithm in \citet{feldman2009online}, we get the following lemma.

\begin{lemma}
   \label{lem.ub1}
    The optimal value of $LP(\bm \lambda, T)$ is an upper bound of the offline optimal in the online edge-weighted bipartite matching problem under known Poisson arrival model, where $T$ is the online time horizon and $\bm \lambda$ represents the parameters of the type-specific arrivals.
\end{lemma}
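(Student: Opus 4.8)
The plan is to construct from the clairvoyant optimal a feasible fractional solution of $LP(\bm\lambda,T)$ whose objective value equals $\opt$; since the LP optimum dominates the objective value of any feasible point, this immediately yields that the optimal value of $LP(\bm\lambda,T)$ is at least $\opt$.

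First I would fix the arrival randomness. For a realization $I$ of the Poisson arrival process, let $n_v$ be the (random) number of type-$v$ arrivals, and let $X_{uv}$ denote the number of type-$v$ items that the clairvoyant optimal assigns to bin $u$ on this realization. Because each offline vertex has unit capacity, any feasible assignment---in particular the optimal one---satisfies $\sum_v X_{uv}\le 1$ and $\sum_u X_{uv}\le n_v$ pointwise, and hence $X_{uv}\in\{0,1\}$. The expected reward collected by the clairvoyant optimal is then $\opt=\E\big[\sum_{u\in U,v\in V} w_{uv} X_{uv}\big]$, where the expectation is over the arrival process.

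Next I would define $\bar x_{uv}=\E[X_{uv}]$ and verify that $\{\bar x_{uv}\}$ is feasible for $LP(\bm\lambda,T)$. Taking expectations of the pointwise inequalities above and using linearity of expectation gives $\sum_v \bar x_{uv}\le 1$, which is constraint~(\ref{lp2b}), together with the box constraints $\bar x_{uv}\in[0,1]$. For the type constraint~(\ref{lp2a}), taking the expectation of $\sum_u X_{uv}\le n_v$ and invoking the fact that the number of arrivals of a rate-$\lambda_v$ Poisson process over a horizon of length $T$ has mean $\E[n_v]=\lambda_v T$ yields $\sum_u \bar x_{uv}\le \lambda_v T$. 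Thus $\{\bar x_{uv}\}$ is LP-feasible, and its objective value is $\sum_{u\in U,v\in V} w_{uv}\bar x_{uv}=\E\big[\sum_{u\in U,v\in V} w_{uv} X_{uv}\big]=\opt$, so the LP optimum is at least $\opt$.

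This is essentially the standard ``expectation of an integral optimum is a feasible fractional point'' relaxation, so I do not anticipate a genuine obstacle. The only points requiring care are that the expectation is taken purely over the arrival process (the clairvoyant optimal is deterministic given a realization, so no algorithmic randomness enters), that the identity $\E[n_v]=\lambda_v T$ is exactly the Poisson mean matching the right-hand side of~(\ref{lp2a}), and that the unit-capacity structure is what forces $X_{uv}\le 1$ so that the box constraints survive after taking expectations.
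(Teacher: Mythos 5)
Your proof is correct and follows the same route the paper uses: the paper defers Lemma~\ref{lem.ub1} to the Suggested Matching analysis of \citet{feldman2009online}, but its own proofs of the analogous statements (Lemmas~\ref{lem.ub.heavy} and~\ref{lem.ub.light}) proceed exactly as you do, averaging the clairvoyant optimal assignment over realizations of the arrival process, checking feasibility of the resulting fractional point constraint by constraint via linearity of expectation and $\E[n_v]=\lambda_v T$, and concluding that the LP optimum dominates $\opt$. No gaps.
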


We can now formally describe Algorithm~\ref{alg1}.
We reject all online arrivals and only collect samples for the arrival model in the first phase corresponding to the range $[0,\alpha\cdot T)$. We name it a sampling phase. 
After the sampling phase, we estimate the arrival rate $\hat{\bm \lambda}$ and use it to advise our matching in the second phase, named a LP phase. Specifically, during the time interval $[\alpha\cdot T,\beta\cdot T)$, we allocate an offline vertex to each arriving vertex \zihao{with a probability of $\gamma\cdot \frac{\hat{x}_{uv}}{\hat\lambda_v T'}$, where $\gamma$ is a scaling parameter and $\{\hat{x}_{uv}\}$ is the optimal solution of $LP(\hat{\bm \lambda}, T')$.}
For the last phase corresponding to the time interval $[\beta\cdot T,T)$, we solve a deterministic matching problem in a bipartite graph which is formed by all the offline vertices on one side and the online arrivals during a time horizon of at most $T$ on the other side. We choose those arrivals in the following way: if the time horizon of previous arrivals (including the historical data and the online arrivals) is no more than $T$, we use all the previous arrivals to form the bipartite graph, otherwise, we use the arrivals until time $(1-h)T$ (including the historical data). 
We then apply the maximum bipartite matching algorithm to choose an offline resource for the arriving vertex. A match is made if the selected resource is available. We name the last phase a maximum matching phase. 

The intuition behind our algorithm is as follows. We first reserve all the resources and only learn the arrival rate from the arrival data.  After collecting a certain size of arrival data during the time interval $[0,\alpha\cdot T)$, we can estimate the arrival rate and perform the Suggested Matching algorithm using the estimated arrival rate to make the matching decision. Note that when the number of arrivals is very large, a deterministic matching is also known as a good matching algorithm. Hence we perform the maximum bipartite matching algorithm on the bipartite graph formed by the previous arrivals to guide our matching. We use a hyperparameter $\beta$ to determine whether we should perform this deterministic matching algorithm and when to perform it.  If $\beta=1$ generates a better competitive ratio than all the other values (between $0$ and $1$), then it suggests that the Suggested Matching algorithm is sufficient for the matching. In fact, we find that a hybrid use of the Suggested Matching algorithm  and the maximum bipartite matching can improve the performance according to our analysis of the competitive ratio. 

\noindent\textbf{Remark:} The algorithms exhibits a trade-off of the exploration (collecting data to learn arrival rate) and exploitation (allocate resources to maximize expected reward). The longer the sampling period (i.e., the larger the $\alpha$), the more focus on the exploration.  This algorithm shares similar ideas to those for multi-armed bandit (MAB) problems. But we claim that the algorithms for MAB are not applicable to our problem due to the following reasons. First, we have limited capacity for each resource that is corresponding to each option in MAB. Hence we can only play each option limited times. In contrast, MAB allows each option to be played infinitely many times. Second, we are trying to learn the random arrival process instead of the random reward as in the MAB. Finally, in our problem the choice of the resource (option) is made based on the reward. In contrast, each option is played before knowing the reward in MAB.



\begin{algorithm}[h]
   \linespread{1}
   \selectfont
   \caption{Sample-based Algorithm} 
   \label{alg1}
   \textbf{Input}: Online arrivals of agents, history arrivals $h\cdot T$\\
   \textbf{Output}: A feasible matching between online and offline vertices \\
   \textbf{Parameter}: \zihao{Phase parameters} $\alpha$, $\beta$ and \zihao{scaling parameter} $\gamma$ satisfying $0\leq \alpha\leq \beta\le 1$ and $0\leq \gamma\leq 1$
   \begin{algorithmic}[1]
       \While {$t$ increases from $0$ to $T$ continuously}
           \If {$0\le t< \alpha\cdot T$ } \Comment{Sampling phase}
           \State reject all online arrivals
           \EndIf
           \If {$t= \alpha\cdot T$} \Comment{Estimation}
           \State according to the arrival history $[-h\cdot T, \alpha\cdot T]$ and Lemma~\ref{lem.rate}, estimate $\hat{\bm \lambda}$ \label{alg1esbegin}
           \State solve $LP(\hat{\bm \lambda}, T')$ where $T'=(1-\alpha)T$, and get the solution $\hat{\bm x}$ \label{alg1esend}  
           \EndIf
            \If {$\alpha \cdot T\le t< \beta \cdot T$} \Comment{LP phase}
               \For{each arrival $i$ whose type is $v\in V$}
               \State sample an offline vertex $u$ with probability $p_{uv} = \gamma\frac{\hat{x}_{uv}}{\hat\lambda_v T'}$
               \If {$u$ is available} \label{alg1lpmatch}
                   \State match $i$ and $u$ 
               \EndIf
               \EndFor
           \EndIf
         \If {$\beta\cdot T\le t<T$} \Comment{Maximum matching phase}
            \For {each arrival $i$ whose type is $v\in V$}
            \State $V'=V([-h\cdot T, \min\{t, (1-h)T\}))\cup\{v\}$ \label{algdefvp}
            \State find the optimal matching $M'$ for $G'=(U, V', E)$ \label{algdefmatching}
            \If {$(u, v)\in M'$ and $u$ is available} 
                  \State match $i$ and $u$
            \EndIf
            \EndFor
           \EndIf
       \EndWhile
   \end{algorithmic}
\end{algorithm}

We now analyze the performance of our algorithm. The analysis can be separated into two parts, for the LP phase and maximum matching phase, respectively.

\subsection{LP phase}

According to Lemma~\ref{lem.number}, during the time horizon $(h+\alpha)T$ before the LP phase and denote $N=min_{v\in V}\lambda_v \cdot T$, then with probability $1-e^{\frac{-(h+\alpha)N}{8}}$, the number of arrivals $n$ of one online type is at least $\frac{1}{2}(h+\alpha)N$.
We further apply Lemma~\ref{lem.rate} and get that with probability $(1-\delta)^{m}(1-e^{\frac{-(h+\alpha)N}{8}})^m>1-m\delta-me^{\frac{-(h+\alpha)N}{8}}$, the estimate $\hat{\lambda}_v$ of the parameter $\lambda_v$ for each type $v\in V$'s arrival model satisfies $(1-\Delta)\cdot \hat \lambda_v \le \lambda_v\le (1+\Delta)\cdot \hat \lambda_v$, where $\Delta=\sqrt{\frac{8ln\frac{1}{\delta}}{(h+\alpha)N}}$. In the following of this section, we assume $(1-\Delta)\cdot \hat \lambda_v \le \lambda_v\le (1+\Delta)\cdot \hat \lambda_v$ holds, if there is no further specified. 

We then can adopt the following two lemmas to bound the optimal value of $LP(\hat{\bm \lambda}, T')$ by $\opt$ and the probability of the successful matching in Step~\ref{alg1lpmatch}, and conclude the expected reward during the LP phase by the third lemma below. The detailed proofs are deferred to appendix.


\begin{restatable}{lemma}{lemlbone}
   \label{lem.lb1}
   The optimal value of $LP(\hat{\bm \lambda}, T')$ is lower bounded by $\frac{1-\alpha}{1+\Delta}\opt$.
\end{restatable}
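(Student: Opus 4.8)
The plan is to construct an explicit feasible solution to $LP(\hat{\bm \lambda}, T')$ whose objective already attains the claimed bound, and then invoke Lemma~\ref{lem.ub1} to relate the LP value to $\opt$. First I would record the two facts I need. By Lemma~\ref{lem.ub1}, the optimal value of $LP(\bm \lambda, T)$ upper bounds $\opt$; let $\bm x^{\ast}=\{x^{\ast}_{uv}\}$ be an optimal solution of this program. From the maintained estimation guarantee $(1-\Delta)\hat\lambda_v\le\lambda_v\le(1+\Delta)\hat\lambda_v$ we get $\hat\lambda_v\ge \lambda_v/(1+\Delta)$, and since $T'=(1-\alpha)T$ this gives, for every $v\in V$,
\begin{equation*}
   \hat\lambda_v T' \;\ge\; \frac{1-\alpha}{1+\Delta}\,\lambda_v T.
\end{equation*}

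The key step is a scaling argument. I would set $\tilde x_{uv}=\frac{1-\alpha}{1+\Delta}\,x^{\ast}_{uv}$ and verify that $\tilde{\bm x}$ is feasible for $LP(\hat{\bm \lambda}, T')$. Constraint~\eqref{lp2a} follows from the displayed inequality, since $\sum_{u}\tilde x_{uv}=\frac{1-\alpha}{1+\Delta}\sum_{u} x^{\ast}_{uv}\le \frac{1-\alpha}{1+\Delta}\lambda_v T\le \hat\lambda_v T'$. Constraint~\eqref{lp2b} and the box constraints $\tilde x_{uv}\in[0,1]$ hold because the scaling factor $\frac{1-\alpha}{1+\Delta}$ lies in $[0,1]$ (as $\alpha\in[0,1]$ and $\Delta\ge 0$), so shrinking the original feasible solution preserves both nonnegativity and all the upper bounds.

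Finally I would evaluate the objective of $\tilde{\bm x}$: it equals $\frac{1-\alpha}{1+\Delta}\sum_{u,v} w_{uv}x^{\ast}_{uv}$, that is, $\frac{1-\alpha}{1+\Delta}$ times the optimal value of $LP(\bm \lambda, T)$. Since $\tilde{\bm x}$ is feasible, this quantity lower bounds the optimal value of $LP(\hat{\bm \lambda}, T')$; combining with the upper bound from Lemma~\ref{lem.ub1} yields the stated $\frac{1-\alpha}{1+\Delta}\opt$. The only point that requires care --- and the main obstacle to getting the constant right --- is constraint~\eqref{lp2a}, which must absorb both shrinkage factors at once: the $1/(1+\Delta)$ arising from a possible under-estimation of $\lambda_v$, and the $(1-\alpha)$ arising from the shortened horizon $T'$. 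The remaining constraints are monotone under the single global scaling and are therefore automatic.
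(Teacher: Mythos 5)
Your proposal is correct and follows essentially the same route as the paper's proof: scale the optimal solution of $LP(\bm \lambda, T)$ by $\frac{1-\alpha}{1+\Delta}$, check feasibility for $LP(\hat{\bm \lambda}, T')$ constraint by constraint (with Constraint~\eqref{lp2a} absorbing both the $(1-\alpha)$ horizon factor and the $1/(1+\Delta)$ estimation factor), and combine with Lemma~\ref{lem.ub1}. No gaps.
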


\begin{restatable}{lemma}{lemunmatchedprob}
   \label{lem.unmatchedprob}
   For $t\in[\alpha\cdot T,\beta\cdot T)$, the probability of the event $E$ that the offline vertex $u$ is not matched before time $t$ is weakly larger than $e^{-\gamma(1+\Delta)\frac{t-\alpha T}{(1-\alpha)T}}$.
\end{restatable}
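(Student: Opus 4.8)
The plan is to reduce the event $E$ to a pure ``no one samples $u$'' event on the interval $[\alpha T, t)$, compute the latter probability exactly by Poisson thinning and superposition, and then bound the resulting rate using the good-estimate inequality together with constraint~(\ref{lp2b}).

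First I would establish the reduction. During the sampling phase $[0,\alpha T)$ the algorithm rejects every arrival, so $u$ is certainly available at time $\alpha T$. Hence $u$ can become matched only if some arrival samples it in Step~\ref{alg1lpmatch} during the LP phase; in particular, if no arrival samples $u$ throughout $[\alpha T,t)$, then $u$ is surely unmatched before $t$. Thus the event ``no arrival samples $u$ on $[\alpha T,t)$'' is contained in $E$, and it suffices to lower bound the probability of this sampling-free event. (The two events are in fact equal, since the first time $u$ is sampled it is still available and the match succeeds, but only the inclusion is needed here.)

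Next I would set up the thinning. Throughout we condition on the good-estimate event $(1-\Delta)\hat\lambda_v\le\lambda_v\le(1+\Delta)\hat\lambda_v$, under which $\hat{\bm\lambda}$ and $\hat{\bm x}$ are treated as fixed constants; since they are determined by arrivals up to time $\alpha T$, the arrivals on $[\alpha T,t)$ remain independent Poisson processes of rates $\lambda_v$. For each type $v$, every type-$v$ arrival independently samples $u$ with probability $p_{uv}=\gamma\frac{\hat x_{uv}}{\hat\lambda_v T'}$, so by the thinning theorem the type-$v$ arrivals that sample $u$ form a Poisson process of rate $\lambda_v p_{uv}$. Superposing the independent processes over $v\in V$, the arrivals sampling $u$ form a Poisson process of rate $\sum_{v\in V}\lambda_v p_{uv}$, and the probability that it has no point in an interval of length $t-\alpha T$ equals $\exp\bigl(-(t-\alpha T)\sum_{v\in V}\lambda_v p_{uv}\bigr)$.

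Finally I would bound the total rate. Substituting $p_{uv}$ and using $\lambda_v\le(1+\Delta)\hat\lambda_v$ gives $\sum_{v}\lambda_v p_{uv}=\frac{\gamma}{T'}\sum_{v}\frac{\lambda_v}{\hat\lambda_v}\hat x_{uv}\le\frac{\gamma(1+\Delta)}{T'}\sum_{v}\hat x_{uv}$, and constraint~(\ref{lp2b}) for $LP(\hat{\bm\lambda},T')$ yields $\sum_{v}\hat x_{uv}\le 1$, so the rate is at most $\frac{\gamma(1+\Delta)}{(1-\alpha)T}$ (recall $T'=(1-\alpha)T$). Since a smaller rate only increases the no-arrival probability, we conclude $\Pr(E)\ge\exp\bigl(-(t-\alpha T)\frac{\gamma(1+\Delta)}{(1-\alpha)T}\bigr)=e^{-\gamma(1+\Delta)\frac{t-\alpha T}{(1-\alpha)T}}$, as claimed. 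I expect the main obstacle to be the thinning/superposition justification coupled with the conditioning: one must verify that the per-arrival sampling randomness is independent of the Poisson arrival times, and that conditioning on the past through $\hat{\bm\lambda}$ and $\hat{\bm x}$ leaves the future arrivals Poisson, which is precisely what legitimizes the clean product-form expression for the no-arrival probability.
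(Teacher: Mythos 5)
Your proposal is correct and follows essentially the same route as the paper: both identify the arrivals that sample $u$ as a thinned/superposed Poisson process of rate $\sum_v \lambda_v\gamma\hat x_{uv}/(\hat\lambda_v T')$, compute the no-point probability over $[\alpha T,t)$, and then apply $\lambda_v\le(1+\Delta)\hat\lambda_v$ together with Constraint~\eqref{lp2b}. Your write-up is somewhat more explicit about the reduction of $E$ to the sampling-free event and about conditioning on the good-estimate event, but the substance of the argument is identical.
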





\begin{restatable}{lemma}{lemlpphase}
   \label{lem.lpphase}
   The expected reward during the LP phase is weakly larger than $(1-3\Delta)(1-\alpha)(1-e^{-\gamma(1+\Delta)\frac{\beta-\alpha}{1-\alpha}})\opt$.
\end{restatable}

\subsection{Maximum matching phase}
We next analyze the performance during the maximum matching phase. By applying Lemma~\ref{lem.unmatchedprob}, the probability of each offline vertex $u\in U$ is unmatched before time $\beta T$ is lower bounded by $e^{-\gamma(1+\Delta)\frac{\beta-\alpha}{1-\alpha}}$.

We assume one online vertex $i$ whose type is $v\in V$ arriving at time $t\in[\beta T,T)$, and $\ell$ is the corresponding edge in the matching $M'$ (Step~\ref{algdefmatching} in Algorithm~\ref{alg1}) which matches the corresponding type $v$ of $i$ with one offline vertex. We first lower bound the expected weight of $\ell$.
To do this, we first show that $\E[w_{\ell}]\geq \E[\frac{\E[\opt|~|V|=k]}{k}]$ where the expectation is taken over $k$ and the set $V$ contains all arriving vertices in the time interval $[0,T)$. Based on this, we further use a function $f(x)$ to represent the corresponding value $\E[\opt|~|V|=x]$, and then show $\E[\frac{f(x)}{x}]\geq\frac{\E[f(x)]}{\E[x]}$ to get the wanted conclusion. The details can be found in the appendix.

\begin{restatable}{lemma}{lemlbofweight}
   \label{lem.lbofweight}
   $\E[w_{\ell}]\geq \frac{\opt}{T\sum_{v\in V}\lambda_v}$.
\end{restatable}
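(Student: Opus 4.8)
The plan is to pass from the weight of the single matched edge $w_\ell$ to the \emph{average} weight of the offline optimum via an exchangeability argument, and then to trade an expectation of a ratio for a ratio of expectations. Concretely I would proceed in three steps: (i) show $\E[w_\ell]\ge \E[f(k)/k]$, where $k=|V|$ is the total number of online arrivals in $[0,T)$ and $f(x):=\E[\opt\mid |V|=x]$; (ii) show $\E[f(k)/k]\ge \E[f(k)]/\E[k]$; and (iii) observe that $\E[f(k)]=\opt$ and $\E[k]=T\sum_{v\in V}\lambda_v$, which immediately gives the claim. Both (i) and (ii) hinge on a single structural fact, namely that $f(x)/x$ is non-increasing; I defer its justification to the end and use it freely in between.

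For step (i), the key observation is the exchangeability of arrival types under the Poisson model. Conditioning on the number of arrivals that form the graph $G'=(U,V',E)$, the types of those vertices are i.i.d., each equal to $v$ with probability $\lambda_v/\sum_{v'}\lambda_{v'}$, and the maximum-weight matching $M'$ is a symmetric function of the resulting multiset of types. Since in this unit-capacity, unit-demand setting the offline optimum restricted to any arrival set is exactly the maximum-weight bipartite matching on that set, we have $w(M')=\opt(V')$. The distinguished vertex $i$ is, conditionally on the count, exchangeable with every other vertex of $V'$, so its expected matched weight equals the average matched weight $\E[w(M')\mid |V'|]/|V'|$; and because, conditioned on the count, $V'$ carries the same type law as $V$, this equals $f(|V'|)/|V'|$ (the horizon length enters only through the count distribution, not through the conditional law of $\opt$). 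Finally, the arrivals forming $G'$ come from a Poisson window of length at most $T$, so $|V'|$ is stochastically dominated by $|V|$; combined with $f(x)/x$ being non-increasing, this yields $\E[f(|V'|)/|V'|]\ge \E[f(|V|)/|V|]$, hence $\E[w_\ell]\ge \E[f(k)/k]$.

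Step (ii) is then short. Writing $g(x)=f(x)/x$, which is non-increasing, and noting $\E[f(k)]=\E[k\,g(k)]$, the desired inequality $\E[g(k)]\ge \E[k\,g(k)]/\E[k]$ is exactly $\mathrm{Cov}(g(k),k)\le 0$; this holds because $g$ is non-increasing while the identity is increasing, so the two are negatively correlated (Chebyshev's correlation inequality). Together with step (iii), where $\E[f(k)]=\opt$ by the tower rule and $\E[k]=T\sum_{v\in V}\lambda_v$ because the aggregate arrival rate over $[0,T)$ is $\sum_{v}\lambda_v$, this completes the argument.

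The main obstacle is establishing that $f(x)/x$ is non-increasing, which I would obtain by proving $f$ is concave with $f(0)=0$. I would derive concavity from the monotonicity and submodularity of the set function $S\mapsto \opt(S)$, the value of the maximum-weight matching between an online-vertex set $S$ and the bins $U$: the matching value is a monotone submodular function of one side's vertex set, and the expectation of a monotone submodular function over an i.i.d.\ sample of a prescribed size is concave in that size, so $f(x)=\E[\opt(\text{$x$ i.i.d.\ vertices})]$ is concave; concavity together with $f(0)=0$ forces the chord slope $f(x)/x$ to be non-increasing. I expect verifying submodularity of the matching value and the sampling-to-concavity passage to be the most delicate parts, whereas the exchangeability step and the covariance inequality are routine once this monotonicity is in hand.
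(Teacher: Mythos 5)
Your proposal is correct and follows essentially the same route as the paper: exchangeability of arrivals reduces $\E[w_\ell]$ to $\E\left[\frac{f(k)}{k}\right]$, your covariance step is exactly the paper's rearrangement (Chebyshev correlation) argument for $\E\left[\frac{f(k)}{k}\right]\ge \frac{\E[f(k)]}{\E[k]}$, and everything rests on $f(x)/x$ being non-increasing. The only divergence is in how that monotonicity is justified: the paper simply invokes ``the maximum weight matching property'' (a one-line deletion argument --- restrict the optimal matching on $x+1$ vertices to each $x$-subset and sum, giving $(x+1)f(x)\ge x\,f(x+1)$), whereas your route via submodularity of the assignment (OXS) valuation and concavity of i.i.d.\ sampling is valid but heavier than needed.
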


Next, we need to bound the probability of the availability of the offline vertex $u\in\ell$.

We first consider the case when $\beta T\leq t< (1-h)T$, i.e., $V'$ defined in Step~\ref{algdefvp} in our algorithm collects all previous online vertices during the total time horizon which is not greater than $T$.

\begin{restatable}{lemma}{lemunmatchedone}
   \label{lem.unmatched1}
   When $\beta T\leq t< (1-h)T$, the probability of the event $E$ that $u$ is unmatched before time $t$ is at least $e^{-\gamma(1+\Delta)\frac{\beta-\alpha}{1-\alpha}}\frac{(h+\beta)T}{hT+t}$.
\end{restatable}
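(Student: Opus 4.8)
The plan is to factor the event $E$ at the boundary $\beta T$ between the LP phase and the maximum matching phase, and to analyze each piece separately:
\[
\Pr[E] = \Pr[u \text{ unmatched before } \beta T]\cdot \Pr[u \text{ unmatched on } [\beta T, t) \mid u \text{ unmatched at } \beta T].
\]
For the first factor I would simply invoke Lemma~\ref{lem.unmatchedprob} evaluated at $t=\beta T$, which gives the lower bound $e^{-\gamma(1+\Delta)\frac{\beta-\alpha}{1-\alpha}}$; this is precisely the exponential prefactor in the claim. It then remains to show that the conditional factor is at least $\frac{(h+\beta)T}{hT+t}$.

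For that conditional factor I would condition on the unordered multiset $S$ of all arrivals in $[-hT, t)$ together with their types, but not on their arrival times. Because each type follows an independent Poisson process, conditioning on $S$ makes the arrival times of these points i.i.d.\ uniform on $[-hT, t)$, an interval of length $hT+t$; in particular, a given arrival lands in the pre-$\beta T$ window $[-hT,\beta T)$ (of length $(h+\beta)T$) with probability $\frac{(h+\beta)T}{hT+t}$. The idea is to compare the algorithm's dynamically recomputed matchings against the single maximum-weight matching $M_t$ on $G'=(U, V([-hT,t)), E)$, which is a deterministic function of $S$ once a tie-breaking rule is fixed. Since an available $u$ is matched at the first arrival $j\in[\beta T,t)$ with $(u,j)\in M'_j$, the reduction I aim for is: \emph{if $u$ is matched on $[\beta T, t)$, then the partner assigned to $u$ in $M_t$ is an arrival that occurred in $[\beta T, t)$}. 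Granting this and using that, given $S$, the partner's time is uniform on $[-hT,t)$, the probability $u$ is matched on $[\beta T, t)$ is at most $\frac{t-\beta T}{hT+t}$, so the conditional factor is at least $1-\frac{t-\beta T}{hT+t}=\frac{(h+\beta)T}{hT+t}$. Averaging over $S$ preserves the inequality, and multiplying the two factors yields the statement.

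The main obstacle is exactly this reduction connecting the sequentially recomputed matchings $\{M'_j\}$ to the single final matching $M_t$. Since $M'_j$ is optimized over a growing online side, an offline vertex matched in some intermediate $M'_j$ need not stay matched in $M_t$, so I must establish a monotonicity/consistency property of maximum-weight bipartite matchings under additions to the online side: once an offline vertex becomes matched it remains matched, and its partner in $M_t$ is the first arrival realizing that pairing. I would prove this by fixing a consistent tie-breaking rule (a generic perturbation of the weights) and running an augmenting-path exchange argument on the symmetric difference of two max-weight matchings, ruling out the case where a newly arrived online vertex reshuffles the optimum so as to free $u$. A secondary point of care is that the maximum-weight matchings depend on the earlier arrivals as well, but these only ever help $u$ stay unmatched (they enlarge the pre-$\beta T$ window counted in the numerator), so the dependence is benign and is precisely what produces the $(h+\beta)T$ term. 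The remaining symmetrization is the routine computation above, carried out under the event $(1-\Delta)\hat\lambda_v\le\lambda_v\le(1+\Delta)\hat\lambda_v$ assumed throughout this subsection.
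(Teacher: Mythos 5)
Your decomposition at $\beta T$ and the use of Lemma~\ref{lem.unmatchedprob} for the exponential prefactor match the paper, and you correctly identify that the remaining task is to show the conditional survival probability on $[\beta T,t)$ is at least $\frac{(h+\beta)T}{hT+t}$. However, the reduction you rest this on --- \emph{if the algorithm matches $u$ during $[\beta T,t)$, then $u$'s partner in the single final matching $M_t$ is an arrival from $[\beta T,t)$} --- is false, and the consistency property you hope to prove (``once matched, $u$ stays matched and its partner in $M_t$ is the first arrival realizing that pairing'') fails in the part you actually need. Maximum-weight matchings reshuffle partners under additions to the online side. Concretely: take offline vertices $u,u'$, a pre-$\beta T$ arrival $j$ and post-$\beta T$ arrivals $i_0,i_1$ in that order, with $w_{uj}=10$, $w_{u'j}=2$, $w_{ui_0}=9$, $w_{u'i_0}=0$, $w_{ui_1}=0$, $w_{u'i_1}=100$. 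At $i_0$'s arrival the optimum on $\{j,i_0\}$ is $\{(u,i_0),(u',j)\}$ (weight $11>10$), so the algorithm matches $u$ to $i_0$; but the optimum on $\{j,i_0,i_1\}$ is $\{(u,j),(u',i_1)\}$ (weight $110>109$), so $u$'s partner in $M_t$ is the pre-$\beta T$ arrival $j$. Your ``uniform position of the $M_t$-partner'' computation therefore does not bound the probability that the algorithm matches $u$, and a generic perturbation plus augmenting-path exchange cannot repair this, because the statement being targeted is simply not true.

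The paper avoids any comparison with a single final matching. It argues arrival by arrival: conditioning on $w$ (the number of arrivals in $[-hT,\beta T)$) and $k'$ (the number in $[-hT,t)$), at the $j$-th overall arrival the algorithm can match $u$ only if that newest arrival is $u$'s unique partner in the \emph{current} optimum $M'_j$; since, given the multiset of the first $j$ arrivals, each is equally likely to be the most recent, this event has probability at most $1/j$. Multiplying the per-step survival bounds gives $\prod_{j=w+1}^{k'}\bigl(1-\tfrac{1}{j}\bigr)=\tfrac{w}{k'}$, and then $\E[w\mid k']/k'=\tfrac{(h+\beta)T}{hT+t}$ by the time symmetry you also invoke. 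If you want to salvage your write-up, replace the reduction to $M_t$ with this per-step exchangeability argument; the rest of your structure (the factorization at $\beta T$ and the final averaging) is fine.
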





We next consider the case when $(1-h)T\le t<T$, i.e., $V'$ collects only the online vertices arriving before time $(1-h)T$.

\begin{restatable}{lemma}{lemunmatchedonehard}
   \label{lem.unmatched1.hard}
   If $\beta\le (1-h)$, when $(1-h)T\le t<T$, the probability of the event $E$ that $u$ is unmatched before time $t$ is at least $e^{-\gamma(1+\Delta)\frac{\beta-\alpha}{1-\alpha}}(h+\beta)e^{-\frac{t-(1-h)T}{T}}$.
\end{restatable}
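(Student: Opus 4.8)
The plan is to track the survival probability $g(t):=\Pr[u\text{ is unmatched before }t]$ on the interval $[(1-h)T,T)$ and to establish the multiplicative differential inequality $g'(t)\ge -\tfrac{1}{T}\,g(t)$, whose integration produces the decay factor $e^{-\frac{t-(1-h)T}{T}}$. The initial condition is supplied by Lemma~\ref{lem.unmatched1}: letting $t\uparrow(1-h)T$ there (legitimate since $\beta\le 1-h$) and using $\frac{(h+\beta)T}{hT+(1-h)T}=h+\beta$ gives
\[
g((1-h)T)\ \ge\ e^{-\gamma(1+\Delta)\frac{\beta-\alpha}{1-\alpha}}\,(h+\beta).
\]
Hence, once the hazard rate of $u$ on $[(1-h)T,T)$ is shown to be at most $1/T$, multiplying this initial value by $e^{-\frac{t-(1-h)T}{T}}$ yields exactly the claimed bound.

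The decisive structural fact is that for $t\ge(1-h)T$ the set in Step~\ref{algdefvp} is \emph{frozen}: it equals $S:=V([-hT,(1-h)T))$ together with the current arrival's type, and the window $[-hT,(1-h)T)$ has length exactly $T$, so $|S|\sim\mathrm{Poisson}(\Lambda T)$ with $\Lambda:=\sum_{v\in V}\lambda_v$. Conditioned on $u$ being available, $u$ is matched in $[t,t+dt)$ only if an arrival occurs (total rate $\Lambda$, type distribution $\lambda_v/\Lambda$) whose type $v$ satisfies $(u,v)\in M'(S\cup\{v\})$ in the optimal matching of Step~\ref{algdefmatching}. Writing $w$ for such a fresh random arrival and $V':=S\cup\{w\}$, the hazard rate equals $\Lambda\cdot\Pr[(u,w)\in M']$.

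I would bound $\Pr[(u,w)\in M']$ by an exchangeability argument. Since $M'$ depends on $V'$ only through the multiset of types, and since $w$ and the members of $S$ all carry i.i.d.\ types drawn from $\lambda_v/\Lambda$, the fresh vertex $w$ is exchangeable with the $|S|$ members of $S$; as $u$ is matched to at most one vertex of $V'$,
\[
\Pr[(u,w)\in M']\ =\ \E\!\left[\frac{\mathbb{1}[u\text{ is matched in }M']}{|V'|}\right]\ \le\ \E\!\left[\frac{1}{|S|+1}\right].
\]
Invoking the Poisson identity $\E[\tfrac{1}{N+1}]=\tfrac{1-e^{-\mu}}{\mu}$ for $N\sim\mathrm{Poisson}(\mu)$ with $\mu=\Lambda T$, the hazard rate is at most $\Lambda\cdot\frac{1-e^{-\Lambda T}}{\Lambda T}=\frac{1-e^{-\Lambda T}}{T}\le\frac1T$, which is precisely what the differential inequality requires; integrating from $(1-h)T$ to $t$ against the initial condition closes the argument.

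The step I expect to be the main obstacle is making the hazard-rate bound valid \emph{conditionally} on $u$ being unmatched, which is what the multiplicative (exponential) inequality needs, rather than merely the additive bound the unconditional rate would give. Conditioning on the history up to $(1-h)T$ freezes $S$ and the event $A=\{u\text{ available at }(1-h)T\}$ while keeping the later arrivals independent, so that $g(t)=\E[\mathbb{1}[A]\,e^{-\kappa(S)(t-(1-h)T)}]$ with $\kappa(S)=\Lambda\,\Pr[(u,w)\in M'\mid S]$; by Jensen it then suffices to show $\E[\kappa(S)\mid A]\le 1/T$. The exchangeability computation delivers the unconditional bound $\E[\kappa(S)]\le 1/T$, so the remaining work is to argue that conditioning on $A$ does not increase this average matching rate (intuitively, an already-unmatched $u$ is only less likely to be selected by a fresh arrival), e.g.\ by noting that $A$ is independent of the fresh arrival $w$ and hence cannot bias its type, or via a monotonicity/negative-association argument. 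I would also fix a symmetric tie-breaking rule for $M'$ so that equal-type vertices introduce no bias into the exchangeability step.
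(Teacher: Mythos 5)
Your proposal is correct and follows essentially the same route as the paper: the paper also seeds the bound with the $(h+\beta)$ factor from the earlier phases, bounds the per-arrival matching probability by $\tfrac{1}{k'+1}$ via symmetry, writes the survival factor as $(1-\tfrac{1}{k'+1})^{num}\approx e^{-num/(k'+1)}$, and closes with Jensen plus the Poisson identity $\E[\tfrac{1}{k'+1}]\le\tfrac{1}{\E[k']}=\tfrac{1}{T\sum_v\lambda_v}$ — i.e., your continuous-time hazard-rate integration is the same computation in different notation. The conditioning subtlety you flag (whether the hazard bound holds given survival of the earlier phases) is present in the paper's proof as well, which simply multiplies the three phase-survival probabilities without further justification, so you are not missing anything the paper supplies.
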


Using the similar ideas, we can conclude the following lemma for the case that $\beta>1-h$.

\begin{restatable}{lemma}{lemmmphaseone}
   \label{lem.mmphase1}
   If $\beta>1-h$, when $\beta T\le t <T$, the probability of the event $E$ that $u$ is unmatched before time $t$ is at least $e^{-\gamma(1+\Delta)\frac{\beta-\alpha}{1-\alpha}}e^{-\frac{t-\beta T}{T}}$.
\end{restatable}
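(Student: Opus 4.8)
The plan is to mirror the argument behind Lemma~\ref{lem.unmatched1.hard}, exploiting the fact that when $\beta>1-h$ the \emph{entire} maximum matching phase already lies in the ``truncated'' regime. Since $\beta T>(1-h)T$, for every $t\in[\beta T,T)$ we have $\min\{t,(1-h)T\}=(1-h)T$, so the base set $S:=V([-hT,(1-h)T))$ used to build $G'=(U,V',E)$ in Step~\ref{algdefmatching} is \emph{fixed} throughout the phase, with $V'=S\cup\{v\}$ as in Step~\ref{algdefvp}. The arrival horizon generating $S$ has length $hT+(1-h)T=T$, so $|S|\sim\mathrm{Poisson}(\Lambda T)$ where $\Lambda:=\sum_{v\in V}\lambda_v$ is the true total rate. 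This is the key structural simplification relative to Lemma~\ref{lem.unmatched1}, where the base set grows with $t$ and its horizon is $hT+t$.

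Next I would bound the instantaneous rate at which $u$ leaves the unmatched state. Writing $q(t)=\Pr[u\text{ unmatched before time }t]$, and conditioning on $u$ being unmatched, $u$ can only become matched when a fresh arrival of some type $v$ appears (total rate $\Lambda$) and $(u,v)\in M'$ in the optimal matching computed on $G'=(U,S\cup\{v\})$. The central observation is an exchangeability argument: conditioned on the multiset of types forming $S\cup\{v\}$, the $|S|+1$ online vertices are i.i.d.\ draws from the distribution $(\lambda_v/\Lambda)_{v\in V}$ (a marked-Poisson-process property applying equally to the base arrivals and to the new arrival at time $t$), so the ``new'' label is uniform over the $|S|+1$ physical positions. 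Because $u$ is matched to at most one online vertex in $M'$, the probability that this vertex is the new arrival is at most $\frac{1}{|S|+1}$. Hence the conditional matching rate is at most $\Lambda\cdot\E\big[\frac{1}{|S|+1}\big]$, and invoking the Poisson identity $\E[\frac{1}{N+1}]=\frac{1-e^{-\mu}}{\mu}\le\frac1\mu$ with $\mu=\Lambda T$ bounds it by $\Lambda\cdot\frac{1}{\Lambda T}=\frac1T$.

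This yields the differential inequality $q'(t)\ge-\frac1T\,q(t)$ on $[\beta T,T)$, whose integration gives $q(t)\ge q(\beta T)\,e^{-(t-\beta T)/T}$. For the initial condition I would invoke Lemma~\ref{lem.unmatchedprob} at the end of the LP phase, which (as already noted at the start of this subsection) gives $q(\beta T)\ge e^{-\gamma(1+\Delta)\frac{\beta-\alpha}{1-\alpha}}$. Combining the two bounds delivers exactly $q(t)\ge e^{-\gamma(1+\Delta)\frac{\beta-\alpha}{1-\alpha}}e^{-\frac{t-\beta T}{T}}$, as claimed. Note that it is precisely the constant horizon $T$ (in place of the time-varying $hT+t$ of Lemma~\ref{lem.unmatched1}) that converts the polynomial decay there into the exponential factor $e^{-(t-\beta T)/T}$ here.

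I expect the delicate step to be making the exchangeability argument fully rigorous: justifying that the maximum matching on $G'$ is, under a fixed tie-breaking rule, a function of the unordered multiset of arriving types, so that the new arrival plays a symmetric role with the $|S|$ base vertices and the $\frac{1}{|S|+1}$ bound is valid, while also handling the mild correlation introduced by reusing the same base set $S$ across successive matching computations. Once this conditional rate bound is in place, averaging over the Poisson-distributed $|S|$ and solving the linear ordinary differential inequality are routine.
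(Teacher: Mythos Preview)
Your proposal is correct and captures the same core ideas as the paper's proof---the exchangeability of the fresh arrival with the fixed base set $S=V([-hT,(1-h)T))$ giving the $\frac{1}{|S|+1}$ bound, the Poisson identity $\E[\frac{1}{|S|+1}]\le\frac{1}{\Lambda T}$, and the LP-phase bound for the initial condition---but the packaging differs. The paper conditions on the discrete counts $k'=|S|$ and $num$ (the number of arrivals in $[\beta T,t)$), bounds the survival probability by the product $(1-\tfrac{1}{k'+1})^{num}$, invokes the large-$T$ assumption to replace this by $e^{-num/(k'+1)}$, and then applies Jensen's inequality to average over $k'$ and $num$ in one step. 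Your continuous-time hazard-rate formulation is arguably cleaner in that it does not need the large-$T$ approximation, but your order of operations (average the hazard rate over $|S|$ first, then integrate the ODE) requires care: the hazard rate at time $t$ must be taken conditional on ``$u$ unmatched at $t^-$,'' and this event is correlated with $|S|$ through the LP phase, so one cannot simply plug in the unconditional $\E[\frac{1}{|S|+1}]$. The paper's ordering---condition on $k'$, bound pathwise, then average via Jensen---sidesteps exactly this issue. You correctly flag this as the delicate point, and once you switch to conditioning on $|S|$ before integrating (which recovers the paper's Jensen step), the argument goes through.
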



We now can give the expected reward during the maximum matching phase compared to the offline optimal $\opt$, by applying some calculus.

\begin{restatable}{lemma}{lemmmphase}
   \label{lem.mmphase}
   The expected reward during the maximum matching phase is at least:
   \begin{equation*}
      \begin{cases}
         e^{-\gamma(1+\Delta)\frac{\beta-\alpha}{1-\alpha}}(h+\beta)(ln\frac{1}{h+\beta}+1-e^{-h})\opt,& \beta\le 1-h\\
         e^{-\gamma(1+\Delta)\frac{\beta-\alpha}{1-\alpha}}(1-e^{-(1-\beta)})\opt,& \beta>1-h.
      \end{cases}
  \end{equation*}
\end{restatable}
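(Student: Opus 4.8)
The plan is to compute the expected reward during the maximum matching phase by integrating, over the arrival time $t\in[\beta T, T)$, the product of three quantities: the Poisson arrival rate of online vertices, the expected weight $\E[w_\ell]$ of the matched edge $\ell$ from Lemma~\ref{lem.lbofweight}, and the probability that the offline vertex $u\in\ell$ is still available at time $t$, as lower-bounded in Lemmas~\ref{lem.unmatched1}, \ref{lem.unmatched1.hard}, and \ref{lem.mmphase1}. First I would set up the master expression: an online vertex of type $v$ arrives in $[t,t+dt)$ with probability $\lambda_v\,dt$, so the total arrival intensity is $\sum_{v\in V}\lambda_v$; multiplying by $\E[w_\ell]\ge \frac{\opt}{T\sum_{v\in V}\lambda_v}$ conveniently cancels the $\sum_{v\in V}\lambda_v$ factor, leaving an integrand of the form $\frac{\opt}{T}\cdot \Pr[u \text{ available at } t]\,dt$. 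The common prefactor $e^{-\gamma(1+\Delta)\frac{\beta-\alpha}{1-\alpha}}$ appearing in all three availability bounds pulls straight out of the integral.

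Next I would split into the two regimes dictated by the statement. In the case $\beta\le 1-h$, the time horizon $[\beta T, T)$ must be further subdivided at $(1-h)T$, matching the two lemmas: on $[\beta T,(1-h)T)$ I substitute the bound $\frac{(h+\beta)T}{hT+t}$ from Lemma~\ref{lem.unmatched1}, and on $[(1-h)T,T)$ I substitute $(h+\beta)e^{-\frac{t-(1-h)T}{T}}$ from Lemma~\ref{lem.unmatched1.hard}. After the change of variable $s=t/T$ (so $dt=T\,ds$, and the $\frac{\opt}{T}$ prefactor becomes $\opt$), the first piece integrates to $\int_{\beta}^{1-h}\frac{h+\beta}{h+s}\,ds=(h+\beta)\ln\frac{1}{h+\beta}$ using $\ln(h+(1-h))=\ln 1=0$, and the second piece integrates to $(h+\beta)\int_{1-h}^{1}e^{-(s-(1-h))}\,ds=(h+\beta)(1-e^{-h})$. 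Summing gives the factor $(h+\beta)\bigl(\ln\frac{1}{h+\beta}+1-e^{-h}\bigr)$, which is exactly the first branch. In the case $\beta>1-h$, I use only Lemma~\ref{lem.mmphase1}, and $\int_{\beta}^{1}e^{-(s-\beta)}\,ds=1-e^{-(1-\beta)}$ yields the second branch directly.

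The routine part is the integration itself; the step I expect to require the most care is the \emph{justification of the master expression} — namely that the expected reward genuinely decomposes as the claimed integral of (arrival intensity)$\times$(edge weight)$\times$(availability probability). This relies on $\E[w_\ell]$ being independent of the arrival time $t$ (which Lemma~\ref{lem.lbofweight} supplies as a single time-independent bound) and on the independence structure of the Poisson arrivals, so that conditioning on an arrival at time $t$ does not disturb the availability estimate already established in the preceding lemmas. I would also note that the availability bounds are stated for a \emph{fixed} offline vertex $u$, so linearity of expectation over the matched edge $\ell$ is what legitimately aggregates these per-vertex guarantees into a single $\opt$-scaled reward; once this bookkeeping is in place, the two cases follow by the elementary integrals above.
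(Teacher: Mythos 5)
Your proposal is correct and follows essentially the same route as the paper's proof: lower-bounding the reward by $\int_{\beta T}^{T}\sum_{v\in V}\lambda_v\cdot\frac{\opt}{T\sum_{v\in V}\lambda_v}\cdot\Pr[E]\,dt$ via Lemma~\ref{lem.lbofweight}, then substituting the availability bounds from Lemmas~\ref{lem.unmatched1} and~\ref{lem.unmatched1.hard} for $\beta\le 1-h$ (split at $(1-h)T$) and Lemma~\ref{lem.mmphase1} for $\beta>1-h$, and integrating. Your explicit evaluation of the integrals matches the stated bounds exactly, so nothing further is needed.
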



\zihao{The details of the above four lemmas can be found in appendix.}


We then can summarize Lemma~\ref{lem.lpphase} and Lemma~\ref{lem.mmphase} and establish the performance of our algorithm in Theorem \ref{thm.onlinematching}. 

\begin{theorem}
   \label{thm.onlinematching}
   Denote $N=\min_{v\in V}\lambda_v\cdot T$. For $0<\delta<1$, by choosing parameters $\alpha$, $\beta$ and $\gamma$ satisfying $0\leq \alpha\leq \beta\le 1$ and $0\leq \gamma\leq 1$, with a probability of at least $1-m\delta-me^{\frac{-(h+\alpha)N}{8}}$, Algorithm~\ref{alg1} has a competitive ratio of at least:
   \begin{equation*}
      \begin{cases}
         (1-3\Delta)(1-\alpha)(1-e^{-\gamma(1+\Delta)\frac{\beta-\alpha}{1-\alpha}})+e^{-\gamma(1+\Delta)\frac{\beta-\alpha}{1-\alpha}}(h+\beta)(ln\frac{1}{h+\beta}+1-e^{-h}),& \beta\le 1-h\\
         (1-3\Delta)(1-\alpha)(1-e^{-\gamma(1+\Delta)\frac{\beta-\alpha}{1-\alpha}})+e^{-\gamma(1+\Delta)\frac{\beta-\alpha}{1-\alpha}}(1-e^{-(1-\beta)}),& \beta>1-h.
      \end{cases}
  \end{equation*}
   where $\Delta=\sqrt{\frac{8ln\frac{1}{\delta}}{(h+\alpha)N}}$.
\end{theorem}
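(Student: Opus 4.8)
The plan is to assemble the theorem directly from the two phase-level lemmas already established, namely Lemma~\ref{lem.lpphase} for the LP phase and Lemma~\ref{lem.mmphase} for the maximum matching phase. The first observation is that Algorithm~\ref{alg1} collects reward only during the LP phase $[\alpha T,\beta T)$ and the maximum matching phase $[\beta T, T)$, since the sampling phase $[0,\alpha T)$ rejects every arrival. Because each matched pair contributes its weight exactly once and the two reward-generating phases occupy disjoint time intervals, the total reward of the algorithm decomposes as the sum of the reward earned in the LP phase and the reward earned in the maximum matching phase. Applying linearity of expectation gives $\E[\mathrm{ALG}] = \E[\text{LP reward}] + \E[\text{MM reward}]$, so that the two lower bounds may simply be added.

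With this decomposition in hand, I would invoke Lemma~\ref{lem.lpphase}, which bounds the LP-phase reward below by $(1-3\Delta)(1-\alpha)\bigl(1-e^{-\gamma(1+\Delta)\frac{\beta-\alpha}{1-\alpha}}\bigr)\opt$, and Lemma~\ref{lem.mmphase}, which bounds the maximum-matching-phase reward below by $e^{-\gamma(1+\Delta)\frac{\beta-\alpha}{1-\alpha}}(h+\beta)\bigl(\ln\frac{1}{h+\beta}+1-e^{-h}\bigr)\opt$ when $\beta\le 1-h$ and by $e^{-\gamma(1+\Delta)\frac{\beta-\alpha}{1-\alpha}}\bigl(1-e^{-(1-\beta)}\bigr)\opt$ when $\beta>1-h$. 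Summing the two bounds and dividing by $\opt$ yields exactly the two-case expression in the theorem statement, with the case split inherited directly from Lemma~\ref{lem.mmphase}.

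It remains to attach the stated success probability. Both Lemma~\ref{lem.lpphase} and Lemma~\ref{lem.mmphase} are derived under the standing assumption that the rate estimates satisfy $(1-\Delta)\hat\lambda_v\le\lambda_v\le(1+\Delta)\hat\lambda_v$ for every type $v\in V$, with $\Delta=\sqrt{\tfrac{8\ln(1/\delta)}{(h+\alpha)N}}$. As recorded in the discussion preceding Lemma~\ref{lem.lb1}, combining Lemma~\ref{lem.number} (at least $\tfrac12(h+\alpha)N$ arrivals of each type occur before the LP phase with probability $1-e^{-(h+\alpha)N/8}$) with Lemma~\ref{lem.rate} (the estimate is accurate with probability $1-\delta$), and taking a union bound over all $m$ types via $(1-\delta)^m(1-e^{-(h+\alpha)N/8})^m > 1-m\delta-me^{-(h+\alpha)N/8}$, shows that this estimation event holds with probability at least $1-m\delta-me^{-(h+\alpha)N/8}$. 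Conditioning on this event therefore delivers the claimed competitive ratio with the claimed probability.

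The only real subtlety, and the step I would check most carefully, is the legitimacy of additively combining the two phase bounds. The offline vertices are shared across phases, so the reward available in the maximum matching phase depends on which vertices survived the LP phase unmatched; a naive addition risks double-counting. The reason it is nonetheless valid is that Lemma~\ref{lem.mmphase} already incorporates the probability that an offline vertex remains unmatched at time $\beta T$ through the factor $e^{-\gamma(1+\Delta)\frac{\beta-\alpha}{1-\alpha}}$ (traced back to Lemma~\ref{lem.unmatchedprob}). Thus both lemmas are expectations on the \emph{same} probability space over the random arrival process, and linearity of expectation applies without any independence assumption between phases. Confirming that the survival factor in Lemma~\ref{lem.mmphase} is consistent with the matched-vertex accounting of Lemma~\ref{lem.lpphase}, so that no reward is counted twice, is the crux of the argument; once that is settled, the remainder is a direct substitution.
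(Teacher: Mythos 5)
Your proposal matches the paper's own argument: the theorem is obtained exactly by summing the LP-phase bound of Lemma~\ref{lem.lpphase} and the maximum-matching-phase bound of Lemma~\ref{lem.mmphase} via linearity of expectation, with the success probability $1-m\delta-me^{-(h+\alpha)N/8}$ inherited from the union bound over the $m$ rate estimates established before Lemma~\ref{lem.lb1}. Your added remark that double-counting is avoided because Lemma~\ref{lem.mmphase} already conditions on survival through the factor $e^{-\gamma(1+\Delta)\frac{\beta-\alpha}{1-\alpha}}$ is a correct and worthwhile sanity check, but it does not alter the route.
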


We first compare the bound of competitive ratios at different $h$ and $N$ in Figure~\ref{fig.sec3thma}.  
From the figure, we first observe that when $h$ increases, our competitive ratio increases, which indicates that adding historical data indeed help improve our algorithm's performance. Second, we also find that the value of $N$ also significantly affects the algorithm's performance. Note that the value of $N$ reflects the quality of data measured by $\min_{v\in V} \lambda_v$ for a fixed time horizon $T$ and the length of time horizon $T$ for a fixed arrival model. The dependence on $N$ indicates that when $\min_{v\in V} \lambda_v$ is large for a given planning horizon, i.e., no item type is underrepresented, our algorithm performs well. It also suggests that for the same arrival process, we can improve the algorithm's performance by expanding the online planning horizon.

We further compare the ratio with the state-of-art ratios in the literature. First, we see that most of the ratios are above  $\frac{1}{e}$, the state-of-art ratio for the random order model~\cite{hutchison_optimal_2013}. It implies that incorporating the information on arrival model (Poisson arrival) can help improve the algorithm's performance even when we have no information on its arrival rate, equivalently no historical data ($h=0$). The value of the information is more signification when we have a longer planning horizon $T$ (corresponding to a larger $N$ for fixed arrival rates).  Second, we find that when $h$ or $N$ is large enough, our ratio can reach $1-\frac{1}{e}$, the ratio of the typical algorithm for known Poisson model~\cite{feldman2009online}. Finally, we can also compare the ratio with that derived by ~\cite{zhang2022learn}, which considers the same problem under a random order model with historical data. Since their algorithm is a special case of our algorithm where $\alpha=\beta$, the competitive ratio of our algorithm is always weakly better than that of their algorithm. 

Finally, we discuss the optimal choices of hyperparameters $\alpha$ and $\beta$. We plot the optimal $\alpha$ and $\beta$ at different $h$ and $N$ in Figures \ref{fig.sec3thmb} and \ref{fig.sec3thmc}, respectively. 
\zihao{After fixing $h$, we plot the portion of three phases under different $N$ in Figures \ref{fig.sec3thmd},~\ref{fig.sec3thme} and ~\ref{fig.sec3thmf}.
Similarly, after fixing $N$, we also plot the portion of three phases under different choices of $h$ in Figures \ref{fig.sec3thmg},~\ref{fig.sec3thmh} and ~\ref{fig.sec3thmi}.
In these six figures, the blue, green and red regions correspond to sampling phase, LP phase and the maximum matching, respectively.}

For $\alpha$, from Figure \ref{fig.sec3thmb}, we observe that when either $h$ or $N$ increases, the optimal $\alpha$ decreases. It implies that when the historical data size increases, we can shorten the exploration period (sampling phase before the time point $\alpha T$) and exploit the decision earlier to get better performance. On the other hand, if the data quality is high, i.e., each item type has sufficient arrivals for us to learn its arrival rate, we can also shorten the exploration period.
\zihao{As shown in Figures~\ref{fig.sec3thme},~\ref{fig.sec3thmf},~\ref{fig.sec3thmh} and~\ref{fig.sec3thmi}, when $h$ and $N$ are large enough, i.e., a great amount of historical data, we can even ignore the sampling phase and initiate the exploration directly.}  

We then analyze the choice of the hyperparameter $\beta$. 
From Figure \ref{fig.sec3thmc},  we see that a larger $\beta$ is needed when $N$ increases or $h$ decreases.
\zihao{Note that when $N$ increases, the optimal $\alpha$ decreases according to our earlier analysis. This concludes that more periods are reserved for the LP phase with fewer periods for maximum matching phase when $N$ increases.
Such a trend can also be seen in Figures \ref{fig.sec3thmd},~\ref{fig.sec3thme} and~\ref{fig.sec3thmf}.
It implies that increasing data quality ($\min_{v\in V} \lambda_v$) can result in a more precise estimate of arrival rates and provide more substantial improvements in the LP phase compared to that in the maximum matching phase.
Conversely, our analysis indicates that the performance of the maximum matching phase is primarily determined by the ratio of the historical data to the online data, as measured by $h$.
Thus, an increase in $h$ can improve the performance of the maximum matching phase more than that of the LP phase, which implies $\beta$ decreases with $h$.}

In addition, we observe that most choices of $\beta$ fall in the interval $(\alpha T,(1-h)T)$, which indicates that a hybrid use of LP and maximum matching algorithms can increase the performance guarantee in most cases.

\begin{figure*}[h!]
\centering
\begin{subfigure}[t]{0.3\textwidth}
   \centering
   \includegraphics[width=\textwidth]{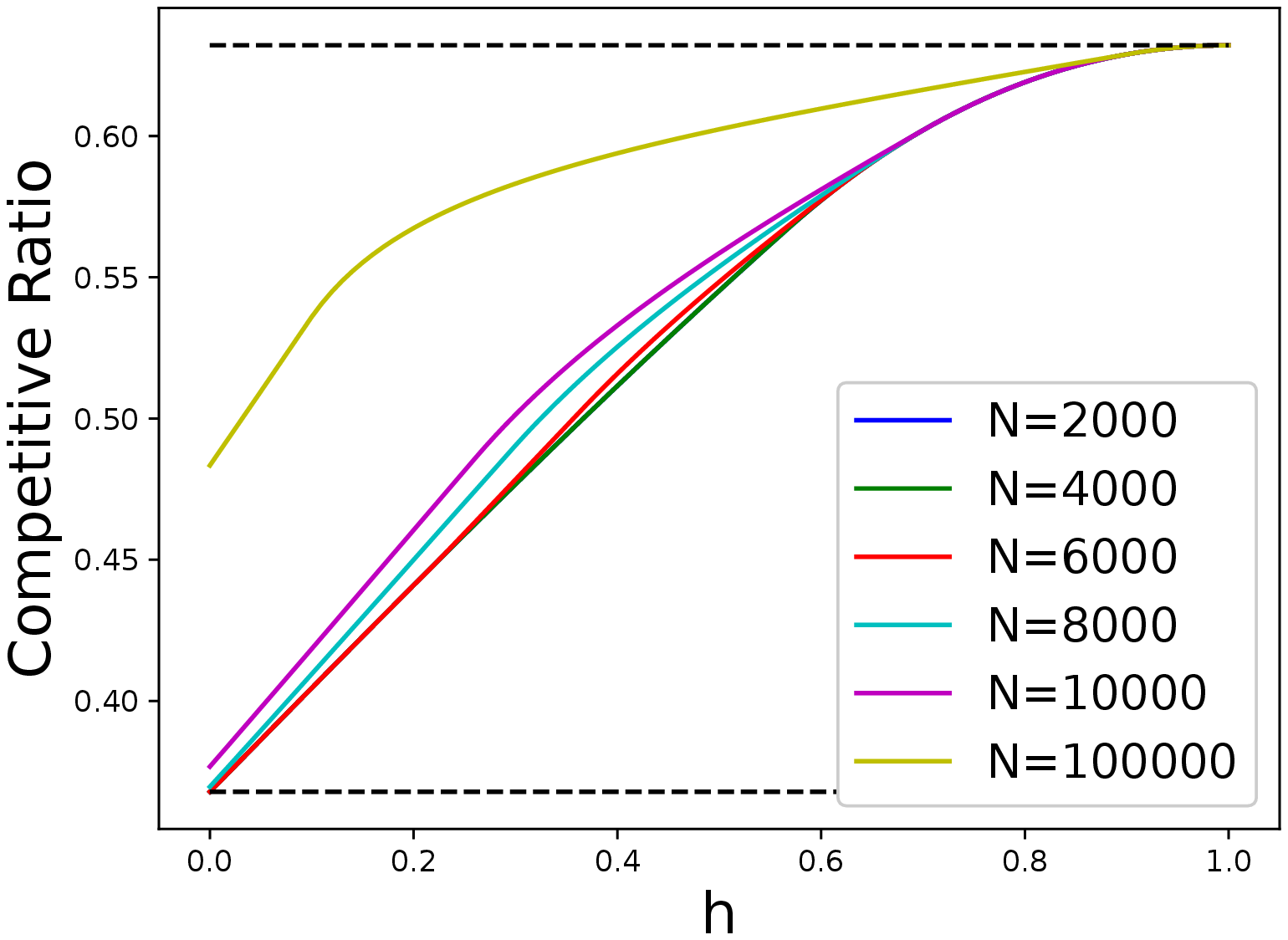}
   \caption{Competitive ratio}
   \label{fig.sec3thma}
\end{subfigure}
\begin{subfigure}[t]{0.3\textwidth}
   \centering
   \includegraphics[width=\textwidth]{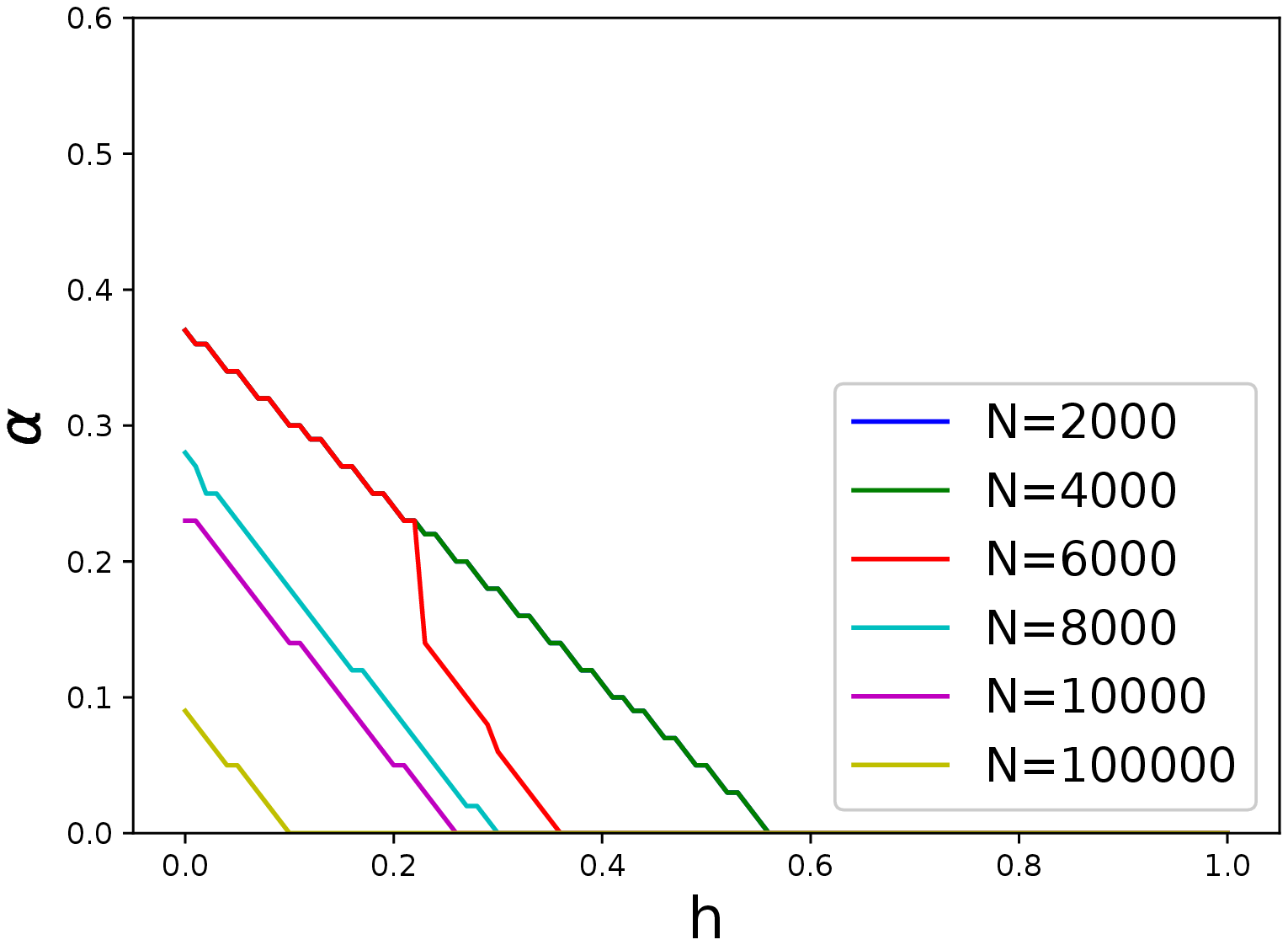}
   \caption{Optimal $\alpha$}
   \label{fig.sec3thmb}
\end{subfigure}
\begin{subfigure}[t]{0.3\textwidth}
   \centering
   \includegraphics[width=\textwidth]{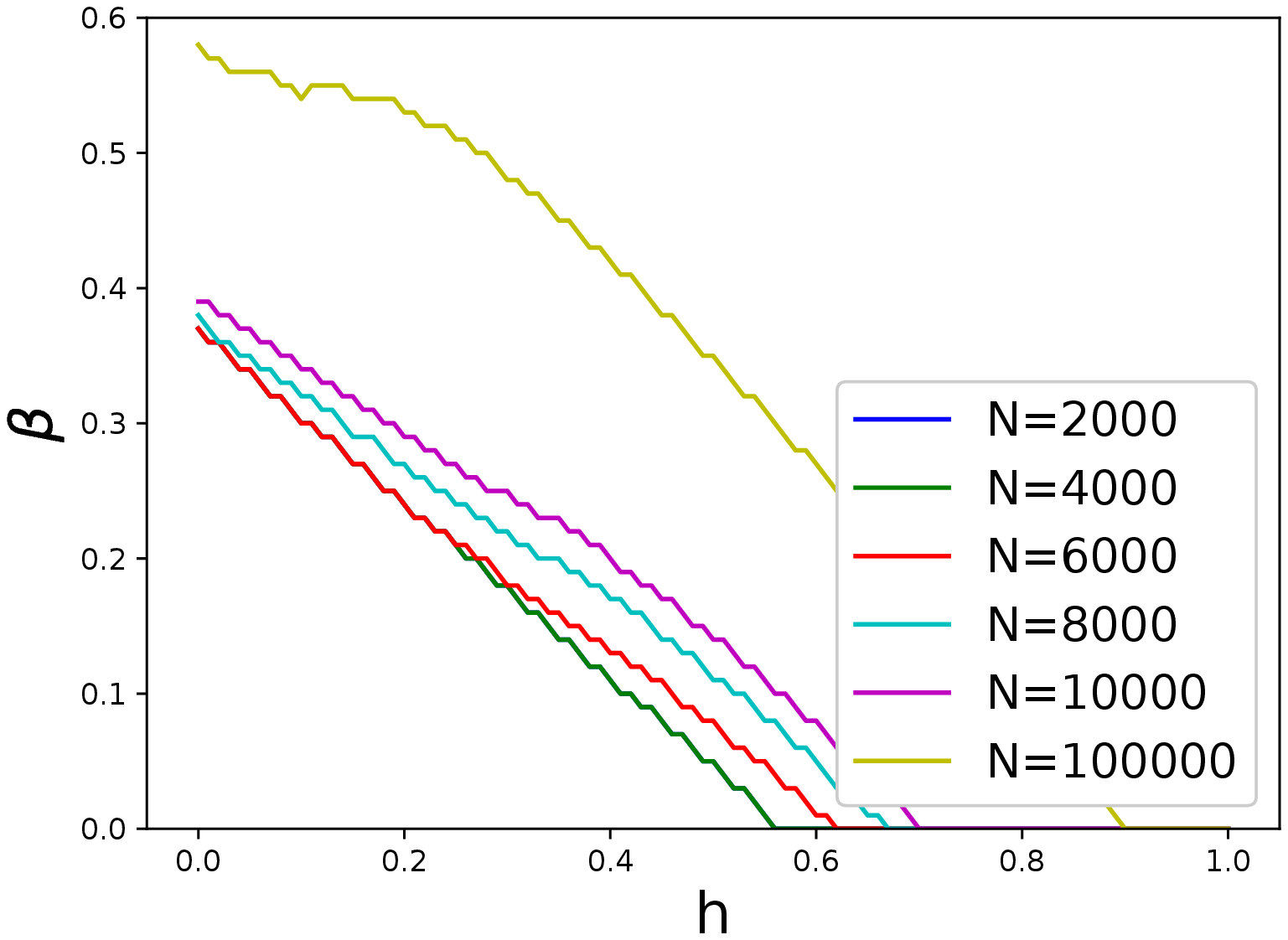}
   \caption{Optimal $\beta$}
   \label{fig.sec3thmc}
\end{subfigure}
\begin{subfigure}[t]{0.3\textwidth}
   \centering
   \includegraphics[width=\textwidth]{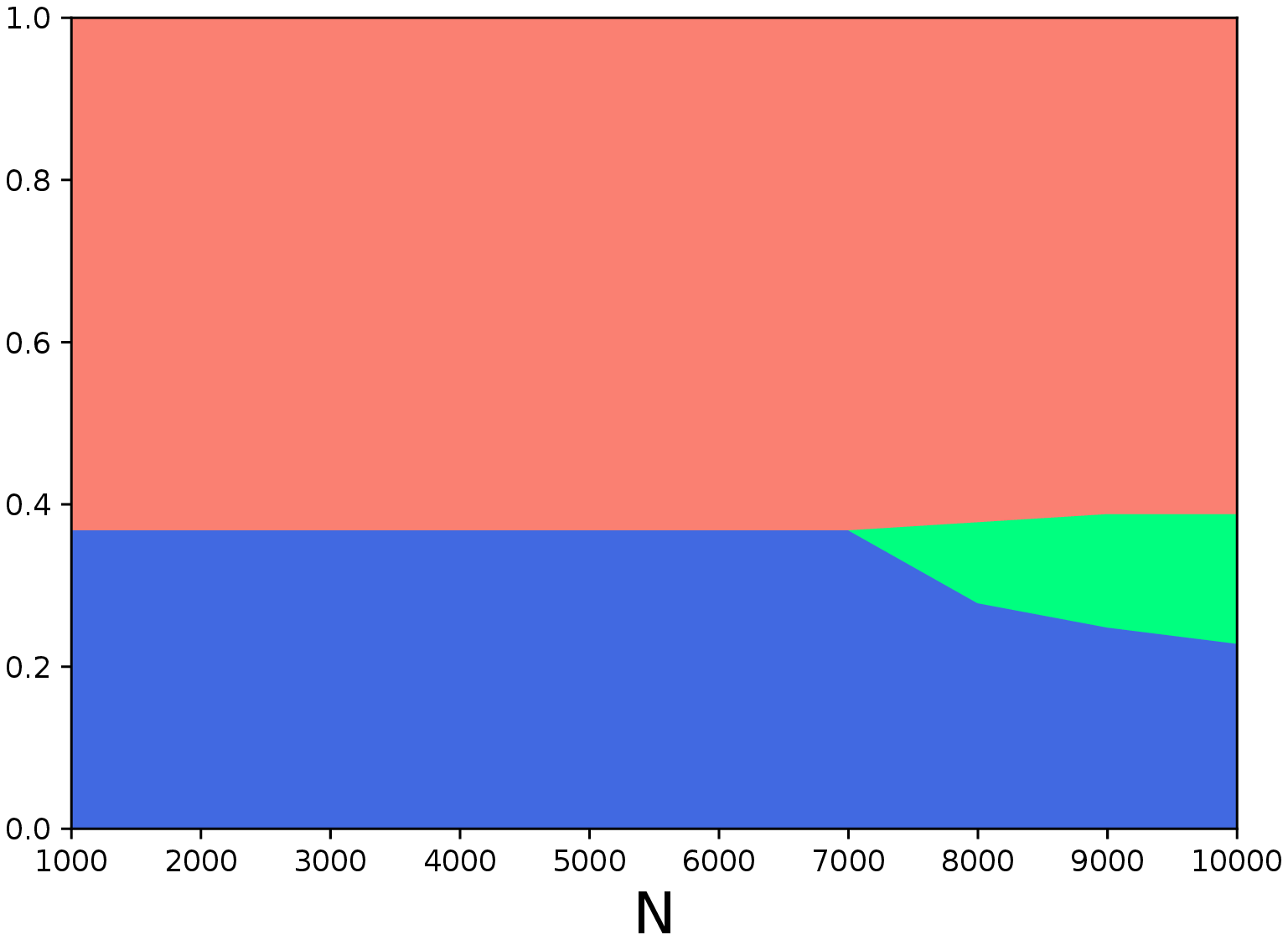}
   \caption{$h=0.0$}
   \label{fig.sec3thmd}
\end{subfigure}
\begin{subfigure}[t]{0.3\textwidth}
   \centering
   \includegraphics[width=\textwidth]{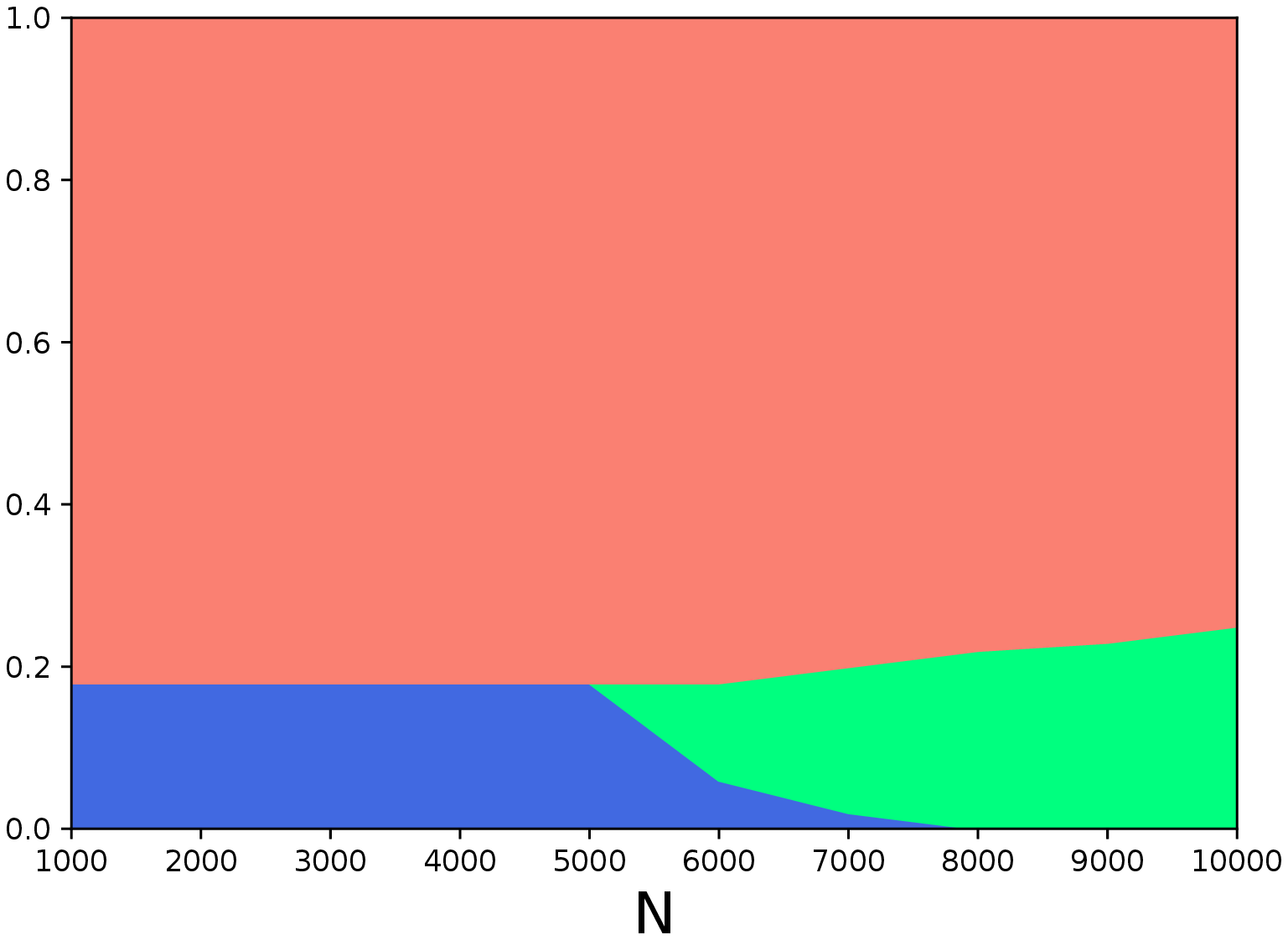}
   \caption{$h=0.3$}
   \label{fig.sec3thme}
\end{subfigure}
\begin{subfigure}[t]{0.3\textwidth}
   \centering
   \includegraphics[width=\textwidth]{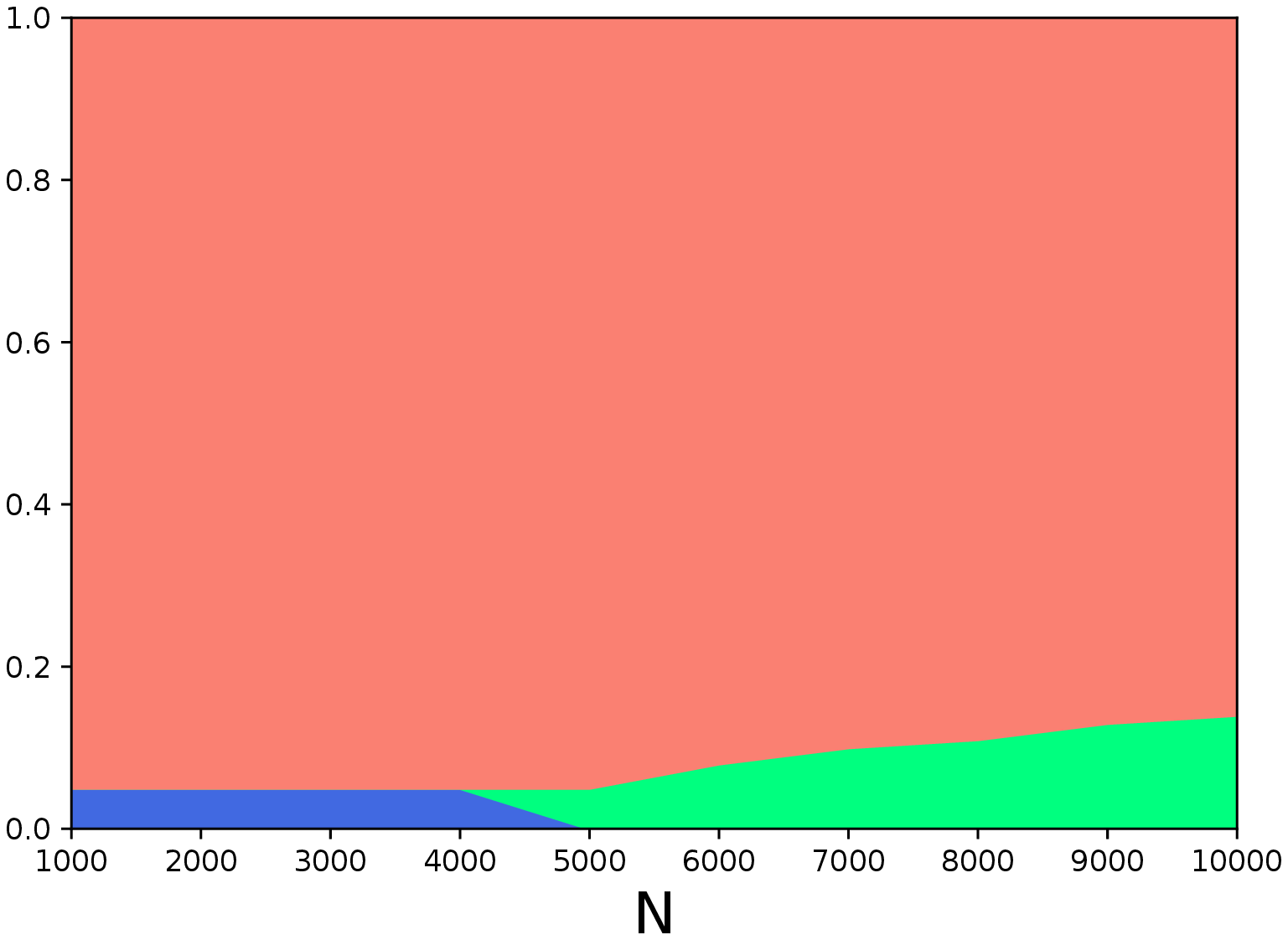}
   \caption{$h=0.5$}
   \label{fig.sec3thmf}
\end{subfigure}
\begin{subfigure}[t]{0.3\textwidth}
   \centering
   \includegraphics[width=\textwidth]{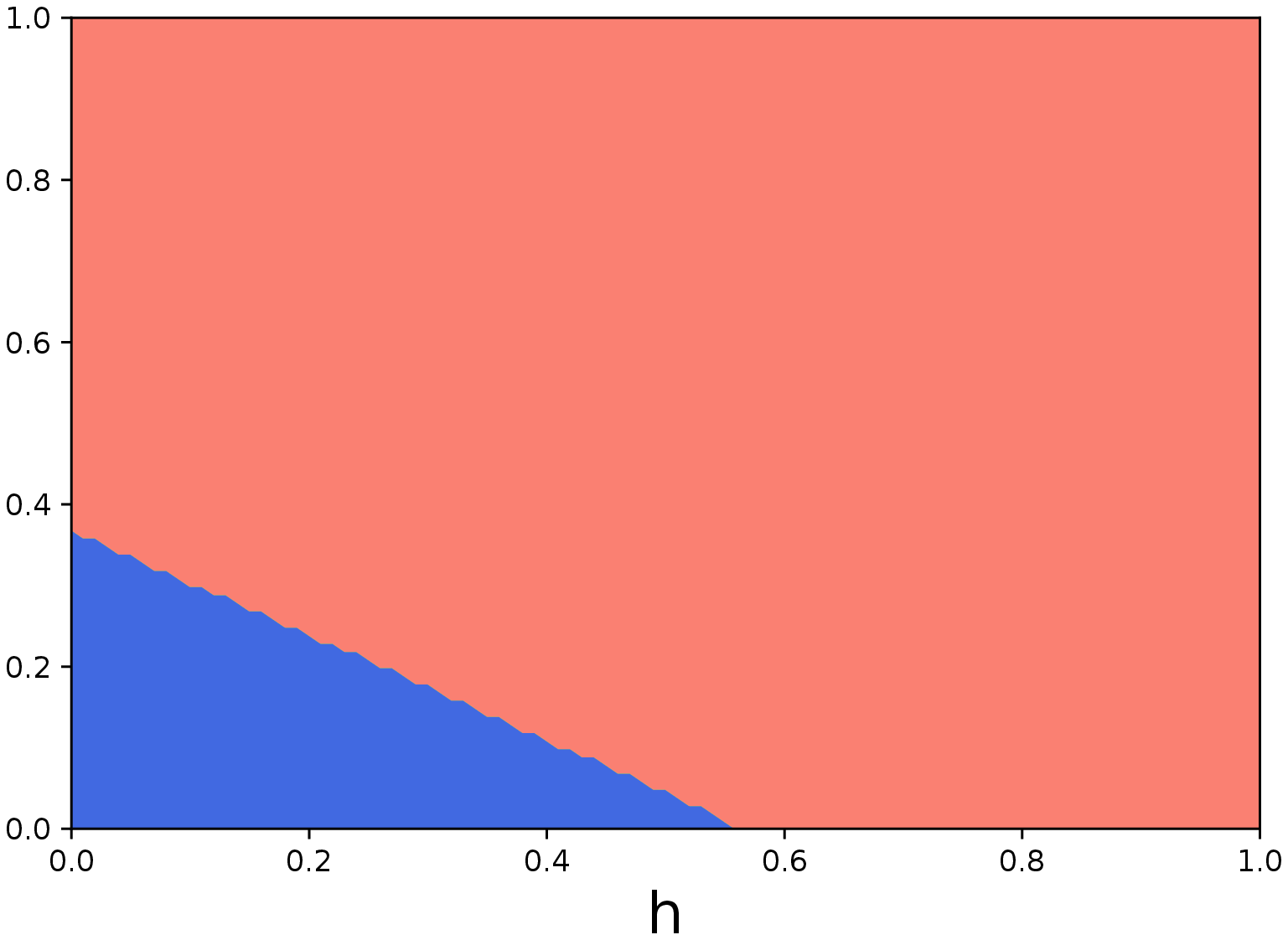}
   \caption{$N=2000$}
   \label{fig.sec3thmg}
\end{subfigure}
\begin{subfigure}[t]{0.3\textwidth}
   \centering
   \includegraphics[width=\textwidth]{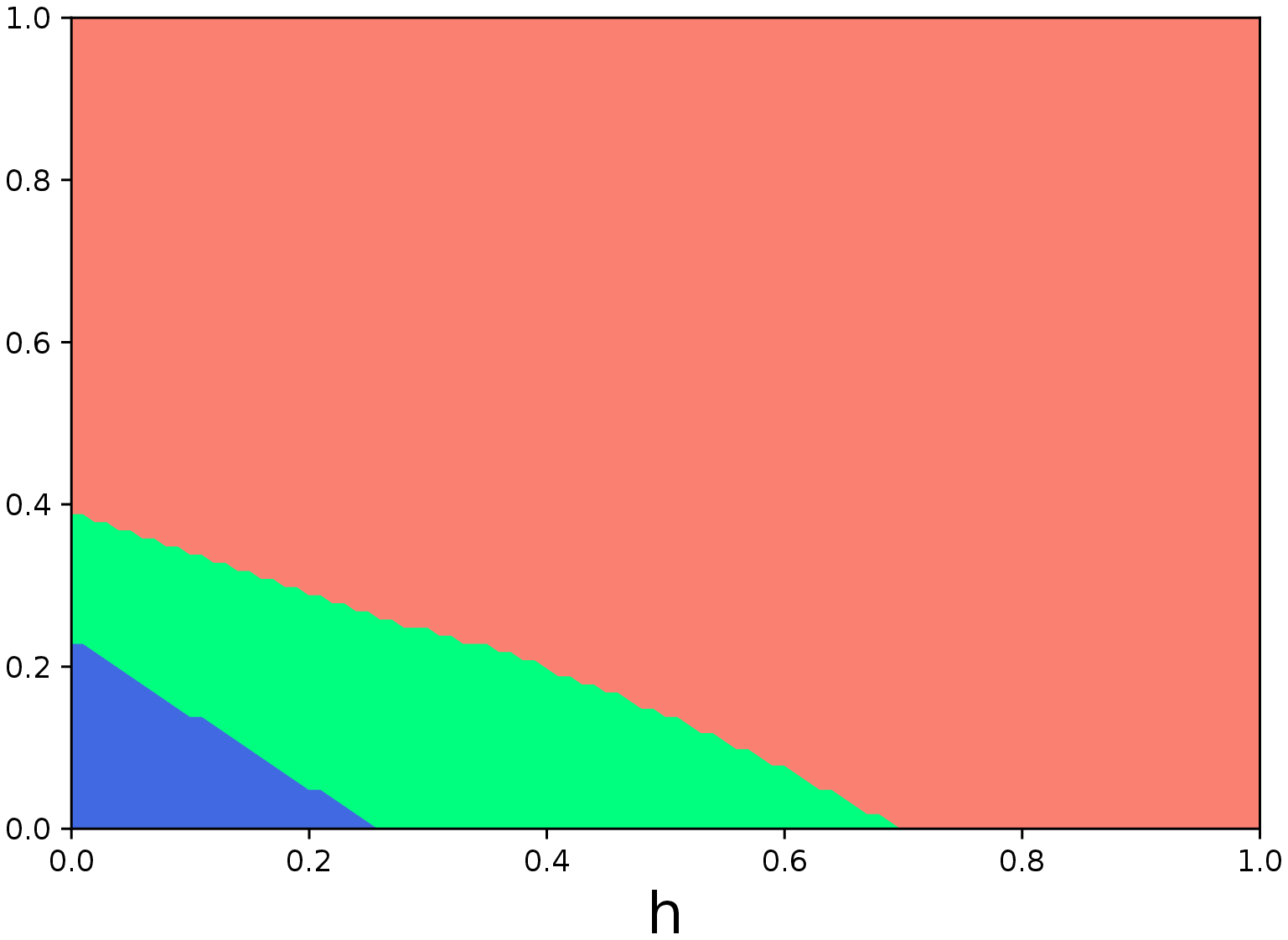}
   \caption{$N=10000$}
   \label{fig.sec3thmh}
\end{subfigure}
\begin{subfigure}[t]{0.3\textwidth}
   \centering
   \includegraphics[width=\textwidth]{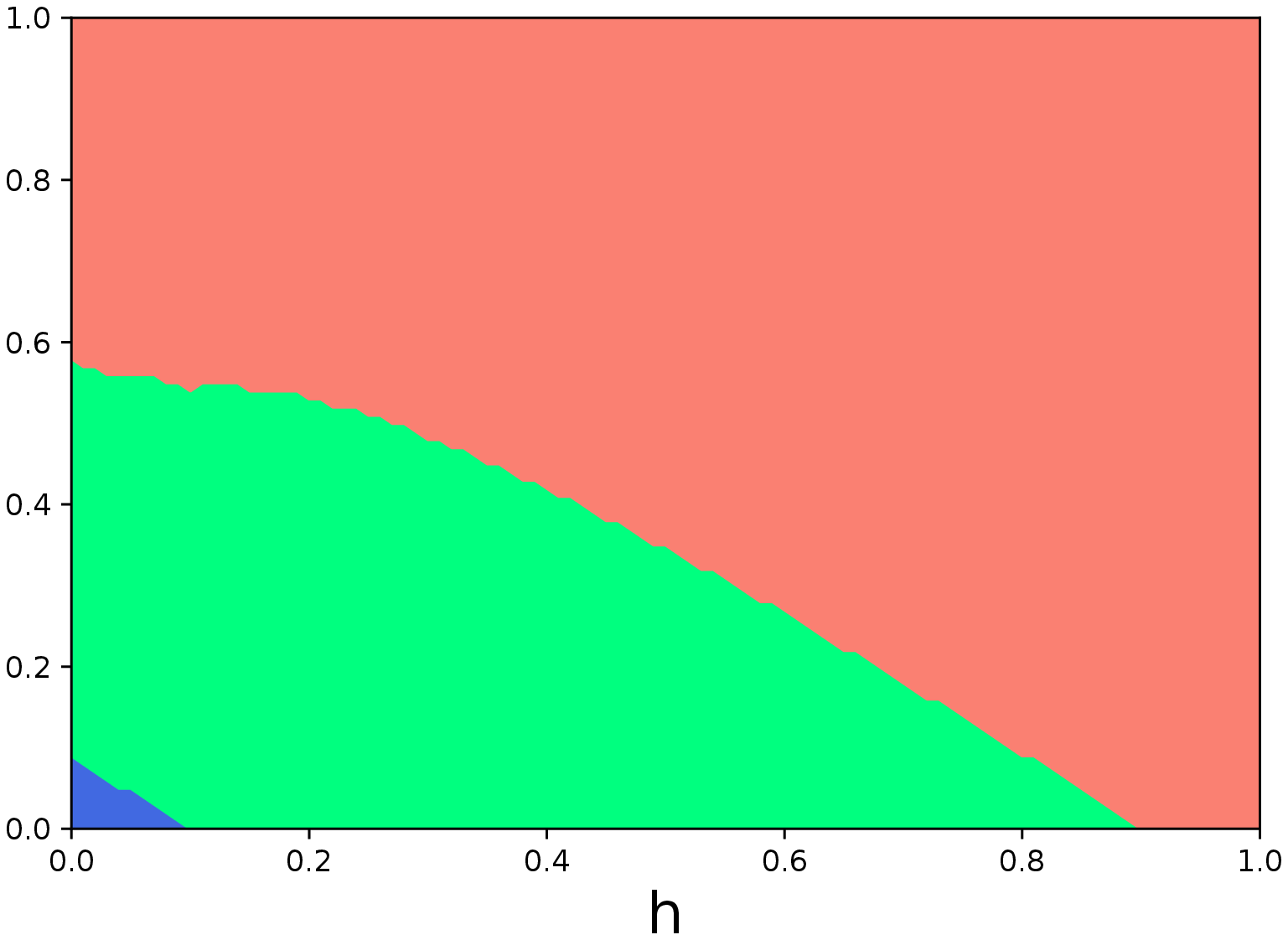}
   \caption{$N=100000$}
   \label{fig.sec3thmi}
\end{subfigure}
\caption{Illustrations of Theorem~\ref{thm.onlinematching}}
\label{fig.sec3thm}
\end{figure*}




We then consider a special case which allows us to give an explicit expression.
If there is no historical sample, we can  set $\beta=1$ and choose one $\alpha$ such that $1-3\Delta=1-\alpha$.
In this case, we can achieve a relatively good performance guarantee.
\begin{corollary}
   \label{cor.matchingwithnosample}
   Denote $N=\min_{v\in V}\lambda_v\cdot T$. Without historical samples ($h=0$), for $0<\delta<1$, with a probability of at least $1-m\delta-me^{\frac{-\alpha N}{8}}$, Algorithm~\ref{alg1} can achieve a competitive ratio with at least $\left(1-4\sqrt[3]{\frac{9ln\frac{1}{\delta}}{N}}\right)(1-\frac{1}{e})$ by choosing $\alpha=\sqrt[3]{\frac{72ln\frac{1}{\delta}}{N}}$ and $\beta=\gamma=1$.
\end{corollary}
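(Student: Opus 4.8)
The plan is to specialize Theorem~\ref{thm.onlinematching} to the parameter choice $h=0$, $\beta=\gamma=1$ and then optimize the remaining free parameter $\alpha$. First I would substitute these values into the competitive-ratio bound. Since $\beta=1=1-h$, the two branches meet at the boundary; taking the first branch, the factor $\frac{\beta-\alpha}{1-\alpha}$ collapses to $1$, so $e^{-\gamma(1+\Delta)\frac{\beta-\alpha}{1-\alpha}}=e^{-(1+\Delta)}$, and the maximum-matching contribution $(h+\beta)(\ln\frac{1}{h+\beta}+1-e^{-h})$ equals $1\cdot(\ln 1+1-e^{0})=0$. This is intuitively clear because $\beta=1$ leaves the interval $[\beta T,T)$ empty, so the maximum matching phase never runs. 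The ratio therefore reduces to $(1-3\Delta)(1-\alpha)\bigl(1-e^{-(1+\Delta)}\bigr)$.

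Next I would choose $\alpha$ to balance the two decay factors $(1-3\Delta)$ and $(1-\alpha)$ by setting $1-3\Delta=1-\alpha$, i.e. $\alpha=3\Delta$. Substituting the value $\Delta=\sqrt{8\ln(1/\delta)/(\alpha N)}$ from the theorem with $h=0$ and squaring gives $\alpha^{2}=9\cdot 8\ln(1/\delta)/(\alpha N)$, hence the cubic $\alpha^{3}=72\ln(1/\delta)/N$ and $\alpha=\sqrt[3]{72\ln(1/\delta)/N}$, which is exactly the stated choice.

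With $\alpha=3\Delta$ the ratio becomes $(1-\alpha)^{2}\bigl(1-e^{-(1+\Delta)}\bigr)$. I would then lower-bound each factor separately: since $\Delta\ge 0$ we have $e^{-(1+\Delta)}\le e^{-1}$, so $1-e^{-(1+\Delta)}\ge 1-\frac1e$; and $(1-\alpha)^{2}\ge 1-2\alpha$. Finally, using $72=8\cdot 9$ together with $\sqrt[3]{8}=2$ rewrites $2\alpha=2\sqrt[3]{72\ln(1/\delta)/N}=4\sqrt[3]{9\ln(1/\delta)/N}$, which yields the claimed bound $\bigl(1-4\sqrt[3]{9\ln(1/\delta)/N}\bigr)(1-\frac1e)$. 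The probability statement is inherited verbatim from the theorem with $h=0$, namely $1-m\delta-me^{-\alpha N/8}$.

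This corollary is essentially a direct specialization, so no step is genuinely hard; the main obstacle is mostly bookkeeping rather than a new idea. The one point requiring care is confirming that the maximum matching phase contributes nothing at $\beta=1$: because $\beta=1-h$ sits exactly on the boundary between the two branches of Theorem~\ref{thm.onlinematching}, I would note that both branches evaluate to the same expression there (in the second branch $1-e^{-(1-\beta)}=0$ as well), so the reduction is unambiguous. The only modelling insight worth highlighting is the balancing choice $\alpha=3\Delta$, which merges the estimation-error factor $(1-3\Delta)$ and the sampling-loss factor $(1-\alpha)$ into a single $(1-\alpha)^{2}$ and converts the parameter optimization into the clean cubic above.
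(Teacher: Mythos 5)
Your proposal is correct and follows exactly the route the paper intends: the text preceding the corollary says to set $\beta=1$ and pick $\alpha$ with $1-3\Delta=1-\alpha$, and your substitution into Theorem~\ref{thm.onlinematching}, the resulting cubic $\alpha^{3}=72\ln(1/\delta)/N$, and the bounds $(1-\alpha)^{2}\ge 1-2\alpha$ and $1-e^{-(1+\Delta)}\ge 1-\frac{1}{e}$ reproduce the stated ratio (the paper itself gives no further proof). Your verification that both branches of the theorem coincide at $\beta=1-h$ is a small but welcome extra check; the only implicit hypothesis worth noting is that $N$ must be large enough that $\alpha\le 1$, which is already needed for the stated bound to be positive.
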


\zihao{When $N$ is large, we can conduct only the LP phase after sampling and get a competitive ratio close to $1-\frac{1}{e}$, which is a good performance guarantee under our setting, compared to the well known ratio proposed in \citet{feldman2009online}, which is $1-\frac{1}{e}$ under the \emph{known} Poisson arrival setting.}
\section{Online Multidimensional GAP}
\label{sect.onlinemultipacking}
\newcommand{\edge}{{\mathcal{E}}}
\newcommand{\q}{{q_{\alpha}^{\beta}}}
\newcommand{\pr}[2]{q_{#2}^{#1}}
\newcommand{\er}[2]{f_{#1}^{#2}}
\newcommand{\f}{G}
\newcommand{\fa}{f_1}
\newcommand{\fb}{f_2}
In this section, we consider the online multidimensional GAP model. 
Adopting the ideas from~\citet{naori_online_2019}, we divide all edges between offline bins and online items into heavy edges and light edges according to the online item's demand vector, and design a multi-phase algorithm (see Algorithm~\ref{alg2}).
\zihao{In this section, because several lemmas have a similar proof in spirit to those in Section~\ref{sect.onlinematching}, we omit some proofs in this section and the details are deferred to the appendix.} \\
Before designing our algorithm, we first define the heavy and light edges as follows.
For each bin $u\in U$ and each item type $v\in V$, if $r_{uv}^d\leq \frac{1}{2}C_u^d$ holds for all $d\in[D]$, we call the edge $(u,v)$ a \emph{light} edge, otherwise we call it a \emph{heavy} edge.
Intuitively, for a heavy edge $(u,v)$, we cannot pack more than one item of type $v$ into the bin $u$.
We use $\edge^H$ and $\edge^L$ to represent the set containing heavy edges and light edges, respectively.


Following a similar idea as in Algorithm~\ref{alg1}, we reserve some periods in the beginning of the time horizon as the sampling phase to only collect arrival data.  In other words, we simply reject all arrivals in the sampling phase to collect data in the support of our estimation of arrival rates. 
In designing the exploitation phases, we note that packing light items may affect the packing of heavy items. To better utilize  samples, we first deal with heavy edges. 
Hence the second phase is the heavy LP phase, where we will match some heavy edges according to the optimal solution of $LP^H(\hat{\bm \lambda}, T')$, where $\hat{\bm \lambda}$ is an estimate of the model and $T'=(1-\alpha)T$, as stated in Steps~\ref{alg2esbegin}-\ref{alg2esend} of Algorithm~\ref{alg2}.
\zihao{Here, $T'$ represents the total time horizon where we can consume the offline resources, i.e., excluding the sampling phase.}
We define $LP^H(\bm \lambda, T)$ as follows.
\begin{align}
    \textbf{max} &\sum_{u\in U,v\in V, (u, v)\in \edge^H}w_{uv}{x}_{uv} & \label{lp3} \\
    \textbf{s.t.} & \sum_{u\in U, (u, v)\in \edge^H}x_{uv}\le \lambda_v\cdot T, & \forall v \in V, \tag{\ref{lp3}{a}}\label{lp3a} \\
    & \sum_{v\in V, (u, v)\in \edge^H} x_{uv}\leq D ,& \forall u\in U, \tag{\ref{lp3}{b}}\label{lp3b} \\
    & x_{uv}\in[0,1], & \forall u\in U, v\in V, (u, v)\in \edge^H. \tag{\ref{lp3}{c}}\label{lp3c}
\end{align}
In this LP, $x_{uv}$ is the decision variable representing the expected number of matches between bin $u$ and item type $v$. Constraints~\eqref{lp3a} bound the number of matches for each online type by its expected number of arrivals.
Constraints~\eqref{lp3b} bound the number of matches for each offline bin by the dimension $D$. This is because packing an item into a bin $u$ in an heavy edge must break the conditions $r_{uv}^d\leq \frac{1}{2}C_u^d, d\in [D]$ for at least a $d\in [D]$. For each $d\in [D]$, at most one item that break the condition $r_{uv}^d\leq \frac{1}{2}C_u^d$ can be packed into the bin $u$. Therefore, at most $D$ items in total can be packed into each bin if only considering heavy edges. 
Following the same reasoning for Lemma~\ref{lem.ub1}, we have the following lemma.
\begin{restatable}{lemma}{lemubheavy}
    \label{lem.ub.heavy}
     The optimal value of $LP^H(\bm \lambda, T)$ is an upper bound of the offline optimal of the instance that only heavy edges can be matched under known Poisson arrival model, where $T$ is the online time horizon and $\bm \lambda$ represents the parameters of the type-specific arrivals.
\end{restatable}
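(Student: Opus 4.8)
The plan is to mirror the argument behind Lemma~\ref{lem.ub1}: exhibit a feasible solution of $LP^H(\bm\lambda,T)$ whose objective value equals the expected reward collected by the clairvoyant optimum when it is restricted to matching only heavy edges, which immediately forces the LP optimum to be at least this quantity (as $LP^H$ is a maximization). Write $\opt^{H}$ for the expected reward of this restricted clairvoyant optimum. For each heavy edge $(u,v)\in\edge^H$, I would define $x_{uv}^{\star}$ to be the expected number of type-$v$ items that the clairvoyant optimum packs into bin $u$, where the expectation is taken over the Poisson arrivals. By linearity of expectation, $\sum_{(u,v)\in\edge^H} w_{uv} x_{uv}^{\star}$ equals exactly $\opt^{H}$, so it remains only to check that $\{x_{uv}^{\star}\}$ is feasible for $LP^H(\bm\lambda,T)$.

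First I would verify constraint~\eqref{lp3a}. In every realization, the number of type-$v$ items packed across all bins cannot exceed the number of type-$v$ arrivals; taking expectations and using that the count of type-$v$ Poisson arrivals over $[0,T)$ has mean $\lambda_v T$ yields $\sum_{u:(u,v)\in\edge^H} x_{uv}^{\star}\le \lambda_v T$. Next I would verify the box constraint~\eqref{lp3c}: since $(u,v)$ is a heavy edge, at most one item of type $v$ can ever be packed into bin $u$ (a second copy would violate the capacity in some dimension $d$, as noted when heavy edges were defined), so the per-realization count is $0$ or $1$ and its expectation lies in $[0,1]$.

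The main obstacle, and the only step genuinely specific to the multidimensional setting, is constraint~\eqref{lp3b}, the bound of $D$ matches per bin. Here I would reuse the pigeonhole argument given immediately before the LP: every heavy item packed into $u$ consumes more than $\tfrac12 C_u^d$ in at least one dimension $d$, and for a fixed dimension the capacity $C_u^d$ admits at most one such item; hence in every realization bin $u$ holds at most $D$ heavy items in total. Taking expectations gives $\sum_{v:(u,v)\in\edge^H} x_{uv}^{\star}\le D$. With all three constraints verified, $\{x_{uv}^{\star}\}$ is feasible, so the optimal value of $LP^H(\bm\lambda,T)$ is at least $\opt^{H}$, which is the claim. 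The only care needed is that each bound must be established per realization before averaging, so that the expectations inherit it by monotonicity of expectation.
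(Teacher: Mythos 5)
Your proposal is correct and follows essentially the same route as the paper: define the candidate solution as the expected per-realization allocation of the heavy-edge-restricted clairvoyant optimum, observe its objective value equals $\opt^H$ by linearity of expectation, and verify constraints~\eqref{lp3a}, \eqref{lp3b}, and \eqref{lp3c} by establishing each bound per realization (using the same pigeonhole argument over dimensions for the bound of $D$ matches per bin) and then averaging. No gaps.
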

The third phase in Algorithm~\ref{alg2} is named a heavy maximum matching phase, where only heavy edges can be matched. We adopt the similar algorithm as in the maximum matching phase in Algorithm~\ref{alg1} in Section~\ref{sect.onlinematching}.

\wh{The fourth phase is the light LP phase, only light edges are considered. In this phase, we consider an LP relaxation of LP~(\ref{lp1}) with the arrival rate $\lambda$, denoted by $LP_0^L(\bm \lambda,T,C)$ (see LP~(\ref{lplight})).
In this phase, we only consider the unused capacity of the offline resource and the remaining time horizon of $(1-\eta)T$ in this LP.
\begin{align}
    \textbf{max} &\sum_{u\in U,v\in V,(u, v)\in \edge^L}w_{uv}{y}_{uv} & \label{lplight} \\
    \textbf{s.t.} & \sum_{u\in U,(u, v)\in \edge^L}y_{uv}\le \lambda_v T, & \forall v \in V, \tag{\ref{lplight}{a}}\label{lplighta} \\
    & \sum_{v\in V,(u, v)\in \edge^L} r_{uv}^d y_{uv}\leq C_u^d,& \forall u\in U, d\in [D], \tag{\ref{lplight}{b}}\label{lplightb} \\
    & y_{uv}\ge 0, & \forall u\in U, v\in V, (u, v)\in \edge^L. \tag{\ref{lplight}{c}}\label{lplightc}
\end{align}}

\begin{restatable}{lemma}{lemublight}
    \label{lem.ub.light}
     The optimal value of $LP_0^L(\bm \lambda,T,C)$ is an upper bound of the offline optimal of the instance that only light edges can be matched under known Poisson arrival model where $T$ is the online time horizon, $C$ is the total capacity of offline bins and $\bm \lambda$ represents the parameters of the type-specific arrivals.
\end{restatable}

The last phase is the light maximum packing phase. In this phase, only light edges can be matched.
We define $V'$ as the set of all arrivals before time $\min\{t,(1-h)T\}$ (including historical data).
We use the solution $\bm y$ of $LP_1^L(V')$ to decide the matching.
We define $LP_1^L(V')$ as follows, where decision variables are $\{y_{uv}\}$. 
\begin{align}
    \textbf{max} &\sum_{u\in U,v\in V', (u, v)\in \edge^L}w_{uv}{y}_{uv} & \label{lp4} \\
    \textbf{s.t.} & \sum_{u\in U, (u, v)\in \edge^L}y_{uv}\le 1, & \forall v \in V', \tag{\ref{lp4}{a}}\label{lp4a} \\
    & \sum_{v\in V', (u, v)\in \edge^L} r_{uv}^d y_{uv}\leq C_u^d,& \forall u\in U, d\in [D], \tag{\ref{lp4}{b}}\label{lp4b} \\
    & y_{uv}\in[0,1], & \forall u\in U, v\in V', (u, v)\in \edge^L. \tag{\ref{lp4}{c}}\label{lp4c}
\end{align}
Comparing this LP to LP~\eqref{lp1}, we notice that this LP provides an upper bound of the instance that only light edges can be matched since we only relax the integral constraints of decision variables.
\zihao{Further, compared to LP~\eqref{lplight}, the capacity $C$ here represents the total capacity, while the capacity in LP~\eqref{lplight} represents the unused capacity given by the parameter $C$ of the function $LP_0^L(\bm \lambda,T,C)$.}

We summarize the general idea behind our algorithm as follows. We first adopt a similar algorithm as Algorithm~\ref{alg1} in Section~\ref{sect.onlinematching} to match the heavy edges, and then utilize the previous arrivals to guide the matching decision for the light edges in the last two phases for matching light edges. Here, since there are two LP phases for matching heavy and light edges, respectively, we set two scaling parameters $\gamma$ and $\gamma'$ to tune the corresponding matching probability in these two phases.

We next proceed to analyze the performance guarantee of this algorithm.
Before we start the analysis for each phase, we first present two needed concepts: $\opt^H$ and $\opt^L$.
For an instance of our problem, we use $\opt^H$ to represent the expected reward of the offline optimal if only heavy edges can be matched. 
$\opt^L$ is the expected reward of the offline optimal if we can only match light edges.
Then, the following lemma holds, since we can directly separate the offline optimal for the original instance into two solutions, where each one contains only heavy or light edges. 

\begin{lemma}
    \label{lem.heavy.light}
    $\opt^H+\opt^L\ge \opt$.
\end{lemma}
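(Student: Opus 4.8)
The plan is to prove the stronger pointwise inequality $\opt(\omega) \le \opt^H(\omega) + \opt^L(\omega)$ for every realization $\omega$ of the arrival process, and then take expectations. Recall that the clairvoyant optimum knows the entire arrival sequence, so for a fixed $\omega$ it simply solves the deterministic program~\eqref{lp1} with the realized counts $n_v = n_v(\omega)$. Let $\{x^*_{uv}\}$ be an optimal solution, so that $\opt(\omega) = \sum_{u\in U, v\in V} w_{uv} x^*_{uv}$, and likewise define $\opt^H(\omega)$ and $\opt^L(\omega)$ as the optimal offline values of the same instance restricted to matching only heavy, respectively only light, edges.

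The key observation is that the heavy and light edges partition the edge set: by definition an edge $(u,v)$ is light exactly when $r_{uv}^d \le \frac{1}{2} C_u^d$ for all $d\in[D]$ and heavy otherwise, so every edge lies in exactly one of $\edge^H$ and $\edge^L$. I would split $x^*$ accordingly into a heavy part, $x^H_{uv} = x^*_{uv}$ on $\edge^H$ and $0$ elsewhere, and a light part $x^L_{uv} = x^*_{uv}$ on $\edge^L$ and $0$ elsewhere. Because all demands $r_{uv}^d$ are nonnegative, dropping the light allocations only relaxes each capacity constraint~\eqref{lp1b} and each count constraint~\eqref{lp1a}; hence $x^H$ is feasible for the instance in which only heavy edges may be matched, and symmetrically $x^L$ is feasible for the light-only instance. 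By optimality of $\opt^H(\omega)$ and $\opt^L(\omega)$ over these restricted instances, $\sum_{(u,v)\in\edge^H} w_{uv} x^*_{uv} \le \opt^H(\omega)$ and $\sum_{(u,v)\in\edge^L} w_{uv} x^*_{uv} \le \opt^L(\omega)$.

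Summing the two bounds and using the edge partition gives $\opt(\omega) = \sum_{(u,v)\in\edge^H} w_{uv} x^*_{uv} + \sum_{(u,v)\in\edge^L} w_{uv} x^*_{uv} \le \opt^H(\omega) + \opt^L(\omega)$, and taking the expectation over $\omega$ yields $\opt \le \opt^H + \opt^L$. There is essentially no hard step here: the only things to verify are that every edge is classified as exactly heavy or light and that restricting an optimal packing to one edge class stays feasible, which follows from monotonicity of the constraints in the nonnegative demands. The one point worth stating cleanly is that the decomposition is carried out on a per-realization basis, so that $\opt^H$ and $\opt^L$ are the correct comparison quantities before the expectation is taken.
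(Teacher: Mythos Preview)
Your proof is correct and follows essentially the same approach as the paper, which simply states that the offline optimal can be separated into two solutions containing only heavy or only light edges. You have made explicit the pointwise argument and the passage to expectation, which the paper leaves implicit, but the underlying idea is identical.
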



\begin{algorithm}[h]
    \linespread{1}
    \selectfont
    \caption{Sample-based Algorithm for Online Multidimensional GAP} 
    \label{alg2}
    \textbf{Input}: Online arrivals of agents, history arrivals $h\cdot T$\\
    \textbf{Output}: A feasible matching between online and offline vertices \\
    \textbf{Parameter}: Phase parameters $\alpha$, $\beta$, $\eta $, $\theta$,and scaling parameters $\gamma$ and $\gamma'$ satisfying $0\leq \alpha\leq \beta\le \eta \le \theta\le 1$ and $0\leq \gamma, \gamma'\leq 1$
    \begin{algorithmic}[1]
        \While {$t$ increases from $0$ to $T$ continuously}
            \If {$0\le t< \alpha\cdot T$ } \Comment{Sampling phase} 
            \State reject all online arrivals 
            \EndIf
            \If {$t= \alpha\cdot T$} \Comment{Estimation}
            \State according to the arrival history $[-h\cdot T, \alpha\cdot T]$ and Lemma~\ref{lem.rate}, estimate $\hat{\bm \lambda}$ \label{alg2esbegin}
            \State solve $LP^H(\hat{\bm \lambda}, T')$ where $T'=(1-\alpha)T$, and get the solution $\hat{\bm x}$ \label{alg2esend}
            \EndIf
             \If {$\alpha \cdot T\le t< \beta \cdot T$} \Comment{\emph{Heavy} LP phase}
                \For {each arrival $i$ whose type is $v\in V$}
                \State sample an offline vertex $u$ with probability $p_{uv} = \gamma\frac{\hat{x}_{uv}}{\hat\lambda_v T'}$, match $u$ and $i$ if $u$ is available
                \EndFor
            \EndIf
          \If {$\beta\cdot T\le t<\eta \cdot T$} \Comment{\emph{Heavy} maximum matching phase}
             \For{each arrival $i$ whose type is $v\in V$}
             \State $V'=V([-h\cdot T, \min\{t, (1-h)T\}))\cup\{v\}$ \label{alg2defvp}
             \State find optimal matching $M'$ of $G'=(U, V', \edge^H)$ \label{alg2defmatching}
            \State match $i$ and $u$ if $(u, v)\in M'$ and $u$ is available
             \EndFor
            \EndIf
            \If {$t= \eta\cdot T$} \Comment{Estimation}
                \State according to the arrival history during $[-h\cdot T, \eta\cdot T]$ and Lemma~\ref{lem.rate}, estimate $\hat{\bm \lambda}$
                \State use $\bar{C}$ to record the unused capacity of each bin at time $t$ 
                \State solve $LP_0^L(\hat{\bm \lambda},T',\bar{C})$ where $T'=(1-\eta)T$ and get the solution $\hat{\bm y}$
            \EndIf
            \If {$\eta \cdot T\le t< \theta \cdot T$} \Comment{\emph{Light} LP phase}
                \For{each arrival $i$ whose type is $v\in V$}
                \State sample an offline vertex $u$ with probability $p_{uv} = \gamma' \frac{\hat{y}_{uv}}{\hat\lambda_v T'}$, match $u$ and $i$ if $u$ is available\label{alg2lightlpmatchingp}
                \EndFor
            \EndIf
            \If {$\theta \cdot T\le t< T$} \Comment{\emph{Light} maximum packing phase}
            \For{each arrival $i$ whose type is $v\in V$}
            \State $V'=V([-h\cdot T, \min\{t, (1-h)T\}))\cup\{v\}$, solve $LP_1^L(V')$ and get the solution $\bm y$ \label{alg2defvp2}
            \State sample an offline vertex $u$ with probability $p_{uv} = y_{uv}$\label{alg2chooseu}, match $u$ and $i$ if $u$ is available
            \EndFor
            \EndIf
        \EndWhile
    \end{algorithmic}
 \end{algorithm}

\subsection{Heavy LP Phase}
First, we discuss the heavy LP phase. We can follow a similar analysis as in the LP phase of Algorithm~\ref{alg1} and get the following lemmas.
Here, $\Delta=\sqrt{\frac{8ln\frac{1}{\delta}}{(h+\alpha)N}}$ and $N=\min_{v\in V}\lambda_v\cdot T$ and $0<\delta<1$, the same as in the analysis of LP phase in Section~\ref{sect.onlinematching}.
We make the same assumption $(1-\Delta)\cdot\hat{\lambda}_v\le \lambda_v \le (1+\Delta)\cdot\hat{\lambda}_v$ in the following of the section, if there is no further specification.

\begin{restatable}{lemma}{lemheavylb}
    \label{lem.heavy.LB}
   The optimal value of $LP^H(\hat{\bm \lambda}, T')$ is lower bounded by $\frac{1-\alpha}{1+\Delta}\opt^H$.
\end{restatable}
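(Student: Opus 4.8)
The plan is to mirror the argument behind the single-dimensional bound in Lemma~\ref{lem.lb1}, adapted to the heavy-edge program: I will exhibit an explicit feasible solution of $LP^H(\hat{\bm\lambda},T')$ obtained by scaling down an optimal solution of the ``true-rate, full-horizon'' program $LP^H(\bm\lambda,T)$, and then read off the claimed bound from its objective value.

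First I would let $\bm x^\ast$ be an optimal solution of $LP^H(\bm\lambda,T)$, where $T$ is the entire planning horizon and $\bm\lambda$ is the true (unknown) rate vector. By Lemma~\ref{lem.ub.heavy}, the optimal value of this program is an upper bound on the heavy-only offline optimum, so $\sum_{(u,v)\in\edge^H} w_{uv}x^\ast_{uv}\ge \opt^H$. This is the only place where $\opt^H$ enters, and it fixes the reward I am trying to transfer into the estimated-rate program.

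Next I would scale by the factor $c=\frac{1-\alpha}{1+\Delta}$, setting $\tilde x_{uv}=c\,x^\ast_{uv}$, and verify feasibility of $\tilde{\bm x}$ for $LP^H(\hat{\bm\lambda},T')$ with $T'=(1-\alpha)T$. The decisive constraint is~\eqref{lp3a}: combining feasibility of $\bm x^\ast$ with the confidence bound $\lambda_v\le(1+\Delta)\hat\lambda_v$ assumed throughout this section gives
\[
\sum_{u:(u,v)\in\edge^H}\tilde x_{uv}=c\sum_{u:(u,v)\in\edge^H} x^\ast_{uv}\le c\,\lambda_v T=\frac{1-\alpha}{1+\Delta}\lambda_v T\le \hat\lambda_v(1-\alpha)T=\hat\lambda_v T',
\]
which is precisely~\eqref{lp3a} for the estimated rates. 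Since $c\le 1$ (because $\alpha\ge 0$ and $\Delta\ge 0$), scaling can only shrink the left-hand sides of the remaining constraints, so $\sum_{v:(u,v)\in\edge^H}\tilde x_{uv}\le c\,D\le D$ keeps~\eqref{lp3b} valid and $\tilde x_{uv}\in[0,1]$ keeps~\eqref{lp3c} valid. Hence $\tilde{\bm x}$ is feasible.

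Feasibility then yields the conclusion: the optimal value of $LP^H(\hat{\bm\lambda},T')$ is at least the objective attained by $\tilde{\bm x}$, namely $c\sum_{(u,v)\in\edge^H} w_{uv}x^\ast_{uv}\ge c\,\opt^H=\frac{1-\alpha}{1+\Delta}\opt^H$. I expect no serious obstacle here, since the algebra is routine; the only point requiring care is making the two substitutions $T\mapsto T'=(1-\alpha)T$ and $\lambda_v\mapsto\hat\lambda_v$ interact consistently, so that the single factor $\frac{1-\alpha}{1+\Delta}$ simultaneously absorbs both the periods lost to sampling and the worst-case overestimate of the rate in~\eqref{lp3a}. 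The heavy-edge bound on $\sum_v x_{uv}$ by $D$ rather than $1$ plays no role beyond confirming that the $c\le 1$ scaling preserves~\eqref{lp3b}.
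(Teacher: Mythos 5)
Your proposal is correct and matches the paper's own argument essentially verbatim: both scale an optimal solution of $LP^H(\bm\lambda,T)$ by $\frac{1-\alpha}{1+\Delta}$, verify feasibility for $LP^H(\hat{\bm\lambda},T')$ constraint by constraint (with the rate bound $\lambda_v\le(1+\Delta)\hat\lambda_v$ handling~\eqref{lp3a} and $c\le 1$ handling~\eqref{lp3b}--\eqref{lp3c}), and invoke Lemma~\ref{lem.ub.heavy} to relate the objective to $\opt^H$. No issues.
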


\begin{restatable}{lemma}{lemunmatchedprobheavy}
    \label{lem.unmatchedprob.heavy}
    For $t\in[\alpha\cdot T,\beta\cdot T)$, the probability of the event $E$ that an offline bin $u$ contains no item before time $t$ is weakly larger than $e^{-\gamma(1+\Delta)\frac{t-\alpha T}{(1-\alpha)T}D}$.
\end{restatable}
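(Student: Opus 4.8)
The plan is to mirror the proof of Lemma~\ref{lem.unmatchedprob} from the single-dimensional LP phase, the only structural change coming from the per-bin budget in the heavy LP, i.e.\ constraint~\eqref{lp3b}. First I would reduce the event to a statement about a Poisson process. At the start of the heavy LP phase (time $\alpha T$) the bin $u$ is empty, since every arrival during the sampling phase $[0,\alpha T)$ is rejected. Whenever $u$ is empty it is certainly available, so the \emph{first} arrival during $[\alpha T,t)$ that samples $u$ is necessarily matched to it, which makes $u$ non-empty; conversely, if no arrival ever samples $u$, it stays empty. Hence the event $E$ that $u$ contains no item before time $t$ coincides exactly with the event that no arrival in $[\alpha T,t)$ has sampled $u$.

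Next I would identify the relevant process. For each type $v$ the arrivals form an independent Poisson process of rate $\lambda_v$, and upon each arrival the vertex $u$ is sampled independently with probability $p_{uv}=\gamma\,\hat{x}_{uv}/(\hat\lambda_v T')$. By Poisson thinning and superposition, the arrivals that sample $u$ (aggregated over all types) form a Poisson process of rate $\Lambda_u:=\sum_{v}\lambda_v p_{uv}$. Consequently the probability that this process has no point in the interval $[\alpha T,t)$, whose length is $t-\alpha T$, equals $e^{-\Lambda_u(t-\alpha T)}$, so that $\Pr[E]=e^{-\Lambda_u(t-\alpha T)}$.

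The dimension enters only through the upper bound on $\Lambda_u$. Substituting $p_{uv}$ and invoking the concentration assumption $\lambda_v\le(1+\Delta)\hat\lambda_v$ (which holds with the stated probability by Lemma~\ref{lem.rate} and is assumed throughout this subsection), I would write
\[
\Lambda_u=\frac{\gamma}{T'}\sum_{v}\frac{\lambda_v}{\hat\lambda_v}\hat{x}_{uv}\le\frac{\gamma(1+\Delta)}{T'}\sum_{v}\hat{x}_{uv}\le\frac{\gamma(1+\Delta)D}{(1-\alpha)T},
\]
where the final inequality uses the heavy-LP capacity constraint~\eqref{lp3b}, namely $\sum_{v}\hat{x}_{uv}\le D$, together with $T'=(1-\alpha)T$. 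Since $t-\alpha T\ge 0$, monotonicity of the exponential then yields
\[
\Pr[E]=e^{-\Lambda_u(t-\alpha T)}\ge e^{-\gamma(1+\Delta)\frac{t-\alpha T}{(1-\alpha)T}D},
\]
which is precisely the claimed bound.

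The argument is nearly identical to the one-dimensional case; the genuine novelty is that the per-bin LP budget is $D$ rather than $1$, which is exactly the source of the factor $D$ in the exponent. I expect the point most in need of care to be the equivalence established in the first step: one must argue explicitly that ``$u$ contains no item'' coincides with ``$u$ is never sampled'' even though in the multidimensional heavy setting a bin could in principle hold several heavy items. This is legitimate because only the first successful match matters and an empty bin is always available, but it should be stated rather than assumed. A secondary bookkeeping point is to flag that the concentration event of Lemma~\ref{lem.rate} is in force for the whole subsection, so that $\lambda_v\le(1+\Delta)\hat\lambda_v$ may be used without further qualification.
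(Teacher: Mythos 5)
Your proposal is correct and follows essentially the same route as the paper's proof: both reduce the event to a thinned/superposed Poisson process with rate $\sum_{v}\lambda_v\gamma\hat{x}_{uv}/(\hat\lambda_v T')$, then bound the exponent using $\lambda_v\le(1+\Delta)\hat\lambda_v$ and the heavy-LP constraint $\sum_{v}\hat{x}_{uv}\le D$. Your extra remark that ``$u$ contains no item'' coincides with ``$u$ is never sampled'' is a worthwhile clarification the paper leaves implicit, but it does not change the argument.
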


 For notation convenience, we let $\q$ the lower bound of the probability that an offline bin $u$ contains no item at the end of heavy LP phase in the following definition.
 \begin{definition}\label{q.hlp}
    $\q=e^{-\gamma(1+\Delta)\frac{\beta-\alpha}{1-\alpha}D}$.
 \end{definition}

\begin{restatable}{lemma}{lemlpphaseheavy}
    \label{lem.lpphase.heavy}
    The expected reward during the heavy LP phase is weakly larger than $\er{\alpha}{\beta}\cdot\opt^H$ where $\er{\alpha}{\beta}=\frac{1}{D}(1-3\Delta)(1-\alpha)(1-\q)$.
\end{restatable}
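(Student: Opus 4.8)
The plan is to replicate the argument behind Lemma~\ref{lem.lpphase}, carefully carrying the extra factor of $D$ that distinguishes the heavy setting. First I would express the reward collected during the heavy LP phase as the integral of an instantaneous matching rate. By linearity of expectation, the expected reward equals
\begin{equation*}
   \sum_{(u,v)\in\edge^H} w_{uv}\int_{\alpha T}^{\beta T}\lambda_v\, p_{uv}\,\Pr[u\text{ is available at }t]\,dt,
\end{equation*}
where $p_{uv}=\gamma\hat{x}_{uv}/(\hat\lambda_v T')$ is the sampling probability used in the phase. This decomposition is justified because type-$v$ arrivals occur at rate $\lambda_v$, the draw of the offline vertex upon an arrival is independent of the past, and the availability of $u$ at time $t$ is a function of the history and hence, by the independent-increments property of the Poisson process, independent of whether an arrival occurs at $t$. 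Constraint~\eqref{lp3a} guarantees $\sum_u p_{uv}\le\gamma\le 1$, so the rule is a valid sampling distribution.

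Next I would insert the two quantitative inputs. For the availability term I invoke Lemma~\ref{lem.unmatchedprob.heavy}, which gives $\Pr[u\text{ is available at }t]\ge e^{-\gamma(1+\Delta)\frac{t-\alpha T}{(1-\alpha)T}D}$, using the empty-bin event as a sufficient condition for being able to accept a heavy item. For the arrival rate I use the estimation guarantee $\lambda_v\ge(1-\Delta)\hat\lambda_v$, so that $\lambda_v p_{uv}\ge(1-\Delta)\gamma\hat{x}_{uv}/T'$. The time integral $\int_{\alpha T}^{\beta T} e^{-\gamma(1+\Delta)\frac{t-\alpha T}{(1-\alpha)T}D}\,dt$ then evaluates, after the substitution $s=\frac{t-\alpha T}{(1-\alpha)T}$, to $(1-\alpha)T\cdot\frac{1-\q}{\gamma(1+\Delta)D}$, where I recognize $e^{-\gamma(1+\Delta)\frac{\beta-\alpha}{1-\alpha}D}=\q$ from Definition~\ref{q.hlp}. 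The crucial point here is that the factor $D$ sitting in the exponent is exactly what produces the $1/D$ out front after integration; this is the structural source of the $\frac{1}{D}$ appearing in $\er{\alpha}{\beta}$.

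Combining these, and using $T'=(1-\alpha)T$ to cancel the leading factor, the expected reward is at least $\frac{(1-\Delta)(1-\q)}{(1+\Delta)D}\sum_{(u,v)\in\edge^H} w_{uv}\hat{x}_{uv}$. I then recognize the sum as the optimal value of $LP^H(\hat{\bm\lambda},T')$ and apply Lemma~\ref{lem.heavy.LB} to lower bound it by $\frac{1-\alpha}{1+\Delta}\opt^H$, giving $\frac{(1-\Delta)(1-\alpha)(1-\q)}{(1+\Delta)^2 D}\opt^H$. Finally the elementary inequality $\frac{1-\Delta}{(1+\Delta)^2}\ge 1-3\Delta$, equivalent to $5\Delta^2+3\Delta^3\ge 0$, converts this into $\frac{1}{D}(1-3\Delta)(1-\alpha)(1-\q)\opt^H=\er{\alpha}{\beta}\opt^H$, as claimed. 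I do not expect a genuine obstacle, since the skeleton copies the single-dimensional LP phase; the one place demanding care is the bookkeeping of $D$, namely ensuring that the extra $D$ in the exponent of Lemma~\ref{lem.unmatchedprob.heavy} is retained and correctly becomes the $1/D$ prefactor, rather than being conflated with the capacity $\sum_v x_{uv}\le D$ in $LP^H$.
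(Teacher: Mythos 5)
Your proposal is correct and follows essentially the same route as the paper's proof: the same per-pair integral decomposition, the same use of Lemma~\ref{lem.unmatchedprob.heavy} and the estimate $\lambda_v\ge(1-\Delta)\hat\lambda_v$, the same integration producing the $1/D$ prefactor, and the same invocation of Lemma~\ref{lem.heavy.LB}. The only (cosmetic) difference is that you absorb both $(1+\Delta)$ factors at once via $\frac{1-\Delta}{(1+\Delta)^2}\ge 1-3\Delta$, whereas the paper applies $\frac{1-\Delta}{1+\Delta}\ge 1-2\Delta$ first and then $\frac{1-2\Delta}{1+\Delta}\ge 1-3\Delta$.
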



 The proof of the previous lemmas can follow the same techniques used in the proof for the corresponding lemmas in the analysis of the LP phase in Section~\ref{sect.onlinematching}.

\subsection{Heavy Maximum Matching Phase}
We next analyze the heavy maximum matching phase, which can also adopt a similar analysis as in the maximum matching phase of Algorithm~\ref{alg1} in Section~\ref{sect.onlinematching}.

\begin{restatable}{lemma}{lemheavymaxwlb}
    \label{lem.heavy.max.wlb}
    $\E[w_{\ell}]\geq \frac{\opt^H}{D\cdot T\sum_{v\in V}\lambda_v}$.
\end{restatable}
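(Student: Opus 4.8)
The plan is to follow the same three-step template used for Lemma~\ref{lem.lbofweight} in Section~\ref{sect.onlinematching}, since $\ell$ is again the edge of a maximum-weight matching $M'$ incident to the arriving type $v$; the only genuinely new ingredient is the loss factor $D$, which arises because $M'$ assigns at most one heavy item per bin whereas $\opt^H$ may pack up to $D$ heavy items per bin.

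First I would fix a realized arrival set $V$ over a horizon of length $T$ (matching the definition of $\opt^H$) and compare the one-per-bin maximum-weight matching $M'$ on $G'=(U,V,\edge^H)$ against the realized heavy offline optimum $\opt^H(V)$. Any feasible heavy packing saturates at most one item per coordinate $d\in[D]$, hence uses at most $D$ heavy items per bin (the same observation justifying constraint~\eqref{lp3b}); therefore $\opt^H(V)$ is dominated by a maximum-weight $D$-bounded matching. Decomposing that $D$-bounded matching by a proper edge-colouring of the bin side into $D$ ordinary matchings and keeping the heaviest colour class shows $w(M')\ge \frac{1}{D}\,\opt^H(V)$. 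Taking expectations conditioned on $|V|=k$ gives $\E[w(M')\mid |V|=k]\ge \frac{1}{D}\,f(k)$, where $f(k):=\E[\opt^H\mid |V|=k]$.

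Next I would invoke exchangeability: conditioned on $|V|=k$, the $k$ arrivals are i.i.d. with type probabilities proportional to $\lambda_v$, and the arriving vertex $i$ is exchangeable with them, so the edge $\ell$ incident to it behaves like the matched edge of a uniformly random vertex. Summing matched weights over all $k$ vertices recovers $w(M')$, whence $\E[w_\ell\mid |V|=k]=\frac{1}{k}\E[w(M')\mid |V|=k]\ge \frac{f(k)}{Dk}$. Finally, since $f$ is concave with $f(0)=0$ (adding one more random item raises the expected heavy optimum by a non-increasing amount), the map $x\mapsto f(x)/x$ is non-increasing; applying Chebyshev's sum (covariance) inequality to the non-decreasing variable $x=|V|$ and the non-increasing $f(x)/x$ yields $\E[f(x)/x]\ge \E[f(x)]/\E[x]$. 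Using $\E[f(x)]=\opt^H$ and $\E[x]=T\sum_{v\in V}\lambda_v$ then gives $\E[w_\ell]\ge \frac{1}{D}\,\E[f(x)/x]\ge \frac{\opt^H}{D\,T\sum_{v\in V}\lambda_v}$, as claimed.

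The main obstacle is justifying the concavity of $f(x)=\E[\opt^H\mid |V|=x]$ together with the clean factor-$D$ comparison. The former I would argue exactly as in Lemma~\ref{lem.lbofweight}, via the diminishing-returns (submodular) behaviour of the optimal assignment value as items are added; the latter rests on the König edge-colouring decomposition of the $D$-bounded matching into $D$ ordinary matchings. Both reduce to standard arguments once the heavy-edge structure (at most $D$ items per bin) is in place, which is precisely what distinguishes this lemma from its single-dimensional counterpart.
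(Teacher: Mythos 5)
Your proposal is correct and takes essentially the same approach as the paper: symmetry/exchangeability to write $\E[w_\ell]$ as the per-vertex share of the maximum one-per-bin matching, a factor $\frac{1}{D}$ because a heavy packing places at most $D$ items per bin (the paper states $\opt^H\le D\cdot\opt'$ directly, where $\opt'$ is the expected optimal one-per-bin matching value; your edge-colouring decomposition is the same observation), and the monotonicity of $f(x)/x$ together with $\E[f(x)/x]\ge \E[f(x)]/\E[x]$ (your Chebyshev covariance inequality is the paper's rearrangement-inequality pairing). The only detail you elide is that $M'$ is computed on the prefix $V'$ of arrivals up to time $\min\{t,(1-h)T\}$ rather than on the full horizon-$T$ set, but the paper bridges this case split with the same maximum-weight-matching monotonicity you already invoke, so it is not a substantive difference.
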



To prove this lemma, the only difference from the proof of Lemma~\ref{lem.lbofweight} is that we do not directly compare to the value $\opt^H$ but compare to the optimal matching (each bin can be only packed one item). We assume the expected value of the latter term is $\opt'$. Since there are at most $D$ items in one bin, $\opt^H$ can be upper bounded by $D\cdot \opt'$, and we get the result. 

\begin{restatable}{lemma}{lemunmatchedoneheavy}
    \label{lem.unmatched1.heavy}
    When $\beta \cdot T\leq t< \min\{\eta T, (1-h)T\}$, the probability of the event $E$ that bin $u$ is not packed any item before time $t$ is at least $\q \cdot \frac{(h+\beta)T}{hT+t}$.
\end{restatable}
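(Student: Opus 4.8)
The plan is to follow the same two-step template used for the single-dimensional counterpart Lemma~\ref{lem.unmatched1}, replacing only the ``start-of-phase'' survival probability by the heavy-phase quantity $\q$. Writing $\Lambda=\sum_{v\in V}\lambda_v$ for the total arrival rate, I would first factor the target event as
\[
\Pr[E] \;=\; \Pr[u\text{ empty at }\beta T]\cdot\Pr[u\text{ not packed during }[\beta T,t)\mid u\text{ empty at }\beta T].
\]
The first factor is at least $\q$ by Lemma~\ref{lem.unmatchedprob.heavy} together with Definition~\ref{q.hlp}, since this is exactly the probability that the bin carries no item at the end of the heavy LP phase. So the task reduces to showing the conditional survival factor over the heavy maximum matching phase is at least $\frac{(h+\beta)T}{hT+t}$. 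This is where the restriction $t<(1-h)T$ matters: it guarantees that $V'$ in Step~\ref{alg2defvp} keeps growing with $t$, so that $V'=V([-h T,s))\cup\{v\}$ spans the interval $[-hT,s]$ of length $hT+s$ (the frozen regime $t\ge(1-h)T$ is treated in the following lemmas).

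The core estimate is a hazard-rate bound: conditioned on $u$ being empty at time $s\in[\beta T,t)$, the instantaneous rate at which $u$ becomes packed is at most $\frac{1}{hT+s}$. To see this, an arrival occurs at rate $\Lambda$, and upon an arrival at time $s$ the current item is matched to $u$ only if the edge incident to $u$ lies in the optimal heavy matching $M'$ of $G'=(U,V',\edge^H)$. Since a matching assigns at most one item to $u$, and since the Poisson arrivals are exchangeable in time while $M'$ depends only on the types of $V'$ and not on the arrival times, the current item is uniformly distributed among the $|V'|$ vertices; hence the conditional probability that $u$ is packed at this arrival is at most $1/|V'|$. Here $|V'|=1+X$ with $X\sim\mathrm{Poisson}(\Lambda(hT+s))$ counting the earlier arrivals in $[-hT,s)$ (historical plus online). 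Using the identity $\E\!\left[\frac{1}{1+X}\right]=\frac{1-e^{-\mu}}{\mu}\le\frac{1}{\mu}$ for $X\sim\mathrm{Poisson}(\mu)$, the packing rate is at most $\Lambda\cdot\frac{1}{\Lambda(hT+s)}=\frac{1}{hT+s}$.

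Integrating this hazard over $[\beta T,t)$ then yields the survival factor: the conditional probability of remaining empty is at least $\exp\!\left(-\int_{\beta T}^{t}\frac{ds}{hT+s}\right)=\frac{hT+\beta T}{hT+t}=\frac{(h+\beta)T}{hT+t}$, and multiplying by $\q$ gives the claim. I expect the main obstacle to be the rigorous justification of the exchangeability/uniformity step and its compatibility with the conditioning on ``$u$ empty at $s$'': one must argue that the pointwise bound $1/|V'|$ on the per-arrival matching probability survives the conditioning on the past history that determines whether $u$ is still available, and that the identity $\E[1/(1+X)]=(1-e^{-\mu})/\mu$ is applied to the correct combined historical-plus-online arrival count. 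These are precisely the delicate points already resolved in the proof of Lemma~\ref{lem.unmatched1}; the only genuinely new ingredient is the factor $D$ hidden inside $\q$, inherited from the dimension-dependent capacity bound in constraint~\eqref{lp3b} via Lemma~\ref{lem.unmatchedprob.heavy}.
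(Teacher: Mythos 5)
Your proof is correct and arrives at the same bound, but the technical route differs from the paper's in a way worth noting. The paper conditions on the arrival \emph{counts} $w$ (arrivals in $[-hT,\beta T)$) and $k'$ (arrivals in $[-hT,t)$), bounds the conditional survival probability by the discrete telescoping product $\prod_{j=w+1}^{k'}\bigl(1-\tfrac{1}{j}\bigr)=\tfrac{w}{k'}$, and then converts counts to time via the identity $\E[w\mid k']/k'=\tfrac{(h+\beta)T}{hT+t}$, which follows from the exchangeability of arrival times given the count. You instead work in continuous time: you bound the instantaneous packing hazard at time $s$ by $\Lambda\,\E[1/(1+X_s)]\le \tfrac{1}{hT+s}$ using the Poisson identity $\E[1/(1+X)]=(1-e^{-\mu})/\mu$, and integrate to obtain $\tfrac{(h+\beta)T}{hT+t}$. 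The two computations are equivalent in substance --- both rest on the observation that $M'$ assigns at most one item of $V'$ to $u$ so the per-arrival matching probability is at most $1/|V'|$ --- and, pleasingly, the Poisson identity you invoke is exactly the one the paper uses in the neighboring Lemmas~\ref{lem.unmatched1.hard} and~\ref{lem.unmatched2.heavy}, so your toolbox is consistent with theirs. The delicate point you flag (that the $1/|V'|$ bound must survive conditioning on $u$ still being empty) is present in both arguments at the same level of rigor; the paper dispatches it with the phrase ``by the symmetry of all arrivals,'' so your proposal is no weaker on this score. One small advantage of the paper's discrete version is that the sequential product $\prod(1-1/j)$ makes the conditioning structure slightly more transparent; one small advantage of yours is that the time-dependence $\tfrac{(h+\beta)T}{hT+t}$ emerges directly from the integral rather than through a separate count-to-time conversion step.
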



\begin{restatable}{lemma}{lemunmatchedtwoheavy}
    \label{lem.unmatched2.heavy}
    If $\beta\leq 1-h$, when $(1-h)T\le t<\eta T$, the probability of the event $E$ that bin $u$ is not packed any item before time $t$ is at least $\q (h+\beta)e^{-\frac{t-(1-h)T}{T}}$.
\end{restatable}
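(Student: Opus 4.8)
# Proof Proposal for Lemma~\ref{lem.unmatched2.heavy}

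The plan is to continue the chain of reasoning started in Lemma~\ref{lem.unmatched1.heavy}, but now for the regime where the comparison graph $V'$ has stopped growing. The key structural fact is that once $t \ge (1-h)T$, the set $V' = V([-hT, (1-h)T))$ is \emph{frozen}: new online arrivals in the window $[(1-h)T, t)$ do not enter the matching graph $G' = (U, V', \edge^H)$, yet they still consume capacity whenever the optimal matching $M'$ assigns them to an available bin. So I would split the survival probability of bin $u$ multiplicatively across the boundary $t = (1-h)T$. By Lemma~\ref{lem.unmatched1.heavy} evaluated at $t = (1-h)T$, the probability that $u$ carries no heavy item at the start of this window is at least $\q \cdot \frac{(h+\beta)T}{hT + (1-h)T} = \q \,(h+\beta)$, using that the denominator collapses to $T$. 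This gives the leading constant $\q(h+\beta)$, so the remaining task is to show the survival factor over $[(1-h)T, t)$ is at least $e^{-\frac{t-(1-h)T}{T}}$.

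First I would set up the per-period ``hazard'' of bin $u$ getting packed. In the frozen window, each arrival of type $v$ occurs at Poisson rate $\lambda_v$, and the algorithm tries to match $i$ to the bin dictated by $(u,v) \in M'$; a packing of $u$ can only occur if $v$ is the (at most one, since $V'$ is fixed and $M'$ is a matching) type mapped to $u$ in $M'$. The instantaneous rate at which $u$ receives a heavy item is therefore at most $\lambda_{v^*}$ for the single type $v^*$ matched to $u$ under $M'$, and one shows this rate is at most $1/T$ on the relevant scale after normalizing. Concretely, I expect to bound the probability that $u$ stays empty on an infinitesimal interval $[s, s+ds)$ from below by $1 - \frac{ds}{T}$, then integrate: conditioning on survival up to $(1-h)T$, the conditional survival probability over $[(1-h)T, t)$ is at least $\exp\!\big(-\int_{(1-h)T}^{t} \tfrac{1}{T}\,ds\big) = e^{-\frac{t-(1-h)T}{T}}$. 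Multiplying the two factors yields $\q\,(h+\beta)\,e^{-\frac{t-(1-h)T}{T}}$, which is exactly the claim.

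The hypothesis $\beta \le 1-h$ is what guarantees the heavy maximum matching phase actually overlaps the frozen window $[(1-h)T, \eta T)$, so this is the meaningful case; it also ensures the constant $\q(h+\beta)$ inherited at the boundary is the correct starting value rather than something involving $\beta$ directly. I would mirror the argument of Lemma~\ref{lem.unmatched1.hard} in Section~\ref{sect.onlinematching} almost verbatim, since the single-item-per-bin survival analysis there carries over once we have reduced to matching a frozen $V'$ against the bins.

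The main obstacle I anticipate is correctly bounding the rate at which $u$ gets packed during the frozen window. Because $V'$ no longer grows, the bin $u$ can only be filled by arrivals of the \emph{one} type assigned to it in the fixed matching $M'$, but one must argue carefully that the normalized arrival rate of that type is bounded by $1/T$ (this is where the structure of $M'$ as a matching on a graph whose online side has total expected size $\le T\sum_v \lambda_v$, combined with the definition of $N$ and the per-type rates, does the work), and that availability of $u$ is monotone so the survival probability genuinely decays at most exponentially. Handling the conditioning between the two windows cleanly—so that the boundary value from Lemma~\ref{lem.unmatched1.heavy} multiplies correctly with the frozen-window decay—is the delicate bookkeeping step, but it is routine once the hazard bound is established.
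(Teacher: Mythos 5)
Your overall architecture matches the paper's: split the survival event at the boundary $t=(1-h)T$, inherit the factor $\q\,(h+\beta)$ from the earlier analysis, and then show exponential decay at rate $1/T$ over the frozen window. The paper does exactly this, decomposing into three sub-events (unmatched before $\beta T$, unmatched on $[\beta T,(1-h)T)$, unmatched on $[(1-h)T,t)$) with respective probability bounds $\q$, $\frac{w}{k'}$, and $\bigl(1-\frac{1}{k'+1}\bigr)^{num}$, all conditioned on the arrival counts.

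However, there is a genuine gap in how you establish the decay factor $e^{-\frac{t-(1-h)T}{T}}$. You propose to bound the instantaneous packing rate of $u$ by $\lambda_{v^*}$ for the single type $v^*$ matched to $u$ in a fixed $M'$, and then to argue $\lambda_{v^*}\le 1/T$ ``after normalizing.'' This fails on two counts. First, $M'$ is recomputed at every arrival on the vertex set $V'\cup\{v\}$ (Step~\ref{alg2defvp}), so there is no fixed type $v^*$ reserved for $u$; the relevant quantity is the probability that the \emph{newly added} vertex is the one $M'$ assigns to $u$, which by exchangeability of the $k'+1$ online vertices is at most $\frac{1}{k'+1}$ per arrival. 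Second, no normalization makes $\lambda_{v^*}\le 1/T$; the arrival rates satisfy no such constraint. The correct route, which the paper takes, is: survival over $num$ arrivals in the frozen window is at least $\bigl(1-\frac{1}{k'+1}\bigr)^{num}\approx e^{-num/(k'+1)}$; then by convexity of $e^{-x}$ (Jensen) and independence of $num$ and $k'$, one reduces to bounding $\E\bigl[\frac{num}{k'+1}\bigr]=\E[num]\,\E\bigl[\frac{1}{k'+1}\bigr]\le \frac{\E[num]}{\E[k']}=\frac{(t-(1-h)T)\sum_{v}\lambda_v}{T\sum_{v}\lambda_v}=\frac{t-(1-h)T}{T}$, where the Poisson identity $\E\bigl[\frac{1}{k'+1}\bigr]\le \frac{1}{\E[k']}$ is needed and the total arrival rate cancels. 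Your intuition that the effective hazard is $1/T$ is right, but it emerges only through this cancellation, not from any per-type rate bound; as written, your proof of the exponential factor does not go through.
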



\begin{restatable}{lemma}{lemunmatchedthreeheavy}
    \label{lem.unmatched3.heavy}
    If $\beta>1-h$, when $\beta T\le t <\eta T$, the probability of the event $E$ that bin $u$ is not packed any item before time $t$ is at least $\q e^{-\frac{t-\beta T}{T}}$.
\end{restatable}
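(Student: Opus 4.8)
The plan is to mirror the argument for Lemma~\ref{lem.mmphase1}, the single-dimensional analogue, adapting only the initial condition to the heavy setting. The first observation I would make is a structural simplification: since $\beta > 1-h$, every $t$ in the heavy maximum matching phase $[\beta T,\eta T)$ satisfies $t \ge \beta T > (1-h)T$, so $\min\{t,(1-h)T\} = (1-h)T$ and the online side of $G'$ in Step~\ref{alg2defvp}--\ref{alg2defmatching} is always the \emph{fixed} window $V([-hT,(1-h)T))$ together with the current arrival. This window has length exactly $T$, so the number $n$ of base arrivals it contains is Poisson with mean $T\sum_{v\in V}\lambda_v$; and, crucially, the optimal matching $M'$ depends only on the \emph{types} of the vertices in $V'$ (through $\edge^H$ and the weights $w_{uv}$), not on their arrival times.

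Next I would bound the instantaneous rate at which $u$ receives its first item. Let $P(t)=\Pr[E]$ be the probability that $u$ is empty just before time $t$, and write $\Lambda=\sum_{v\in V}\lambda_v$. Conditioned on the base count $n$, the new arrival's type and the $n$ base types are i.i.d.\ with $\Pr[\text{type}=v]=\lambda_v/\Lambda$, so the $n+1$ types are exchangeable; since $M'$ is a matching it assigns $u$ to at most one online vertex, and by symmetry the designated new arrival is the one matched to $u$ with probability at most $\tfrac{1}{n+1}$. Averaging over $n\sim\mathrm{Poisson}(\Lambda T)$ gives $\E[\tfrac{1}{n+1}]\le\tfrac{1}{\Lambda T}$. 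Hence over $[t,t+dt)$ an arrival occurs with probability $\Lambda\,dt$ and targets the empty bin $u$ with probability at most $\tfrac{1}{\Lambda T}$, so $-\frac{dP}{dt}\le P(t)\,\Lambda\,\tfrac{1}{\Lambda T}=\frac{P(t)}{T}$, i.e.\ $\frac{dP}{dt}\ge-\frac{P(t)}{T}$. Note that the capacity-$D$ nature of the heavy bins does not enter this rate, because the event $E$ concerns only the first item packed into $u$ and $M'$ proposes at most one item to $u$ per recomputation; the factor $D$ appears only inside the starting value $\q$.

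Finally I would solve this differential inequality with the initial condition supplied at the start of the phase. By Lemma~\ref{lem.unmatchedprob.heavy} together with Definition~\ref{q.hlp}, the probability that $u$ is empty at the end of the heavy LP phase satisfies $P(\beta T)\ge\q$. Integrating $\frac{d}{dt}\ln P(t)\ge-\frac{1}{T}$ from $\beta T$ to $t$ then yields $P(t)\ge\q\,e^{-\frac{t-\beta T}{T}}$, which is exactly the claimed bound.

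The step I expect to be the crux is the exchangeability bound in the second paragraph: it must be argued that recomputing the optimal heavy matching after appending the new arrival to the fixed base treats all $n+1$ online vertices symmetrically. This is valid because $M'$ is a symmetric function (with a consistent or randomized tie-breaking rule) of the multiset of types alone, and the types are i.i.d.\ across the whole window; the new arrival, although it occurs at a later time outside the base window, is identically distributed in type to each base vertex. The only remaining technical care is the passage from the conditional bound $\tfrac{1}{n+1}$ to the unconditional rate $\tfrac{1}{T}$ via $\E[\tfrac{1}{n+1}]\le\tfrac{1}{\Lambda T}$ for the Poisson count, after which the differential inequality and its initial condition are routine.
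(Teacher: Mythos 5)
Your proposal is correct and follows essentially the same route as the paper's proof: the same decomposition into (bin empty at time $\beta T$, bounded below by $e^{-\gamma(1+\Delta)\frac{\beta-\alpha}{1-\alpha}D}$) times (bin survives $[\beta T,t)$), the same exchangeability argument giving a per-arrival matching probability of at most $\frac{1}{k'+1}$, and the same Poisson identity $\E[\frac{1}{k'+1}]\le\frac{1}{\E[k']}$ applied to the fixed length-$T$ window. The only packaging difference is that you integrate a continuous-time rate bound, whereas the paper writes the discrete product $(1-\frac{1}{k'+1})^{num}$, invokes the large-$T$ approximation $(1-\frac{1}{k'+1})^{num}\approx e^{-num/(k'+1)}$, and then applies Jensen's inequality to $e^{-x}$ before taking the expectation over $k'$ and $num$.
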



 \wh{We define the lower bound of the probability that $u$ is not packed any item during heavy maximum matching phase as below.
 \begin{definition}\label{q.hmm}
     The probability of the event $E$ that bin $u$ is not packed any item during heavy maximum matching phase is at least 
     \begin{equation*}
         \pr{\eta}{\beta} =         
        \begin{cases}
            \frac{h+\beta}{h+\eta},&\eta \le 1-h\\
            (h+\beta)e^{1-h-\eta},& \eta>1-h\ge \beta \\
            e^{-(\eta-\beta)}, & \beta>1-h.
        \end{cases}
     \end{equation*} 
 \end{definition}
 }
 \begin{restatable}{lemma}{lemmmphaseheavy}
    \label{lem.mmphase.heavy}
    The expected reward during the heavy maximum matching phase is at least $\q\er{\beta}{\eta}\cdot\opt^H$, where $\er{\beta}{\eta}$ is defined as:
    \begin{equation*}
        \er{\beta}{\eta}=\begin{cases}
            \frac{1}{D}(h+\beta)\ln\frac{h+\eta}{h+\beta},&\eta \le 1-h\\
            \frac{1}{D}(h+\beta)(\ln\frac{1}{h+\beta}+1-e^{1-h-\eta}),& \eta>1-h\ge \beta \\
            \frac{1}{D}(1-e^{-(\eta-\beta)}),&\beta>1-h.
        \end{cases}
    \end{equation*}
 \end{restatable}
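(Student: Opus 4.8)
The plan is to mirror the argument used for Lemma~\ref{lem.mmphase} in Section~\ref{sect.onlinematching}, but with the upper limit of integration now being $\eta T$ rather than $T$, since the heavy maximum matching phase occupies only the interval $[\beta T, \eta T)$. First I would write the expected reward collected in this phase as an integral over the arrivals. Since the aggregate arrival process is Poisson with rate $\sum_{v\in V}\lambda_v$, the expected number of arrivals in $[t,t+dt)$ is $\left(\sum_{v\in V}\lambda_v\right)dt$; each such arrival of type $v$ is matched along the edge $\ell$ of $M'$ and succeeds exactly when its bin is available. Hence the expected reward is
\begin{equation*}
\int_{\beta T}^{\eta T}\Bigl(\sum_{v\in V}\lambda_v\Bigr)\,\E[w_\ell]\cdot P(E\text{ at }t)\,dt .
\end{equation*}
Applying Lemma~\ref{lem.heavy.max.wlb}, namely $\E[w_\ell]\ge \frac{\opt^H}{D\cdot T\sum_{v\in V}\lambda_v}$, the factor $\sum_{v\in V}\lambda_v$ cancels and the expression is bounded below by $\frac{\opt^H}{DT}\int_{\beta T}^{\eta T}P(E\text{ at }t)\,dt$.

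Next I would substitute the availability lower bounds for $P(E\text{ at }t)$ supplied by Lemmas~\ref{lem.unmatched1.heavy}, \ref{lem.unmatched2.heavy}, and \ref{lem.unmatched3.heavy}, and split into the three regimes of the statement. For the case $\eta\le 1-h$ the whole interval lies below $(1-h)T$, so Lemma~\ref{lem.unmatched1.heavy} gives $P(E)\ge \q\,\frac{(h+\beta)T}{hT+t}$, and the integral $\int_{\beta T}^{\eta T}\frac{dt}{hT+t}=\ln\frac{h+\eta}{h+\beta}$ yields the factor $\frac{1}{D}(h+\beta)\ln\frac{h+\eta}{h+\beta}$. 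For the case $\eta>1-h\ge\beta$ I would break $[\beta T,\eta T)$ at $(1-h)T$: on $[\beta T,(1-h)T)$ use Lemma~\ref{lem.unmatched1.heavy} to get $\frac{1}{D}(h+\beta)\ln\frac{1}{h+\beta}$, and on $[(1-h)T,\eta T)$ use the exponential bound of Lemma~\ref{lem.unmatched2.heavy}, for which the substitution $s=\frac{t-(1-h)T}{T}$ turns the integral into $\int_0^{\eta-1+h}e^{-s}\,ds=1-e^{1-h-\eta}$, giving the term $\frac{1}{D}(h+\beta)(1-e^{1-h-\eta})$. For the case $\beta>1-h$ the entire interval uses Lemma~\ref{lem.unmatched3.heavy}, and $s=\frac{t-\beta T}{T}$ produces $\int_0^{\eta-\beta}e^{-s}\,ds=1-e^{-(\eta-\beta)}$, i.e.\ $\frac{1}{D}(1-e^{-(\eta-\beta)})$. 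In each case the common prefactor $\q$ is retained from the availability bounds, and the result reassembles exactly into $\q\,\er{\beta}{\eta}\,\opt^H$.

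I expect the routine integration to be harmless; the main conceptual care lies in two places. The first is justifying the integral representation itself — that the expected reward over the phase is genuinely the time-integral of (arrival rate)$\times$(expected edge weight)$\times$(availability probability), which relies on the edge weight bound from Lemma~\ref{lem.heavy.max.wlb} being uniform in $t$ and on the availability event being handled correctly as a lower bound inside the integral. The second, and the part I would treat most carefully, is aligning the interval boundaries with the correct availability lemma in the middle case $\eta>1-h\ge\beta$, where the switch at $t=(1-h)T$ between the two bounds must be matched so that the logarithmic and exponential contributions combine into $\ln\frac{1}{h+\beta}+1-e^{1-h-\eta}$; an off-by-one in which lemma applies on which subinterval would corrupt the constant. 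Since Lemmas~\ref{lem.unmatched1.heavy}--\ref{lem.unmatched3.heavy} and \ref{lem.heavy.max.wlb} are already established, no further structural input is needed beyond this bookkeeping.
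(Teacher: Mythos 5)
Your proposal matches the paper's own proof essentially verbatim: the paper likewise writes the phase reward as $\int \sum_{v}\lambda_v\cdot\frac{\opt^H}{D\cdot T\sum_{v}\lambda_v}\cdot \Pr[E]\,dt$ via Lemma~\ref{lem.heavy.max.wlb} and then integrates the availability bounds from Lemmas~\ref{lem.unmatched1.heavy}, \ref{lem.unmatched2.heavy}, and \ref{lem.unmatched3.heavy} over the same three regimes, and your three integrals evaluate correctly to the stated expressions. If anything, you are slightly more careful than the paper's write-up, which states the upper limit of integration as $T$ where it should be $\eta T$ for this phase.
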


 The proof of the above lemmas can follow the same ideas as in the proof for the maximum matching phase in Section~\ref{sect.onlinematching}.

\subsection{Light LP Phase}
\wh{
At time $t=\eta\cdot T$, we estimate the arrival rate again. Following the similar analysis of LP phase in Section~\ref{sect.onlinematching} (same $\delta$ and $N$), we update the estimation $\hat{\bm \lambda}$ that satisfies $(1-\Delta')\cdot\hat{\lambda}_v\le \lambda_v \le (1+\Delta')\cdot\hat{\lambda}_v$ where $\Delta' = \sqrt{\frac{8ln\frac{1}{\delta}}{(h+\eta)N}}$.
We next can lower bound the expected value of $LP_0^L(\hat{\bm \lambda},T',\bar{C})$ used in this phase by some fraction of $\opt^L$, through constructing the corresponding feasible solution for $LP_0^L(\hat{\bm \lambda},T',\bar{C})$ by the optimal solution for $LP_0^L(\bm \lambda,T,C)$, whose value is exactly $\opt^L$.
\begin{restatable}{lemma}{lemlightLB}
    \label{lem.light.LB}
       The expectation of the optimal value of $LP_0^L(\hat{\bm \lambda},T',\bar{C})$ is lower bounded by $\q\pr{\eta}{\beta}\frac{1-\eta}{1+\Delta'}\opt^L$.
\end{restatable}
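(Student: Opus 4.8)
The plan is to lower bound the optimal value of $LP_0^L(\hat{\bm\lambda},T',\bar C)$ by exhibiting one explicit feasible solution whose \emph{expected} objective value already meets the claimed bound; since the optimum dominates any feasible point, this suffices. Let $\{y^*_{uv}\}$ be an optimal solution of $LP_0^L(\bm\lambda,T,C)$, so that $\sum_{u,v}w_{uv}y^*_{uv}\ge\opt^L$ by Lemma~\ref{lem.ub.light}. Working under the conditioning event $(1-\Delta')\hat\lambda_v\le\lambda_v\le(1+\Delta')\hat\lambda_v$ (so $\Delta'$, $\eta$, and $y^*$ are all deterministic), I would define the candidate solution
\begin{equation*}
\tilde y_{uv}=\frac{1-\eta}{1+\Delta'}\,y^*_{uv}\,\mathbf{1}\!\left[\,u\text{ is empty at time }\eta T\,\right],
\end{equation*}
i.e. I scale $y^*$ by $\tfrac{1-\eta}{1+\Delta'}$ and zero it out on every bin that has already received an item before the light LP phase begins.

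The next step is to verify that $\tilde y$ is feasible for $LP_0^L(\hat{\bm\lambda},T',\bar C)$ in \emph{every} realization. For the arrival constraint~\eqref{lplighta}, I would bound $\sum_u\tilde y_{uv}\le\frac{1-\eta}{1+\Delta'}\sum_u y^*_{uv}\le\frac{1-\eta}{1+\Delta'}\lambda_v T\le(1-\eta)\hat\lambda_v T=\hat\lambda_v T'$, using the estimation bound $\lambda_v\le(1+\Delta')\hat\lambda_v$ together with $T'=(1-\eta)T$. The capacity constraint~\eqref{lplightb} is where the indicator does the real work: if bin $u$ is empty at time $\eta T$ then its unused capacity satisfies $\bar C_u^d=C_u^d$, and $\sum_v r_{uv}^d\tilde y_{uv}=\frac{1-\eta}{1+\Delta'}\sum_v r_{uv}^d y^*_{uv}\le\frac{1-\eta}{1+\Delta'}C_u^d\le C_u^d=\bar C_u^d$; if $u$ is already occupied, the indicator makes $\tilde y_{uv}=0$ and the constraint holds trivially regardless of the (random) value of $\bar C_u^d$.

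Finally I would take expectations over the randomness of the heavy phases. Because the weights and $y^*$ are deterministic, $\E\big[\sum_{u,v}w_{uv}\tilde y_{uv}\big]=\frac{1-\eta}{1+\Delta'}\sum_{u,v}w_{uv}y^*_{uv}\,\Pr(u\text{ empty at }\eta T)$, and the probability that $u$ receives no item before time $\eta T$ is at least $\q\,\pr{\eta}{\beta}$ by combining Definition~\ref{q.hlp} (survival through the heavy LP phase) with Definition~\ref{q.hmm} (survival through the heavy maximum matching phase). Since $w_{uv}y^*_{uv}\ge0$ and $\sum_{u,v}w_{uv}y^*_{uv}\ge\opt^L$, this yields $\E[\,\mathrm{opt}\,LP_0^L(\hat{\bm\lambda},T',\bar C)]\ge\q\,\pr{\eta}{\beta}\frac{1-\eta}{1+\Delta'}\opt^L$, as claimed. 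I expect the only genuine obstacle to be the coupling between feasibility and the random unused capacity $\bar C$, and the emptiness-indicator construction is precisely the device that decouples them: it converts the awkward per-realization capacity requirement into a clean multiplicative probability factor $\q\,\pr{\eta}{\beta}$, while the arrival and scaling factors contribute $\frac{1-\eta}{1+\Delta'}$ exactly as in the corresponding LP-phase lemmas of Section~\ref{sect.onlinematching}.
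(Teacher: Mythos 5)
Your proposal is correct and follows essentially the same route as the paper's own proof: both construct the candidate solution by scaling the optimal solution of $LP_0^L(\bm\lambda,T,C)$ by $\frac{1-\eta}{1+\Delta'}$ and zeroing it out on bins already occupied at time $\eta T$, verify feasibility per realization via the estimation bound and the fact that empty bins have $\bar C_u^d=C_u^d$, and then take expectations using the lower bound $\q\,\pr{\eta}{\beta}$ on the emptiness probability. No substantive difference.
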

    Then, for an offline bin $u$, we call it available if the consumption of this bin in each dimension does not exceed a half of the corresponding capacity. That is, if one bin is available at time $t$, this bin can accept any one item with light edge. By applying union bound and Markov's inequality, we can obtain the following lemma.
    \begin{restatable}{lemma}{lemunmatchedproblightLP}
        \label{lem.unmatchedprob.lightLP}
        Conditioning on an offline bin $u$ is not packed any item before time $\eta \cdot T$, for $t\in[\eta\cdot T,\theta\cdot T)$, the probability of the event $E$ that $u$ is available after $\eta \cdot T$ before time $t$ is weakly larger than $1-2D\gamma'(1+\Delta')\frac{t-\eta T}{(1-\eta) T}$.
    \end{restatable}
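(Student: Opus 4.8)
The plan is to bound, for each dimension $d\in[D]$, the probability that bin $u$'s consumption in dimension $d$ during $[\eta T,t)$ exceeds $\frac12 C_u^d$ by Markov's inequality, and then take a union bound over the $D$ dimensions. Throughout I would condition on $u$ not being packed any item before $\eta T$, so that the recorded unused capacity satisfies $\bar C_u^d=C_u^d$ for every $d$; this is precisely what lets the light LP feasibility constraint be applied with the full capacity later.

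Fix a dimension $d$ and let $X_u^d$ denote the total capacity of dimension $d$ consumed by $u$ during the light LP phase up to time $t$. Since a match occurs only when $u$ is available, $X_u^d$ is upper bounded by the consumption that would be generated if every \emph{sampled} item were matched, so it suffices to bound the latter. By independence of the type-specific Poisson processes and linearity of expectation, the expected number of type-$v$ arrivals in $[\eta T,t)$ is $\lambda_v(t-\eta T)$, and each such arrival samples $u$ independently with probability $p_{uv}=\gamma'\hat y_{uv}/(\hat\lambda_v T')$, consuming $r_{uv}^d$. Therefore
\begin{equation*}
\E[X_u^d]\le \sum_{v:(u,v)\in\edge^L}\lambda_v(t-\eta T)\,\gamma'\frac{\hat y_{uv}}{\hat\lambda_v T'}\,r_{uv}^d
=\gamma'\frac{t-\eta T}{T'}\sum_{v:(u,v)\in\edge^L}\frac{\lambda_v}{\hat\lambda_v}\,\hat y_{uv}\, r_{uv}^d.
\end{equation*}

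Next I would invoke the estimation bound $\lambda_v\le(1+\Delta')\hat\lambda_v$ to replace each ratio $\lambda_v/\hat\lambda_v$ by $1+\Delta'$, and then use that $\hat{\bm y}$ is feasible for $LP_0^L(\hat{\bm\lambda},T',\bar C)$, whose constraint~\eqref{lplightb} together with $\bar C_u^d=C_u^d$ gives $\sum_{v}\hat y_{uv}r_{uv}^d\le C_u^d$. Recalling $T'=(1-\eta)T$, this yields $\E[X_u^d]\le \gamma'(1+\Delta')\frac{t-\eta T}{(1-\eta)T}C_u^d$. Markov's inequality then gives $\Pr[X_u^d>\tfrac12 C_u^d]\le 2\gamma'(1+\Delta')\frac{t-\eta T}{(1-\eta)T}$, and a union bound over $d\in[D]$ shows that the probability that $u$ becomes unavailable (some dimension exceeding half its capacity) is at most $2D\gamma'(1+\Delta')\frac{t-\eta T}{(1-\eta)T}$; taking complements yields the claim.

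The computations are routine; the main point requiring care is the decoupling in the first inequality, namely the observation that ignoring the availability gating only increases consumption, so that counting all sampled (rather than only matched) items gives a legitimate upper bound on $X_u^d$ and decouples the bound from the correlated matching dynamics. One should also confirm that the conditioning on $u$ being empty at $\eta T$ is exactly what justifies the substitution $\bar C_u^d=C_u^d$, allowing the LP feasibility constraint to be invoked with the full capacity.
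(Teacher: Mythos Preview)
Your proposal is correct and follows essentially the same approach as the paper: bound the expected consumption in each dimension using the sampling probability, the estimation bound $\lambda_v\le(1+\Delta')\hat\lambda_v$, and the LP capacity constraint, then apply Markov's inequality and a union bound over the $D$ dimensions. In fact your write-up is more careful than the paper's on two points the paper glosses over---the decoupling step (sampled items upper-bound matched items) and the role of the conditioning in ensuring $\bar C_u^d=C_u^d$ so that the LP constraint~\eqref{lplightb} yields $\sum_v \hat y_{uv}r_{uv}^d\le C_u^d$.
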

    Now we give the lower bound of expected reward during this phase, by adopting Lemmas~\ref{lem.light.LB}and~\ref{lem.unmatchedprob.lightLP} and integrating the expected reward for each possible time $t$ during this phase.
    \begin{restatable}{lemma}{lemlpphaselight}
        \label{lem.lpphase.light}
        The expected reward during the light LP phase is at least $\q\pr{\eta}{\beta}\er{\eta}{\theta}\cdot\opt^L$, where $\er{\eta}{\theta}$ is defined as below:
        \begin{equation*}
            \er{\eta}{\theta}=(1-2\Delta')\gamma'(\theta-\eta)(1-D\gamma'(1+\Delta')\frac{\theta-\eta}{1-\eta}).
        \end{equation*}
    \end{restatable}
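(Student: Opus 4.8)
The plan is to write the expected reward of the light LP phase as a time integral over $[\eta T,\theta T)$ of an instantaneous reward rate and then lower bound the three ingredients of that rate separately. First I would condition on the realization at time $\eta T$, which fixes the unused capacities $\bar C$, the solution $\hat{\bm y}$ of $LP_0^L(\hat{\bm\lambda},T',\bar C)$ with $T'=(1-\eta)T$, and the set of bins that are available (equivalently, that received no item in the heavy phases). Conditioned on this state the light-phase arrivals are fresh Poisson streams, so the instantaneous expected reward rate at time $t$ is $\sum_{v}\lambda_v\sum_{u:(u,v)\in\edge^L}w_{uv}\,\gamma'\frac{\hat y_{uv}}{\hat\lambda_v T'}\,\Pr(u\text{ available at }t)$, since a type-$v$ item arrives at rate $\lambda_v$, is routed to $u$ with probability $\gamma'\frac{\hat y_{uv}}{\hat\lambda_v T'}$, and yields $w_{uv}$ only if $u$ is still available.

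For the routing mass I would use the updated estimation guarantee $\lambda_v\ge(1-\Delta')\hat\lambda_v$ to replace $\lambda_v\frac{\gamma'}{\hat\lambda_v T'}$ by $(1-\Delta')\frac{\gamma'}{T'}$, which eliminates the unknown $\lambda_v$ and leaves the coefficient $\sum_{u,v}w_{uv}\hat y_{uv}$, i.e.\ the optimal value $V$ of $LP_0^L(\hat{\bm\lambda},T',\bar C)$. For the availability factor I would invoke Lemma~\ref{lem.unmatchedprob.lightLP}, which (conditioned on $u$ being empty at $\eta T$) lower bounds $\Pr(u\text{ available at }t)$ by $1-2D\gamma'(1+\Delta')\frac{t-\eta T}{(1-\eta)T}$ uniformly in $\hat{\bm y}$. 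Substituting both bounds and integrating $\tau=t-\eta T$ from $0$ to $(\theta-\eta)T$ yields the factor $(\theta-\eta)T\bigl(1-D\gamma'(1+\Delta')\frac{\theta-\eta}{1-\eta}\bigr)$; collecting constants with $T'=(1-\eta)T$ turns the prefactor into $(1-\Delta')\gamma'(\theta-\eta)\frac{1}{1-\eta}$ times the conditional value $V$. Taking expectation over the $\eta T$-state and applying Lemma~\ref{lem.light.LB}, $\E[V]\ge\q\pr{\eta}{\beta}\frac{1-\eta}{1+\Delta'}\opt^L$, the factors $\frac{1}{1-\eta}$ and $(1-\eta)$ cancel, and using $\frac{1-\Delta'}{1+\Delta'}\ge 1-2\Delta'$ the bound collapses exactly to $\q\pr{\eta}{\beta}(1-2\Delta')\gamma'(\theta-\eta)\bigl(1-D\gamma'(1+\Delta')\frac{\theta-\eta}{1-\eta}\bigr)\opt^L=\q\pr{\eta}{\beta}\er{\eta}{\theta}\opt^L$, as claimed.

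The step that needs the most care is the interplay between $\hat{\bm y}$ and the availability event, since both are measurable with respect to the pre-$\eta T$ randomness and are therefore correlated, so one cannot naively multiply $\E[V]$ by a conditional availability probability. Conditioning on the full state at $\eta T$ is exactly what legitimizes the factorization: given the state, $\hat{\bm y}$ is deterministic and Lemma~\ref{lem.unmatchedprob.lightLP} is a statement about the independent light-phase arrivals only. A second delicate point is that reward accrues only on bins that are available at $\eta T$, whereas $V$ a priori also counts mass the LP may place on occupied bins; this is reconciled by recalling that the lower bound of Lemma~\ref{lem.light.LB} is witnessed by a feasible solution supported solely on available bins (obtained by scaling the optimum of $LP_0^L(\bm\lambda,T,C)$ by $\frac{1-\eta}{1+\Delta'}$ and zeroing out occupied bins), so the available-bin LP value already dominates $\q\pr{\eta}{\beta}\frac{1-\eta}{1+\Delta'}\opt^L$ and restricting attention to available bins costs nothing in the final bound. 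Everything else is the same Poisson-rate bookkeeping and elementary integration already carried out for the LP phase of Algorithm~\ref{alg1}.
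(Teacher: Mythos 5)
Your proposal is correct and follows essentially the same route as the paper's proof: you write the light-LP-phase reward as an integral of the instantaneous rate $\sum_v\lambda_v\sum_u w_{uv}\gamma'\frac{\hat y_{uv}}{\hat\lambda_v T'}\Pr(u\text{ available})$, bound the three factors via $\lambda_v\ge(1-\Delta')\hat\lambda_v$, Lemma~\ref{lem.unmatchedprob.lightLP}, and Lemma~\ref{lem.light.LB}, integrate over $[\eta T,\theta T)$, and finish with $\frac{1-\Delta'}{1+\Delta'}\ge 1-2\Delta'$, exactly as the paper does. Your explicit discussion of the two delicate points (conditioning on the $\eta T$-state to decouple $\hat{\bm y}$ from the availability event, and restricting to the bins on which the witness solution of Lemma~\ref{lem.light.LB} is supported) only makes explicit what the paper handles implicitly.
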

}

\subsection{Light Maximum Packing Phase}
We will analyze the performance during the light maximum packing phase in this subsection.
According to the analysis of previous phase, the probability of the event E that a bin $u$ has not been packed any items before time $\eta T$ is at least $\q\pr{\eta}{\beta}$.


We assume one online vertex $i$ with type $v\in V$ arrives in the system at time $t\in[\theta T,T)$, and \wh{$\ell=(u, i)$} is the corresponding edge decided by Step~\ref{alg2chooseu} in Algorithm~\ref{alg2}.
We first lower bound the expected weight of $\ell$.
The ideas behind the proofs can follow the proof of Lemma~\ref{lem.lbofweight}, where the function $f(x)$ here should represent the expected value of $LP_1^L$ instead of the corresponding expected value of the optimal matching given the total number of arrivals $x$.

\begin{restatable}{lemma}{lemexpboundlightmaxpacking}
    \label{lem.expbound.lightmaxpacking}
    $\E[w_{\ell}]\geq \frac{\opt^L}{T\sum_{v\in V}\lambda_v}$.
\end{restatable}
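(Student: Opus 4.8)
The plan is to transcribe the proof of Lemma~\ref{lem.lbofweight}, replacing the value of the optimal matching $M'$ by the optimal value of $LP_1^L(V')$ and replacing $\opt$ by $\opt^L$. Write $k=|V'|$ and set $f(x)=\E[\,\text{optimal value of }LP_1^L(V')\mid |V'|=x\,]$. The target then splits into a per-vertex identity $\E[w_\ell]=\E[f(k)/k]$ and the chain $\E[f(k)/k]\ge \E[f(k)]/\E[k]\ge \opt^L/(T\sum_{v\in V}\lambda_v)$.

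First I would establish $\E[w_\ell\mid V']=\tfrac{1}{|V'|}\,\big(\text{optimal value of }LP_1^L(V')\big)$. The objective of $LP_1^L(V')$ decomposes over vertices as $\sum_{v'\in V'}\sum_{u}w_{uv'}y_{uv'}$, and by constraint~\eqref{lp4a} the inner sum $\sum_u w_{uv'}y_{uv'}$ is the contribution of a single vertex $v'$. Since Step~\ref{alg2chooseu} samples $u$ with probability $y_{uv}$ for the current arrival $i$ of type $v$, that inner sum for $v$ is exactly $\E[w_\ell\mid V',i]$. By the exchangeability of a superposition of independent Poisson processes—conditioned on the number of arrivals in a window, their types are i.i.d.\ with law $\lambda_{v'}/\sum_{v''}\lambda_{v''}$ and independent of their arrival times—the current arrival $i$ is uniformly distributed among the vertices of $V'$ once we condition on the multiset of types. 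Because the optimal value of $LP_1^L(V')$ depends only on that multiset, averaging the per-vertex contribution over the uniform position of $i$ gives the identity, hence $\E[w_\ell]=\E[f(k)/k]$.

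Next I would argue that $f$ is non-decreasing and concave with $f(0)=0$: adding one more i.i.d.\ vertex can only increase the fractional packing value, while the marginal gain shrinks because of the capacity constraints~\eqref{lp4b}, so the expected value under i.i.d.\ sampling is concave in the sample size (a diminishing-returns/submodularity argument, identical in spirit to the one for the matching value in Lemma~\ref{lem.lbofweight}). Concavity with $f(0)=0$ makes $f(k)/k$ non-increasing, whence $\mathrm{Cov}(f(k)/k,\,k)\le 0$ and therefore $\E[f(k)/k]\ge \E[f(k)]/\E[k]$. It then remains to bound the two moments: $\E[k]=\E[|V'|]$ is the expected number of arrivals over a window of length at most $T$ (the window $[-hT,(1-h)T)$ has length exactly $T$, the single current arrival being lower order since $T$ is large), so $\E[k]\le T\sum_{v\in V}\lambda_v$; and $\E[f(k)]=\E[\text{value of }LP_1^L(V')]\ge \opt^L$, since $LP_1^L(V')$ upper bounds the light-edge offline optimum on $V'$ (it relaxes the integrality constraints of LP~\eqref{lp1}, as noted after~\eqref{lp4}) and, by stationarity of the Poisson arrivals over a length-$T$ window, its expectation dominates $\opt^L$. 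Combining yields $\E[w_\ell]\ge \opt^L/(T\sum_{v\in V}\lambda_v)$; for $t<(1-h)T$ the window is shorter than $T$, and concavity of $f$ makes the per-unit-length ratio only larger, so the bound is uniform in $t$.

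The main obstacle is the concavity of $f$ combined with the exchangeability step under the truncation $\min\{t,(1-h)T\}$: one must check in both regimes ($t<(1-h)T$ and $t\ge(1-h)T$) that the current arrival is genuinely exchangeable with the sampled set—so the per-vertex contribution really equals $LP_1^L(V')/|V'|$—and that the expected LP value over the (possibly shorter) window still dominates $\opt^L/T$ per unit length. Everything else is a direct transcription of the corresponding steps in the proof of Lemma~\ref{lem.lbofweight}.
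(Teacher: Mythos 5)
Your proposal follows essentially the same skeleton as the paper's proof: the exchangeability identity $\E[w_\ell]=\E[f(k)/k]$, the monotonicity of $f(x)/x$, the Chebyshev-type inequality $\E[f(k)/k]\ge \E[f(k)]/\E[k]$, and the final moment bounds. Two sub-steps, however, rest on claims you do not justify and that the paper avoids. First, you assert that $f$ is \emph{concave} via a ``diminishing-returns/submodularity argument''; concavity of the expected LP value in the number of i.i.d.\ samples is not obvious and would need a real proof. It is also unnecessary: all you use is that $f(x)/x$ is non-increasing, and the paper proves exactly that directly --- given an optimal solution $\{y_{uv}\}$ of $LP_1^L(V')$, dropping the variables of the last (exchangeable) arrival yields a feasible solution of $LP_1^L(V'\setminus\{v'\})$, so $f(k-1)\ge \frac{k-1}{k}f(k)$. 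You should replace the concavity claim with this one-line feasibility argument. Second, in the regime $t<(1-h)T$ you apply the covariance inequality to the \emph{short-window} count and then claim $\E[f(k_{\mathrm{short}})]/\E[k_{\mathrm{short}}]\ge \opt^L/(T\sum_v\lambda_v)$ because ``the per-unit-length ratio is only larger.'' That comparison of Poissonized expectations across window lengths is again not immediate (it can be salvaged by a thinning coupling plus the decreasing-ratio property, but you would have to write it out). The paper sidesteps it by augmenting $V'$ with the arrivals $V''$ over the remaining horizon $T-T'$ and using the pointwise inequality $f(k')/k'\ge f(k'+k'')/(k'+k'')$, so that the covariance step is only ever applied to the full-length-$T$ count, for which $\E[f(k)]=\opt^L$ and $\E[k]=T\sum_v\lambda_v$ hold exactly. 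With those two patches your argument coincides with the paper's.
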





Next we can bound the probability of the event that corresponding match is successful.

\begin{definition}\label{q.lmm}
    $\pr{\theta}{\eta}=\gamma'(1+\Delta')\frac{\theta-\eta}{1-\eta}$.
\end{definition}

\begin{restatable}{lemma}{lemproblightmaxpacking}
    \label{lem.prob.lightmaxpacking}
    When $\theta T\le t<T$, the probability of the event $E$ that $\ell$ can be chosen successfully conditioning on $u$ is not packed by any item before $\eta T$ is at least
    \begin{equation*}
        \begin{cases}
            1-2D(\ln\frac{hT+t}{hT+\theta T}+\pr{\theta}{\eta}), &\theta T\le  t< (1-h)T, \theta \le (1-h)\\
            1-2D(\ln\frac{1}{h+\theta}+\frac{t-(1-h)T}{T}+\pr{\theta}{\eta}),&(1-h)T\le t < T, \theta \le (1-h)\\
            1-2D(\frac{t-\theta T}{T}+\pr{\theta}{\eta}),&\theta T\le  t< T, \theta >1-h.
        \end{cases}
    \end{equation*}
\end{restatable}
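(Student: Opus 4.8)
The plan is to fix an offline bin $u$ and bound, for each dimension $d\in[D]$, the expected capacity consumed in $u$ between time $\eta T$ (where we condition on $u$ being empty) and the arrival time $t$; the failure of $E$ can then be controlled through Markov's inequality and a union bound over the $D$ coordinates. Since every light edge $(u,v)$ obeys $r_{uv}^d\le\tfrac12 C_u^d$, a bin whose consumption is at most $\tfrac12 C_u^d$ in every coordinate can still accept the arriving light item, so $E$ fails only if some coordinate's consumption exceeds $\tfrac12 C_u^d$. It therefore suffices to upper bound $\E[\text{consumption of }u\text{ in }d]$ for each $d$; replacing the actual consumption by the \emph{would-be} consumption that counts every sampled item regardless of availability only enlarges it, which is the safe direction.

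I would split this consumption into the part accrued in the light LP phase $[\eta T,\theta T)$ and the part accrued in the light maximum packing phase $[\theta T,t)$. For the first part, the union-bound/Markov calculation already behind Lemma~\ref{lem.unmatchedprob.lightLP} (evaluated at its right endpoint $\theta T$) shows the expected consumption of $u$ in each dimension during the light LP phase is at most $\pr{\theta}{\eta}C_u^d$; this is the source of the additive $\pr{\theta}{\eta}$ term common to all three cases.

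The crux is the light maximum packing phase. When a type-$v$ item $i$ arrives at time $\tau$ we solve $LP_1^L(V')$ and sample $u$ with probability $y_{ui}$, so its expected contribution to dimension $d$ is $\E[r_{ui}^d y_{ui}]$. I would combine the capacity constraint~\eqref{lp4b}, namely $\sum_{j\in V'}r_{uj}^d y_{uj}\le C_u^d$, with an exchangeability argument: conditioned on $|V'|=k$, the newly arrived vertex is exchangeable in type with the other members of $V'$ (the types are i.i.d.\ draws from $\lambda_v/\sum_{v'}\lambda_{v'}$ and $LP_1^L$ depends only on types), whence $\E[r_{ui}^d y_{ui}\mid|V'|=k]\le C_u^d/k$. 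Multiplying by the aggregate arrival rate $\sum_v\lambda_v$ and using $\E|V'|=\sum_v\lambda_v\big(hT+\min\{\tau,(1-h)T\}\big)$ yields a consumption rate bounded by $C_u^d/\big(hT+\min\{\tau,(1-h)T\}\big)$. Integrating over $\tau\in[\theta T,t)$ gives $C_u^d\ln\frac{hT+t}{hT+\theta T}$ when $\theta\le 1-h$ and $t<(1-h)T$; splitting the integral at $(1-h)T$, beyond which $V'$ is frozen and the rate becomes the constant $C_u^d/T$, gives $C_u^d\big(\ln\frac{1}{h+\theta}+\frac{t-(1-h)T}{T}\big)$ when $\theta\le1-h$ and $t\ge(1-h)T$; and integrating the constant rate alone gives $C_u^d\frac{t-\theta T}{T}$ when $\theta>1-h$.

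Adding the light LP phase term $\pr{\theta}{\eta}C_u^d$ to each of these, Markov's inequality gives $\Pr[\text{consumption}_d>\tfrac12 C_u^d]\le 2\,\E[\text{consumption}_d]/C_u^d$ (the source of the factor $2$), and a union bound over the $D$ dimensions produces the factor $2D$; subtracting from $1$ reproduces the three stated lower bounds. The main obstacle is making the consumption-rate estimate rigorous, because the exchangeability step produces $\E[1/|V'|]$ while the stated rate uses $1/\E|V'|$, and Jensen's inequality points the wrong way; the standing large-$T$ assumption, under which $|V'|$ concentrates around its Poisson mean (the same device used for Lemma~\ref{lem.lbofweight} and via Lemma~\ref{lem.number}), is what closes this gap.
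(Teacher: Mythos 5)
Your proposal is correct and follows essentially the same route as the paper's proof: both decompose the per-dimension expected consumption into the light LP phase contribution $\pr{\theta}{\eta}C_u^d$ plus a max-packing-phase contribution derived from the capacity constraint of $LP_1^L$ together with exchangeability of arrivals, and then apply Markov's inequality and a union bound over the $D$ dimensions to get the factor $2D$. The only difference is presentational: you integrate a continuous-time consumption rate where the paper sums the discrete bound $\sum_{i=w+1}^{k}C_u^d/i\le C_u^d\ln\frac{k}{w}$ and then takes expectations, resolving the $\E[1/|V'|]$ versus $1/\E[|V'|]$ issue with the same large-$T$/symmetry device you invoke.
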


To prove this lemma, we can do the case study and apply the union bound and Markov's inequality for each case to get the conclusion.

\noindent\textbf{Remark:} Because of the use of union bound, the above lower bound of the probability of the event $E$ may be negative, but this will not affect the following analysis, since this is still a feasible lower bound.

From some calculus, we can show the following lemma.
\begin{restatable}{lemma}{lemlphase}
    \label{lem.lphase}
    The expected reward during the light maximum packing phase is at least $\q\pr{\eta}{\beta}\er{\theta}{1}\cdot \opt^L$ where $\er{\theta}{1}$ is
    \begin{equation*}
        \er{\theta}{1}=\begin{cases}
            (1+2D)(1-h-\theta)+2D\ln(h+\theta)+h(1+2D\ln(h+\theta)-Dh)-2D(1-\theta)\pr{\theta}{\eta},&\theta \le 1-h\\
            (1-\theta)(1-D(1-\theta))-2D(1-\theta)\pr{\theta}{\eta},& \theta>1-h.
        \end{cases}
    \end{equation*}
\end{restatable}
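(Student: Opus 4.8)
The plan is to write the expected reward of the light maximum packing phase as a time integral of a per-arrival reward and then substitute the two preceding lemmas before integrating. First I would fix an arbitrary arrival $i$ of type $v$ at a time $t\in[\theta T,T)$ and consider the edge $\ell=(u,i)$ selected in Step~\ref{alg2chooseu}. The reward contributed by this arrival is $w_\ell$ if the target bin $u$ is available (has consumed less than half its capacity in every dimension) and $0$ otherwise. Decoupling the choice of $\ell$ from the availability of $u$, the expected reward of this arrival is at least $\E[w_\ell]\cdot P(u\text{ available at }t)$, where the availability probability factors as the probability $\q\pr{\eta}{\beta}$ that $u$ is not packed before $\eta T$ (established at the start of this subsection from Definitions~\ref{q.hlp} and~\ref{q.hmm}) times the conditional availability bound supplied by Lemma~\ref{lem.prob.lightmaxpacking}. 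Summing over all arrivals and using that the superposed Poisson arrival intensity of all types is $\sum_{v\in V}\lambda_v$, linearity of expectation gives
\begin{equation*}
\text{(phase reward)}\ \ge\ \int_{\theta T}^{T}\Big(\sum_{v\in V}\lambda_v\Big)\,\E[w_\ell]\,\q\pr{\eta}{\beta}\,P_{\mathrm{cond}}(t)\,dt,
\end{equation*}
where $P_{\mathrm{cond}}(t)$ denotes the case-dependent lower bound of Lemma~\ref{lem.prob.lightmaxpacking}.

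Next I would plug in $\E[w_\ell]\ge \frac{\opt^L}{T\sum_{v\in V}\lambda_v}$ from Lemma~\ref{lem.expbound.lightmaxpacking}; the factor $\sum_{v\in V}\lambda_v$ cancels, leaving $\q\pr{\eta}{\beta}\,\opt^L$ multiplied by $\frac{1}{T}\int_{\theta T}^{T}P_{\mathrm{cond}}(t)\,dt$. It therefore suffices to verify that $\er{\theta}{1}=\frac{1}{T}\int_{\theta T}^{T}P_{\mathrm{cond}}(t)\,dt$. After the substitution $s=t/T$ this reduces to $\int_{\theta}^{1}P_{\mathrm{cond}}(sT)\,ds$, an elementary integral of the piecewise integrand, where the arguments $\ln\frac{hT+t}{hT+\theta T}$ and $\frac{t-(1-h)T}{T}$ become $\ln\frac{h+s}{h+\theta}$ and $s-(1-h)$ respectively.

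Then I would evaluate the integral in the two regimes. When $\theta\le 1-h$, I split the range at $s=1-h$ and integrate the first two branches: on $[\theta,1-h]$ the integrand is $1-2D(\ln\frac{h+s}{h+\theta}+\pr{\theta}{\eta})$, and on $[1-h,1]$ it is $1-2D(\ln\frac{1}{h+\theta}+s-(1-h)+\pr{\theta}{\eta})$. The only nonroutine antiderivative is $\int\ln(h+s)\,ds=(h+s)\ln(h+s)-(h+s)$, whose endpoint values collapse cleanly because $h+(1-h)=1$ forces the log term at the upper limit to vanish; collecting the $\pr{\theta}{\eta}$ contributions from both pieces into $-2D(1-\theta)\pr{\theta}{\eta}$ and grouping the remaining terms reproduces $(1+2D)(1-h-\theta)+2D\ln(h+\theta)+h(1+2D\ln(h+\theta)-Dh)-2D(1-\theta)\pr{\theta}{\eta}$. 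When $\theta>1-h$, only the third branch $1-2D(s-\theta+\pr{\theta}{\eta})$ is active over $[\theta,1]$, and a single quadratic integration of $\int_{\theta}^{1}\big(s-\theta\big)\,ds=\tfrac12(1-\theta)^2$ yields $(1-\theta)(1-D(1-\theta))-2D(1-\theta)\pr{\theta}{\eta}$.

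The main obstacle is the first step: justifying the product form $\E[w_\ell]\cdot P(u\text{ available})$ for the per-arrival reward, i.e.\ that the selected edge weight and the availability of its target bin can be separated, together with keeping the conditioning structure straight (first conditioning on $u$ being empty at $\eta T$, which carries the factor $\q\pr{\eta}{\beta}$, and only then applying the conditional bound of Lemma~\ref{lem.prob.lightmaxpacking}). Once this factorization is in place, the cancellation of $\sum_{v\in V}\lambda_v$ and the case-split integration at $s=1-h$ are routine and mirror the maximum matching analysis of Section~\ref{sect.onlinematching}.
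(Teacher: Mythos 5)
Your proposal is correct and follows essentially the same route as the paper: the paper's proof likewise writes the phase reward as $\int_{\theta T}^{T}\sum_{v\in V}\lambda_v\cdot\frac{\opt^L}{T\sum_v\lambda_v}\cdot\q\pr{\eta}{\beta}\,\Pr[E]\,dt$ using Lemmas~\ref{lem.expbound.lightmaxpacking} and~\ref{lem.prob.lightmaxpacking}, then evaluates the piecewise integral "by calculus." Your write-up simply carries out the integration explicitly (correctly, in both regimes), which the paper omits.
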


\subsection{Parametric Competitive Ratio Analysis}
We summarize Lemmas~\ref{lem.heavy.light},~\ref{lem.lpphase.heavy},~\ref{lem.mmphase.heavy}, ~\ref{lem.lpphase.light} and ~\ref{lem.lphase} to get the following theorem.

\begin{theorem}
    \label{thm.omp}
    Denote $N=\min_{v\in V}\lambda_v\cdot T$. For $0<\delta<1$, by choosing phase parameters $\alpha$, $\beta$, $\eta$ and $\theta$, and scaling parameters $\gamma$ and $\gamma'$ satisfying $0\leq \alpha\leq \beta\le \eta\le \theta\le 1$ and $0\leq \gamma, \gamma'\leq 1$, with probability at least $1-2m\delta-me^{\frac{-(h+\alpha)N}{8}}-me^{\frac{-(h+\eta)N}{8}}$, 
    Algorithm~\ref{alg2} has a competitive ratio of at least 
    \begin{equation*}
        \max_{\alpha, \beta, \eta, \theta, \gamma, \gamma'}\min\{F^H, F^L\}
    \end{equation*}
    where
    \begin{equation*}
        F^H = \er{\alpha}{\beta}+\begin{cases}
            \q\cdot\frac{1}{D}(h+\beta)\ln\frac{h+\eta}{h+\beta},& \eta\le 1-h\\
            \q\cdot\frac{1}{D}(h+\beta)(\ln\frac{1}{h+\beta}+1-e^{1-h-\eta}),& \beta\le 1-h< \eta\\
            \q\cdot\frac{1}{D}(1-e^{-(\eta-\beta)}),&1-h<\beta,
        \end{cases}
    \end{equation*}
    \begin{equation*}
        F^L=\begin{cases}
            \q\cdot \frac{h+\beta}{h+\eta}\cdot \left(\er{\eta}{\theta}+ \fa \right),& \theta\le 1-h\\
            \q\cdot \frac{h+\beta}{h+\eta}\cdot \left(\er{\eta}{\theta}+ \fb \right),&\eta\le 1-h< \theta\\
            \q\cdot (h+\beta)e^{1-h-\eta}\cdot \left(\er{\eta}{\theta}+ \fb\right),&\beta\le 1-h< \eta\\
            \q\cdot e^{-(\eta-\beta)}\cdot \left(\er{\eta}{\theta}+ \fb \right),&1-h<\beta.
        \end{cases}
    \end{equation*}
    Here
    \begin{equation*}
        \begin{cases}
            \er{\alpha}{\beta} = \frac{1}{D}(1-3\Delta)(1-\alpha)(1-\q),\\
            \q=e^{-\gamma(1+\Delta)\frac{\beta-\alpha}{1-\alpha}D}, \Delta=\sqrt{\frac{8ln\frac{1}{\delta}}{(h+\alpha)N}},
        \end{cases}
    \end{equation*}
    \begin{equation*}
        \begin{cases}
            \er{\eta}{\theta}=(1-2\Delta')\gamma'(\theta-\eta)(1-D\gamma'(1+\Delta')\frac{\theta-\eta}{1-\eta}),\\
            \pr{\theta}{\eta}=\gamma'(1+\Delta')\frac{\theta-\eta}{1-\eta}, \Delta' = \sqrt{\frac{8ln\frac{1}{\delta}}{(h+\eta)N}},
        \end{cases}
    \end{equation*}
    \begin{equation*}
        \begin{cases}
            \fa = (1+2D)(1-h-\theta)+2D\ln(h+\theta)+h(1+2D\ln(h+\theta)-Dh)-2D(1-\theta)\pr{\theta}{\eta},\\
            \fb = (1-\theta)(1-D(1-\theta))-2D(1-\theta)\pr{\theta}{\eta}. 
        \end{cases}
    \end{equation*}
\end{theorem}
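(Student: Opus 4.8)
The plan is to treat this theorem as an aggregation of the per-phase reward bounds already proved, and then convert a bound stated separately against $\opt^H$ and $\opt^L$ into a bound against $\opt$ via Lemma~\ref{lem.heavy.light}. Concretely, I would first condition on the two ``good estimation'' events—one produced at time $\alpha T$ and one at time $\eta T$—under which every per-phase lemma in this section is valid, and postpone the probabilistic accounting of these events to the end. Conditioned on these events, the total expected reward of Algorithm~\ref{alg2} is just the sum of the rewards collected in its five active phases, since the sampling phase contributes nothing.

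Grouping the two heavy phases, Lemmas~\ref{lem.lpphase.heavy} and \ref{lem.mmphase.heavy} give a combined reward of at least $\left(\er{\alpha}{\beta}+\q\,\er{\beta}{\eta}\right)\opt^H = F^H\cdot\opt^H$, where the three cases of $F^H$ simply inherit the three cases of $\er{\beta}{\eta}$ in Lemma~\ref{lem.mmphase.heavy}. Grouping the two light phases, Lemmas~\ref{lem.lpphase.light} and \ref{lem.lphase} give a combined reward of at least $\q\,\pr{\eta}{\beta}\left(\er{\eta}{\theta}+\er{\theta}{1}\right)\opt^L = F^L\cdot\opt^L$; here the four cases of $F^L$ arise from composing the three cases of $\pr{\eta}{\beta}$ (Definition~\ref{q.hmm}) with the two cases of $\er{\theta}{1}$, encoded by $\fa$ for $\theta\le 1-h$ and $\fb$ for $\theta>1-h$. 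The common prefactor $\q\,\pr{\eta}{\beta}$ records that the light phases may only use bins left empty by the two heavy phases, whose survival probability at the start of the light LP phase is at least $\q\,\pr{\eta}{\beta}$.

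Having obtained total reward at least $F^H\opt^H+F^L\opt^L$, I would invoke $\opt^H+\opt^L\ge\opt$ from Lemma~\ref{lem.heavy.light} together with the elementary inequality $F^H\opt^H+F^L\opt^L\ge\min\{F^H,F^L\}(\opt^H+\opt^L)\ge\min\{F^H,F^L\}\,\opt$, which is legitimate because $\opt^H,\opt^L\ge 0$. This gives a competitive ratio of at least $\min\{F^H,F^L\}$ for every feasible parameter tuple, and taking the maximum over $\alpha,\beta,\eta,\theta,\gamma,\gamma'$ subject to $0\le\alpha\le\beta\le\eta\le\theta\le 1$ and $0\le\gamma,\gamma'\le 1$ yields the stated form. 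For the probability statement, I would apply the Section~\ref{sect.onlinematching} estimation guarantee twice—once at time $\alpha T$ over the horizon $(h+\alpha)T$, giving $\Delta$, and once at time $\eta T$ over $(h+\eta)T$, giving $\Delta'$—each time combining Lemma~\ref{lem.number} and Lemma~\ref{lem.rate} across all $m$ types; a union bound over the two rounds and the $m$ types then produces the success probability $1-2m\delta-me^{-(h+\alpha)N/8}-me^{-(h+\eta)N/8}$.

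The main obstacle is not any single calculation but the bookkeeping of the case boundaries: I must verify that, across the thresholds $\beta,\eta,\theta$ relative to $1-h$, the cases emitted by the heavy and light lemmas compose consistently into exactly the three cases of $F^H$ and the four cases of $F^L$, and that the admissible ordering $\alpha\le\beta\le\eta\le\theta\le 1$ never forces an inconsistent combination. The one genuine modeling point to check carefully is that discounting the light-phase rewards by $\q\,\pr{\eta}{\beta}$ correctly accounts for the capacity already consumed by the heavy phases, so that the light-edge reward is legitimately compared against $\opt^L$ rather than against a bin whose capacity has been partially depleted.
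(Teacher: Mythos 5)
Your proposal is correct and follows exactly the route the paper intends: the paper itself offers no further argument beyond ``summarizing'' Lemmas~\ref{lem.heavy.light}, \ref{lem.lpphase.heavy}, \ref{lem.mmphase.heavy}, \ref{lem.lpphase.light} and \ref{lem.lphase}, and your write-up is precisely that aggregation, with the case bookkeeping for $F^H$ and $F^L$ composing correctly under $\beta\le\eta\le\theta$ and the probability bound following from the two-round union bound. Your final step $F^H\opt^H+F^L\opt^L\ge\min\{F^H,F^L\}(\opt^H+\opt^L)\ge\min\{F^H,F^L\}\opt$ and the observation that the light-phase discount $\q\,\pr{\eta}{\beta}$ is already built into Lemma~\ref{lem.light.LB} are exactly the right glue.
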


If we restrict to the case without historical data, we can get the following result which improves the competitive ratio $\frac{e^{-0.25}}{4D+2}$ in~\citet{naori_online_2019}.
\begin{corollary}
    \label{cor.ompnosamples}
    When $h=0$ and $N$ is large, the competitive ratio of Algorithm~\ref{alg2} is at least $\frac{e^{-0.225}}{4D+2}$, where we can set $\alpha = C_0N^{-\frac{1}{3}}, \beta = 0.935\frac{2D}{2D+1}, \eta = \theta=\frac{2D}{2D+1}, \gamma = 0.084\frac{2D+1}{D^2}$. Here, $C_0$ is the constant such that $1-3\Delta=1-\alpha$.
\end{corollary}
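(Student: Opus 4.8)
The plan is to treat Corollary~\ref{cor.ompnosamples} as a direct specialization of Theorem~\ref{thm.omp}. Because the ratio there is $\max_{\alpha,\beta,\eta,\theta,\gamma,\gamma'}\min\{F^H,F^L\}$, every admissible parameter tuple yields a valid lower bound, so it suffices to substitute the prescribed $h=0$, $\alpha=C_0N^{-1/3}$, $\beta=0.935\frac{2D}{2D+1}$, $\eta=\theta=\frac{2D}{2D+1}$, $\gamma=0.084\frac{2D+1}{D^2}$ (with $\gamma'\in[0,1]$ arbitrary), evaluate $F^H$ and $F^L$ as $N\to\infty$, and show each is at least $\frac{e^{-0.225}}{4D+2}$.

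First I would control the error terms. The constant $C_0$ is chosen precisely so that $1-3\Delta=1-\alpha$, i.e.\ $3\Delta=\alpha$; since $\alpha=C_0N^{-1/3}\to0$, both $\alpha\to0$ and $\Delta\to0$, and likewise $\Delta'=\sqrt{8\ln(1/\delta)/(\eta N)}\to0$ because $\eta$ is a fixed constant in $(0,1)$. Next, $\eta=\theta$ collapses the light LP phase: the factor $\theta-\eta=0$ forces $\er{\eta}{\theta}=0$ and $\pr{\theta}{\eta}=0$, which is exactly why the corollary need not specify $\gamma'$. The decisive simplification is the survival factor $\q=e^{-\gamma(1+\Delta)\frac{\beta-\alpha}{1-\alpha}D}\to e^{-\gamma\beta D}$; the prescribed scaling makes $\gamma\beta D=2\cdot0.084\cdot0.935=0.15708$ \emph{independent} of $D$, so $\q\to e^{-0.157}$.

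With these reductions $F^H$ (its $\eta\le1-h$ branch) and $F^L$ (its $\theta\le1-h$ branch) become explicit one-variable functions of $D$. Using $\eta/\beta=1/0.935$ I would simplify $F^H$ to $\frac1D(1-\q)+\q\cdot\frac{2\cdot0.935}{2D+1}\ln\frac1{0.935}$, and $F^L$ to $\q\cdot\frac{\beta}{\eta}\cdot\fa=e^{-0.157}\cdot0.935\cdot\bigl(1+2D\ln\frac{2D}{2D+1}\bigr)$, where $\fa=1+2D\ln\frac{2D}{2D+1}$ after $h=0$ and $\pr{\theta}{\eta}=0$. The exponent in the claim then becomes transparent: $\q\cdot0.935=e^{-0.157+\ln0.935}=e^{-0.224}\ge e^{-0.225}$, so $F^L\ge e^{-0.225}\bigl(1+2D\ln\frac{2D}{2D+1}\bigr)$, and the whole statement reduces to two scalar inequalities in $D$, namely $1+2D\ln\frac{2D}{2D+1}\ge\frac1{4D+2}$ on the light side and the matching comparison of $F^H$ with $\frac{e^{-0.225}}{2(2D+1)}$ on the heavy side.

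The main obstacle is the uniform-in-$D$ verification of these two inequalities. The light-side bound follows from $\ln\frac{2D}{2D+1}=-\ln(1+\frac1{2D})$ and the standard estimate $\ln(1+x)\le x-\frac{x^2}{2}+\frac{x^3}{3}$, giving $1+2D\ln\frac{2D}{2D+1}\ge\frac1{4D}-\frac1{24D^2}$, which I would compare term-by-term against $\frac1{4D+2}$. The heavy side is the delicate part: $F^H$ mixes a $\frac1D$ term with a $\frac1{2D+1}$ term against the target $\frac1{2(2D+1)}$, and the margin is tightest as $D\to\infty$, where $D\cdot F^H\to(1-\q)+\q\cdot0.935\ln\frac1{0.935}$ must be weighed against the target limit $\frac{e^{-0.225}}4$. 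This asymptotic balance is where the specific constants $0.935$ and $0.084$ are doing their work, and I would finish by clearing denominators to obtain a single quadratic-in-$D$ inequality and checking it together with the few small-$D$ values separately; that reduction is the only genuinely computational step.
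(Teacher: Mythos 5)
Your proposal takes essentially the same route as the paper: specialize Theorem~\ref{thm.omp} at $h=0$, send $N\to\infty$ so that $\alpha$, $\Delta$, $\Delta'$ vanish, note that $\eta=\theta$ collapses the light LP phase, and reduce the claim to the two scalar inequalities $1+2D\ln\frac{2D}{2D+1}\ge\frac{1}{4D+2}$ (the light side) and the heavy-side comparison of $F^H$ with $\frac{e^{-0.225}}{2(2D+1)}$. The one thing worth borrowing from the paper is its treatment of the heavy side: by setting $e^{-b}=e^{-\gamma\beta D}\,\beta/\eta$ and $z=\gamma\beta D$ it reduces the requirement to the $D$-free condition $e^{b}(1-e^{-z})+b-z\ge\frac14$ (met with essentially zero slack at $b=0.225$, $z=0.158$, which is where the constants $0.935$ and $0.084$ come from), and you should adopt this because your planned direct quadratic-in-$D$ verification is razor-thin in the limit $D\to\infty$ and only closes with the unrounded values of $\beta$ and $\gamma$.
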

\begin{proof}
    Similar to the Theorem~3 in~\cite{naori_online_2019}, we set $\frac{\q\beta}{\eta} = e^{-b}$ and $\eta =\theta= \frac{2D}{2D+1}$, then we have $F^L\ge e^{-b}\frac{1}{4D+2}$. We assume $\alpha = C_0N^{-\frac{1}{3}}$ and $\beta\gg \alpha$, then we can update $F^H\approx \frac{1}{D}(1-\q)+\frac{1}{D}\eta e^{-b}(\ln \q+b)$. Also, if we let $z=\gamma\beta D$, then we have $\q\approx e^{-\gamma\beta D}=e^{-z}$. Then $F^H=\frac{1}{D}(1-e^{-z})+\frac{2}{2D+1}e^{-b}(-z+b)$.
    Then we try to find $\beta, z, b$ that minimize $b$ when $F^H\ge e^{-b}\frac{1}{4D+2}$ holds. 
    To get this, it suffices to show $e^b(1-e^{-z})+b-z\ge \frac{1}{4}$. We set $b=0.225$ and $z = 0.158$, then $\beta = \frac{e^{-b}}{\q}\eta=0.935\eta=0.935\frac{2D}{2D+1}$ and $\gamma = 0.084\frac{2D+1}{D^2}$. 
\end{proof}

\wh{However, for the general cases, the competitive ratio in Theorem~\ref{thm.omp} is too complicated to find out the optimal or near optimal choice of parameters. Then we try to fix some parameters to tune others and give some feasible lower bounds. We denote the heavy LP phase and light LP phase as LP phases and denote the heavy max matching phase and light max packing phase as max phases. We first consider the case that there are no max phases, then consider the case that there are no LP phases to get some intuitions for the choices of the parameters.}
\subsubsection{No Max Phases}
\label{sect.nomax}
We first consider the special case that there are no max phases, i.e., $\beta=\eta$ and $\theta=1$. In the following proposition, we provide a parameter choice with a feasible lower bound of competitive ratio, the proof is in the appendix. 
\begin{restatable}{proposition}{propnomax}
    \label{cor.nomaxphase.GAP}
    Algorithm~\ref{alg2} can achieve a competitive ratio with at least 
    \begin{equation*}
        \frac{1}{D}\cdot \frac{1-\eta_1}{5-\eta_1}\cdot (1-C_3N^{-\frac{1}{2}}(h+C_0N^{-\frac{1}{3}})^{-\frac{1}{2}})
    \end{equation*}
    where $C_0, C_3$ are constants and $\eta_1$ is the solution of $e^{D\eta}=\frac{5-\eta}{4}$. The parameters are $\alpha = C_0N^{-\frac{1}{3}}, \beta=\eta=\eta_1$, $\theta=1$, $\gamma =1$ and $\gamma'=\frac{1}{2D}$. $C_0$ is a constant such that $\alpha$ satisfies $3\sqrt{\frac{8ln\frac{1}{\delta}}{\alpha N}}=\alpha$, and $C_3$ is a specific constant defined in the proof.
\end{restatable}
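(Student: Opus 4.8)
The plan is to specialize the parametric ratio of Theorem~\ref{thm.omp} to the no-max-phase regime $\beta=\eta$, $\theta=1$, and then to pick $\eta$ so as to balance the two competing quantities $F^H$ and $F^L$; the stated value $\frac{1}{D}\frac{1-\eta_1}{5-\eta_1}$ will emerge as the common (balanced) leading term and the $(1-C_3N^{-1/2}(h+C_0N^{-1/3})^{-1/2})$ factor as the collected estimation error.

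First I would substitute $\beta=\eta$ and $\theta=1$ into the expressions of Theorem~\ref{thm.omp} and watch the max-phase contributions degenerate. In $F^H$ the extra heavy-matching term is $\q\cdot\frac{1}{D}(h+\beta)\ln\frac{h+\eta}{h+\beta}=0$ when $\eta\le 1-h$ and $\q\cdot\frac{1}{D}(1-e^{-(\eta-\beta)})=0$ when $1-h<\beta=\eta$, so in every case $F^H=\er{\alpha}{\beta}=\frac{1}{D}(1-3\Delta)(1-\alpha)(1-\q)$. In $F^L$ the heavy-maximum-matching survival factor collapses to $\pr{\eta}{\beta}=1$ (both $\frac{h+\beta}{h+\eta}$ and $e^{-(\eta-\beta)}$ equal $1$ when $\beta=\eta$), while the light-maximum-packing reward $\fa$ or $\fb$ vanishes because every summand carries a factor $1-\theta=0$. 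Hence $F^L=\q\cdot\er{\eta}{1}$. Putting $\theta=1$ in $\er{\eta}{\theta}$ gives $\frac{\theta-\eta}{1-\eta}=1$, so $\er{\eta}{1}=(1-2\Delta')\gamma'(1-\eta)\bigl(1-D\gamma'(1+\Delta')\bigr)$, and the choice $\gamma'=\frac{1}{2D}$ reduces this to $\frac{(1-2\Delta')(1-\Delta')(1-\eta)}{4D}$. With $\gamma=1$ we have $\q=e^{-(1+\Delta)\frac{\eta-\alpha}{1-\alpha}D}$.

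Next I would peel off the estimation errors. Because $C_0$ is fixed by $3\sqrt{8\ln(1/\delta)/(\alpha N)}=\alpha$ with $\alpha=C_0N^{-1/3}$, and $\Delta=\sqrt{8\ln(1/\delta)/((h+\alpha)N)}\le\sqrt{8\ln(1/\delta)/(\alpha N)}$, we obtain $3\Delta\le\alpha$ and $1-3\Delta\ge 1-\alpha$, and moreover $\Delta'\le\Delta$ since $h+\eta\ge h+\alpha$. Each of the factors $(1-3\Delta)$, $(1-\alpha)$, $(1-2\Delta')$, $(1-\Delta')$, together with the perturbation of the exponent of $\q$ relative to its limit $e^{-\eta D}$, is of the form $1-O(\Delta)$, where $\Delta=\sqrt{8\ln(1/\delta)}\,N^{-1/2}(h+C_0N^{-1/3})^{-1/2}$. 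Collecting them multiplicatively and absorbing the constants into one $C_3$ lets me write both $F^H$ and $F^L$ as their $\Delta\to 0$ limits times a common factor $\bigl(1-C_3N^{-1/2}(h+C_0N^{-1/3})^{-1/2}\bigr)$.

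Finally I would balance the two limiting quantities $\tilde F^H=\frac{1}{D}(1-e^{-\eta D})$ and $\tilde F^L=e^{-\eta D}\frac{1-\eta}{4D}$. Since $\tilde F^H$ is increasing and $\tilde F^L$ decreasing in $\eta$, the quantity $\min\{\tilde F^H,\tilde F^L\}$ is maximized at their crossing point; equating them and clearing denominators gives $e^{D\eta}=\frac{5-\eta}{4}$, which defines $\eta_1$, and the common value there is $\frac{1}{D}\frac{1-\eta_1}{5-\eta_1}$. Taking $\beta=\eta=\eta_1$ (feasible for large $N$, since then $\alpha\le\eta_1\le\theta=1$) and multiplying by the collected error factor yields the claimed bound. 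The main obstacle I anticipate lies in the third step: showing rigorously that the several $1-O(\Delta)$ factors and the displacement of the balance point caused by nonzero $\Delta,\Delta',\alpha$ all combine into a single clean factor of the stated form with an explicit constant $C_3$, rather than generating cross terms that would disturb the $N^{-1/2}(h+C_0N^{-1/3})^{-1/2}$ order; the reductions of the first two steps are essentially mechanical once the degeneracies at $\beta=\eta$, $\theta=1$ are observed.
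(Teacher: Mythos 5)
Your proposal is correct and follows essentially the same route as the paper's proof: specialize Theorem~\ref{thm.omp} at $\beta=\eta$, $\theta=1$ so the max-phase terms vanish and $\pr{\eta}{\beta}=1$, factor out all the $1-O(\Delta)$ estimation errors into a single multiplicative correction of order $N^{-\frac{1}{2}}(h+C_0N^{-\frac{1}{3}})^{-\frac{1}{2}}$, and then balance the limiting expressions $\frac{1}{D}(1-e^{-D\eta})$ and $\frac{1}{4D}e^{-D\eta}(1-\eta)$ to obtain $e^{D\eta_1}=\frac{5-\eta_1}{4}$ and the common value $\frac{1}{D}\frac{1-\eta_1}{5-\eta_1}$. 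The only difference is that the paper derives $\gamma=1$ and $\gamma'=\frac{1}{2D}$ as the optimizing choices rather than assuming them, and the error-collection step you flag as the main obstacle is handled in the paper at the same level of informality (first-order approximation, absorbing constants into $C_3=\max\{C_1,C_2\}$).
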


\subsubsection{No LP phases}
\label{sect.nolp}

In this section, we consider a set of special parameter choice: $\alpha=\beta$ and $\eta=\theta$, i.e., no heavy and light LP phases. We further assume our choice of $\alpha$ is at most $1-h$.
In this case, the formula of competitive ratio becomes $\min\{F^H, F^L\}$ where
\begin{equation*}
    F^H = \begin{cases}
        \frac{1}{D}(h+\alpha)\ln\frac{h+\eta}{h+\alpha},& \alpha\le \eta\le 1-h\\
        \frac{1}{D}(h+\alpha)(\ln\frac{1}{h+\alpha}+1-e^{1-h-\eta}),& \alpha\le 1-h< \eta\\
    \end{cases}
\end{equation*}
\begin{equation*}
    F^L=\begin{cases}
        \frac{h+\alpha}{h+\eta}\cdot \fa,& \eta\le 1-h\\
        (h+\alpha)e^{1-h-\eta}\cdot \fb,&\alpha\le 1-h< \eta.
    \end{cases}
\end{equation*}
We update the value of $\fa$ and $\fb$ as below:
\begin{equation*}
    \er{\eta}{1}=\begin{cases}
        \fa = (1-(h+\eta))(1+2D)+2D\ln(h+\eta)+h(1+2D\ln(h+\eta)-Dh), & \eta \le 1-h\\
        \fb = (1-\eta)(1-D(1-\eta)), & \eta > 1-h.
    \end{cases}
\end{equation*}
We observe that though we cannot find out optimal $\eta$ and $\alpha$ easily, we can fix the value of $\eta$, and find the optimal $\alpha$ related to $\eta$. Firstly, we discuss the case that $\eta\le 1-h$.
\begin{restatable}{proposition}{propnolp}\label{pro.eta.small}
    Given $\eta\le 1-h$, we have competitive ratio of at least 
    \begin{equation*}
        \frac{1}{D}(h+\alpha)\ln\frac{h+\eta}{h+\alpha},
    \end{equation*} when we set $\theta=\eta$, $\beta=\alpha$ and
    \begin{equation*}
        \alpha = \max\left\{(h+\eta)e^{-\frac{D\fa}{h+\eta}}-h, (h+\eta)e^{-1}-h, 0\right\}
    \end{equation*}
    Here $\fa =  (1-(h+\eta))(1+2D)+2D\ln(h+\eta)+h(1+2D\ln(h+\eta)-Dh)$.
\end{restatable}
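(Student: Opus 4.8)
The plan is to exploit the fact that, once we set $\beta=\alpha$ and $\theta=\eta$ with $\eta\le 1-h$, the competitive ratio stated at the start of Section~\ref{sect.nolp} reduces to $\min\{F^H,F^L\}$ with $F^H=\frac{1}{D}(h+\alpha)\ln\frac{h+\eta}{h+\alpha}$ and $F^L=\frac{h+\alpha}{h+\eta}\cdot\fa$. Since the bound claimed in the proposition is exactly $F^H$, it suffices to prove that the prescribed choice of $\alpha$ forces $F^L\ge F^H$; then $\min\{F^H,F^L\}=F^H$ and the statement follows immediately.

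First I would reduce the inequality $F^L\ge F^H$ to an explicit threshold on $\alpha$. Both sides carry the common positive factor $h+\alpha$, so cancelling it reduces the inequality to $\frac{\fa}{h+\eta}\ge\frac{1}{D}\ln\frac{h+\eta}{h+\alpha}$; multiplying by $D$ and exponentiating (the logarithm being monotone) gives the equivalent condition $h+\alpha\ge(h+\eta)e^{-D\fa/(h+\eta)}$, i.e. $\alpha\ge(h+\eta)e^{-D\fa/(h+\eta)}-h=:\alpha_1$. Because the prescribed $\alpha=\max\{(h+\eta)e^{-D\fa/(h+\eta)}-h,\ (h+\eta)e^{-1}-h,\ 0\}$ is a maximum whose first argument is precisely $\alpha_1$, we have $\alpha\ge\alpha_1$, hence $F^L\ge F^H$, which gives the desired lower bound $\frac{1}{D}(h+\alpha)\ln\frac{h+\eta}{h+\alpha}$.

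It then remains to justify the other two arguments of the maximum and to verify feasibility. Writing $F^H$ as a function of $s:=h+\alpha$, namely $\frac{1}{D}s\ln\frac{h+\eta}{s}$, it is strictly concave (second derivative $-\frac{1}{Ds}<0$) and peaks at $s=(h+\eta)/e$, i.e. at $\alpha_2:=(h+\eta)e^{-1}-h$. Among all feasible $\alpha$ meeting the constraint $\alpha\ge\alpha_1$ (needed so that the min equals $F^H$) together with $\alpha\ge 0$, unimodality shows that the value maximizing the resulting bound $F^H$ is exactly $\max\{\alpha_1,\alpha_2,0\}$: if the unconstrained maximizer $\alpha_2$ already satisfies the constraints we take it, otherwise the optimum is pushed to the nearest active bound $\alpha_1$ or $0$. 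Feasibility is then clear provided $\fa\ge 0$, since $e^{-D\fa/(h+\eta)}\le 1$ yields $\alpha_1\le\eta$, while $\alpha_2\le\eta$ and $0\le\eta$ as well, so $\alpha=\max\{\alpha_1,\alpha_2,0\}\in[0,\eta]$, respecting $0\le\alpha=\beta\le\eta=\theta\le 1-h$.

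The points needing the most care are the domain and sign conditions underlying the algebra: cancelling $h+\alpha$ and inverting the logarithm both presume $h+\alpha>0$ and $\fa\ge 0$ (the latter also ensuring $\alpha_1\le\eta$ and that $F^L$ is a meaningful nonnegative reward bound). I would therefore first record that in the relevant parameter regime $\fa\ge 0$, dispose of the degenerate corner $h+\alpha=0$ separately (where $F^H=F^L=0$ and the claim is trivial), and only then carry out the concavity argument. Everything else is routine algebra together with the one-line second-derivative computation.
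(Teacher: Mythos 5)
Your proposal is correct and follows essentially the same route as the paper's proof: both reduce $F^L\ge F^H$ to the threshold $\alpha\ge\alpha_1=(h+\eta)e^{-D\fa/(h+\eta)}-h$ by cancelling $h+\alpha$ and exponentiating, conclude that the minimum equals $F^H$ at the prescribed $\alpha$, and invoke the unimodality of $F^H$ (peak at $\alpha_2=(h+\eta)e^{-1}-h$) to justify taking the maximum of $\alpha_1$, $\alpha_2$, and $0$. Your added remarks on the sign of $\fa$, the degenerate corner $h+\alpha=0$, and feasibility $\alpha\le\eta$ are sensible care points that the paper's own proof leaves implicit, but they do not change the argument.
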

In the following corollary, we give a choice of $\eta\le 1-h$ and show a feasible competitive ratio.
\begin{restatable}{corollary}{cornolpone}
 \label{cor.nolp1}
  When $h\le \frac{1}{2D}$, we can achieve a competitive ratio of at least 
   \begin{equation*}
     \fa e^{-\frac{\fa(2D+1)}{2(1+h)}}
   \end{equation*}
  where $\fa = 1-2Dh+2D(1+h)\ln\left[\frac{2D(1+h)}{2D+1}\right]+h-Dh^2$.
  We choose $\alpha = \beta = \frac{2D}{2D+1}(1+h)e^{-\frac{\fa(2D+1)}{2(1+h)}}-h$, $\eta=\theta = \frac{2D}{2D+1}(1+h)-h$. 
\end{restatable}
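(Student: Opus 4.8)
The plan is to specialize the general no-LP-phase framework of Proposition~\ref{pro.eta.small} to a concrete choice of $\eta$ (hence $\theta=\eta$) and $\alpha$ (hence $\beta=\alpha$), and then verify that the resulting competitive ratio simplifies to the stated closed form under the extra hypothesis $h\le \frac{1}{2D}$. Since Proposition~\ref{pro.eta.small} already tells us that, for any fixed $\eta\le 1-h$, choosing the optimal $\alpha$ yields a ratio of at least $\frac{1}{D}(h+\alpha)\ln\frac{h+\eta}{h+\alpha}$ with
\begin{equation*}
    \alpha = \max\left\{(h+\eta)e^{-\frac{D\fa}{h+\eta}}-h,\ (h+\eta)e^{-1}-h,\ 0\right\},
\end{equation*}
the entire task reduces to (i) picking the specific $\eta=\frac{2D}{2D+1}(1+h)-h$ so that $h+\eta=\frac{2D}{2D+1}(1+h)$, (ii) substituting this into the expression for $\fa$, and (iii) checking which of the three terms in the $\max$ is active so that $\alpha$ takes the claimed value $\frac{2D}{2D+1}(1+h)e^{-\frac{\fa(2D+1)}{2(1+h)}}-h$.

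First I would compute $\fa$ at $h+\eta=\frac{2D}{2D+1}(1+h)$. Plugging into $\fa=(1-(h+\eta))(1+2D)+2D\ln(h+\eta)+h(1+2D\ln(h+\eta)-Dh)$ and using $1-(h+\eta)=1-\frac{2D}{2D+1}(1+h)=\frac{1-2Dh}{2D+1}$, the first term $(1-(h+\eta))(1+2D)$ collapses to $1-2Dh$; the logarithmic terms combine as $2D(1+h)\ln\frac{2D(1+h)}{2D+1}$; and the remaining piece contributes $h-Dh^2$. This reproduces exactly $\fa=1-2Dh+2D(1+h)\ln\left[\frac{2D(1+h)}{2D+1}\right]+h-Dh^2$ as claimed. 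The hypothesis $h\le\frac{1}{2D}$ enters here to guarantee $1-2Dh\ge 0$ and, more importantly, to ensure $\eta\le 1-h$ is actually admissible (one checks $\frac{2D}{2D+1}(1+h)-h\le 1-h\iff \frac{2D}{2D+1}(1+h)\le 1\iff h\le\frac{1}{2D}$), so that the $\eta\le 1-h$ branch of the formula is the correct one to invoke.

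Next I would determine which term in the $\max$ defining $\alpha$ is the binding one. With $h+\eta=\frac{2D}{2D+1}(1+h)$, the first candidate is $(h+\eta)e^{-\frac{D\fa}{h+\eta}}-h=\frac{2D}{2D+1}(1+h)e^{-\frac{\fa(2D+1)}{2(1+h)}}-h$, which matches the asserted value of $\alpha$ exactly (using $\frac{D}{h+\eta}=\frac{D(2D+1)}{2D(1+h)}=\frac{2D+1}{2(1+h)}$). So the plan is to argue that this first candidate dominates the other two, i.e.\ that $e^{-\frac{D\fa}{h+\eta}}\ge e^{-1}$ (equivalently $\frac{D\fa}{h+\eta}\le 1$) and that it is nonnegative. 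Given this $\alpha$, I then substitute back into $\frac{1}{D}(h+\alpha)\ln\frac{h+\eta}{h+\alpha}$: since $h+\alpha=(h+\eta)e^{-\frac{D\fa}{h+\eta}}$, the logarithm becomes $\ln\frac{h+\eta}{(h+\eta)e^{-D\fa/(h+\eta)}}=\frac{D\fa}{h+\eta}$, so the ratio equals $\frac{1}{D}\cdot(h+\eta)e^{-\frac{D\fa}{h+\eta}}\cdot\frac{D\fa}{h+\eta}=\fa\, e^{-\frac{D\fa}{h+\eta}}=\fa\, e^{-\frac{\fa(2D+1)}{2(1+h)}}$, which is precisely the stated bound.

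The main obstacle I anticipate is verifying that the first candidate is indeed the active one in the $\max$, i.e.\ establishing $\frac{D\fa}{h+\eta}\le 1$ (so the exponential candidate beats the $e^{-1}$ candidate) and $\alpha\ge 0$, throughout the admissible range $h\le\frac{1}{2D}$; this requires bounding $\fa$ from above by $\frac{h+\eta}{D}=\frac{2(1+h)}{2D+1}$, which is a monotonicity/estimate argument on the logarithmic expression for $\fa$ rather than a one-line substitution. Everything else is the mechanical algebra of collapsing the $\fa$ formula and the logarithm described above, which I would carry out but not belabor here.
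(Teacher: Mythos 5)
Your plan follows the paper's proof essentially step for step: pick $\eta$ so that $h+\eta=\frac{2D}{2D+1}(1+h)$ (the paper notes this is the maximizer of $\fa$ over $\eta\le 1-h$), verify admissibility under $h\le\frac{1}{2D}$, collapse the formula for $\fa$, identify the first candidate in the $\max$ as the active one, and substitute to get $\fa e^{-\frac{\fa(2D+1)}{2(1+h)}}$. The algebra you describe is correct and matches the paper.

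However, the one step you defer --- showing that $(h+\eta)e^{-\frac{D\fa}{h+\eta}}-h$ dominates both $(h+\eta)e^{-1}-h$ and $0$ --- is the only substantive content of the proof, and your stated target bound is slightly off. Bounding $\fa\le\frac{h+\eta}{D}$ handles the comparison with the $e^{-1}$ candidate, but for nonnegativity of $\alpha$ the reduction via $e^{x}\ge 1+x$ requires the strictly stronger inequality $\eta-D\fa\ge 0$ (i.e.\ $\fa\le\frac{\eta}{D}$, not merely $\fa\le\frac{h+\eta}{D}$); conveniently, this single inequality then implies both conditions since $h\ge 0$. The paper proves $\eta-D\fa\ge 0$ by setting $g(h)=\eta-D\fa$ as an explicit function of $h$, computing $g''(h)=2D^2\bigl(1-\frac{1}{1+h}\bigr)\ge 0$, deducing $g'(h)\le g'\bigl(\frac{1}{2D}\bigr)=-\frac{1}{2D+1}\le 0$ so that $g$ is decreasing on $[0,\frac{1}{2D}]$, and finally checking $g\bigl(\frac{1}{2D}\bigr)=\frac{1}{2D(2D+1)}\bigl(3D^2-\frac{1}{2}D-1\bigr)\ge 0$ for positive integers $D$. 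Without this (or an equivalent) argument the proof is incomplete, since it is not a priori obvious that the logarithmic expression $\fa$ is small enough for the exponential candidate to win the $\max$ across the whole range $0\le h\le\frac{1}{2D}$.
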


Secondly, we consider the case that $\eta>1-h$ and give a special choice of $\eta$ in the following proposition and two corollaries. The details of proof are showed in appendix.
\begin{restatable}{proposition}{propetalarge}\label{pro.eta.large}
    Given $\eta> 1-h$, we have competitive ratio of at least 
    \begin{equation*}
        \begin{cases}
            \frac{1}{D}(h+\alpha)(\ln\frac{1}{h+\alpha}+1-e^{1-h-\eta}), &\alpha_1\le 1-h\\
            e^{1-h-\eta}\cdot \fb, &\alpha_1\ge 1-h.
        \end{cases}
    \end{equation*}
    when we set $\theta=\eta$, $\beta=\alpha$ and 
    \begin{equation*}
        \alpha= 
        \begin{cases}
            \max\left\{\alpha_1, \alpha_2, 0\right\}, &\alpha_1\le 1-h\\
            1-h, & \alpha_1\ge 1-h
        \end{cases}
    \end{equation*}
    Here $\fb=(1-\eta)(1-D(1-\eta))$, $\alpha_1 = \exp\{1-(D\fb+1)e^{1-h-\eta}\}-h$ and $\alpha_2= \exp\{-e^{1-h-\eta}\}-h$.
\end{restatable}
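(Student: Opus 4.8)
The plan is to fix $\eta$ and reduce the claim to a one–dimensional optimization of $\alpha\mapsto\min\{F^H,F^L\}$, exploiting that after a change of variables one branch is linear while the other is strictly concave. First I would substitute $\theta=\eta$ and $\beta=\alpha$ into the No-LP-phase formulas; since $\q=1$ and $\er{\eta}{\theta}=0$ when $\beta=\alpha$, $\theta=\eta$, the regime $\eta>1-h$, $\alpha\le 1-h$ gives competitive ratio $\min\{F^H(\alpha),F^L(\alpha)\}$ with
\[
F^H(\alpha)=\tfrac1D(h+\alpha)\Bigl(\ln\tfrac1{h+\alpha}+1-e^{1-h-\eta}\Bigr),\qquad F^L(\alpha)=(h+\alpha)e^{1-h-\eta}\fb,
\]
where $\fb=(1-\eta)(1-D(1-\eta))$ is constant once $\eta$ is fixed. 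Writing $s=h+\alpha$ and $c=e^{1-h-\eta}\in(0,1)$ yields $F^H=\tfrac1D\,s(-\ln s+1-c)$ and $F^L=c\fb\,s$.

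Second, I would record the shape of each branch. The map $s\mapsto s(-\ln s+1-c)$ is strictly concave with derivative $-\ln s-c$, so $F^H$ peaks at $s^\ast=e^{-c}$ with value $e^{-c}/D$, and $\alpha=s^\ast-h=\alpha_2$; note $s^\ast<1$ always, so $\alpha_2\le 1-h$. The branch $F^L$ is linear. Their difference is $F^H-F^L=\tfrac{s}{D}\bigl(-\ln s+1-c-Dc\fb\bigr)$, whose sign equals that of the strictly decreasing factor $-\ln s+1-c-Dc\fb$; hence the two curves cross exactly once, at $s_1=e^{\,1-c(1+D\fb)}$ with $\alpha=s_1-h=\alpha_1$, and $F^H>F^L$ for $s<s_1$, $F^H<F^L$ for $s>s_1$. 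At that crossing $F^H(s_1)=F^L(s_1)$ equals the value claimed in the first branch.

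Third, I would assemble the maximizer. On $(0,s_1]$ the minimum is the increasing $F^L$; on $[s_1,\infty)$ it is $F^H$, which still rises until $s^\ast$ and falls afterwards. Therefore the unconstrained maximizer of the minimum is $\max\{s_1,s^\ast\}$: if $s_1\ge s^\ast$ the minimum switches to the already-decreasing $F^H$ right after $s_1$, giving optimum $s_1$; if $s_1<s^\ast$ the minimum keeps climbing along $F^H$ up to $s^\ast$. Translating back and imposing $\alpha\ge0$ gives $\alpha=\max\{\alpha_1,\alpha_2,0\}$ with optimal value $F^H(\alpha)$, valid exactly while the crossing lies in the feasible window, i.e. $s_1\le 1\Leftrightarrow\alpha_1\le 1-h$. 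When instead $\alpha_1\ge 1-h$ (so $s_1\ge1$), the whole feasible interval $s\le1\le s_1$ has $\min=F^L$ increasing, so the optimum is the boundary $s=1$, i.e. $\alpha=1-h$, with value $F^L(1)=e^{1-h-\eta}\fb$. This reproduces both cases of the statement.

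The main obstacle I anticipate is the careful bookkeeping of which branch realizes the minimum over the feasible window $\alpha\in[0,1-h]$ (together with $\alpha\le\eta$, which is automatic here since $\alpha\le1-h<\eta$): establishing the single-crossing property cleanly and confirming $s^\ast=e^{-c}<1$ so that $\alpha_2$ never violates the upper bound are the delicate points. One should also note that when $\fb\le 0$ the linear branch is non-increasing and the stated bound becomes a (vacuously valid) non-positive lower bound, so the argument above is the relevant regime $\fb>0$.
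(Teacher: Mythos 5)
Your proposal is correct and follows essentially the same route as the paper's proof: both identify $\alpha_1$ as the unique crossing point of $F^H$ and $F^L$, $\alpha_2$ as the peak of the concave branch $F^H$, and conclude that the optimum of $\min\{F^H,F^L\}$ over $\alpha\in[0,1-h]$ is attained at $\max\{\alpha_1,\alpha_2,0\}$ (or at the boundary $1-h$ when the crossing lies outside the window). Your substitution $s=h+\alpha$ and the explicit single-crossing/monotonicity bookkeeping, plus the remark about the degenerate case $\fb\le 0$, are minor presentational refinements of the same argument.
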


\begin{restatable}{corollary}{cornolptwo}
    \label{cor.nolp2}
    When $h\ge h_0:=(2D^2+1)(\sqrt{1+\frac{1}{4D^2}}-1)+\frac{1}{2D}$, we can achieve a competitive ratio of at least 
    \begin{equation*}
        e^{1-h-\eta}(1-\eta)(1-D(1-\eta))
    \end{equation*}
    where we choose $\alpha=\beta=1-h$, $\eta=\theta=2-\frac{1}{2D}-\sqrt{1+\frac{1}{4D^2}}$.
\end{restatable}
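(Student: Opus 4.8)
The plan is to invoke Proposition~\ref{pro.eta.large} directly, reducing the whole task to two things: choosing the $\eta$ that maximizes the guaranteed ratio, and checking that the hypothesis $\alpha_1\ge 1-h$ of that proposition's second case holds whenever $h\ge h_0$. Writing $s:=1-\eta$, I first commit to $\alpha=\beta=1-h$, so that $h+\alpha=1$ and the guaranteed ratio of the second case becomes $F^L=e^{1-h-\eta}\fb=e^{s-h}\,s(1-Ds)$. Since the factor $e^{-h}$ is constant in $s$, maximizing over $\eta$ is the one-variable problem of maximizing $e^{s}s(1-Ds)$; differentiating and setting the derivative to zero yields the quadratic $Ds^2+(2D-1)s-1=0$, whose positive root is $s=\frac{1}{2D}+\sqrt{1+\frac{1}{4D^2}}-1$, equivalently $\eta=2-\frac{1}{2D}-\sqrt{1+\frac{1}{4D^2}}$, matching the stated choice.

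The next step is a clean algebraic identity that pins down the threshold. Let $R:=\sqrt{1+\frac{1}{4D^2}}$, so $R^2-1=\frac{1}{4D^2}$, i.e. $R-1=\frac{1}{4D^2(R+1)}$. At the optimal $s$ one computes $Ds=\tfrac12+D(R-1)$ and $1-Ds=\tfrac12-D(R-1)$, hence
\[
D\fb=Ds(1-Ds)=\tfrac14-D^2(R-1)^2=\frac{1}{2(R+1)}=2D^2(R-1).
\]
In particular $\fb>0$, so the bound is meaningful and $Ds<1$. Moreover $h_0-s=(2D^2+1)(R-1)+\frac{1}{2D}-\frac{1}{2D}-(R-1)=2D^2(R-1)=D\fb$, so the threshold is exactly $h_0=s+D\fb$.

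With these identities I verify the two structural conditions. First, $\eta>1-h$ is equivalent to $s<h$, which holds because $h\ge h_0=s+D\fb>s$ (as $D\fb>0$), placing us legitimately in the regime of Proposition~\ref{pro.eta.large}. Second, its second case requires $\alpha_1\ge 1-h$; recalling $\alpha_1=\exp\{1-(D\fb+1)e^{1-h-\eta}\}-h$, this is equivalent to $(D\fb+1)e^{1-h-\eta}\le 1$, i.e. $h\ge s+\ln(D\fb+1)$. Applying the elementary inequality $\ln(1+x)\le x$ together with $h_0=s+D\fb$ gives
\[
s+\ln(D\fb+1)\le s+D\fb=h_0\le h,
\]
so $\alpha_1\ge 1-h$. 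Proposition~\ref{pro.eta.large} then certifies that $\alpha=\beta=1-h$, $\eta=\theta=2-\frac{1}{2D}-\sqrt{1+\frac{1}{4D^2}}$ yields a competitive ratio of at least $e^{1-h-\eta}\fb=e^{1-h-\eta}(1-\eta)(1-D(1-\eta))$, which is the claim.

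The main obstacle is the identity $D\fb=2D^2(R-1)=\frac{1}{2(R+1)}$: it is precisely the fact that the optimizing $\eta$ makes $D\fb$ collapse to this closed form that forces the threshold into the stated shape $h_0=s+D\fb$ and lets the $\ln(1+x)\le x$ relaxation close the gap to $\alpha_1\ge 1-h$; everything else is routine single-variable optimization and sign checking. I would also note that $h_0$ is only a sufficient threshold, since slack is introduced exactly at the $\ln(1+x)\le x$ step.
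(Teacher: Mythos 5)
Your proposal is correct and follows essentially the same route as the paper: invoke the second case of Proposition~\ref{pro.eta.large} with $\alpha=1-h$, pick the $\eta$ maximizing $e^{1-h-\eta}\fb$, and verify $\eta>1-h$ and $\alpha_1\ge 1-h$ from $h\ge h_0$ via $e^x\ge 1+x$ (your $\ln(1+x)\le x$ step is the same inequality). Your identity $h_0=s+D\fb$ with $D\fb=2D^2(R-1)$ is a cleaner packaging of the paper's algebra, and the only items the paper checks that you omit are the routine feasibility facts $\eta\le 1$ (implicit in $s>0$) and $h_0\le 1$.
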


Observing the choices of $\eta$ in the corollary above, to ensure the optimal $\eta$ which maximizes $e^{1-h-\eta}f_2$ satisfying the assumption $\eta>1-h$, the condition that $h\ge h_0$ is necessary.
Thus, for the case where $\frac{1}{2D}<h<h_0$, we choose the smallest $\eta$ which keeps $\eta>1-h$ below, and provide the following corollary. The proof is omitted because it is straightforward from Proposition~\ref{pro.eta.large}.

\begin{corollary}
    \label{cor.nolp3}
    When $\frac{1}{2D}<h< h_0:=(2D^2+1)(\sqrt{1+\frac{1}{4D^2}}-1)+\frac{1}{2D}$, we can achieve competitive ratio     
    \begin{equation*}
        \begin{cases}
            \frac{1}{D}(h+\alpha)(\ln\frac{1}{h+\alpha}+1-e^{\frac{1}{2D}-h}), &\alpha_1\le 1-h\\
            e^{\frac{1}{2D}-h}\frac{1}{4D}, &\alpha_1\ge 1-h.
        \end{cases}
    \end{equation*}
    where we set $\theta=\eta=1-\frac{1}{2D}$, $\beta=\alpha$ and 
    \begin{equation*}
        \alpha= 
        \begin{cases}
            \max\left\{\alpha_1, \alpha_2, 0\right\}, &\alpha_1\le 1-h\\
            1-h, & \alpha_1\ge 1-h
        \end{cases}
    \end{equation*}
    Here $\alpha_1 = \exp\{1-\frac{5}{4}e^{\frac{1}{2D}-h}\}-h$ and $\alpha_2= \exp\{-e^{\frac{1}{2D}-h}\}-h$.
\end{corollary}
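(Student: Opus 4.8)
The plan is to obtain Corollary~\ref{cor.nolp3} as a direct specialization of Proposition~\ref{pro.eta.large} to the fixed choice $\eta=\theta=1-\frac{1}{2D}$, so that the work reduces to (i) checking that this choice lands us in the regime the proposition covers, and (ii) evaluating its formulas at this particular $\eta$.

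First I would verify the regime condition. Proposition~\ref{pro.eta.large} requires $\eta>1-h$; with $\eta=1-\frac{1}{2D}$ this reads $1-\frac{1}{2D}>1-h$, i.e. $h>\frac{1}{2D}$, which is exactly the lower bound imposed in the hypothesis. Hence the proposition applies verbatim, and we inherit its two-case bound together with its prescription $\beta=\alpha$, $\theta=\eta$, with $\alpha=\max\{\alpha_1,\alpha_2,0\}$ when $\alpha_1\le 1-h$ and $\alpha=1-h$ when $\alpha_1\ge 1-h$.

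Next I would substitute $\eta=1-\frac{1}{2D}$ into the quantities appearing in Proposition~\ref{pro.eta.large}. This gives $1-h-\eta=\frac{1}{2D}-h$, hence $e^{1-h-\eta}=e^{\frac{1}{2D}-h}$, and $\fb=(1-\eta)(1-D(1-\eta))=\frac{1}{2D}\bigl(1-\frac12\bigr)=\frac{1}{4D}$, so that $D\fb+1=\frac54$. Plugging these into $\alpha_1=\exp\{1-(D\fb+1)e^{1-h-\eta}\}-h$ and $\alpha_2=\exp\{-e^{1-h-\eta}\}-h$ reproduces the stated $\alpha_1=\exp\{1-\frac54 e^{\frac{1}{2D}-h}\}-h$ and $\alpha_2=\exp\{-e^{\frac{1}{2D}-h}\}-h$, while plugging into the two competitive-ratio branches yields $\frac1D(h+\alpha)(\ln\frac{1}{h+\alpha}+1-e^{\frac{1}{2D}-h})$ in the case $\alpha_1\le 1-h$ and $e^{1-h-\eta}\fb=e^{\frac{1}{2D}-h}\frac{1}{4D}$ in the case $\alpha_1\ge 1-h$, matching the claim exactly.

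The only conceptual point, rather than an obstacle, is to justify why $\eta=1-\frac{1}{2D}$ is the sensible value in this window. For fixed $h$, the objective governing the $\alpha_1\ge 1-h$ branch is $g(\eta)=e^{1-h-\eta}\fb(\eta)$, whose unconstrained maximizer (computed in the proof of Corollary~\ref{cor.nolp2}) is $\eta^\star=2-\frac{1}{2D}-\sqrt{1+\frac{1}{4D^2}}$, and $\eta^\star>1-h$ holds precisely when $h\ge h_0$. Thus throughout the intermediate window $\frac{1}{2D}<h<h_0$ the unconstrained optimum leaves the admissible region $\eta>1-h$ required by Proposition~\ref{pro.eta.large}, so we fall back to the clean feasible value $\eta=1-\frac{1}{2D}$, which stays admissible for every $h$ in this range; the upper restriction $h<h_0$ merely separates this regime from that of Corollary~\ref{cor.nolp2}. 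Consequently no additional estimate is needed, and the entire argument is a substitution supported by the elementary inequality $h>\frac{1}{2D}$.
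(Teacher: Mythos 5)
Your proposal is correct and matches the paper's intended argument: the paper explicitly omits the proof as a direct specialization of Proposition~\ref{pro.eta.large}, and your substitution $\eta=1-\frac{1}{2D}$, yielding $1-h-\eta=\frac{1}{2D}-h$, $\fb=\frac{1}{4D}$, and $D\fb+1=\frac{5}{4}$, together with the feasibility check $\eta>1-h\iff h>\frac{1}{2D}$, is exactly what is needed. The closing remark motivating the choice of $\eta$ is consistent with the paper's discussion and is not required for the corollary itself.
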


\zihao{We next compare the competitive ratios of our algorithm under three different choices of parameters: general parameters, the parameters such that only LP phases are contained (corresponding to Section~\ref{sect.nomax}), and the parameters such that only max phases are contained (corresponding to Section~\ref{sect.nolp}), which is shown in Figure~\ref{fig.sec4thm}. Here, for the parameters corresponding to only LP phases, since the analysis in Section~\ref{sect.nomax} contains some approximation under the assumption that $N$ is large which may lead to a very poor performance under the case when $N$ is small, we use the other parameters suggested in Proposition~\ref{cor.nomaxphase.GAP} and enumerate an optimal $\alpha$ as the parameters. For the parameters corresponding to only max phases, we directly apply the parameters advised by Corollaries~\ref{cor.nolp1},~\ref{cor.nolp2} and~\ref{cor.nolp3}.}

\zihao{In Figure~\ref{fig.sec4thm}, we compare the competitive ratio under different values of $D$, $N$ and $h$. We can see that even under the choices of parameters corresponding to only max phases, the competitive ratio can be close to the optimal choices of the parameters under our algorithms. 
The decrease of the competitive ratio under the parameters where only max phases are contained occurs when $h$ is larger than $h_0:=(2D^2+1)(\sqrt{1+\frac{1}{4D^2}}-1)+\frac{1}{2D}$, corresponding to the parameters advised in Corollary~\ref{cor.nolp2}. This is from the assumption that $\alpha\le 1-h$ in Section~\ref{sect.nolp}, which makes it easier for us to give an explicit choice of parameters. But such assumption may damage the performance when $h$ is large. In contrast, though the competitive ratio under only LP phases is relatively low when $h$ is small, the competitive ratio has a great increase with the increase of $h$. Further, in most cases except the case where $D=1$ and $N=2000$, the competitive ratio under only LP phases can reach a comparable or even higher competitive ratio than that under only max phases.}

\zihao{According to the above analysis, we can suggest the choices of parameters based on the values of $h$. Though the optimal choices of parameters may be hard to calculate, when $h$ is small, we can adopt the parameters provided in Section~\ref{sect.nomax}. When $h$ is large, we can use the parameters provided in Section~\ref{sect.nolp}, which allows us to reach a relatively good performance.}

\begin{figure*}[h!]
    \centering
    \begin{subfigure}[t]{0.3\textwidth}
       \centering
       \includegraphics[width=\textwidth]{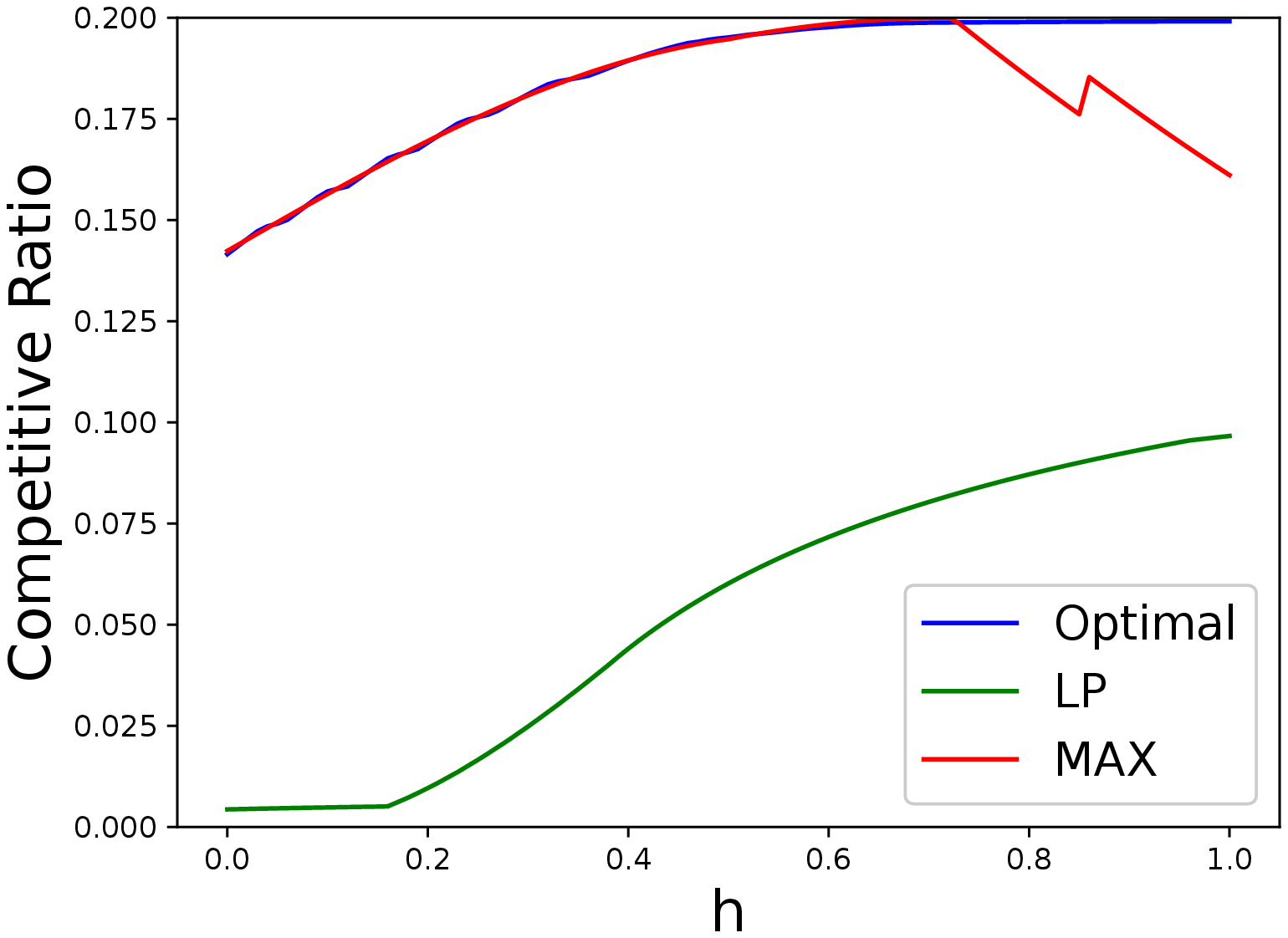}
       \caption{$D=1$, $N=2000$}
       \label{fig.sec4thma}
    \end{subfigure}
    \begin{subfigure}[t]{0.3\textwidth}
       \centering
       \includegraphics[width=\textwidth]{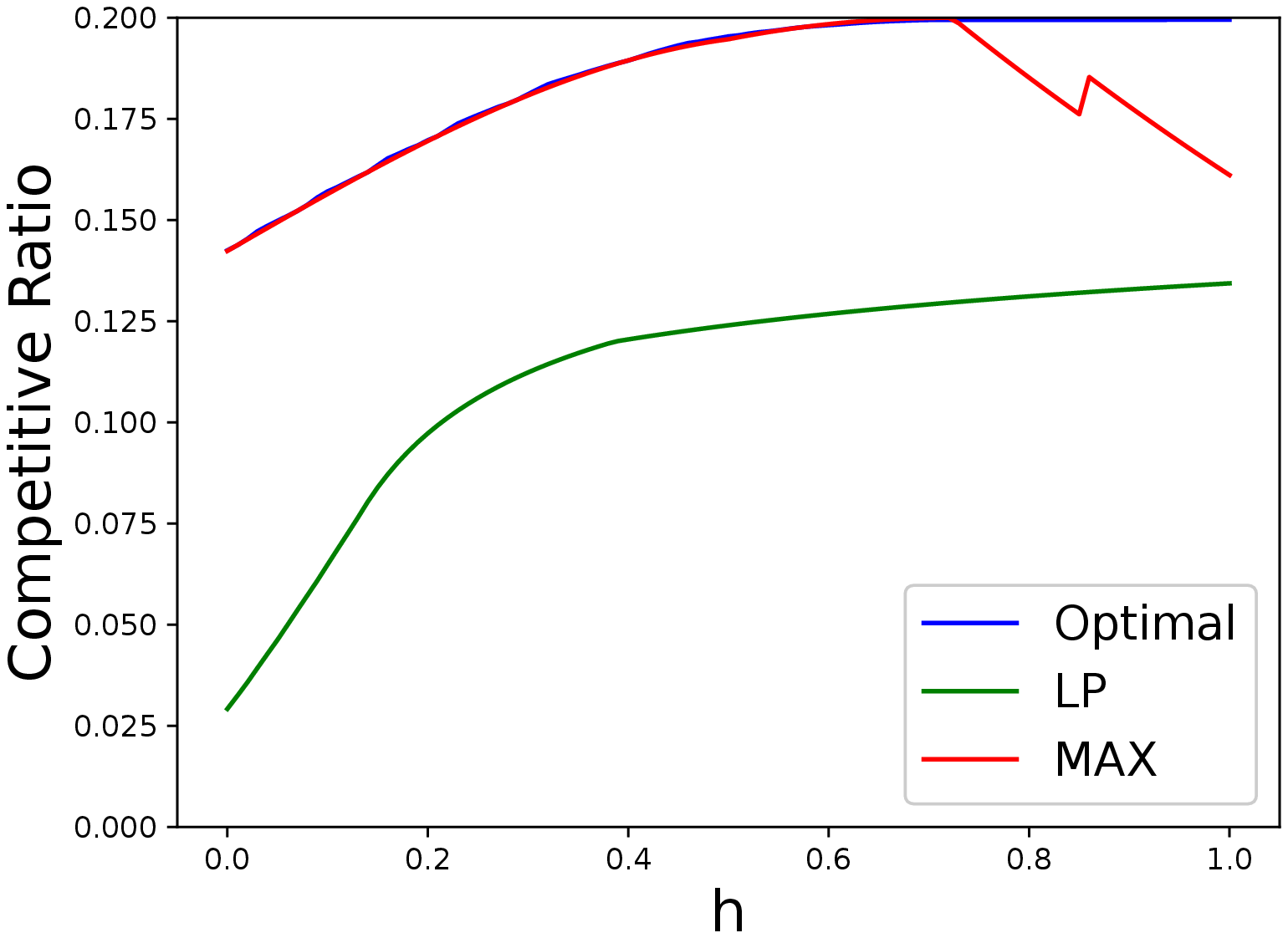}
       \caption{$D=1$, $N=10000$}
       \label{fig.sec4thmb}
    \end{subfigure}
    \begin{subfigure}[t]{0.3\textwidth}
       \centering
       \includegraphics[width=\textwidth]{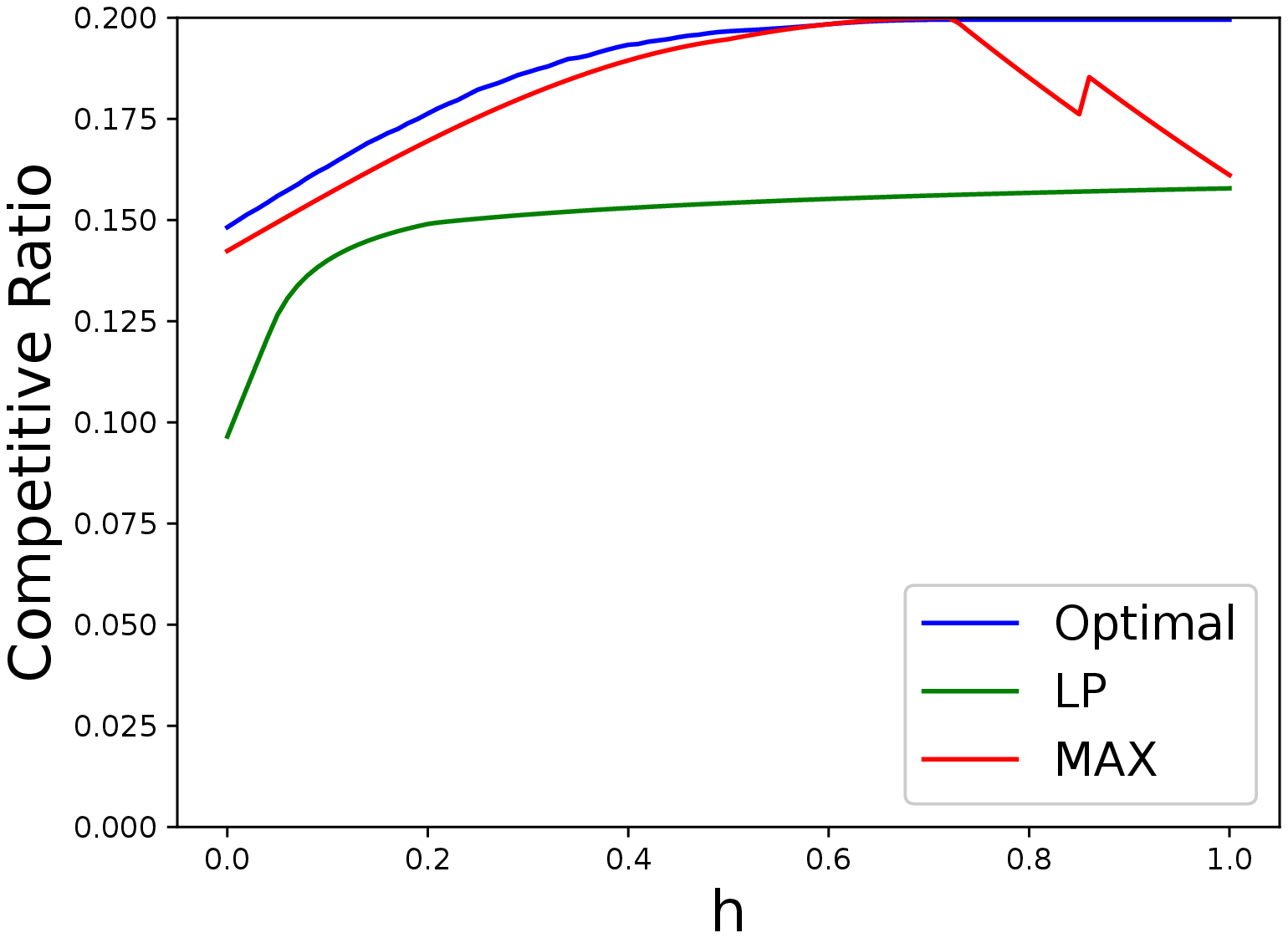}
       \caption{$D=1$, $N=100000$}
       \label{fig.sec4thmc}
    \end{subfigure}
    \begin{subfigure}[t]{0.3\textwidth}
       \centering
       \includegraphics[width=\textwidth]{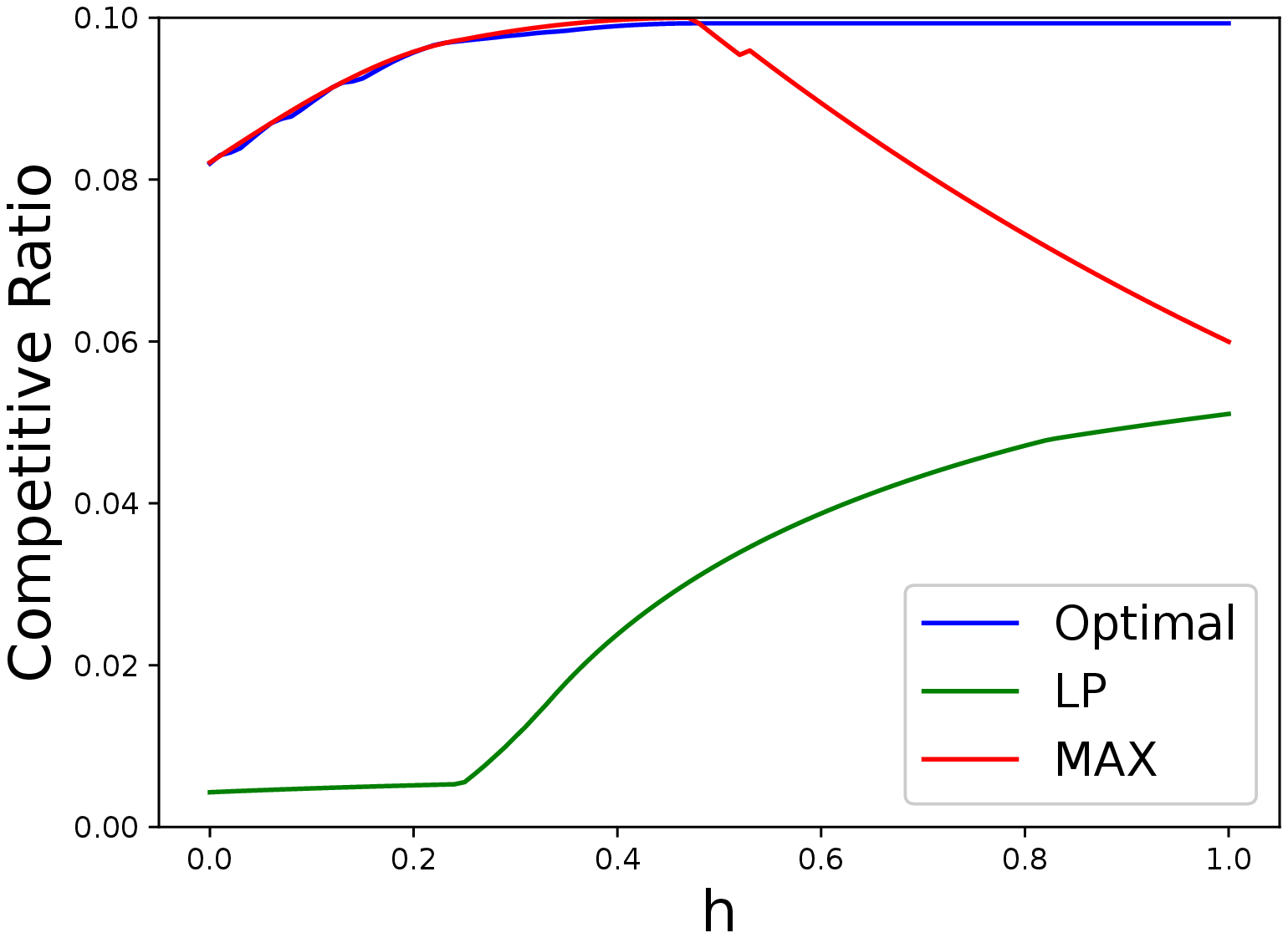}
       \caption{$D=2$, $N=2000$}
       \label{fig.sec4thmd}
    \end{subfigure}
    \begin{subfigure}[t]{0.3\textwidth}
       \centering
       \includegraphics[width=\textwidth]{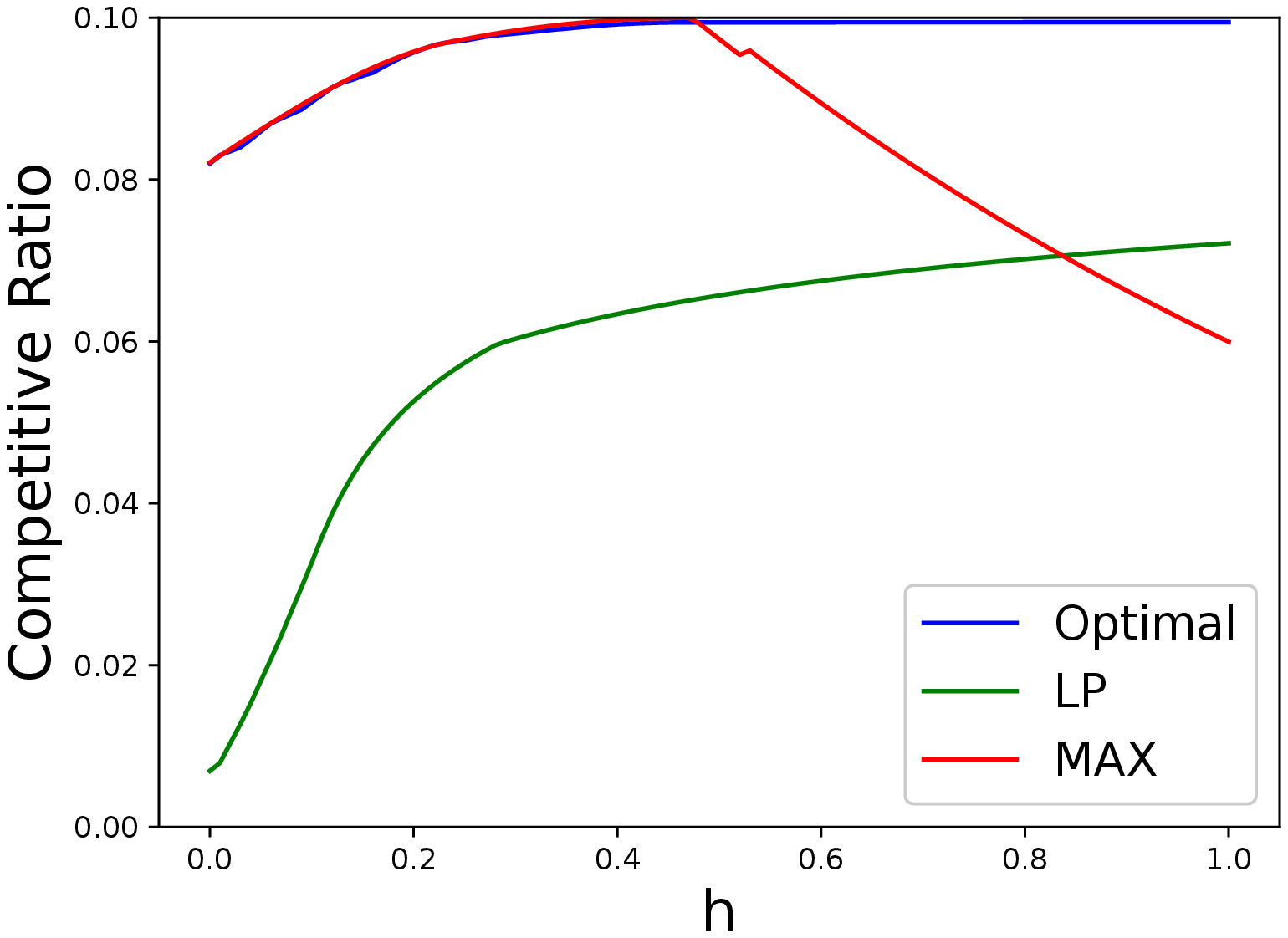}
       \caption{$D=2$, $N=10000$}
       \label{fig.sec4thme}
    \end{subfigure}
    \begin{subfigure}[t]{0.3\textwidth}
       \centering
       \includegraphics[width=\textwidth]{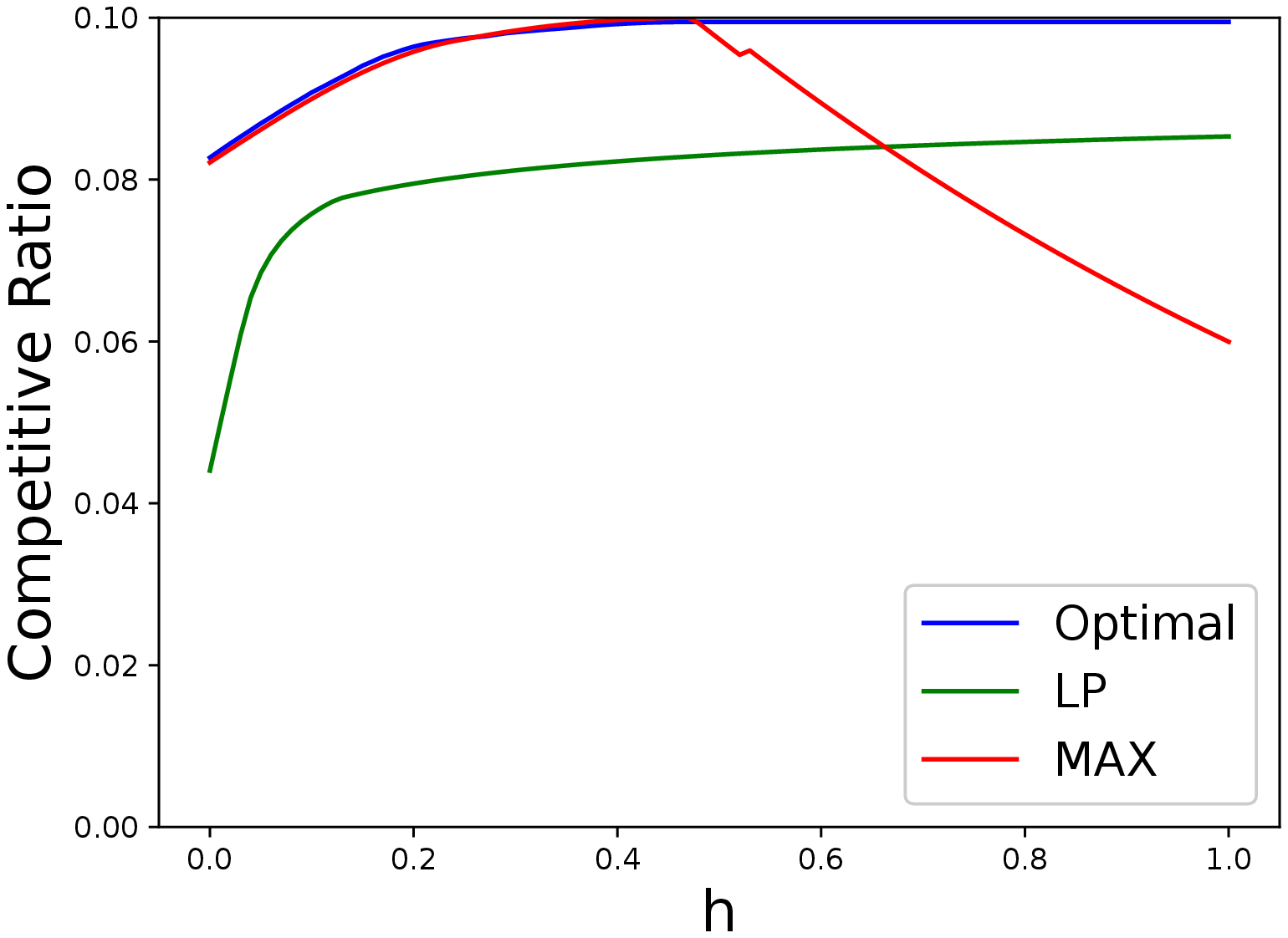}
       \caption{$D=2$, $N=100000$}
       \label{fig.sec4thmf}
    \end{subfigure}
    \caption{Comparisons of competitive ratios under three parameter settings}
    \label{fig.sec4thm}
\end{figure*}

\section{Experiments}
\label{sect.experiments}
\newcommand{\Grd}{\textsc{Grd}\xspace}
\newcommand{\SamOne}{\textsc{Sam1}\xspace}
\newcommand{\SamTwo}{\textsc{Sam2}\xspace}
\newcommand{\SamMix}{\textsc{Sam-Mix}\xspace}
\newcommand{\SamMax}{\textsc{Sam-Max}\xspace}
\newcommand{\SamLP}{\textsc{Sam-LP}\xspace}

In this section, we first apply our algorithms to a dataset from a popular task assignment platform: EverySender \cite{tong2016online} to examine the effectiveness of our algorithms on online matching problem. Secondly, we test the performance of our algorithms on online multidimensional GAP problem over a synthetic dataset.
\subsection{Online Matching}
\emph{Dataset and preprocessing.} 
EverySender dataset includes a set of workers and tasks. Each worker and task has a location $(x, y)$.  The data also provide the successful rate for each worker and payoff for each task. We process the data in the following way. 
We treat each worker as an online item and each task as an offline bin. We divide the map into a grid map where each grid has a size of $(dx, dy)$, and group every worker/task in the same grid as one type. As a result, we generate a normalized location $(\tilde{x}=\lfloor x/dx\rfloor, \tilde{y}=\lfloor y/dy\rfloor)$ for each type of worker and task. We use $(\tilde{x}, \tilde{y})$ to indicate the worker/task's type.  We use the frequency of each worker type to approximate its arrival rate. We use the average successful rate (payoff) of the same type of workers as the successful rate (payoff) of this type. For each pair of worker and task, we add an edge between them if the Euclidean distance between them is smaller than a given threshold, and the weight of this edge is the product of the corresponding successful rate and  payoff.

\emph{Algorithms.} We test two heuristic algorithms based on Algorithm~\ref{alg1} and a greedy algorithm. 
\begin{itemize}
    \item \Grd: The greedy algorithm. For each arrival $i$ of type $v$, match it to the available offline vertex $u$ with the largest weight $w_{uv}$.
    \item \SamOne: Algorithm~\ref{alg1} with $\alpha=\max\{e^{e^{-h}}-1, 0\}$, $\beta=1-h$ and $\gamma =1$.
    \item \SamTwo: Algorithm~\ref{alg1} with $\alpha=\max\{e^{e^{-h}}-1, 0\}$, $\beta=1$ and $\gamma = 1$.
\end{itemize}
According to Theorem~\ref{thm.onlinematching}, the optimal $\alpha$ and $\beta$ depends on the values of $h$ and $N$. 
However, $N$ is not known by us in practice.  We then choose a universal value of $\alpha$ and $\beta$ to run our algorithm.
Specifically, we set $\alpha=\max\{e^{e^{-h}}-1, 0\}$ as suggested by \cite{zhang2022learn}, since this $\alpha$ maximizes the ratio in Theorem~\ref{thm.onlinematching} when there is no LP phase. We choose two values for $\beta$: $\beta=1-h$ and $\beta = 1$. These two values are chosen because  according to Theorem~\ref{thm.onlinematching}, the analyses for $\beta\le 1-h$ and $\beta>1-h$ are different, which implies $\beta=1-h$ is a critical value. $\beta=1$ is considered since it refers to a special case where only LP phase is used in the exploitation period.


We test over $T=1000, 2000, \cdots, 5000$ and $h=0, 0.1, 0.2,\cdots, 0.9$. For each $h$ and $T$, we generate an arrival sequence of a length of $T$ and a history sequence of a length of $h\cdot T$ according to the arrival rates. We then run algorithms over the sequence to get the corresponding total reward. We repeat the procedure $M=50$ times to get the average reward. To evaluate the competitive ratio, we first calculate the offline optimal reward in the following way. For each arrival sequence, we solve the maximum matching problem to get an optimal matching decision and evaluate its reward. We take average of these optimal rewards to get its offline average reward. We use the ratio between an algorithm's average reward and the offline average reward denoted by the \emph{empirical competitive ratio} as the performance metric. 

\emph{Results and discussion.} We compare the ratio of different algorithms at different $h$ and $T$ and  summarize the results  in Figure~\ref{fig:EverySender} and Figure~\ref{fig:gMission}.

In Figure~\ref{fig:EverySender}, we test different $h$, fixing $T=1000$ and $T=2000$. We can see that when $h$ becomes larger, \SamOne and \SamTwo's performance shows an increasing trend and \Grd's performance keeps almost unchanged. This is consistent with the analysis in Theorem~\ref{thm.onlinematching} that when $h$ goes larger, the performance of our algorithms becomes better. By comparing \SamOne\ and \SamTwo, we find that \SamTwo does not always dominate \SamOne under different parameters ($h$ and $T$). In fact, according to Figure~\ref{fig:N2000}, when $T$ is large, \SamOne consistently outperforms \SamTwo when $h$ is large. It implies that adding the maximum matching phase indeed helps improve the algorithm's performance in many instances.

In Figure~\ref{fig:gMission}, we test different $T$ when fixing $h=0.3, 0.5$. 
We find that when $T$ is larger, our algorithm's performance becomes better in general. This is again consistent with our findings in Theorem~\ref{thm.onlinematching}. We also find the gap between $T=5000$ and $T=1000$ becomes smaller when $h$ increase. This is because if we have many historical data, we do not need a large planning horizon to achieve a good performance. 

In summary, increasing the historical data size (a larger $h$) or increasing the planning horizon (a larger $T$) helps improve the performance of our heuristic algorithms even when $N$ is small. Our heuristic algorithms can outperform greedy algorithm when $h$ and $T$ is large. 

\begin{figure*}[h!]
	\centering
	\begin{subfigure}[t]{0.45\textwidth}
		\centering
		\includegraphics[width=\textwidth]{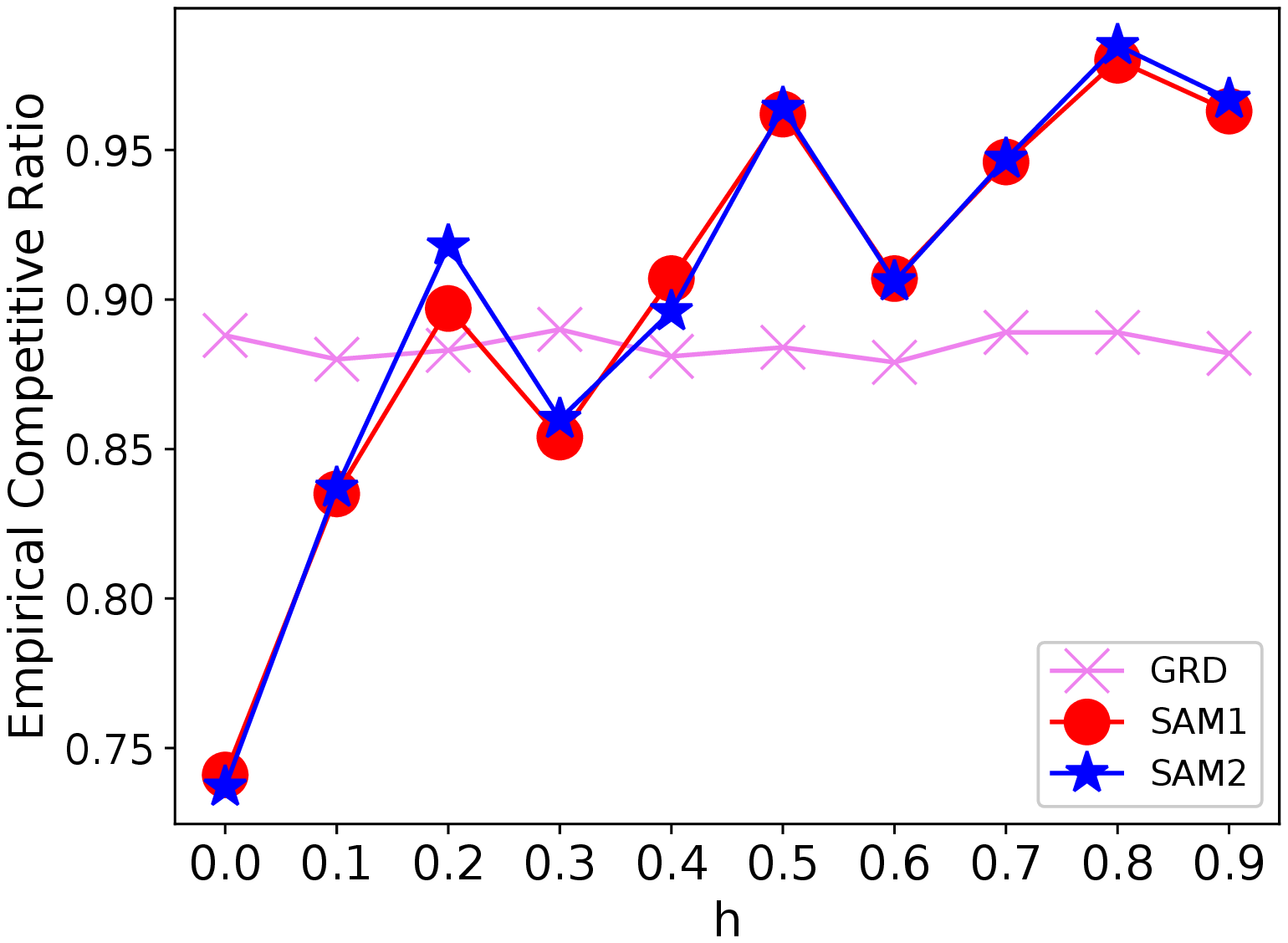}
		\caption{$T=1000$}
		\label{fig:N1000}
	\end{subfigure}
	\begin{subfigure}[t]{0.45\textwidth}
		\centering
		\includegraphics[width=\textwidth]{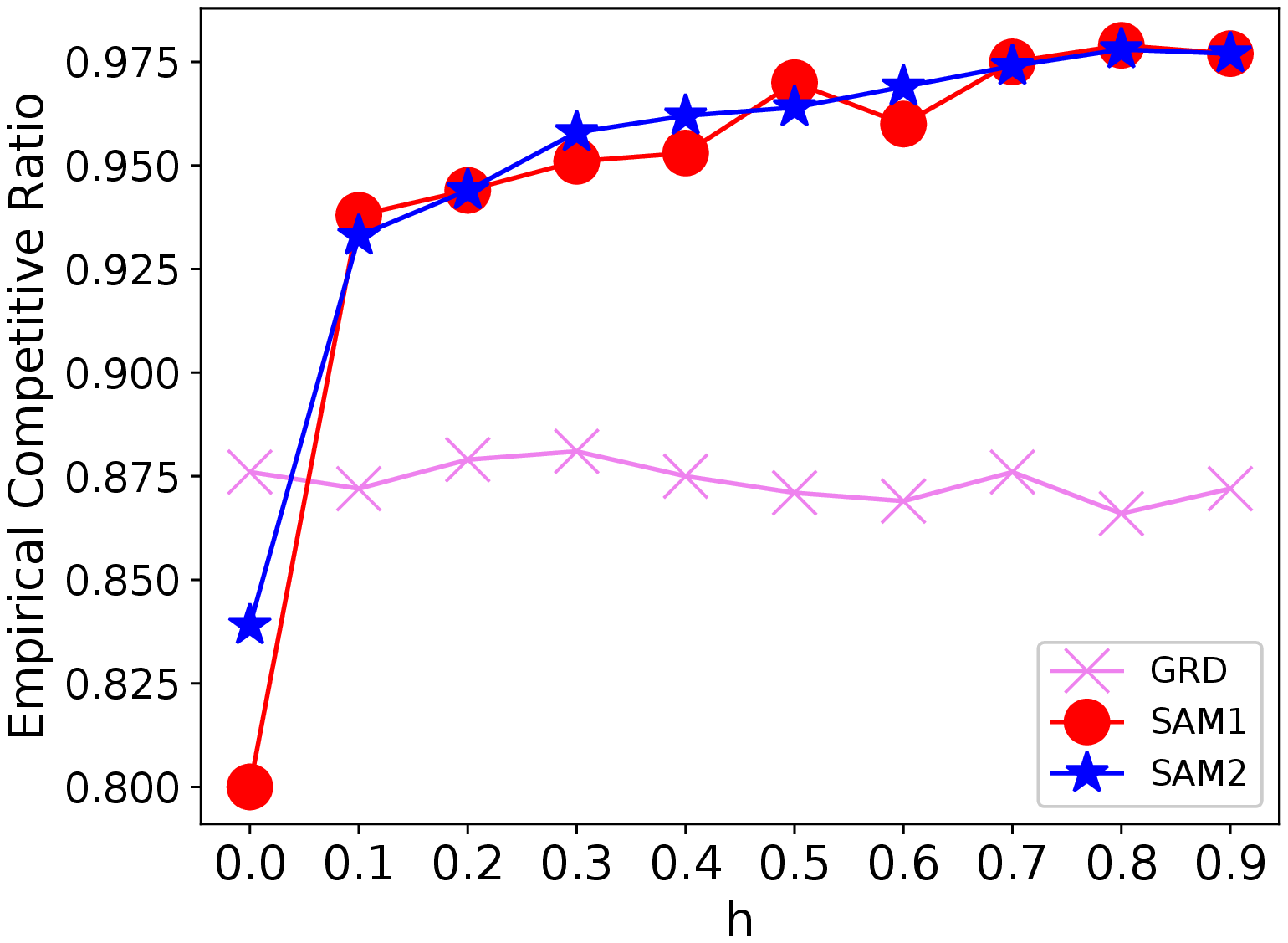}
		\caption{$T=2000$}
		\label{fig:N2000}
	\end{subfigure}
	\caption{Different $h$}
	\label{fig:EverySender}
\end{figure*}

\begin{figure*}[h!]
	\centering
	\begin{subfigure}[t]{0.45\textwidth}
		\centering
		\includegraphics[width=\textwidth]{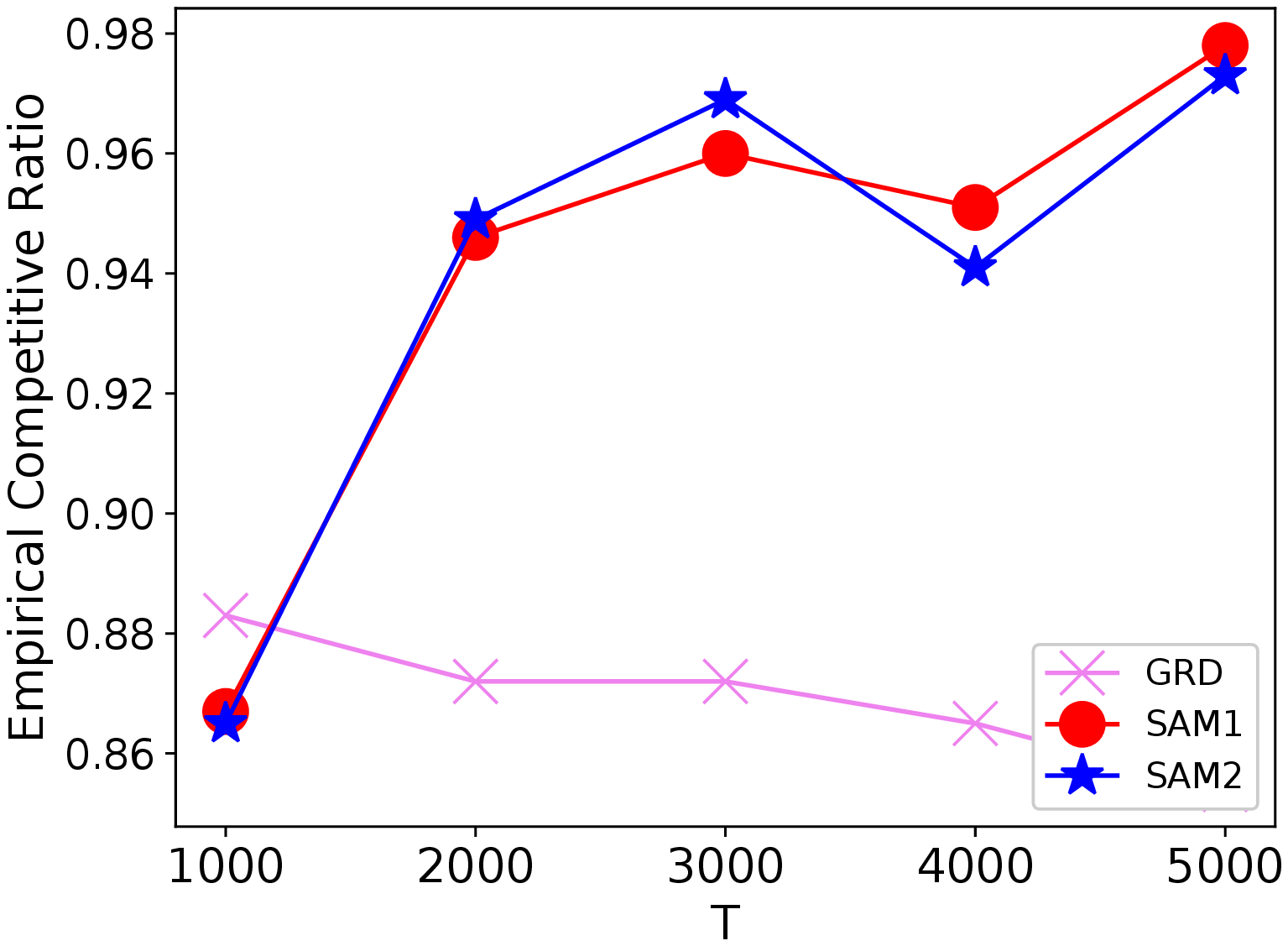}
		\caption{$h=0.3$}
		\label{fig:h24}
	\end{subfigure}
	\begin{subfigure}[t]{0.45\textwidth}
		\centering
		\includegraphics[width=\textwidth]{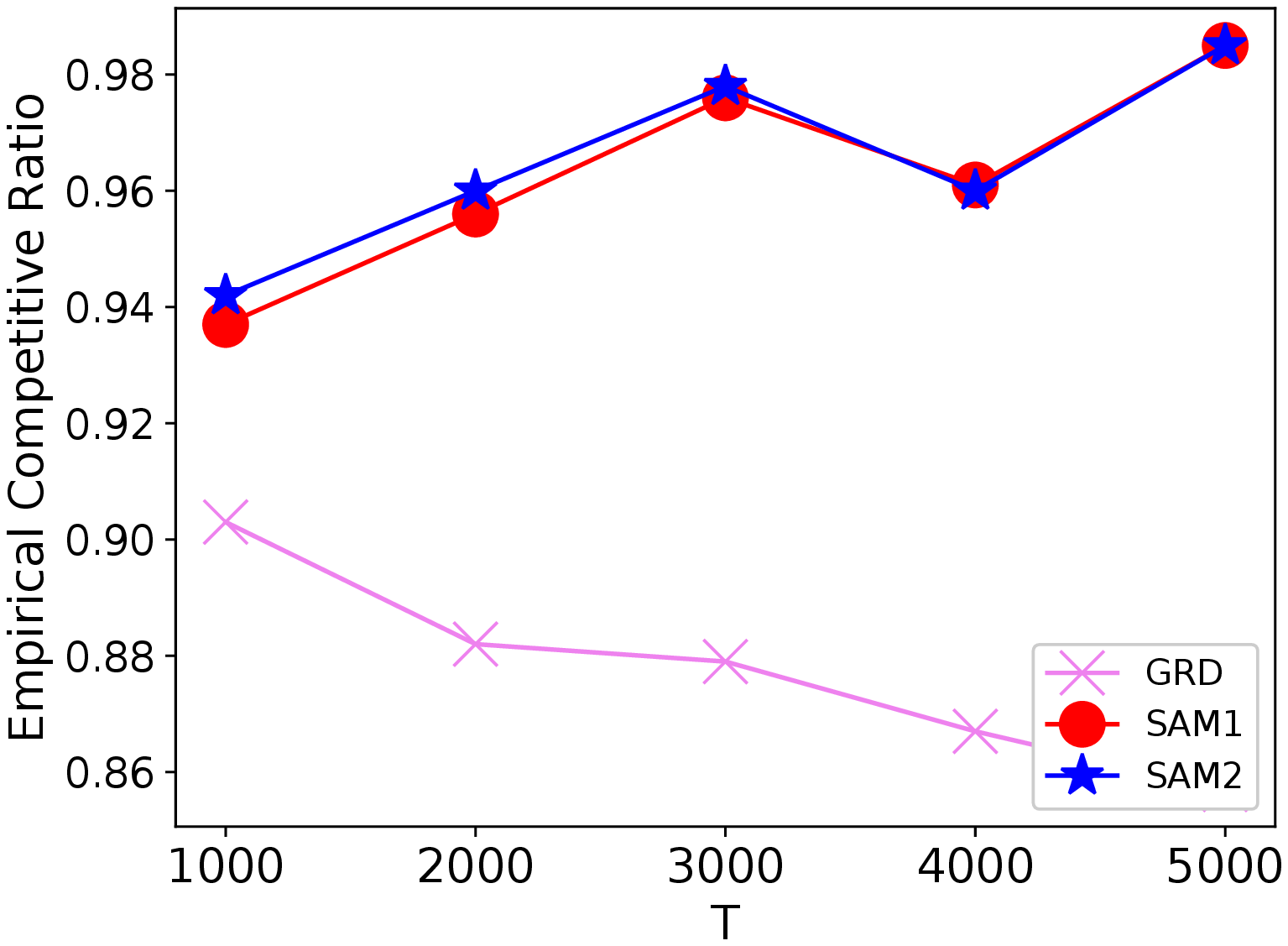}
		\caption{$h=0.5$}
		\label{fig:h68}
	\end{subfigure}
	\caption{Different $T$}
	\label{fig:gMission}
\end{figure*}

\subsection{Online Multidimensional GAP}
\emph{Dataset and preprocessing.} For online multidimensional GAP, we use a synthetic dataset to evaluate our algorithms. Let $\mathcal{U} [a, b]$ denote a uniform distribution on the interval $[a, b]$.
A problem instance is generated as follows. First we define $U=[m]$ and $V=[n]$. For each $u\in U$ and $v\in V$, we define an edge $(u, v)$ with weight $w_{uv}$ which is generated from a 
uniform distribution $\mathcal{U} [0, 1]$. For each edge $(u, v)$, the $d$-th demand $r_{uv}^d$ is generated from a uniform distribution $\mathcal{U} [0, 1]$. We initialize the $d$-th capacity as $c \ge 1$ for each bin $u$. For the arrival rates, we generate a value $l_{v}$ according to a uniform distribution $\mathcal{U} [0, 1]$, then normalize the value as the arrival rates for each $v$, i.e., $\lambda_v = \frac{l_v}{\sum_v l_v}$. After the normalization, the expected number of total arrivals is $T$.

\emph{Algorithms.}
We test three heuristic algorithms based on Algorithm~\ref{alg2} and a greedy algorithm.
\begin{itemize}
	\item \Grd: The greedy algorithm. For each arrival $i$ of type $v$, match it to the available offline vertex $u$ with the largest edge weight $w_{uv}$. Here ``available'' means $u$ has enough capacity for edge $(u, v)$.
	\item \SamMax: The Algorithm~\ref{alg2} has sampling phase and max phases, the parameters are chosen according to Corollaries~\ref{cor.nolp1},~\ref{cor.nolp2} and~\ref{cor.nolp3}.
	\item \SamLP: The Algorithm~\ref{alg2} has LP phases and sampling phase. $\alpha=0.05$, $\beta=\eta$ where $\eta$ is the solution of $e^{D\eta}=\frac{5-\eta}{4}$. $\theta=1$, $\gamma=\gamma'=1$.
	\item \SamMix: The Algorithm~\ref{alg2} has sampling phase, heavy LP phases, heavy LP phases and light LP phases. $\alpha=0.05$, $\beta=\alpha+\frac{\eta-\alpha}{2}$ where $\eta$ is the solution of $e^{D\eta}=\frac{5-\eta}{4}$. $\theta=\eta+\frac{1-\eta}{2}$, $\gamma=\gamma'=1$.
\end{itemize}
Now we explain why we choose these heuristic algorithms. For \SamMax, we consider there are no LP phases (see Section~\ref{sect.nolp}), and the parameters are advised by Corollaries~\ref{cor.nolp1},~\ref{cor.nolp2} and~\ref{cor.nolp3}. For \SamLP, as we do not know the exact value of $N$, we simply choose a small value of $\alpha=0.05$. $\beta$, $\eta$, $\gamma$ are chosen according to the Proposition~\ref{cor.nomaxphase.GAP}. For $\gamma'$, we choose $\gamma'=1$ because the value $\gamma=\frac{1}{2D}$ suggested by Proposition~\ref{cor.nomaxphase.GAP} is too conservative. To be specific, because there is no other phase after light LP phase, we can choose a large value of $\gamma'$ in practice. For \SamMix, the values of $\alpha$, $\eta$, $\gamma$ and $\gamma'$ come from \SamLP, and we divide the heavy (or light) LP phase of \SamLP into heavy (or light) LP phase and heavy (or light) max phase with equal size.

We use the parameters $D=1, 2$, $m=n=10$ and $c=1$ to generate our randomized graph $G$, and we consider $M_G=10$ graphs for each $D$. For each graph, we test over $T=50, 100, \cdots, 500$ and $h=0, 0.1, 0.2, \cdots, 0.9$. Then we test the empirical competitive ratio following the same procedure as the online matching problem. We test over smaller $T$s than those in the online matching problem because (1) the offline optimal matching problem is NP-hard and time-consuming; and (2) these smaller $T$s are good enough for our sampling algorithms. For different graphs, we not only give the average empirical competitive ratio for all graphs, but also show the standard errors of the empirical competitive ratios by error bars to measure the robustness of different algorithms.

\begin{figure*}[h!]
	\centering
	\begin{subfigure}[t]{0.45\textwidth}
		\centering
		\includegraphics[width=\textwidth]{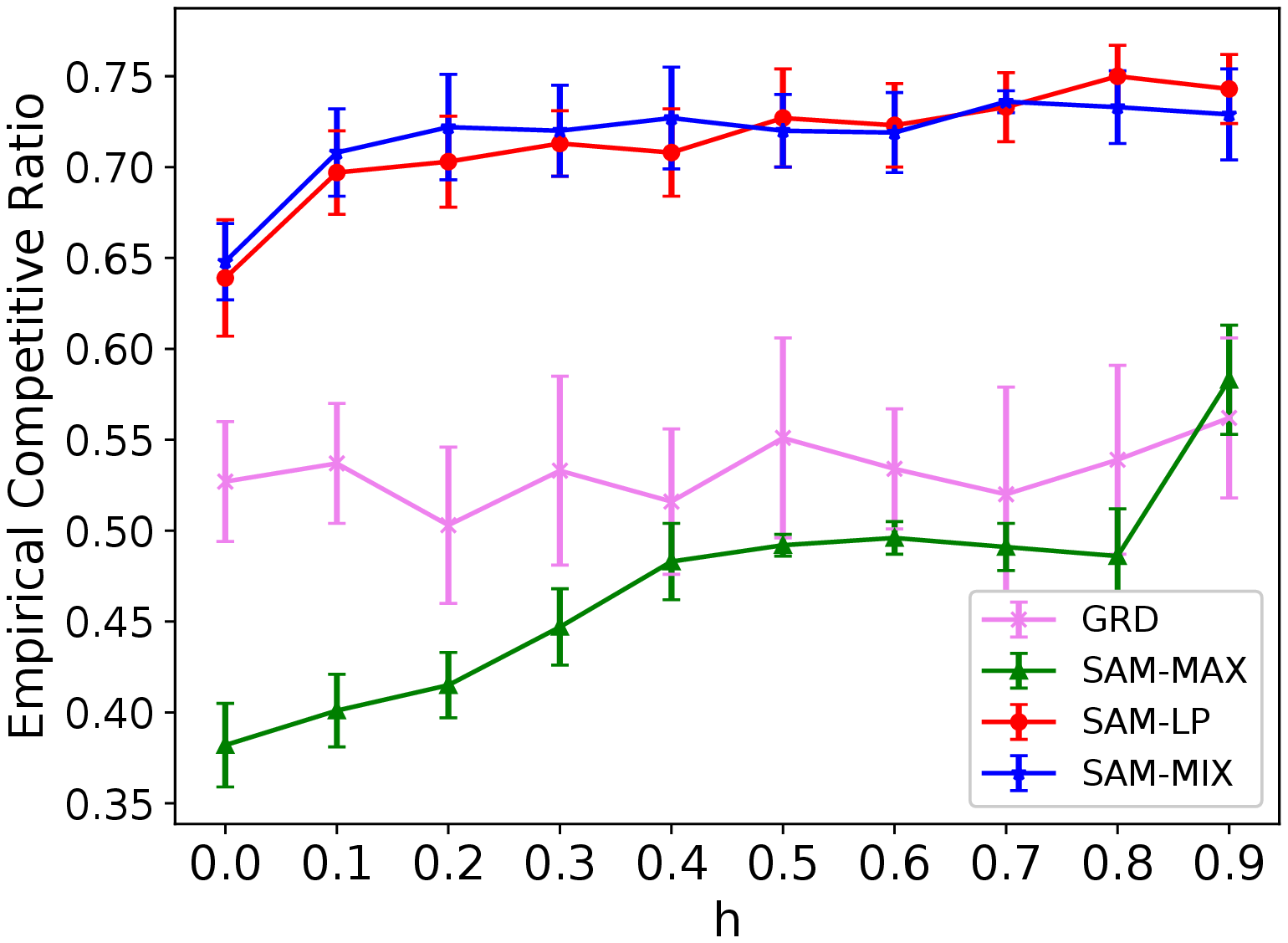}
		\caption{$D=1, T=100$}
		\label{fig:diffhD1T100}
	\end{subfigure}
	\begin{subfigure}[t]{0.45\textwidth}
		\centering
		\includegraphics[width=\textwidth]{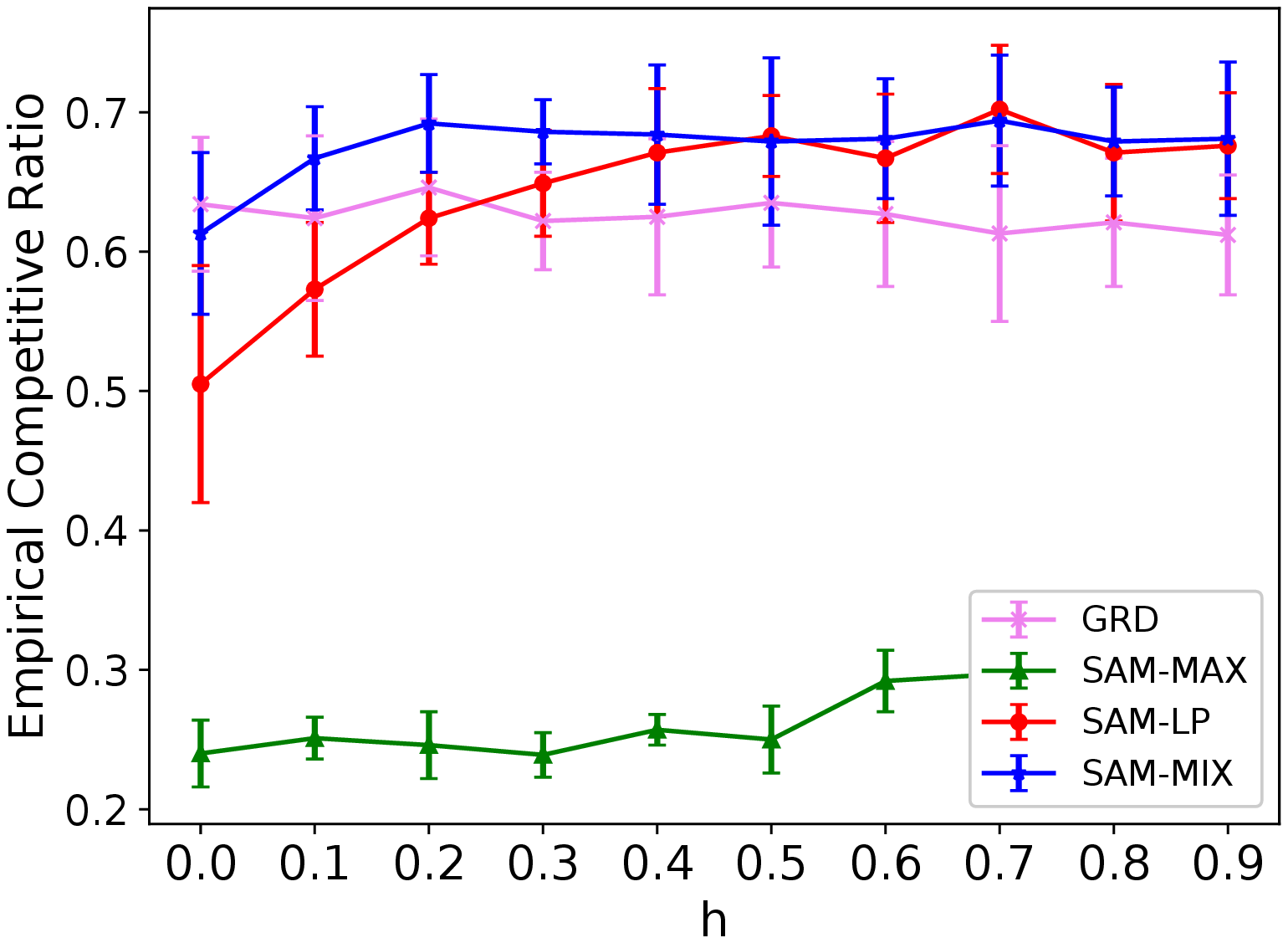}
		\caption{$D=2, T=100$}
		\label{fig:diffhD2T100}
	\end{subfigure}
	\begin{subfigure}[t]{0.45\textwidth}
		\centering
		\includegraphics[width=\textwidth]{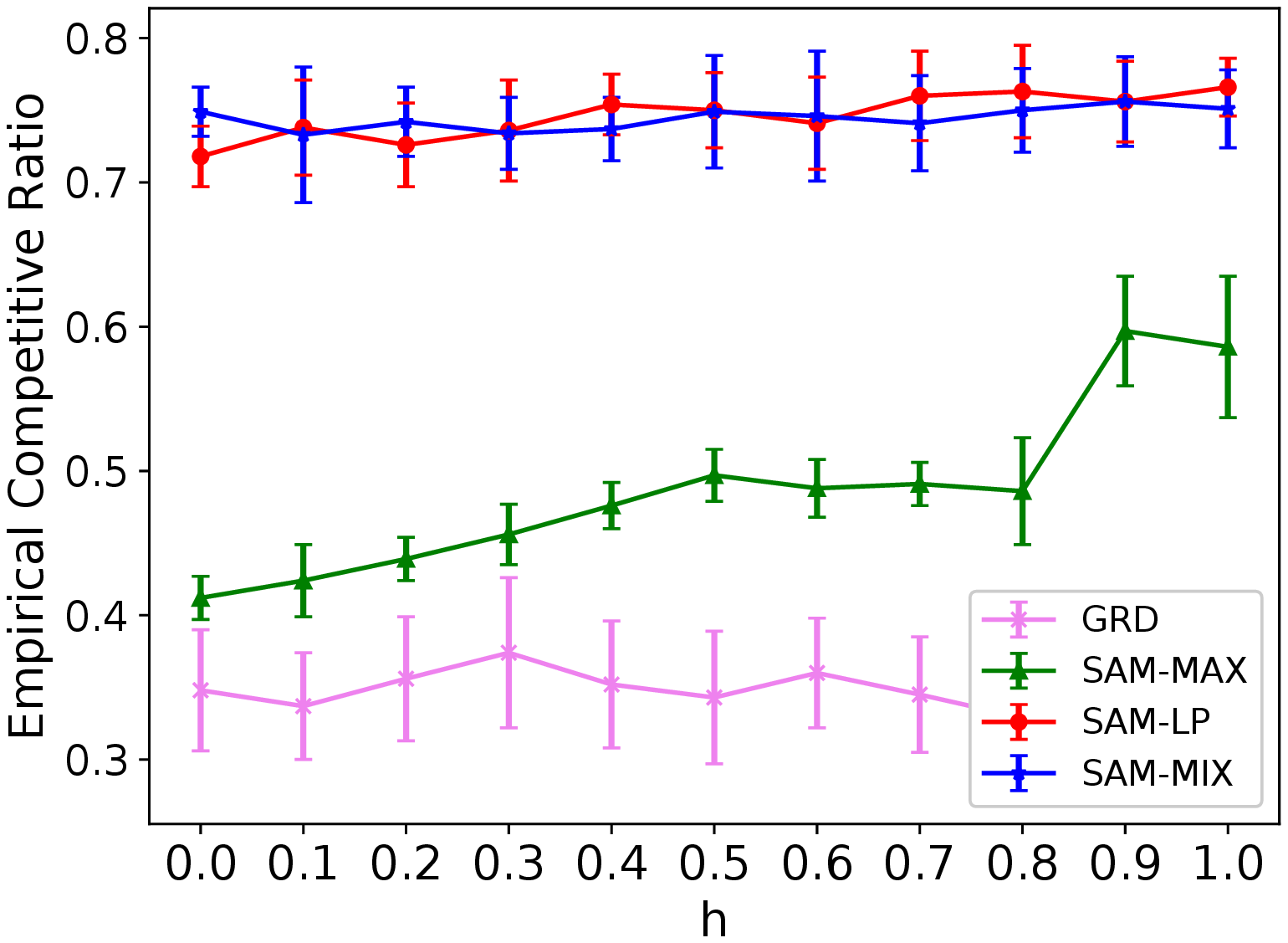}
		\caption{$D=1, T=250$}
		\label{fig:diffhD1T250}
	\end{subfigure}
	\begin{subfigure}[t]{0.45\textwidth}
		\centering
		\includegraphics[width=\textwidth]{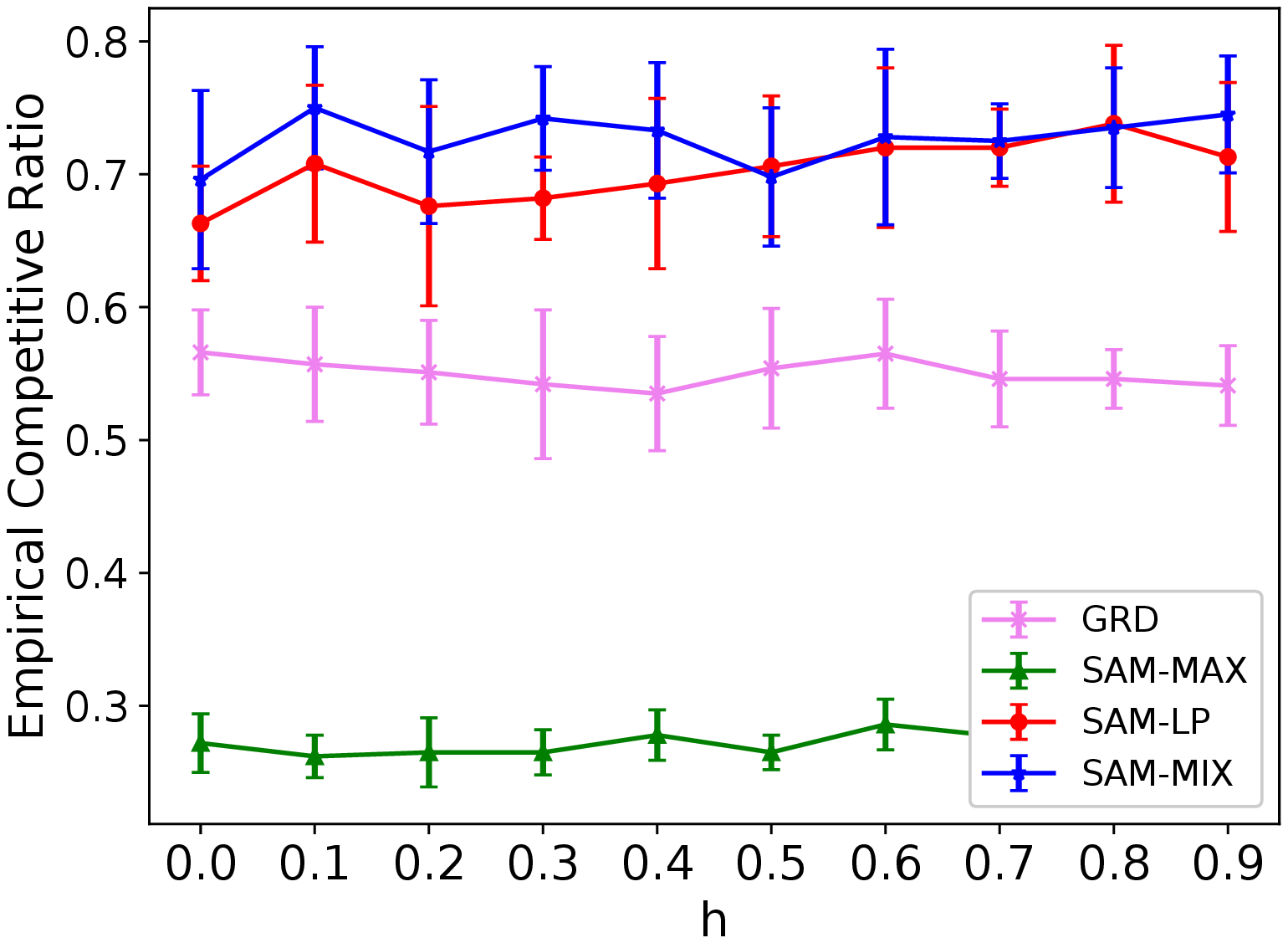}
		\caption{$D=2, T=250$}
		\label{fig:diffhD2T250}
	\end{subfigure}
	\caption{Different $h$}
	\label{fig:gap.diffh}
\end{figure*}

\begin{figure*}[h!]
	\centering
	\begin{subfigure}[t]{0.45\textwidth}
		\centering
		\includegraphics[width=\textwidth]{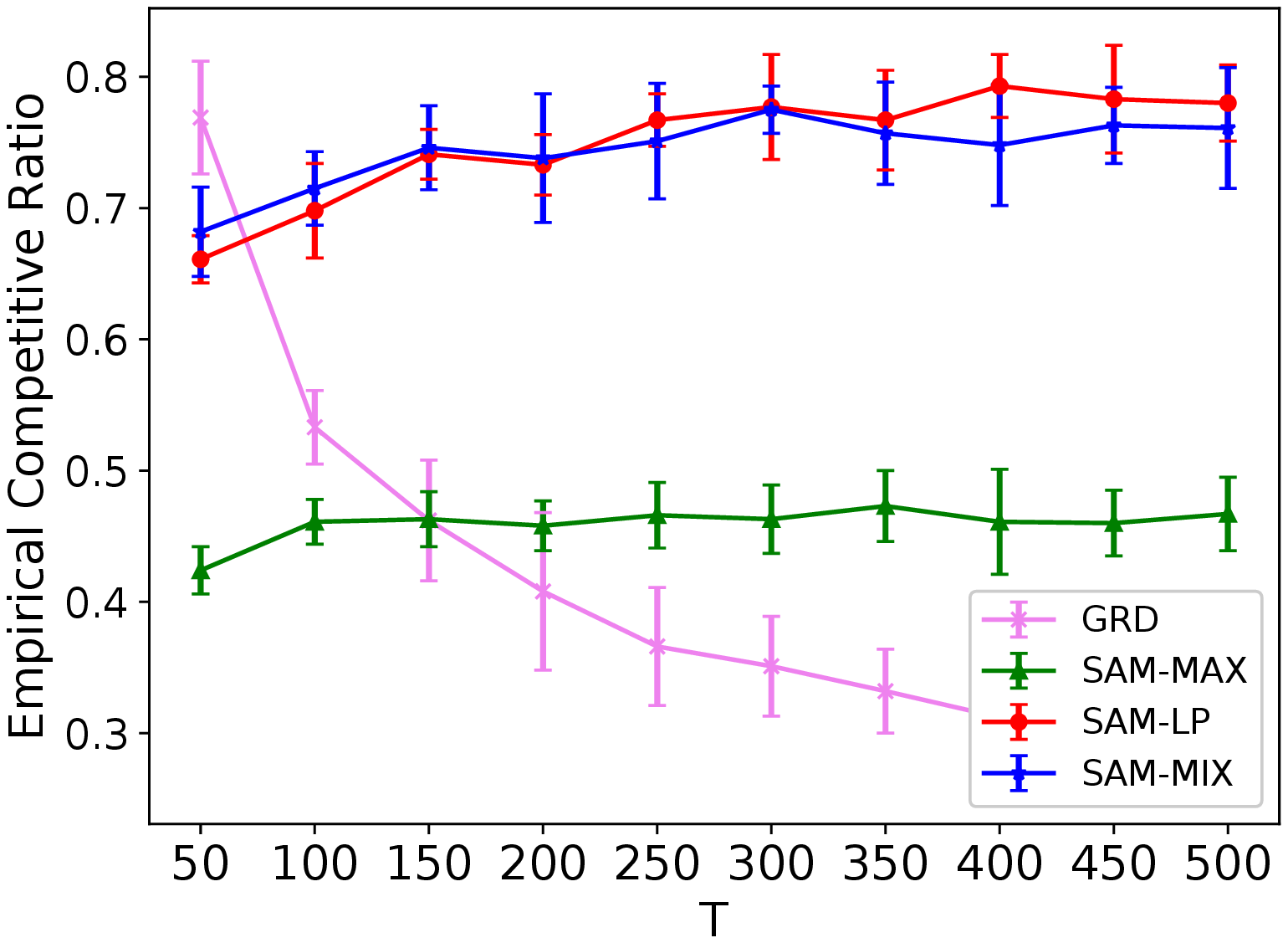}
		\caption{$D=1, h=0.3$}
		\label{fig:diffTD1h03}
	\end{subfigure}
	\begin{subfigure}[t]{0.45\textwidth}
		\centering
		\includegraphics[width=\textwidth]{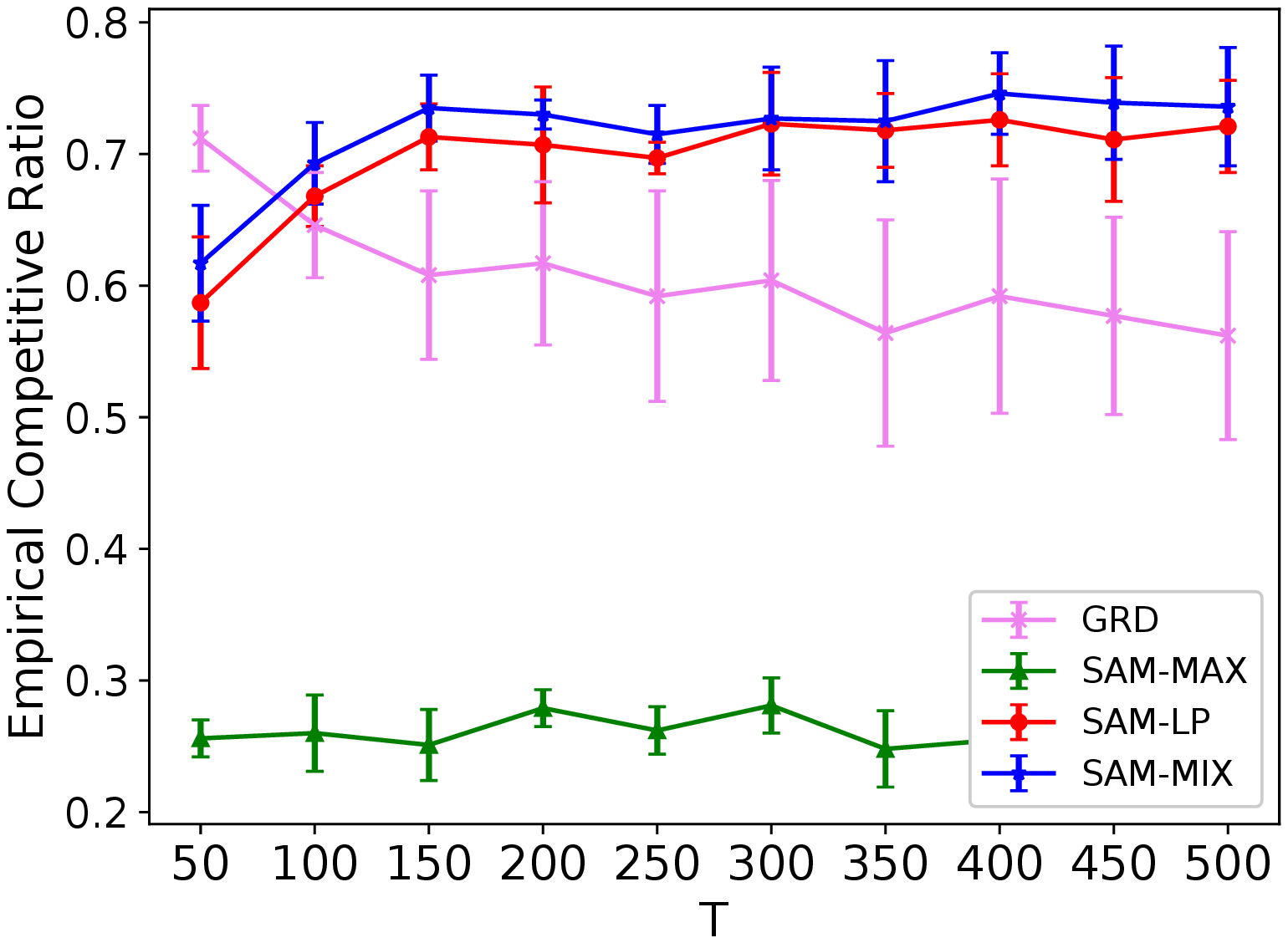}
		\caption{$D=2, h=0.3$}
		\label{fig:diffTD2h03}
	\end{subfigure}
	\begin{subfigure}[t]{0.45\textwidth}
		\centering
		\includegraphics[width=\textwidth]{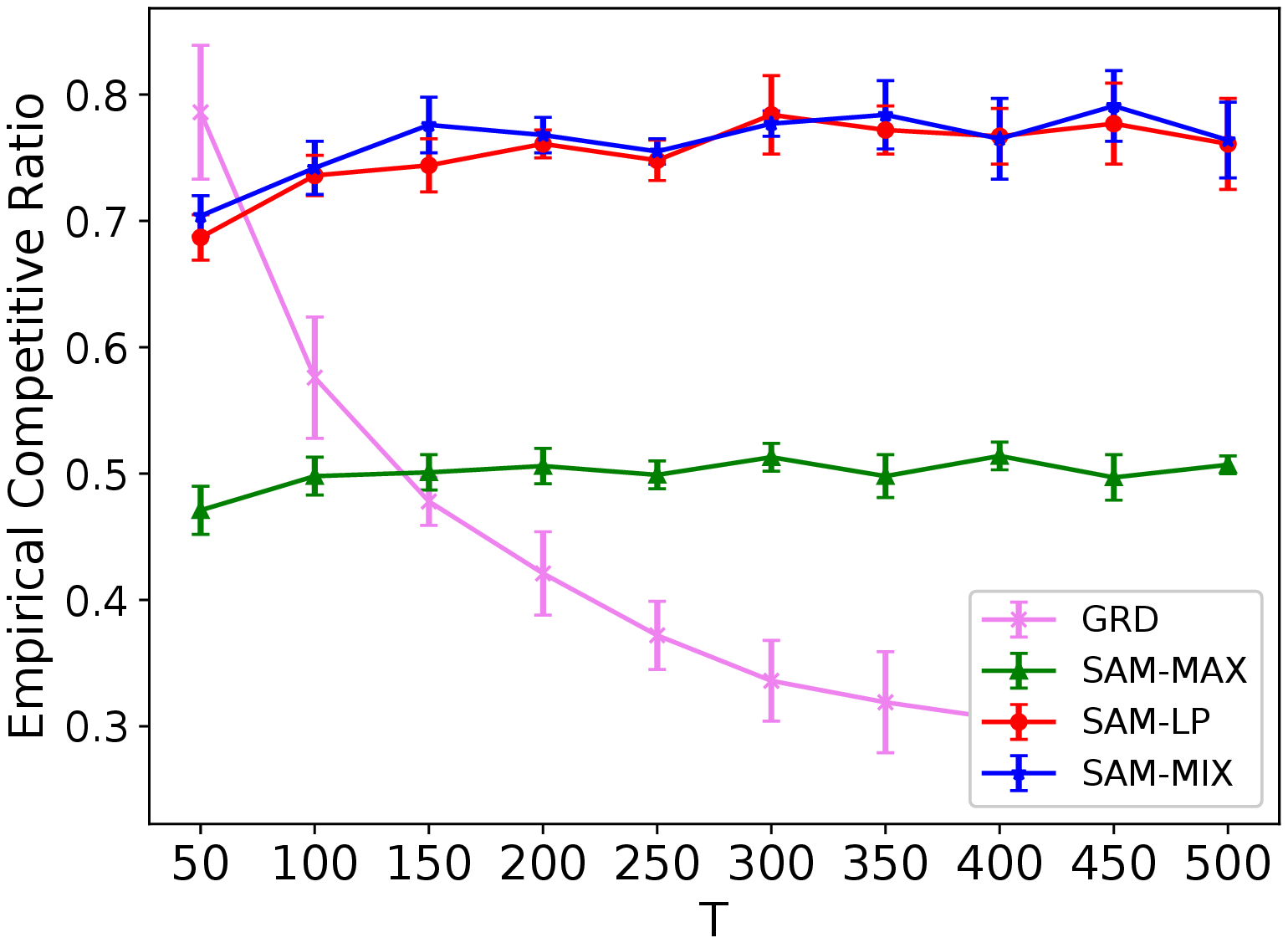}
		\caption{$D=1, h=0.5$}
		\label{fig:diffTD1h05}
	\end{subfigure}
	\begin{subfigure}[t]{0.45\textwidth}
		\centering
		\includegraphics[width=\textwidth]{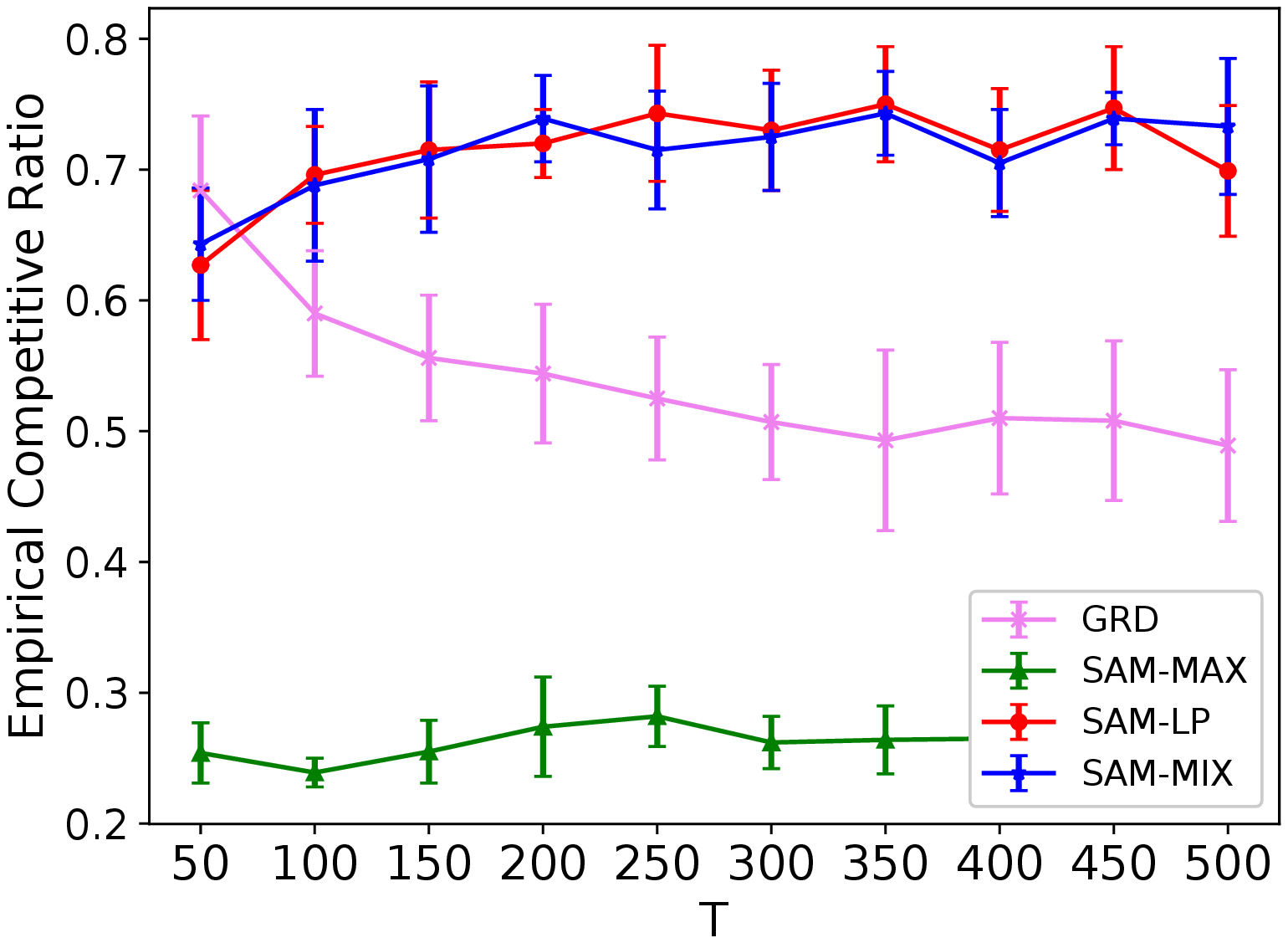}
		\caption{$D=2, h=0.5$}
		\label{fig:diffTD2h05}
	\end{subfigure}
	\caption{Different $T$}
	\label{fig:gap.diffT}
\end{figure*}

\emph{Results and discussion.} We summarize the results in Figure~\ref{fig:gap.diffh} and Figure~\ref{fig:gap.diffT}. Each error bar shows the $\text{average empirical competitive ratio}\pm \text{standard error}$.

In Figure~\ref{fig:gap.diffh}, we test different $h$, fixing $T=100, 250$ and $D=1,2$. Generally speaking, \SamMix has the best performance, i.e., the largest ratio in all cases except $h=0$, $T=100$ and $D=2$. \SamLP has similar performance with \SamMix when $h$ is larger than 0.5. This means in practice, we do not need a very large sample size to achieve good performance as we discussed in Section~\ref{sect.onlinemultipacking}. Then the following analysis will focus on \SamMix and \SamLP. When $h$ is small, the performance of our heuristics is bad, but the performance will increase as $h$ becomes larger. The increasing speed is decreasing when $h$ become larger because when we already have enough data, the marginal utility of $h$ will decrease. When $h$ is larger than 0.5, the performance of \SamMix and \SamLP is stable.
When $h<0.5$ and $D=2$, \SamMix is worse than \SamLP, and the gap is larger when $T$ is smaller. We explain this by discussing the influence of $h$, $T$ and $D$. Recall the definition of \SamMix and \SamLP, \SamLP only consider LP phases, and \SamMix divides the LP phases into LP phases and max phases equally. When comparing the second half part of the heavy LP phase of \SamLP (or the heavy max phase of \SamMix), the information used by these phases are different: \SamLP uses the information from original historical information and the sampling phase, and \SamMix considers the heavy LP phase in addition. When $h$ is small or $T$ is small, the extra part considered by \SamMix is relatively large because the original historical information is small. Then the advantage of is \SamMix large. Similar analysis holds for light phase, but the difference of \SamMix and \SamLP during light phase is small because the extra part considered by \SamMix is relatively small.
Then we discuss the effect of $D$. When $D$ is large, the heavy edges become more and the light edges become less according to the generation process (for each dimension, the capacity is 1 and the demand is uniform from 0 to 1). Let $\delta_{D}^H$ and $\delta_{D}^L$ denote the difference between \SamMix and \SamLP. Larger heavy edges means the extra gain from heavy phase is larger, i.e. $\delta_{1}^H<\delta_2^H$. Similarly, we have $\delta_1^L>\delta_2^L$. As we have discussed before, the difference during light phase is relatively small, which means the upside during heavy phase is larger than the downside during the light phase, i.e. $\delta_2^H-\delta_1^H>\delta_1^L-\delta_2^L$, then the advantage of \SamMix is larger when $D=2$. 


In Figure~\ref{fig:gap.diffT}, we test different $T$ when fixing $h=0.3, 0.5$ and $D=1, 2$. In general, we find out that \SamLP and \SamMix can outperform other two baselines except when $T=50$. This is because when $T=50$, the number of samples are too small (the type of online vertices are $n=10$), the ratios of our algorithms should be small. When $T$ becomes larger, the performance of \SamMix and \SamLP is increasing because we have more samples and the increasing speed is decreasing. This can be explained by the similar analysis of the effect of $h$. After $T$ is large enough ($T\ge 250$ when $h=0.5$), the performance of our heuristics becomes stable. 

Besides the analysis of ratios, we can also see the trends of standard errors. When $D=1$, the standard errors of \SamMix and \SamLP are similar with the error of \Grd, while the ratios of our algorithms are much larger than \Grd ($T>50$) which means our algorithms are robust when $D=1$. When $D=2$, the standard errors of \SamMix and \SamLP become larger. This is because the ``variance'' between each randomized graph becomes larger, then the variance of ratio becomes larger. 

To summarize, \SamMix has the best performance among all tested algorithms and \SamLP can achieve similar performance as \SamMix when $h>0.5$ and $T>250$. When $h$ and $T$ is increasing, the performance of \SamMix and \SamLP becomes better. The marginal utility of $h$ and $T$ is decreasing when $h$ and $T$ are large. The performance becomes stable when $h$ and $T$ are larger than some thresholds, i.e., if we already have ``enough data'' (large historical data $h$ and planning horizon $T$), we do not need more data. In practice, the thresholds of $h$ and $T$ are small comparing with the theoretical analysis. We can use \SamMix for all non-trivial cases ($h>0$ and $T>50$), and if we have larger $h$ and $T$, we can also use \SamLP which only needs to use LP phases.

\section{Conclusions}
We study the online multidimensional GAP problem in this paper. We initiate our study from a special case that corresponds to an online bipartite matching. We provide a sample-based multi-phase algorithm and present its performance guarantee in terms of the historical data size and the minimal number of arrivals for each online item type.
We then generalize the algorithm to the general online multidimensional GAP and also provide a parametric performance guarantee. From the parametric form of the competitive ratio, we analyze the effect of historical data size, the Poisson arrival model, and the dimension of capacity (demand) on the algorithm's performance. Finally,
we test our algorithms for online matching and online multidimensional GAP problem.

%
%
%






\bibliographystyle{informs2014.bst}
\bibliography{refs}

\begin{APPENDIX}{Omitted Proofs}
    \lemnum*
\begin{proof}
    The number of arrivals $n$ follows a Poisson distribution with parameter $N$. According to Fact 4 from \cite{poitail}, we can directly prove this lemma.
 \end{proof}

 \lemrate*
 \begin{proof}
    According to the property of Poisson arrival process, the differences between consecutive two $t_i$s follow an exponential distribution. Chapter 2 of \cite{wainwright2019high} gives us the concentration bounds of an exponential distribution random variable.
 \end{proof}

 \lemlbone*
 \begin{proof}
    From Lemma~\ref{lem.ub1}, the optimal value of $LP(\bm \lambda, T)$ is an upper bound of the offline optimal $\opt$ in the original problem. It suffices to show the optimal value of $LP(\hat{\bm \lambda}, T')$ is lower bounded by the product of $\frac{1-\alpha}{1+\Delta}$ and the optimal value of $LP(\bm \lambda, T)$.
 
    We assume the optimal solution of $LP(\bm \lambda, T)$ is $\{x_{uv}\}$. We next show $\{\frac{1-\alpha}{1+\Delta}x_{uv}\}$ is a feasible solution of $LP(\hat{\bm \lambda}, T')$.
 
    For Constraint~\eqref{lp2a}, for each $v\in V$, from the feasibility of $\{x_{uv}\}$ in $LP(\bm \lambda, T)$, we have $\sum_{u\in U}x_{uv}\le \lambda_v\cdot T$. This means $\sum_{u\in U}\frac{1-\alpha}{1+\Delta} x_{uv}\le \frac{1-\alpha}{1+\Delta}\lambda_v\cdot T=\frac{\lambda_v}{1+\Delta}\cdot (1-\alpha)T$. This is upper bounded by $\hat{\lambda}_v\cdot T'$, which corresponds to Constraint~\eqref{lp2a} in $LP(\hat{\bm \lambda}, T')$. 
    For Constraints~\eqref{lp2b} and ~\eqref{lp2c}, since $0\leq  \frac{1-\alpha}{1+\Delta}\leq 1$, we can directly induce the feasibility of these two constraints in  $LP(\hat{\bm \lambda}, T')$.
 
    Because $\{\frac{1-\alpha}{1+\Delta}x_{uv}\}$ is a feasible solution of $LP(\hat{\bm \lambda}, T')$ with the specified probability, the optimal value of $LP(\hat{\bm \lambda}, T')$ is weakly larger than the corresponding value of the feasible solution $\{\frac{1-\alpha}{1+\Delta}x_{uv}\}$, which is the product of $\frac{1-\alpha}{1+\Delta}$ and the optimal value of $LP(\bm \lambda, T)$.
 \end{proof}

 \lemunmatchedprob*
 \begin{proof}
    The matching event between $u$ and one type $v\in V$ before time $t$ follows a Poisson distribution with a parameter $\lambda_v \frac{\gamma \hat{x}_{uv}}{\hat{\lambda}_vT'}(t-\alpha T)$, where the first term corresponds to the arrival rate, the second term corresponds to the matching probability and the third term corresponds to the time horizon.
 
    From the independency of different online types in $V$, we have:
    \begin{align*}
       \textrm{Pr}[E]&=e^{-\sum_{v\in V} \lambda_v \frac{\gamma \hat{x}_{uv}}{\hat{\lambda}_vT'}(t-\alpha T)}\\
       &\geq e^{-\gamma(1+\Delta)\frac{t-\alpha T}{(1-\alpha)T}\sum_{v\in V} \hat{x}_{uv}} \\
       &\geq e^{-\gamma(1+\Delta)\frac{t-\alpha T}{(1-\alpha)T}}
    \end{align*}
    The first inequality is from $\lambda_v\le (1+\Delta)\cdot \hat \lambda_v$ and the second inequality is from Constraint~\eqref{lp2b}.
 \end{proof}

\lemlpphase*
\begin{proof}
    For each pair between an offline vertex $u\in U$ and an online type $v\in V$, the expected total reward contributed by the matching edge between $u$ and $v$ is weakly larger than $w_{uv}\int_{\alpha T}^{\beta T} e^{-\gamma(1+\Delta)\frac{t-\alpha T}{(1-\alpha)T}}\cdot \gamma\frac{\hat{x}_{uv}}{\hat\lambda_v T'} \cdot \lambda_v dt$, where inside the integration, the first term corresponds to the unmatched event before $t$, the second corresponds to the matching probability and the third is the arrival rate. The such term can be lower bound by
    \begin{align*}
       &\geq w_{uv}\hat{x}_{uv} \int_{0}^{(\beta-\alpha)T}(1-\Delta)\frac{\gamma}{(1-\alpha)T}  e^{-\gamma(1+\Delta)\frac{t}{(1-\alpha)T}}dt \\
       &=w_{uv}\hat{x}_{uv}\cdot (1-\Delta)\frac{\gamma}{(1-\alpha)T} \cdot \frac{(1-\alpha)T}{\gamma(1+\Delta)} \cdot e^{-\gamma(1+\Delta)\frac{t}{(1-\alpha)T}}|_{(\beta-\alpha)T}^0 \\
       &\geq w_{uv}\hat{x}_{uv}\cdot (1-2\Delta)(1-e^{-\gamma(1+\Delta)\frac{\beta-\alpha}{1-\alpha}})
    \end{align*}
    The first inequality is from $(1-\Delta)\cdot \hat \lambda_v \le \lambda_v$, and the third is from $\frac{1-\Delta}{1+\Delta}>1-2\Delta$.
 
    Considering all pairs between $u\in U$ and $v\in V$ and applying Lemma~\ref{lem.lb1}, the expected reward during the LP phase is weakly larger than $\frac{1-2\Delta}{1+\Delta}(1-\alpha)(1-e^{-\gamma(1+\Delta)\frac{\beta-\alpha}{1-\alpha}})\opt\geq (1-3\Delta)(1-\alpha)(1-e^{-\gamma(1+\Delta)\frac{\beta-\alpha}{1-\alpha}})\opt$.
 \end{proof}

 \zihao{\lemlbofweight*
 \begin{proof}
    We denote the size of the set $V'$ (Step~\ref{algdefvp}) by $k$ and the total weight of edges in $M'$ by $w(M')$. 
    Since each vertex in $V'$ arrives in the system in the same way, by symmetry, if we fix the size $|V'|$ of the set $V'$ as $k'$, we have $\E[w_{\ell}|~|V'|=k']=\frac{\E[w(M')|~|V'|=k']}{k'}$.\\
    We then assume the time horizon in $V'$ is $T'$.
    If $T'<T$ corresponding to the case $t<(1-h)T$ in Step~\ref{algdefvp}, we denote the set of the arriving vertices in the following time horizon of $T-T'$ by $V''$.
    For each realization of $V''$, if $M''$ is the optimal matching of $G''=(U,V'\cup V'',E)$, $\frac{\E[w(M')|~|V'|=k']}{k'}\geq \frac{\E[w(M'')|~|V'|=k',V'']}{k'+|V''|}$, from the maximum weight matching property. We then assume the size of $V''$ is $k''$ and take expectations over all $V''$ with the same size, we have $\frac{\E[w(M')|~|V'|=k']}{k'}\geq \frac{\E[w(M'')|~|V'|=k',|V''|=k'']}{k'+k''}$. If we take the expectations over all possible $k'$ and $k''$, we denote the set of arriving vertices in the time interval $[0,T)$ as $V$, and we have $\E[w_{\ell}]\geq \E[\frac{\E[\opt|~|V|=k]}{k}]$, where the outside expectation in the right-hand side if taken over $k$.
    For the second case where $T'=T$ corresponding to $t\geq(1-h)T$ in Step~\ref{algdefvp}, we can directly get $\E[w_{\ell}|~|V'|=k']=\frac{\E[w(M')|~|V'|=k']}{k'}=\frac{\E[\opt|~|V|=k]}{k}$, because both $V$ and $V'$ is from a time horizon of $T$ and sampled in the same way. We can also get $\E[w_{\ell}]\geq \E[\frac{\E[\opt|~|V|=k]}{k}]$ by expectation over $k$.\\
    If suffices to show $\E[\frac{\E[\opt|~|V|=k]}{k}]\geq \frac{\opt}{T\sum_{v\in V}\lambda_v}$.
    If we treat $k$ as the independent variable $x$ and we use $f(x)$ to represent the corresponding value $\E[\opt|~|V|=x]$, we need to show $\E[\frac{f(x)}{x}]\geq\frac{\E[f(x)]}{\E[x]}$. If we define $p_x$ as the probability of the value $x$, it's equivalent to show $\sum_{x} p_x \frac{f(x)}{x}\geq \frac{\sum_x p_xf(x)}{\sum_x p_xx}$.\\
    By reformulating the terms, it is equivalent to show $(\sum_{x} p_x \frac{f(x)}{x})(\sum_x p_xx)\geq \sum_x p_xf(x)$, i.e., $\sum_{x}p_x^2 f(x)+\sum_{(x,y): x\neq y}p_xp_y\frac{f(x)}{x}y\geq \sum_x p_xf(x)$. By moving the first term in LHS to the right and $1-p_x=\sum_{y\neq x}p_y$, we get $\sum_{(x,y): x\neq y}p_xp_y\frac{f(x)}{x}y\geq \sum_x p_x(\sum_{y\neq x}p_y)f(x)$.\\
    If we place the two terms considering the same pairs of $x$ and $y$ together, we get:
    \begin{align*}
       \sum_{\{x,y\}:x\neq y} p_xp_y(\frac{f(x)}{x}y+\frac{f(y)}{y}x)&\geq \sum_{\{x,y\}:x\neq y} p_xp_y(f(x)+f(y)) \\
       &=\sum_{\{x,y\}:x\neq y} p_xp_y(\frac{f(x)}{x}x+\frac{f(y)}{y}y)
    \end{align*}
    It suffices to show $\frac{f(x)}{x}y+\frac{f(y)}{y}x\geq \frac{f(x)}{x}x+\frac{f(y)}{y}y$ for each pair of $x$ and $y$, and we can backward all previous equivalent transformations to finish our proof.\\
    Since all arrivals follow independent Poisson processes, when the total number of arrivals is fixed, each arrival's type can be sampled from an i.i.d. distribution according to the parameters of the Poisson process.
    According to this property and the maximum weight matching property, for the function $f(x)=\E[\opt|~|V|=x]$ over all positive integer $x$, $\frac{f(x)}{x}$ is decreasing with the increase of $x$.
    By the rearrangement inequality, we can directly get $\frac{f(x)}{x}y+\frac{f(y)}{y}x\geq \frac{f(x)}{x}x+\frac{f(y)}{y}y$.
    \end{proof}}

\zihao{\lemunmatchedone*
\begin{proof}
    We denote the size of $V'$ excluding the type $v$ for the present item $i$ as $k'$. If the total number of arrivals in the time interval $[-hT,\beta T)$ is $w$, we have:
    \begin{align*}
       \textrm{Pr}[E|~w,k']&=e^{-\gamma(1+\Delta)\frac{\beta-\alpha}{1-\alpha}}\prod_{each~arrival~between~w~and~k'}\textrm{Pr}[u~keeps~unmatched~upon~this~arrival] \\
       &\geq e^{-\gamma(1+\Delta)\frac{\beta-\alpha}{1-\alpha}}\prod_{j=w+1}^{k'} (1-\frac{1}{j}) \geq e^{-\gamma(1+\Delta)\frac{\beta-\alpha}{1-\alpha}}\frac{w}{k'}
    \end{align*}
    The first inequality is from the symmetry of all arrivals. \\
    We take expectations over all $w$, we have $\textrm{Pr}[E|k']\geq e^{-\gamma(1+\Delta)\frac{\beta-\alpha}{1-\alpha}}\frac{\E[w|k']}{k'}=e^{-\gamma(1+\Delta)\frac{\beta-\alpha}{1-\alpha}}\frac{(h+\beta)T}{hT+t}$, where the last equation is because the value of $\frac{\E[w|k']}{k'}$ is exactly equal to the ratio between the length of the time horizon according to the symmetry of time. We next take expectations over all $k'$ and finish the proof.
 \end{proof}}

 \zihao{\lemunmatchedonehard*
 \begin{proof}
    We denote the total number of arrivals in the time interval $[-hT,\beta T)$ and the size of $V'$ excluding the type $v$ for the present item $i$ as $w$ and $k'$ respectively.
    We further denote the total number of arrivals in the time interval $[(1-h)T,t)$ as $num$. \\
    Since the event that $u$ is unmatched before time $t$ can be decomposed into three events: $u$ is unmatched before time $\beta T$, $u$ is unmatched during the time $[\beta T,(1-h)T)$ and $u$ keeps unmatched during the time $[(1-h)T,t)$.
    Applying the similar argument in the proof of Lemma~\ref{lem.unmatched1}, the probability of the second event is at least $\frac{w}{k'}$, while the probability of the third event is at least $(1-\frac{1}{k'+1})^{num}$.
    Thus, taking expectations over $w$, we have $\textrm{Pr}[E|k',num]\geq e^{-\gamma(1+\Delta)\frac{\beta-\alpha}{1-\alpha}}(h+\beta)(1-\frac{1}{k'+1})^{num}$. \\
    From the assumption that $T$ is large, $e^{-\gamma(1+\Delta)\frac{\beta-\alpha}{1-\alpha}}(h+\beta)(1-\frac{1}{k'+1})^{num}=e^{-\gamma(1+\Delta)\frac{\beta-\alpha}{1-\alpha}}(h+\beta)(1-\frac{1}{k'+1})^{(k'+1)\cdot \frac{num}{k'+1}}\approx e^{-\gamma(1+\Delta)\frac{\beta-\alpha}{1-\alpha}}(h+\beta)e^{-\frac{num}{k'+1}}$.
    If we treat $\frac{num}{k'+1}$ as a whole, because of the convexity of the function $e^{-x}$, $\textrm{Pr}[E]\geq e^{-\gamma(1+\Delta)\frac{\beta-\alpha}{1-\alpha}}(h+\beta)\E[e^{-\frac{num}{k'+1}}]\geq  e^{-\gamma(1+\Delta)\frac{\beta-\alpha}{1-\alpha}}(h+\beta)e^{-\E[\frac{num}{k'+1}]}$.
    It suffices to show $\E[\frac{num}{k'+1}]\leq \frac{t-(1-h)T}{T}$. \\
    From the independency between $num$ and $k'$, $\E[\frac{num}{k'+1}]=\E[num]\E[\frac{1}{k'+1}]$.
    For $\E[num]$, it is equal to $(t-(1-h)T)\sum_{v\in V}\lambda_v$.
    For $\E[\frac{1}{k'+1}]$, we can treat the distribution of $k'$ as a Poisson distribution with some fixed parameter $\lambda'$ because of the additive property of independent Poisson distribution. We have:
    \begin{equation*}
       \E[\frac{1}{k'+1}]=\sum_{x\geq 0} \frac{(\lambda')^x}{x!}e^{-\lambda'} \frac{1}{x+1} =\frac{1}{\lambda'} \sum_{x\geq 0} \frac{(\lambda')^{x+1}}{(x+1)!}e^{-\lambda'}=\frac{1}{\lambda'}(1-e^{-\lambda'}) \leq \frac{1}{\lambda'}=\frac{1}{\E[k']}.
    \end{equation*}
    Thus, we get $\E[\frac{num}{k'+1}]=\E[num]\E[\frac{1}{k'+1}]\leq \frac{(t-(1-h)T)\sum_{v\in V}\lambda_v}{T\sum_{v\in V}\lambda_v}$, which is equal to the wanted term.
 \end{proof}}

 \zihao{\lemmmphaseone*
 \begin{proof}
   We denote the size of $V'$ excluding the type $v$ for the present item $i$ and the total number of arrivals in the time interval $[\beta T,t)$ as $k'$ and $num$, respectively. \\
   Since the event that $u$ is unmatched before time $t$ can be decomposed into two events: $u$ is unmatched before time $\beta T$ and $u$ keeps unmatched during the time $[\beta T,t)$. Applying the similar argument in the proof of Lemma~\ref{lem.unmatched1}, the probability of the second event is at least $(1-\frac{1}{k'+1})^{num}$.Hence we have $\text{Pr}[E|k',num]\geq e^{-\gamma(1+\Delta)\frac{\beta-\alpha}{1-\alpha}}(1-\frac{1}{k'+1})^{num}$. \\
   From the assumption that $T$ is large, $e^{-\gamma(1+\Delta)\frac{\beta-\alpha}{1-\alpha}}(1-\frac{1}{k'+1})^{num}=e^{-\gamma(1+\Delta)\frac{\beta-\alpha}{1-\alpha}}(1-\frac{1}{k'+1})^{(k'+1)\cdot\frac{num}{k'+1}}\approx e^{-\gamma(1+\Delta)\frac{\beta-\alpha}{1-\alpha}}e^{-\frac{num}{k'+1}}$.
   If we treat $\frac{num}{k'+1}$ as a whole, because of the convexity of the function $e^{-x}$, $\text{Pr}[E]\geq e^{-\gamma(1+\Delta)\frac{\beta-\alpha}{1-\alpha}}\E[e^{-\frac{num}{k'+1}}]\geq e^{-\gamma(1+\Delta)\frac{\beta-\alpha}{1-\alpha}}e^{-\E[\frac{num}{k'+1}]}$. It suffices to show $\E[\frac{num}{k'+1}]\le \frac{t-\beta T}{T}$.\\
   From the independency between $num$ and $k'$, $\E[\frac{num}{k'+1}]=\E[num]\E[\frac{1}{k'+1}]$.
   For $\E[num]$, it is equal to $(t-\beta T)\sum_{v\in V}\lambda_v$.
   For $\E[\frac{1}{k'+1}]$, we can treat the distribution of $k'$ as a Poisson distribution with some fixed parameter $\lambda'$ because of the additive property of independent Poisson distribution. We have:
   \begin{align*}
      \E[\frac{1}{k'+1}]=\sum_{x\geq 0} \frac{(\lambda')^x}{x!}e^{-\lambda'} \frac{1}{x+1} &=\frac{1}{\lambda'} \sum_{x\geq 0} \frac{(\lambda')^{x+1}}{(x+1)!}e^{-\lambda'} \\
      &=\frac{1}{\lambda'}(1-e^{-\lambda'}) \leq \frac{1}{\lambda'}=\frac{1}{\E[k']}
   \end{align*}
   Thus, we get $\E[\frac{num}{k'+1}]=\E[num]\E[\frac{1}{k'+1}]\leq \frac{(t-\beta T)\sum_{v\in V}\lambda_v}{T\sum_{v\in V}\lambda_v}$, which is equal to the wanted term.
 \end{proof}}

 \zihao{\lemmmphase*
 \begin{proof}
   By applying Lemma~\ref{lem.lbofweight}, the expected reward during the maximum matching phase is at least $\int_{\beta T}^T \sum_{v\in V}\lambda_v\cdot\frac{\opt}{T\sum_{v\in V}\lambda_v}\cdot \text{Pr}[E]dt$, where $E$ is the event that the corresponding offline vertex $u$ of the online vertex $i$ arriving at time $t$ is unmatched before time $t$. The next we need to integrate the term $\text{Pr}[E]$ to get the answer. 
   We then can utilize Lemma~\ref{lem.unmatched1} and Lemma~\ref{lem.unmatched1.hard} for $\beta\le 1-h$ and Lemma~\ref{lem.mmphase1} for $\beta>1-h$ to get the required values.
 \end{proof}}

 \zihao{\lemubheavy*
 \begin{proof}
   Denote $r$ as a realization of instance $I$ which is a possible input sequence of online item types.
   We then define $x_{r,u,v}$ as the number of packing of online item $v$ into offline bin $u$ by the offline optimal under realization $r$ and $P_r$ as the probability of the realization $r$.
   Here, only heavy edges can be considered in the offline optimal.
   We next denote $x^*_{uv}$ as $\sum_{r} P_r x_{r,u,v}$, which represents the expected number of packing $v$ into $u$ by the offline optimal.
   Notice that $\opt(I)=\sum_r P_r\sum_{u,v}x_{r,u,v}w_{uv}=\sum_{u,v} w_{uv}(\sum_r P_r x_{r,u,v})$, equal to $\sum_{u,v} w_{uv}x^*_{uv}$. It suffices to show $\{x^*_{uv}\}$ is a feasible solution to $LP^H(\bm \lambda, T)$.\\
   For Constraints~\eqref{lp3a}, $\forall v\in V$, $\sum_{u\in U,(u,v)\in \edge^H}x^*_{uv}=\sum_r P_r \sum_{u\in U,(u,v)\in \edge^H}x_{r,u,v}\le \sum_r P_r N_v^r=\lambda_v T$, where $N_v^r$ denotes the number of item type $v$ in realization $r$. The inequality holds because the total number of packing of items of type $v$ cannot larger than the number of arriving items of type $v$, while the last equality is from the linearity of expectation.\\
   For Constraints~\eqref{lp3b}, $\forall u\in U$, we have $\sum_{v\in V,(u,v)\in \edge^H}x^*_{uv}=\sum_r P_r\sum_{v\in V,(u,v)\in\edge^H}x_{r,u,v}\le \sum_r P_r D=D$. The reason for the inequality is as follows.
   Each $x_{r,u,v}$ is an integer which represents the number of packing an item of type $v$ into bin $u$ under realization $r$. Since only heavy edges are considered, each packing must break the conditions $r^d_{uv}\le\frac{1}{2}C_u^d,d\in[D]$ for at least a $d\in[D]$. That is, for each $d\in [D]$, at most one item that break the condition $r^d_{uv}\le\frac{1}{2}C_u^d$ can be packed into the bin $u$. Hence at most $D$ items can be packed into one bin. \\
   The argument above that at most one item that break the condition $r^d_{uv}\le\frac{1}{2}C_u^d$ can be packed into one bin $u$ for each $d\in [D]$ implies that each $x_{r,u,v}$ is at most $1$. Thus, each $x^*_{u,v}$ is also at most $1$, which satisfies Constraints~\eqref{lp3c}.
 \end{proof}}

 \zihao{\lemublight*
 \begin{proof}
   We adopt a similar setup as in the proof of Lemma~\ref{lem.ub.heavy} and define the realization $R$, the $y_{R,u,v}$ and the $y^*_{uv}$ in the same way. Here, the only difference is that we only consider light edges can be used in the offline optimal.
   Notice that $\opt(I)=\sum_R P_R\sum_{u,v}y_{r,u,v}w_{uv}=\sum_{u,v} w_{uv}(\sum_R P_R y_{R,u,v})$, equal to $\sum_{u,v} w_{uv}y^*_{uv}$. It suffices to show $\{y^*_{uv}\}$ is a feasible solution to $LP^L_0(\bm \lambda, T)$.\\
   For Constraints~\eqref{lplighta}, for each $v\in V$, $\sum_{u\in U,(u,v)\in \edge^L}y^*_{uv}=\sum_R P_R \sum_{u\in U,(u,v)\in \edge^L}y_{R,u,v}\le \sum_R P_R N_v^R=\lambda_v T$, where $N_v^R$ denotes the number of item type $v$ in realization $R$. The inequality holds because the total number of packing of items of type $v$ cannot exceed the number of arriving items of type $v$, while the last equality is from the linearity of expectation.\\
   For Constraints~\eqref{lplightb}, $\forall u\in U, d\in[D]$, we have $\sum_{v\in V,(u,v)\in \edge^L}r_{uv}^d y^*_{uv} =\sum_R P_R\sum_{v\in V,(u,v)\in\edge^L}r_{uv}^d y_{R,u,v}\le \sum_R P_R C_u^d=C_u^d$. 
   The inequality holds because the capacity cannot be exceeded from the feasibility of the allocation under each realization $R$.
 \end{proof}}

 \zihao{\lemheavylb*
 \begin{proof}
   We adopt the same proof for Lemma~\ref{lem.lb1} and prove the arguments.
   From Lemma~\ref{lem.ub.heavy}, the optimal value of $LP^H(\bm \lambda, T)$ is an upper bound of the offline optimal $\opt^H$ in the original problem. It suffices to show the optimal value of $LP^H(\hat{\bm \lambda}, T')$ is lower bounded by the product of $\frac{1-\alpha}{1+\Delta}$ and the optimal value of $LP^H(\bm \lambda, T)$.\\
    We assume the optimal solution of $LP^H(\bm \lambda, T)$ is $\{x_{uv}\}$. We next show $\{\frac{1-\alpha}{1+\Delta}x_{uv}\}$ is a feasible solution of $LP^H(\hat{\bm \lambda}, T')$.\\
    For Constraint~\eqref{lp3a}, for each $v\in V$, from the feasibility of $\{x_{uv}\}$ in $LP(\bm \lambda, T)$, we have $\sum_{u\in U}x_{uv}\le \lambda_v\cdot T$. This means $\sum_{u\in U}\frac{1-\alpha}{1+\Delta} x_{uv}\le \frac{1-\alpha}{1+\Delta}\lambda_v\cdot T=\frac{\lambda_v}{1+\Delta}\cdot (1-\alpha)T$. This is upper bounded by $\hat{\lambda}_v\cdot T'$, which corresponds to Constraint~\eqref{lp3a} in $LP^H(\hat{\bm \lambda}, T')$. 
    For Constraints~\eqref{lp3b} and ~\eqref{lp3c}, since $0\leq  \frac{1-\alpha}{1+\Delta}\leq 1$, we can directly induce the feasibility of these two constraints in  $LP^H(\hat{\bm \lambda}, T')$.\\
    Because $\{\frac{1-\alpha}{1+\Delta}x_{uv}\}$ is a feasible solution of $LP^H(\hat{\bm \lambda}, T')$ with the specified probability, the optimal value of $LP^H(\hat{\bm \lambda}, T')$ is weakly larger than the corresponding value of the feasible solution $\{\frac{1-\alpha}{1+\Delta}x_{uv}\}$, which is the product of $\frac{1-\alpha}{1+\Delta}$ and the optimal value of $LP^H(\bm \lambda, T)$.
 \end{proof}}

 \zihao{\lemunmatchedprobheavy*
 \begin{proof}
   We utilize the techniques used in the proof of Lemma~\ref{lem.unmatchedprob} and prove the arguments.
   The packing event of an item of type $v\in V$ into bin $u$ before time $t$ follows a Poisson distribution with a parameter $\lambda_v \frac{\gamma \hat{x}_{uv}}{\hat{\lambda}_vT'}(t-\alpha T)$, where the first term corresponds to the arrival rate, the second term corresponds to the matching probability and the third term corresponds to the time horizon. This is from the property of the compound Poisson process where each random variable is a Bernoulli distribution.\\
    From the independency of different online types in $V$, we have:
    \begin{align*}
       \textrm{Pr}[E]&=e^{-\sum_{v\in V} \lambda_v \frac{\gamma \hat{x}_{uv}}{\hat{\lambda}_vT'}(t-\alpha T)}\\
       &\geq e^{-\gamma(1+\Delta)\frac{t-\alpha T}{(1-\alpha)T}\sum_{v\in V} \hat{x}_{uv}} \\
       &\geq e^{-\gamma(1+\Delta)\frac{t-\alpha T}{(1-\alpha)T}D}
    \end{align*}
    The first inequality is from $\lambda_v\le (1+\Delta)\cdot \hat \lambda_v$ and the second inequality is from Constraint~\eqref{lp3b}.
 \end{proof}}

 \zihao{\lemlpphaseheavy*
 \begin{proof}
   We then utilize the proof of Lemma~\ref{lem.lpphase} to show this.
   For each pair between an offline bin $u\in U$ and an online type $v\in V$, the expected total reward contributed by the packing of items of type $v$ into bin $u$ is weakly larger than $w_{uv}\int_{\alpha T}^{\beta T} e^{-\gamma(1+\Delta)\frac{t-\alpha T}{(1-\alpha)T}D}\cdot \gamma\frac{\hat{x}_{uv}}{\hat\lambda_v T'} \cdot \lambda_v dt$, where inside the integration, the first term corresponds to the unmatched event before $t$, the second corresponds to the matching probability and the third is the arrival rate. The such term can be lower bound by
   \begin{align*}
      &\geq w_{uv}\hat{x}_{uv} \int_{0}^{(\beta-\alpha)T}(1-\Delta)\frac{\gamma}{(1-\alpha)T}  e^{-\gamma(1+\Delta)\frac{t}{(1-\alpha)T}D}dt \\
      &=w_{uv}\hat{x}_{uv}\cdot (1-\Delta)\frac{\gamma}{(1-\alpha)T} \cdot \frac{(1-\alpha)T}{\gamma(1+\Delta)D} \cdot e^{-\gamma(1+\Delta)\frac{t}{(1-\alpha)T}D}|_{(\beta-\alpha)T}^0 \\
      &\geq w_{uv}\hat{x}_{uv}\cdot \frac{1}{D}(1-2\Delta)(1-e^{-\gamma(1+\Delta)\frac{\beta-\alpha}{1-\alpha}D})
   \end{align*}
   The first inequality is from $(1-\Delta)\cdot \hat \lambda_v \le \lambda_v$, and the third is from $\frac{1-\Delta}{1+\Delta}>1-2\Delta$.\\
   Considering all pairs between $u\in U$ and $v\in V$ and applying Lemma~\ref{lem.heavy.LB}, the expected reward during the LP phase is weakly larger than $\frac{1}{D}\frac{1-2\Delta}{1+\Delta}(1-\alpha)(1-e^{-\gamma(1+\Delta)\frac{\beta-\alpha}{1-\alpha}D})\opt^H\geq \frac{1}{D}(1-3\Delta)(1-\alpha)(1-e^{-\gamma(1+\Delta)\frac{\beta-\alpha}{1-\alpha}D})\opt^H$.
 \end{proof}}

 \zihao{\lemheavymaxwlb*
 \begin{proof}
   The ideas behind the proof are similar to that of Lemma~\ref{lem.lbofweight}.
   We denote the size of the set $V'$ (Step~\ref{alg2defvp}) by $k$ and the total weight of edges in $M'$ by $w(M')$. 
Since each vertex in $V'$ arrives in the system in the same way, by symmetry, if we fix the size $|V'|$ of the set $V'$ as $k'$, we have $\E[w_{\ell}|~|V'|=k']=\frac{\E[w(M')|~|V'|=k']}{k'}$.\\
We then assume the time horizon in $V'$ is $T'$.
If $T'<T$ corresponding to the case $t<(1-h)T$ in Step~\ref{alg2defvp}, we denote the set of the arriving vertices in the following time horizon of $T-T'$ by $V''$.
For each realization of $V''$, if $M''$ is the optimal matching of $G''=(U,V'\cup V'',E)$, $\frac{\E[w(M')|~|V'|=k']}{k'}\geq \frac{\E[w(M'')|~|V'|=k',V'']}{k'+|V''|}$, from the maximum weight matching property. We then assume the size of $V''$ is $k''$ and take expectations over all $V''$ with the same size, we have $\frac{\E[w(M')|~|V'|=k']}{k'}\geq \frac{\E[w(M'')|~|V'|=k',|V''|=k'']}{k'+k''}$. If we take the expectations over all possible $k'$ and $k''$, we denote the set of arriving vertices in the time interval $[0,T]$ as $V$, and we have $\E[w_{\ell}]\geq \E[\frac{\E[\opt'|~|V|=k]}{k}]$, where the outside expectation in the right-hand side if taken over $k$.
Here, $\opt'$ represents the value of the optimal matching. From the fact that there are at most $D$ items in one bin when considering only heavy edges, we have $\opt'\ge\frac{1}{D}\opt^H$.\\
For the second case where $T'=T$ corresponding to $t\geq(1-h)T$ in Step~\ref{alg2defvp}, we can directly get $\E[w_{\ell}|~|V'|=k']=\frac{\E[w(M')|~|V'|=k']}{k'}=\frac{\E[\opt'|~|V|=k]}{k}$, because both $V$ and $V'$ is from a time horizon of $T$ and sampled in the same way. We can also get $\E[w_{\ell}]\geq \E[\frac{\E[\opt'|~|V|=k]}{k}]$ by expectation over $k$.\\
If suffices to show $\E[\frac{\E[\opt'|~|V|=k]}{k}]\geq \frac{\opt'}{T\sum_{v\in V}\lambda_v}$, which can further show it is at least $\frac{\opt^H}{D\cdot T\sum_{v\in V}\lambda_v}$.
If we treat $k$ as the independent variable $x$ and we use $f(x)$ to represent the corresponding value $\E[\opt|~|V|=x]$, we need to show $\E[\frac{f(x)}{x}]\geq\frac{\E[f(x)]}{\E[x]}$. If we define $p_x$ as the probability of the value $x$, it's equivalent to show $\sum_{x} p_x \frac{f(x)}{x}\geq \frac{\sum_x p_xf(x)}{\sum_x p_xx}$.
From the same arguments in the proof of Lemma~\ref{lem.lbofweight}, it suffices to show $\frac{f(x)}{x}y+\frac{f(y)}{y}x\geq \frac{f(x)}{x}x+\frac{f(y)}{y}y$ for each pair of $x$ and $y$.\\
Since all arrivals follow independent Poisson processes, when the total number of arrivals is fixed, each arrival's type can be sampled from an i.i.d. distribution according to the parameters of the Poisson process.
According to this property and the maximum weight matching property, for the function $f(x)=\E[\opt|~|V|=x]$ over all positive integer $x$, $\frac{f(x)}{x}$ is decreasing with the increase of $x$.
By the rearrangement inequality, we can directly get $\frac{f(x)}{x}y+\frac{f(y)}{y}x\geq \frac{f(x)}{x}x+\frac{f(y)}{y}y$.
 \end{proof}}

 \zihao{\lemunmatchedoneheavy*
 \begin{proof}
   We adopt the ideas behind Lemma~\ref{lem.unmatched1} here.
   We denote the size of $V'$ excluding the type $v$ for the present item $i$ as $k'$. If the total number of arrivals in the time interval $[-hT,\beta T)$ is $w$, we have:
   \begin{align*}
      \textrm{Pr}[E|~w,k']&=\q\prod_{each~arrival~between~w~and~k'}\textrm{Pr}[u~keeps~unmatched~upon~this~arrival] \\
      &\geq \q\prod_{j=w+1}^{k'} (1-\frac{1}{j}) \geq \q\frac{w}{k'}
   \end{align*}
   The first inequality is from the symmetry of all arrivals. \\
   We take expectations over all $w$, we have $\textrm{Pr}[E|k']\geq \q\frac{\E[w|k']}{k'}=\q\frac{(h+\beta)T}{hT+t}$, where the last equation is because the value of $\frac{\E[w|k']}{k'}$ is exactly equal to the ratio between the length of the time horizon according to the symmetry of time. We next take expectations over all $k'$ and finish the proof.
 \end{proof}
 }

 \zihao{\lemunmatchedtwoheavy*
 \begin{proof}
   We utilize the proofs of Lemma~\ref{lem.unmatched1.hard} here.
   We denote the total number of arrivals in the time interval $[-hT,\beta T)$ and the size of $V'$ excluding the type $v$ for the present item $i$ as $w$ and $k'$ respectively.
   We further denote the total number of arrivals in the time interval $[(1-h)T,t)$ as $num$.\\
   Since the event that $u$ is unmatched before time $t$ can be decomposed into three events: $u$ is unmatched before time $\beta T$, $u$ is unmatched during the time $[\beta T,(1-h)T)$ and $u$ keeps unmatched during the time $[(1-h)T,t)$.
   Applying the similar argument in the proof of Lemma~\ref{lem.unmatched1.heavy}, the probability of the second event is at least $\frac{w}{k'}$, while the probability of the third event is at least $(1-\frac{1}{k'+1})^{num}$.
   Thus, taking expectations over $w$, we have $\textrm{Pr}[E|k',num]\geq \q(h+\beta)(1-\frac{1}{k'+1})^{num}$.\\
   From the assumption that $T$ is large, $\q(h+\beta)(1-\frac{1}{k'+1})^{num}=\q(h+\beta)(1-\frac{1}{k'+1})^{(k'+1)\cdot \frac{num}{k'+1}}\approx\q(h+\beta)e^{-\frac{num}{k'+1}}$.
   If we treat $\frac{num}{k'+1}$ as a whole, because of the convexity of the function $e^{-x}$, $\textrm{Pr}[E]\geq \q(h+\beta)\E[e^{-\frac{num}{k'+1}}]\geq \q(h+\beta)e^{-\E[\frac{num}{k'+1}]}$.
   It suffices to show $\E[\frac{num}{k'+1}]\leq \frac{t-(1-h)T}{T}$.\\
   Following the same procedure in the proof of Lemma~\ref{lem.unmatched1.hard}, we can upper bound $\E[\frac{num}{k'+1}]$ by $\E[num]\frac{1}{\E[k']}$, which is equal to the wanted term.
 \end{proof}
 }

 \zihao{\lemunmatchedthreeheavy*
 \begin{proof}
   Following a similar proof as in Lemma~\ref{lem.mmphase1}, we can prove our statements.
   We denote the size of $V'$ excluding the type $v$ for the present item $i$ and the total number of arrivals in the time interval $[\beta T,t)$ as $k'$ and $num$, respectively. \\
   Since the event that $u$ is unmatched before time $t$ can be decomposed into two events: $u$ is unmatched before time $\beta T$ and $u$ keeps unmatched during the time $[\beta T,t)$. Applying the similar argument in the proof of Lemma~\ref{lem.unmatched2.heavy}, the probability of the second event is at least $(1-\frac{1}{k'+1})^{num}$.Hence we have $\text{Pr}[E|k',num]\geq \q(1-\frac{1}{k'+1})^{num}$. \\
   From the assumption that $T$ is large, $\q(1-\frac{1}{k'+1})^{num}=\q(1-\frac{1}{k'+1})^{(k'+1)\cdot\frac{num}{k'+1}}\approx \q e^{-\frac{num}{k'+1}}$.
   If we treat $\frac{num}{k'+1}$ as a whole, because of the convexity of the function $e^{-x}$, $\text{Pr}[E]\geq \q\E[e^{-\frac{num}{k'+1}}]\geq \q e^{-\E[\frac{num}{k'+1}]}$. It suffices to show $\E[\frac{num}{k'+1}]\le \frac{t-\beta T}{T}$.\\
   Following the same procedure in the proof of Lemma~\ref{lem.unmatched1.hard}, we can upper bound $\E[\frac{num}{k'+1}]$ by $\E[num]\frac{1}{\E[k']}$, which is equal to the wanted term.
 \end{proof}
 }

 \zihao{\lemmmphaseheavy*
 \begin{proof}
   By applying Lemma~\ref{lem.heavy.max.wlb}, the expected reward during the maximum matching phase is at least $\int_{\beta T}^T \sum_{v\in V}\lambda_v\cdot\frac{\opt^H}{D\cdot T\sum_{v\in V}\lambda_v}\cdot \text{Pr}[E]dt$, where $E$ is the event that the corresponding offline vertex $u$ of the online item $i$ arriving at time $t$ is unmatched before time $t$. The next we need to integrate the term $\text{Pr}[E]$ to get the answer. 
   We then can utilize Lemma~\ref{lem.unmatched1.heavy} for $\eta\le 1-h$, Lemmas~\ref{lem.unmatched1.heavy} and ~\ref{lem.unmatched2.heavy} for $\beta\le 1-h<\eta$ and Lemma~\ref{lem.unmatched3.heavy} for $\beta>1-h$ to get the required values.
 \end{proof}}

 \zihao{\lemlightLB*
 \begin{proof}
    From Lemma~\ref{lem.ub.light}, the optimal value of $LP_0^L(\bm \lambda,T,C)$ is an upper bound of the offline optimal $\opt$ in the original problem. It suffices to show the expectation of the optimal value of $LP_0^L(\hat{\bm \lambda},T',\bar{C})$ is lower bounded by the product of $\q\pr{\eta}{\beta}\frac{1-\eta}{1+\Delta'}$ and the optimal value of $LP_0^L(\bm \lambda,T,C)$. We assume the optimal solution of $LP_0^L(\bm \lambda,T,C)$ is $\{y^*_{uv}\}$. We next build the corresponding solution of $LP_0^L(\hat{\bm \lambda},T',\bar{C})$ for each realization $\bar{C}$ to show the statement. \\
    For a realization $\bar{C}$, we set the value of $\{y_{uv}\}$ for each bin $u$.
    If the bin $u$ contains no item before the light phase, we set $\frac{1-\eta}{1+\Delta'}y^*_{uv}$ for each $v\in V$ as the value of the corresponding $y_{uv}$ of $LP_0^L(\hat{\bm \lambda},T',\bar{C})$.
    Otherwise, if the bin $u$ contains some items before the light phase, we set $0$ for each $v$ as the value of the corresponding $y_{uv}$.\\
    We can first check such a solution is feasible for $LP_0^L(\hat{\bm \lambda},T',\bar{C})$. 
    For Constraints~\eqref{lplighta}, $\sum_{u\in U,(u,v)\in\edge^L}y_{uv}\le \sum_{u\in U,(u,v)\in\edge^L}\frac{1-\eta}{1+\Delta'}y^*_{uv}\le \frac{1-\eta}{1+\Delta'}\lambda_v T\le \hat{\lambda}_v T'$. Here, the first inequality is from the fact that some values will be set as $0$, while the second inequality is from Constraints~\eqref{lplighta} for $LP_0^L(\bm \lambda,T,C)$. The last inequality is from the definition of $T'$ and $\hat{\bm \lambda}$.\\
    For Constraints~\eqref{lplightb}, since only the value for the bin $u$ that contains no item before the light phase can remain nonzero, where $\bar{C}_u^d=C_u^d$, and the coefficient $\frac{1-\eta}{1+\Delta'}\le 1$, by Constraints~\eqref{lplightb} for $LP_0^L(\bm \lambda,T,C)$, these constraints hold under our given solution. \\
    We now can compare the optimal values. Assuming a set $U'$ which contains only the bin without heavy items, the expectation of the optimal value of $LP_0^L(\hat{\bm \lambda},T',\bar{C})$ is equal to
    \begin{align*}
        & \sum_{U'} \mathbb{P}[U'] \sum_{u\in U',v\in V,(u,v)\in \edge^L}w_{uv}\frac{1-\eta}{1+\Delta'}y^*_{uv} \\
        &=\frac{1-\eta}{1+\Delta'}\sum_{u\in U,v\in V,(u,v)\in \edge^L} w_{uv}y^*_{uv} \sum_{U':u\in U'} \mathbb{P}[U'] \\
        &\ge \q\pr{\eta}{\beta}\frac{1-\eta}{1+\Delta'}\opt^L.
    \end{align*}
    Here, $\mathbb{P}[U']$ represents the probability for one $U'$. The inequality is from the definition of $\q\pr{\eta}{\beta}$ and Lemma~\ref{lem.ub.light}.
\end{proof}}

 \zihao{\lemunmatchedproblightLP*
 \begin{proof}
    Let $Z_{ut}^d$ denote the consumed capacity of $u$'s $d$ dimension by time $t$. 
    \begin{equation*}
        \begin{split}
            \mathbb{P} [E] &=\mathbb{P}[\land_{d\in [D]} (C_u^d-Z_{ut}^d)\ge \frac{1}{2}C_u^d]\\
            &=1-\mathbb{P}[\lnot \land_{d\in [D]} (C_u^d-Z_{ut}^d)\ge \frac{1}{2}C_u^d]\\
            &=1-\mathbb{P}[\lor_{d\in [D]} (C_u^d-Z_{ut}^d)< \frac{1}{2}C_u^d]\\
            &\ge 1-\sum_{d\in [D]}\mathbb{P}[(C_u^d-Z_{ut}^d)< \frac{1}{2}C_u^d]\\
            &\ge 1-\sum_{d\in [D]}\mathbb{P}[Z_{ut}^d>\frac{1}{2}C_u^d]
        \end{split}
    \end{equation*}
    Here $\land, \lnot, \lor$ are logical ``and'', ``not'' and ``or''.
    According to Markov's inequality, we have
    \begin{equation*}
            \mathbb{P}[Z_{ut}^d> \frac{1}{2}C_u^d]\le \frac{\mathbb{E}[Z_{ut}^d]}{\frac{1}{2}C_u^d}
    \end{equation*}
    By the matching probability $\gamma'\cdot \frac{\hat{y}_{uv}}{\hat\lambda_v T'}$ given in Step~\ref{alg2lightlpmatchingp} of Algorithm~\ref{alg2}, with Constraints~\ref{lplightb}, we have $\E[Z_{ut}^d]=\sum_{v\in V}\lambda_v (t-\eta T)r_{uv}^d \gamma'\cdot \frac{\hat{y}_{uv}}{\hat\lambda_v T'}\le (t-\eta T)\frac{\gamma'(1+\Delta')}{(1-\eta) T}\sum_v \hat{y}_{uv}r_{uv}^d \le (t-\eta T)\frac{\gamma'(1+\Delta')}{(1-\eta) T}C_u^d$.
 \end{proof}}

 \zihao{\lemlpphaselight*
 \begin{proof}
    From the solution used in the proof of Lemma~\ref{lem.light.LB}, under a given $U'$ which contains all available bins at the beginning of the light LP phase, we want to show our algorithm can reach a comparable guarantee according to that solution. 
    For a given $U'$, the expected reward during the light LP is at least $\int_{\eta T}^{\theta T} \sum_v (\lambda_v\sum_{u\in U'}w_{uv}\cdot \gamma'\frac{\hat{y}_{uv}}{\hat\lambda_v T'}\cdot (1-2D\gamma'(1+\Delta')\frac{t-\eta T}{(1-\eta) T}))dt$. Here, the term $\gamma'\frac{\hat{y}_{uv}}{\hat\lambda_v T'}$ is the matching probability, and the term $(1-2D\gamma'(1+\Delta')\frac{t-\eta T}{(1-\eta) T})$ is the available probability of the bin $u$ from Lemma~\ref{lem.unmatchedprob.lightLP}. With some calculus, we can induce that it is at least $\sum_{v\in V,u\in U',(v,u)\in\edge^L}w_{uv}\hat{y}_{uv}\cdot (1-\Delta')\gamma'\frac{\theta-\eta}{1-\eta}(1-D\gamma'(1+\Delta')\frac{\theta-\eta}{1-\eta})$. By applying Lemma~\ref{lem.light.LB}, we finish our proof.
 \end{proof}}

 \zihao{\lemexpboundlightmaxpacking*
 \begin{proof}
   Before we start our proof, we first need to show the following claim holds.\\
   \textbf{Claim.} If we denote the total number of arrivals during a time horizon and the expected value of $LP_1^L$ given the total number of arrivals is $x$ and $f(x)$, respectively, then $\frac{f(x)}{x}$ is a decreasing function.\\
   Such property holds because of the following reasons. Since the number of arrivals are fixed, the Poisson arrival model can be seen as following independent and identical distributions, where the probability of each type $v$ is proportional to the arrival rate $\lambda_v$.
   Under this model, since all arrivals are identical in expectations, it suffices to show removing the last arrival can still satisfy the constraints of $LP_1^L$.
   For an optimal solution $\{y_{uv}\}$ of $LP_1^L(V')$, we can show $\{y_{uv}:v\neq v'\}$ is a feasible solution of $LP_1^L(V'-\{v'\})$, where $v'$ is the last arrival.
   This obviously holds for Constraints~\eqref{lp4a} and~\eqref{lp4c}. For Constraints~\ref{lp4b}, only the left hand side can be decreased, so they still hold. We now finish the proof of the claim.\\
   For the lemma, we adopt the ideas behind the proof of Lemma~\ref{lem.lbofweight} and prove our statements.
   We denote the size of the set $V'$ (Step~\ref{alg2defvp2}) by $k$ and the optimal value of LP~\eqref{lp4} by $\opt'$. 
Since each vertex in $V'$ arrives in the system in the same way, by symmetry, if we fix the size $|V'|$ of the set $V'$ as $k'$, we have $\E[w_{\ell}|~|V'|=k']=\frac{\E[\opt'|~|V'|=k']}{k'}$.\\
We then assume the time horizon in $V'$ is $T'$.
If $T'<T$ corresponding to the case $t<(1-h)T$ in Step~\ref{alg2defvp2}, we denote the set of the arriving vertices in the following time horizon of $T-T'$ by $V''$.
For each realization of $V''$, if $\opt''$ is the optimal value of $LP_1^L(V'\cup V'')$, $\frac{\E[\opt'|~|V'|=k']}{k'}\geq \frac{\E[\opt''|~|V'|=k',V'']}{k'+|V''|}$, from the claim above.
We then assume the size of $V''$ is $k''$ and take expectations over all $V''$ with the same size, we have $\frac{\E[\opt'|~|V'|=k']}{k'}\geq \frac{\E[\opt''|~|V'|=k',|V''|=k'']}{k'+k''}$. If we take the expectations over all possible $k'$ and $k''$, we denote the set of arriving vertices in the time interval $[0,T]$ as $V$, and we have $\E[w_{\ell}]\geq \E[\frac{\E[\opt^L|~|V|=k]}{k}]$, where the outside expectation in the right-hand side if taken over $k$. This is because $\opt^L$ corresponds to the optimal value of $LP_1^L$ in a time horizon of exactly $T$.\\
For the second case where $T'=T$ corresponding to $t\geq(1-h)T$ in Step~\ref{alg2defvp2}, we can directly get $\E[w_{\ell}|~|V'|=k']=\frac{\E[\opt'|~|V'|=k']}{k'}=\frac{\E[\opt^L|~|V|=k]}{k}$, because both $V$ and $V'$ is from a time horizon of $T$ and sampled in the same way. We can also get $\E[w_{\ell}]\geq \E[\frac{\E[\opt^L|~|V|=k]}{k}]$ by expectation over $k$. \\
If suffices to show $\E[\frac{\E[\opt^L|~|V|=k]}{k}]\geq \frac{\opt^L}{T\sum_{v\in V}\lambda_v}$.\\
If we treat $k$ as the independent variable $x$ and use the above function $f(x)$, following the same procedure in Lemma~\ref{lem.lbofweight}, we finish our proof.
 \end{proof}}

 \zihao{\lemproblightmaxpacking*
 \begin{proof}
    We can adopt some ideas used in the proof of Lemmas~\ref{lem.unmatched1},~\ref{lem.unmatched1.hard} and~\ref{lem.unmatchedprob.lightLP} to prove this statement. \\
    For the first case where $\theta\le 1-h$ and $\theta T\le t<(1-h)T$, we assume the total number of arrivals in time $[-h T,\theta T)$ is $w$ and the number of arrivals in time $[-h T,t)$ is $k$.
    Because all arrivals are symmetric, we can conclude the expectation of the consumption of bin $u$ in the $d$-th dimension after the light LP phase is no greater than $\sum_{i=w+1}^k \frac{C_u^d}{i}$. Combining with the expectation of the consumption during the light LP phase shown in the proof of Lemma~\ref{lem.unmatchedprob.lightLP}, the expectation of the total consumption is at most $\sum_{i=w+1}^k \frac{C_u^d}{i}+\pr{\theta}{\eta}C_u^d$.
    By applying the union bound and Markov's inequality, the probability of the event $E'$ (which can imply $E$) that the consumption of bin $u$ in each dimension does not exceed a half of the corresponding capacity entry is at least $\E_{k,w}[1-\sum_{i=w+1}^k \frac{2D}{i}-2D\pr{\theta}{\eta}]\geq 1-2D(\ln\frac{hT+t}{hT+\theta T}+\pr{\theta}{\eta})$. The inequality is because the sum can be upper bounded by $ln\frac{k}{w}$, whose expectation is the ratio between the corresponding time horizon, by the symmetric argument used in the proof of Lemma~\ref{lem.unmatched1}. \\
    For the second case where  $\theta\le 1-h$ and $(1-h) T\le t<T$.
    We assume the total number of arrivals in time $[-h T,\theta T)$ is $w$, the number of arrivals in time $[-h T,(1-h)\cdot T)$ is $k$ and the number of arrivals in time $[(1-h)\cdot T,t)$ is $num$.
    We can also conclude that the expectation of the consumption of bin $u$ in the $d$-th dimension after the light LP phase is upper bounded by $\sum_{i=w+1}^k \frac{C_u^d}{i}+\frac{num\cdot C_u^d}{k+1}$.
    Thus, the total consumption is at most $\sum_{i=w+1}^k \frac{C_u^d}{i}+\frac{num\cdot C_u^d}{k+1}+\pr{\theta}{\eta}C_u^d$.
    Again with union bound and Markov's inequality, the probability of the event $E'$ is at least $\E_{k,w,num}[1-\sum_{i=w+1}^k \frac{2D}{i}-\frac{2D\cdot num}{k+1}-2D\pr{\theta}{\eta}]\ge 1-2D(\ln\frac{1}{h+\theta}+\frac{t-(1-h)T}{T}+\pr{\theta}{\eta})$.
    Here, the transformation from the term $\frac{num}{k+1}$ to $\frac{t-(1-h)T}{T}$ follows the same ideas as in the proof of Lemma~\ref{lem.unmatched1.hard}. \\
    For the last case where $\theta T\le t<T$ and $\theta>1-h$. 
    We assume the number of arrivals in time $[-h T,(1-h)\cdot T)$ is $k$ and the number of arrivals in time $[\theta\cdot T,t)$ is $num$.
    With the similar idea above, the total consumption is at most $\frac{num\cdot C_u^d}{k+1}+\pr{\theta}{\eta}C_u^d$. Thus, the probability of the event $E'$ is at least $\E_{k,num}[1-\frac{2D\cdot num}{k+1}-2D\pr{\theta}{\eta}]\ge 1-2D(\frac{t-\theta T}{T}+\pr{\theta}{\eta})$.
 \end{proof}}

 \zihao{\lemlphase*
 \begin{proof}
   The expected reward during the light phase can be calculated by $\int_{\theta T}^T \sum_{v\in V}\lambda_v \cdot \frac{\opt^L}{T\sum_{v}\lambda_v}\cdot \q\pr{\eta}{\beta}\text{Pr}[E]dt$. Here, the term $\sum_{v\in V}\lambda_v$ represents the arrival rate for any online type, the term $\frac{\opt^L}{T\sum_{v}\lambda_v}$ is the bound of the weight of one matching edge from Lemma~\ref{lem.expbound.lightmaxpacking}, and the last term $ \q\pr{\eta}{\beta}\text{Pr}[E]$ represents the probability that the chosen edge can be matched successfully. $\text{Pr}[E]$ is the probability mentioned in Lemma~\ref{lem.prob.lightmaxpacking}, which should be further categorized for the following calculation.
   By calculus, we can get the corresponding value.
 \end{proof}}

\propnomax*
\begin{proof}
  We consider the parameter settings that $\beta = \eta$ and $\theta = 1$, i.e., there is no heavy maximum matching phase or light maximum packing phase. 
  We choose $\alpha=C_0N^{-\frac{1}{3}}\ll 1$ where $C_0$ is a constant that $\alpha$ satisfies $3\sqrt{\frac{8ln\frac{1}{\delta}}{\alpha N}}=\alpha$.
  Now we rewrite the ratio $\min\{F^H, F^L\}$ in Theorem~\ref{thm.omp} where 
  \begin{equation*}
      F^H=\frac{1}{D}(1-3\Delta)(1-\alpha)(1-e^{-\gamma(1+\Delta)\frac{\eta-\alpha}{1-\alpha}D})
  \end{equation*}
  and 
  \begin{equation*}
      F^L= e^{-\gamma(1+\Delta)\frac{\eta-\alpha}{1-\alpha}D}\cdot (1-2\Delta')\gamma'(1-\eta)(1-D\gamma'(1+\Delta')).
  \end{equation*}
  Under the assumption that $T$ is large, we ignore the high order of infinitesimal to make the expression clear.
  For $F^H$, we first look at the term $(1-3\Delta)(1-\alpha)\approx 1-(3\Delta+\alpha)$. 
  \zihao{We can compare the order between the term $\Delta$ and $\alpha$. If $h$ is in a order higher than $N^{-\frac{1}{3}}$, we can approximate $3\Delta+\alpha$ by $3\Delta$, otherwise, we can find $3\Delta$ and $\alpha$ is in the same order $N^{-\frac{1}{3}}$.} 
  Thus, to summarize these two cases, the term can be approximated by $1-C_1N^{-\frac{1}{2}}(h+C_0N^{-\frac{1}{3}})^{-\frac{1}{2}}$ where $C_1$ is a constant that satisfies $3\Delta+\alpha=C_1N^{-\frac{1}{2}}(h+C_0N^{-\frac{1}{3}})^{-\frac{1}{2}}$.
  For the term $e^{-\gamma(1+\Delta)\frac{\eta-\alpha}{1-\alpha}D}$, we can approximate it as $e^{-\gamma(1+\Delta)\frac{\eta-\alpha}{1-\alpha}D}\approx e^{-D\gamma(1+\Delta)(\eta-\alpha)(1+\alpha)}\approx e^{-D\gamma(\eta+\eta(\Delta+\alpha)-\alpha)}\approx e^{-\gamma\eta D}\cdot e^{-\gamma\eta D(\Delta+\alpha-\frac{\alpha}{\eta})}\approx e^{-\gamma\eta D}(1-\gamma\eta D(\Delta+\alpha-\frac{\alpha}{\eta}))$. These approximations are according to the infinitesimal $\Delta$ and $\alpha$, and $e^{-x}\approx 1-x$ when $x$ is small.
  Thus,
  \begin{equation*}
      F^H \approx \frac{1}{D}(1-e^{-\gamma\eta D})(1-C_1N^{-\frac{1}{2}}(h+C_0N^{\frac{1}{3}})^{-\frac{1}{2}}).
  \end{equation*}
  Similar for $F^L$, we can have the same order of the infinitesimal. Adopting $\Delta'\le \Delta$, we can have a constant $C_2$ such that 
  \begin{equation*}
      F^L \ge e^{-\gamma \eta D}\cdot \gamma'(1-\eta)(1-D\gamma')(1-C_2N^{-\frac{1}{2}}(h+C_0N^{-\frac{1}{3}})^{-\frac{1}{2}}).
  \end{equation*}
  To find optimal choice of parameters, we ignore the infinitesimal term first. Then our problem is 
  \begin{equation*}
      \max_{\eta, \gamma, \gamma'}\min\left\{\frac{1}{D}(1-e^{-\gamma D\eta}), e^{-\gamma D\eta}\cdot \gamma'(1-\eta)(1-D\gamma')\right\}
  \end{equation*}
  Since $\gamma'$ only influences $F^L$, we can choose $\gamma'=\frac{1}{2D}$ to maximize $F^L$.
  Then we can update the competitive ratio as
  \begin{equation*}
      \max_{\eta, \gamma}\min\left\{\frac{1}{D}(1-e^{-\gamma D\eta}), \frac{1}{4D}e^{-\gamma D\eta}(1-\eta)\right\}.
  \end{equation*}
  Observing that $F^H=\frac{1}{D}(1-e^{-\gamma D\eta})$ and $F^L = \frac{1}{4D}e^{-\gamma D\eta}(1-\eta)$ are increasing with respect to $\gamma$, we choose $\gamma=1$.
  Then for the choice of $\eta$, since $F^H$ increases with $\eta$ while $F^L$ decreases with $\eta$, we choose the $\eta$ that satisfies 
  $\frac{1}{D}(1-e^{-D\eta}) = \frac{1}{4D}e^{-D\eta}(1-\eta)$. That is, $\eta=\eta_1$ satisfies: 
  \begin{equation*}
      e^{D\eta_1}=\frac{5-\eta_1}{4}.
  \end{equation*} 
  It is easy to check that there only exists one $\eta_1$ and this $\eta_1$ is feasible. Then the ratio is 
  \begin{equation*}
      \frac{1}{D}\cdot \frac{1-\eta_1}{5-\eta_1}.
  \end{equation*}
  Then add back the infinity small part and choose $C_3=\max\{C_1, C_2\}$, we have the ratio 
  \begin{equation*}
      \frac{1}{D}\cdot \frac{1-\eta_1}{5-\eta_1}\cdot (1-C_3N^{-\frac{1}{2}}(h+C_0N^{-\frac{1}{3}})^{-\frac{1}{2}}).
  \end{equation*}
  \end{proof}

\propnolp*
  \begin{proof}
    We first assume that $F^H\ge F^L$, according to $\eta\le 1-h$, we have
    \begin{equation*}
        F^H\ge F^L \iff \alpha \le \alpha_1:=(h+\eta)e^{-\frac{D\fa}{h+\eta}}-h.
    \end{equation*}
    When $\alpha_1\ge 0$ and $\alpha\le \alpha_1$, the competitive ratio is $F^L=\frac{h+\alpha}{h+\eta}\cdot \fa$. This ratio is increasing with $\alpha$ then it is maximized when $\alpha=\alpha_1$. We know that when $\alpha=\alpha_1$, $F^L=F^H$, then the ratio can be written as $F^H=\frac{1}{D}(h+\alpha)\ln\frac{h+\eta}{h+\alpha}$.\\
    When $\alpha_1< 0$, then we have $\alpha>\alpha_1$ which means $F^L\ge F^H$, then in this case we also have competitive ratio $F^H$.\\
    Thus, we only need to solve $\max_{\alpha} F^H$ when $\alpha\ge \max\{0, \alpha_1\}$. According to the derivative of $F^H$, we can easily see that when $\alpha\le \alpha_2=(h+\eta)e^{-1}-h$, the ratio is increasing while the ratio is decreasing when $\alpha\ge \alpha_2$. If $\alpha_2\ge \max\{0, \alpha_1\}$, the optimal $\alpha=\alpha_2$. If $\alpha_2<\max\{0, \alpha_1\}$, then $F^H$ is decreasing when $\alpha\ge \max\{0, \alpha_1\}$, so we choose optimal $\alpha=\max\{0, \alpha_1\}$. \\
    Then to summarize, the optimal $\alpha$ is 
    \begin{equation*}
        \alpha=\max\left\{(h+\eta)e^{-\frac{D\fa}{h+\eta}}-h, (h+\eta)e^{-1}-h, 0\right\}
    \end{equation*}
    and the ratio is 
    \begin{equation*}
        \frac{1}{D}(h+\alpha)\ln\frac{h+\eta}{h+\alpha}.
    \end{equation*}
\end{proof}

\cornolpone*
\begin{proof}
  According to Proposition~\ref{pro.eta.small}, we know how to find the optimal choice of $\alpha$ and the competitive ratio given $\eta\le 1-h$. Here we choose $\eta$ that maximize $\fa = (1-(h+\eta))(1+2D)+2D\ln(h+\eta)+h(1+2D\ln(h+\eta)-Dh)$. By calculus, we can find out that 
  \begin{equation*}
      \eta=\frac{2D}{2D+1}(1+h)-h.
  \end{equation*} 
  When $h\le \frac{1}{2D}$, we can verify that such $\eta$ is no greater than $1-h$.
  Now we have 
  \begin{equation*}
      \fa = 1-2Dh+2D(1+h)\ln\left[\frac{2D(1+h)}{2D+1}\right]+h-Dh^2.
  \end{equation*}
  According to Proposition ~\ref{pro.eta.small}, $\alpha = \max\left\{(h+\eta)e^{-\frac{D\fa}{h+\eta}}-h, (h+\eta)e^{-1}-h, 0\right\}$.
  Then we will show that $\alpha =(h+\eta)e^{-\frac{D\fa}{h+\eta}}-h $, i.e., $(h+\eta)e^{-\frac{D\fa}{h+\eta}}-h\ge  (h+\eta)e^{-1}-h$ and $(h+\eta)e^{-\frac{D\fa}{h+\eta}}-h\ge 0$.\\
  To show $(h+\eta)e^{-\frac{D\fa}{h+\eta}}-h\ge 0$ it suffices to show that $(h+\eta)e^{-\frac{D\fa}{h+\eta}}\ge h$. Because $e^x>1+x$, we only need to show that $h+\eta-D\fa-h\ge 0$. Then given the value of $\eta$ and $\fa$, we denote 
  \begin{equation*}
      g(h)= \eta-D\fa =\frac{2D}{2D+1}-\frac{1}{2D+1}h+(2D^2-D)h+D^2h^2-D-2D^2(1+h)\ln\left[\frac{2D(1+h)}{2D+1}\right].
  \end{equation*}
  Now we only need to show that $g(h)\ge 0$ for $0\le h\le \frac{1}{2D}$.\\
  First check the first order derivative $g'(h)$ of $g(h)$:
  \begin{equation*}
      g'(h)=-\frac{1}{2D+1}-D+2D^2h-2D^2\ln\left[\frac{2D(1+h)}{2D+1}\right].
  \end{equation*}
  Then check the second order derivative
  \begin{equation*}
      g''(h)=2D^2(1-\frac{1}{1+h})\ge0
  \end{equation*}
  Then $g'(h)\le g'\left(\frac{1}{2D}\right)=-\frac{1}{2D+1}\le 0$ which means 
  \begin{equation*}
      \begin{split}
          g(h)&\ge g\left(\frac{1}{2D}\right)\\
          &=\frac{2D}{2D+1}-\frac{1}{2D+1}\frac{1}{2D}+(2D^2-D)\frac{1}{2D}+D^2\left(\frac{1}{2D}\right)^2-D\\
          &=\frac{1}{2D(2D+1)}\left[(2D)^2-1+D(4D^2-1)+D^2\frac{2D+1}{2D}-D(2D+1)2D\right]\\
          &=\frac{1}{2D(2D+1)}\left(3D^2-\frac{1}{2}D-1\right)\\
          &\ge 0.
      \end{split}
  \end{equation*}
  Last inequality is because $D$ is a positive integer.
  Now we have proved that  $(h+\eta)e^{-\frac{D\fa}{h+\eta}}-h\ge 0$, then we try to prove that $(h+\eta)e^{-\frac{D\fa}{h+\eta}}-h \ge (h+\eta)e^{-1}-h$. It suffices to show that $-\frac{D\fa}{h+\eta}\ge -1$ which equals that $h+\eta-D\fa\ge 0$. Because we have proved that $h+\eta-D\fa-h\ge 0$ above, $h+\eta-D\fa\ge 0$ already satisfies. Then the choice of $\alpha$ is 
  \begin{equation*}
      \alpha =  (h+\eta)e^{-\frac{D\fa}{h+\eta}}-h = \frac{2D}{2D+1}(1+h)e^{-\frac{\fa(2D+1)}{2(1+h)}}-h.
  \end{equation*}
  And the competitive ratio is 
  \begin{equation*}
      \frac{1}{D}(h+\alpha)\ln\frac{h+\eta}{h+\alpha}=\fa e^{-\frac{\fa(2D+1)}{2(1+h)}}.
  \end{equation*}
\end{proof}

\propetalarge*
\begin{proof}
  First recall the competitive ratio is $\min\{F^H, F^L\}$ where 
  $F^H=\frac{1}{D}(h+\alpha)(\ln\frac{1}{h+\alpha}+1-e^{1-h-\eta})$, $F^L =(h+\alpha)e^{1-h-\eta}\cdot \fb$ and $\fb = (1-\eta)(1-D(1-\eta))$.
  We first assume that $F^H\ge F^L$, according to $\eta> 1-h$ we have 
  \begin{equation*}
      F^H\ge F^L\iff \alpha\le \alpha_1 := \exp\{1-(D\fb+1)e^{1-h-\eta}\}-h.
  \end{equation*}
  When $\alpha_1\ge 0$ and $\alpha\le \alpha_1$, the competitive ratio is $F^L=(h+\alpha)e^{1-h-\eta}\fb$ which is increasing with $\alpha$. If $\alpha_1\ge 1-h\ge 0$, because we only consider $\alpha\le 1-h$, we always have $F^H\ge F^L$, then the ratio is $F^L$, and we choose $\alpha=1-h$.
  If $0\le \alpha_1\le 1-h$, we can choose $\alpha=\alpha_1$, and we have $F^L=F^H$. When $\alpha_1<0$, then we always have $\alpha>\alpha_1$ which means the competitive ratio can be written as $F^H$. \\
  Now we only need to solve $\max_{\alpha} F^H$ when $\alpha\ge \max\{\alpha_1, 0\}$ and $\max\{\alpha_1, 0\}\le 1-h$. From the derivative of $F^H=\frac{1}{D}(h+\alpha)(\ln\frac{1}{h+\alpha}+1-e^{1-h-\eta})$, we can see that when $\alpha\le \alpha_2:=\exp\{-e^{1-h-\eta}\}-h$, $F^H$ is increasing, and when $\alpha\ge \alpha_2$, $F^H$ is decreasing. We can also have $\alpha_2\le 1-h$.
  Then, we have the following cases:
  \begin{enumerate}
      \item $\alpha_2\ge \max\{\alpha_1, 0\}$: set $\alpha=\alpha_2=\max\{\alpha_1, \alpha_2, 0\}$;
      \item $\alpha_2\le \max\{\alpha_1, 0\}$: set $\alpha= \max\{\alpha_1, 0\}=\max\{\alpha_1, \alpha_2, 0\}$.
  \end{enumerate}
  Then to summarize, the optimal $\alpha$ is 
  \begin{equation*}
      \alpha= 
      \begin{cases}
          \max\{\alpha_1, \alpha_2, 0\}, &\alpha_1\le 1-h\\
          1-h, & \alpha_1\ge 1-h
      \end{cases}
  \end{equation*}
  and ratio is 
  \begin{equation*}
      \begin{cases}
          \frac{1}{D}(h+\alpha)(\ln\frac{1}{h+\alpha}+1-e^{1-h-\eta}), &\alpha_1\le 1-h\\
          (h+\alpha)e^{1-h-\eta}\cdot \fb, &\alpha_1\ge 1-h.
      \end{cases}
  \end{equation*}
\end{proof}

\cornolptwo*
\begin{proof}
  According to Proposition~\ref{pro.eta.large}, consider the case that $\alpha_1\ge 1-h$, the optimal $\alpha=1-h$, and competitive ratio is $e^{1-h-\eta}\fb$. This value is maximized when we choose 
  \begin{equation*}
      \eta=2-\frac{1}{2D}-\sqrt{1+\frac{1}{4D^2}}.
  \end{equation*}
  Then we have 
  \begin{equation*}
      \fb = \sqrt{4D^2+1}-2D,\ \ \  \alpha_1 = \exp\{1-(D\fb+1)e^{1-h-\eta}\}-h.
  \end{equation*}
  It is easy to check that $\eta\ge 1-h$ because 
  \begin{equation*}
      \begin{split}
          \eta-(1-h)&=h+1-\frac{1}{2D}-\sqrt{1+\frac{1}{4D^2}}\\
          &\ge h_0+1-\frac{1}{2D}-\sqrt{1+\frac{1}{4D^2}}\\
          &=2D^2(\sqrt{1+\frac{1}{4D^2}}-1)\\
          &\ge 0.
      \end{split}
  \end{equation*}
  And $\eta\le 1$ because 
  \begin{equation*}
      \eta-1=1-\frac{1}{2D}-\sqrt{1+\frac{1}{4D^2}}\le -\frac{1}{2D}\le 0.
  \end{equation*}
  The first inequality is because $\sqrt{1+\frac{1}{4D^2}}\ge 1$.\\
  We have proved the feasibility of $\eta$, now we need to prove that $\alpha_1\ge 1-h$. It suffices to show that $e^{h-(1-\eta)}\ge 1+D(\sqrt{4D^2+1}-2D)$. According to the fact $e^x>1+x$, we only need to prove that $h+\eta\ge 1+D(\sqrt{4D^2+1}-2D)$. Then we need to prove that 
  \begin{equation*}
      h\ge 1-\eta + D(\sqrt{4D^2+1}-2D)=(2D^2+1)(\sqrt{1+\frac{1}{4D^2}}-1)+\frac{1}{2D}=h_0.
  \end{equation*}
  The statement holds because of our assumption. \\
  We can also see that $h_0\le 1$ because
  \begin{equation*}
      \begin{split}
          h_0&=(2D^2+1)(\sqrt{1+\frac{1}{4D^2}}-1)+\frac{1}{2D}\\
          &\le (2D^2+1)\frac{1}{8D^2}+\frac{1}{2D}\\
          &=\frac{1}{4}+\frac{1}{2D}+\frac{1}{8D^2}\\
          &\le 1.
      \end{split}
  \end{equation*}
  First inequality is because  the fact $\sqrt{1+a}\le 1+\frac{1}{2}a$ for $a\ge 0$ and last inequality is obvious because $D$ is a positive integer.\\
  Then according to the Proposition~\ref{pro.eta.large}, the competitive ratio is $ e^{1-h-\eta}(1-\eta)(1-D(1-\eta))$.
\end{proof}    
\end{APPENDIX}

\end{document}